\newif\ifjonathan\jonathanfalse
\newif\ifanon\anonfalse
\newif\ifdraft\draftfalse
\newif\iflong\longfalse
\newif\ifpagelimits\pagelimitsfalse
\newif\ifcamera\cameratrue
\def\@copyrightpermission{This work is licensed under a \href{https://creativecommons.org/licenses/by/4.0/}{Creative Commons Attribution 4.0 International License}}
  \renewcommand{\headrulewidth}{\z@}%
  \renewcommand{\footrulewidth}{\z@}%
  \renewcommand{\headrulewidth}{\z@}%
  \renewcommand{\footrulewidth}{\z@}%
\def\@mkbibcitation{}
\def\@titlefont{\sffamily\LARGE\bfseries}
  \newcolumntype{R}{>{\raggedleft\arraybackslash}X}%
  \newcolumntype{Y}{>{\centering\arraybackslash}X}%
\lstdefinelanguage{fstar}{%
  morekeywords=[1]{type,and,val,fun,let,in,ref,try,if,then,else,match,with,open,as,module,rec,end,assume,private,when,forall,ghost,assert,function,logic,array,pattern,effect,requires,ensures,decreases,modifies, abstract},
  morekeywords=[2]{WP, Post, Pre, PURE, DIV, STATE, EXN, ALL, TotST, M, GHOST, BAD, GTot, Lemma},
  morekeywords=[3]{},
  morekeywords=[4]{},
  morestring=[b]",
  sensitive=true,%
  numbersep=4pt,
  columns=[l]fullflexible,
  texcl=true,
  mathescape=true,
  identifierstyle={\sffamily},
  keywordstyle=[1]{\sffamily\maybecolor{dkblue}},
  keywordstyle=[2]{\sffamily\maybecolor{dkblue}},
  keywordstyle=[3]{\maybecolor{dkred}},
  keywordstyle=[4]{\rmfamily\itshape},
  rangeprefix=(*---\ ,
  includerangemarker=false,
  stringstyle=\ttfamily,
  showspaces=false,
  morecomment=[n]{(*}{*)},
  commentstyle={\itshape\maybecolor{dkred}},
  literate={->}{$\rightarrow\,$}{1}
           {<-}{$\leftarrow\,$}{1}
           {<>}{$\neq\,$}{1}
	   {delta}{$\delta$}{1}
	   {exists}{$\exists$}{1}
	   {forall}{$\forall$}{1}
      	   {True}{$\top$}{1}
      	   {False}{$\perp$}{1}           
           {fun}{$\lambda$}{1}
           {\\in}{$\in\,$}{1}
           {~}{$\neg$}{1}           
           {'a}{$\alpha$}{1}
           {'b}{$\beta$}{1}
           {'c}{$\gamma$}{1}
           {tau}{$\tau\,$}{1}
           {STATE0}{$\mathsf{ST}^\prime\,$}{1}
	   {/\\}{$\wedge\;$}{1}
	   {\\/}{$\vee\;$}{1}
           {>=}{$\geq\ $}{2}
           {<=}{$\leq\ $}{2}
	   {<==>}{$\Longleftrightarrow \ $}{3}
	   {==>}{$\Longrightarrow \ $}{3}
           {_}{\textunderscore}{1}
           {--}{-}{1}   
           ,
  breaklines=false}
  \let\lst\lstinline
  \let\li\lstinline
\newcommand\maybecolor[1]{\color{#1}}
\definecolor{dkblue}{rgb}{0,0.1,0.5}
\definecolor{dkgreen}{rgb}{0,0.4,0}
\definecolor{dkred}{rgb}{0.6,0,0}
\definecolor{dkpurple}{rgb}{0.7,0,1.0}
\definecolor{olive}{rgb}{0.4, 0.4, 0.0}
\definecolor{teal}{rgb}{0.0,0.4,0.4}
\definecolor{azure}{rgb}{0.0, 0.5, 1.0}
\definecolor{butter3}{HTML}{C4A000}
\definecolor{lightblue}{rgb}{0.2,0.2,1.0}
\definecolor{lightgrey}{rgb}{0.8,0.8,0.8}
\definecolor{linkColor}{rgb}{0,0,0.5}
\definecolor{lightgray}{rgb}{.9,.9,.9}
\definecolor{darkgray}{rgb}{.4,.4,.4}
\definecolor{purple}{rgb}{0.65, 0.12, 0.82}
\renewcommand{\comment}[1]{{\ifdraft\color{red}#1\fi}}
\newcommand{\comm}[3]{\ifdraft{\color{#1}[#2: #3]}\fi}
\newcommand{\cf}[1]{\comm{olive}{Cédric}{#1}}
\newcommand{\nik}[1]{\comm{dkpurple}{Nik}{#1}}
\newcommand{\ch}[1]{\comm{teal}{CH}{#1}}
\newcommand{\jp}[1]{\comm{butter3}{JP}{#1}}
\newcommand{\ignore}[1]{}
\newcommand{\fstar}{\relax\ifmmode{F^\ast}\else{F$^\ast$}\fi\xspace}
\newcommand{\lowstar}{\relax\ifmmode{Low^\ast}\else{Low$^\ast\!$}\fi\xspace}
\newcommand{\lamstar}{\texorpdfstring{\ensuremath{\lambda}ow\ensuremath{^\ast}\xspace}{low*}}
\newcommand{\cstar}{\relax\ifmmode{C^\ast}\else{C$^\ast$}\fi\xspace}
\newcommand{\haclstar}{HACL$^\ast$\xspace}
\newcommand{\kremlin}{KreMLin\xspace}
\newcommand{\emfhst}{EMF$^\ast_{\text{HST}}$\xspace}
\newcommand{\emfst}{EMF$^\ast_{\text{ST}}$\xspace}
\newcommand{\fref}[1]{Figure~\ref{fig:#1}}
\newcommand{\sref}[1]{§\ref{sec:#1}}
\newcommand{\lref}[1]{line~\ref{line:#1}}
\newcommand{\kw}[1]               {\text{\textsf{#1}}}
\newcommand{\kword}{\kw{word}}
\newcommand{\kint}{\kw{int}}
\newcommand{\etlet}[2]            {\ensuremath{\kw{let}\ #1=#2}}
\newcommand{\elet}[3]             {\ensuremath{\kw{let}\ #1=#2\ \kw{in}\ #3}}
\newcommand{\eifthenelse}[3]      {\ensuremath{\kw{if}\ #1\ \kw{then}\ #2\ \kw{else}\ #3}}
\newcommand{\eif}[3] {\kw{if}\;#1\;\kw{then}\;#2\;\kw{else}\;#3}
\newcommand{\eapply}[2]           {\ensuremath{#1(#2)}}
\newcommand{\enewbuf}[2]          {\ensuremath{\kw{newbuf}\ #1\ #2}}
\newcommand{\ereadbuf}[2]         {\ensuremath{\kw{readbuf}\ #1\ #2}}
\newcommand{\ewritebuf}[3]        {\ensuremath{\kw{writebuf}\ #1\ #2\ #3}}
\newcommand{\esubbuf}[3]          {\ensuremath{\kw{subbuf}\ #1\ #2\ #3}}
\newcommand{\enewstruct}[1]       {\ensuremath{\kw{newstruct}\ #1}}
\newcommand{\ereadstruct}[1]         {\ensuremath{\kw{readstruct}\ #1}}
\newcommand{\ewritestruct}[2]        {\ensuremath{\kw{writestruct}\ #1\ #2}}
\newcommand{\structfield}[0]          {\triangleright}
\newcommand{\estructfield}[2]          {\ensuremath{{#1} \structfield {#2}}}
\newcommand{\ecfun}[5]                {%
  \ensuremath{\kw{fun}\;#1\,(#2:#3):#4\,\{\;#5\;\}}%
}
\newcommand{\ecfuntwo}[4]                {%
  \ensuremath{\kw{fun}\;(#1:#2):#3\,\{\;#4\;\}}%
}
\newcommand{\evardecl}[3]            {\ensuremath{#1\ #2 = #3}}
\newcommand{\earray}[3]           {\ensuremath{#1\ #2[#3]}}
\newcommand{\memset}[3]           {\kw{memset}\;#1\;#2\;#3}
\newcommand{\eread}[1]         {\ensuremath{*[#1]}}
\newcommand{\ewrite}[2]         {\ensuremath{*[#1] = #2}}
\newcommand{\ereturn}[1]         {\ensuremath{\kw{return}\ #1}}
\newcommand{\symread}{\kw{read}}
\newcommand{\symwrite}{\kw{write}}
\newcommand{\stmts}{ss}
\newcommand{\ls}[1]{\overrightharpoon{#1}}
\newcommand{\eptrfd}[2]{\&#1 \rightarrow #2}
\newcommand{\symhole}{\Box}
\newcommand{\None}{\bot}
\newcommand{\Some}[1]{\lfloor #1 \rfloor}
\newcommand{\sympartial}{\rightharpoonup}
\newcommand{\eval}[2]{{\left \lsem #1 \right \rsem}_{#2}}
\newcommand{\symget}{\kw{Get}}
\newcommand{\symset}{\kw{Set}}
\newcommand{\step}{\leadsto}
\newcommand{\astep}{\rightarrowtriangle}
\newcommand{\fplug}[2]{#1\;[#2]}
\newcommand{\option}[1]{\widetilde{#1}}
\newcommand{\withframe}{\kw{withframe}}
\newcommand{\epop}{\kw{pop}}
\newcommand{\subst}[3]{ [#2/ #1] #3 }
\newcommand{\lowtoc}[1]{\downarrow #1}
\newcommand{\lowtoce}[1]{\downlsquigarrow #1}
\newcommand{\lowtocd}[1]{\downdownarrows #1}
\newcommand{\ctolow}[1]{\uparrow #1}
\newcommand{\ctolowe}[1]{\uprsquigarrow #1}
\newcommand{\ctolowd}[1]{\upuparrows #1}
\newcommand{\ctolowc}[1]{\rhookuparrow #1}
\newcommand{\ctolowE}[1]{\twoheaduparrow #1}
\newcommand\defeq{\mathrel{\overset{\makebox[0pt]{\mbox{\normalfont\tiny\sffamily def}}}{=}}}
\newcommand{\sys}{\kw{sys}}
\newcommand{\unstuck}{\kw{unstuck}}
\newcommand{\normal}[2]{\Downarrow_{#2} #1}
\newcommand{\unravel}{\kw{unravel}}
\newcommand{\foldl}{\kw{foldl}}
\newcommand{\unravelframe}{\kw{unravel\_frame}}
\newcommand{\mem}{\kw{mem}}
\newcommand{\eand}{\wedge}
\newcommand{\eor}{\vee}
\newcommand{\olabel}{o}
\newcommand{\brt}{\kw{brT}}
\newcommand{\brf}{\kw{brF}}
\newcommand \lp{P}
\newcommand \lexp{e}
\newcommand \fd{f}
\newcommand \lv{v}
\newcommand \trace{\ell}
\newcommand \cp{\hat{P}}
\newcommand \cexp{\hat{e}}
\newcommand \cstmt{s}
\newcommand*{\ii}[1]{\ensuremath{\mathit#1}}
\begin{document}


\setlength{\pdfpageheight}{\paperheight}
\setlength{\pdfpagewidth}{\paperwidth}




\title{Verified Low-Level Programming Embedded in \fstar
}

\ifanon
\author{}
\else
\author{Jonathan~Protzenko}
\affiliation{\institution{Microsoft Research}\ifcamera, USA\fi}
\author{Jean-Karim~Zinzindohoué}
\affiliation{\institution{INRIA Paris}\ifcamera, France\fi}
\author{Aseem~Rastogi}
\affiliation{\institution{Microsoft Research}\ifcamera, USA\fi}
\author{Tahina~Ramananandro}
\affiliation{\institution{Microsoft Research}\ifcamera, USA\fi}
\author{Peng~Wang}
\affiliation{\institution{MIT CSAIL}\ifcamera, USA\fi}
\author{Santiago~Zanella-Béguelin}
\affiliation{\institution{Microsoft Research}\ifcamera, USA\fi}
\author{Antoine~Delignat-Lavaud}
\affiliation{\institution{Microsoft Research}\ifcamera, USA\fi}
\author{C\u{a}t\u{a}lin~Hri\c{t}cu}
\affiliation{\institution{INRIA Paris}\ifcamera, France\fi}
\author{Karthikeyan Bhargavan}
\affiliation{\institution{INRIA Paris}\ifcamera, France\fi}
\author{Cédric~Fournet}
\affiliation{\institution{Microsoft Research}\ifcamera, USA\fi}
\author{Nikhil~Swamy}
\affiliation{\institution{Microsoft Research}\ifcamera, USA\fi}
\makeatletter
\renewcommand{\@shortauthors}{Protzenko \emph{et.al.}}
\makeatother
\fi

\begin{CCSXML}
<ccs2012>
<concept>
<concept_id>10003752.10003790.10011741</concept_id>
<concept_desc>Theory of computation~Hoare logic</concept_desc>
<concept_significance>500</concept_significance>
</concept>
<concept>
<concept_id>10003752.10003790.10011740</concept_id>
<concept_desc>Theory of computation~Type theory</concept_desc>
<concept_significance>300</concept_significance>
</concept>
<concept>
<concept_id>10011007.10010940.10010992.10010993</concept_id>
<concept_desc>Software and its engineering~Correctness</concept_desc>
<concept_significance>500</concept_significance>
</concept>
<concept>
<concept_id>10011007.10010940.10010992.10010998.10010999</concept_id>
<concept_desc>Software and its engineering~Software verification</concept_desc>
<concept_significance>500</concept_significance>
</concept>
<concept>
<concept_id>10011007.10011006.10011041.10011047</concept_id>
<concept_desc>Software and its engineering~Source code generation</concept_desc>
<concept_significance>500</concept_significance>
</concept>
<concept>
<concept_id>10011007.10011006.10011008.10011009.10011012</concept_id>
<concept_desc>Software and its engineering~Functional languages</concept_desc>
<concept_significance>300</concept_significance>
</concept>
<concept>
<concept_id>10011007.10011006.10011039.10011311</concept_id>
<concept_desc>Software and its engineering~Semantics</concept_desc>
<concept_significance>300</concept_significance>
</concept>
<concept>
<concept_id>10011007.10011006.10011041</concept_id>
<concept_desc>Software and its engineering~Compilers</concept_desc>
<concept_significance>300</concept_significance>
</concept>
</ccs2012>
\end{CCSXML}

\ccsdesc[500]{Theory of computation~Hoare logic}
\ccsdesc[300]{Theory of computation~Type theory}
\ccsdesc[500]{Software and its engineering~Correctness}
\ccsdesc[500]{Software and its engineering~Software verification}
\ccsdesc[500]{Software and its engineering~Source code generation}
\ccsdesc[300]{Software and its engineering~Functional languages}
\ccsdesc[300]{Software and its engineering~Semantics}
\ccsdesc[300]{Software and its engineering~Compilers}



\keywords{verified compilation, low-level programming, verified cryptography}




\begin{abstract}
We present \lowstar, a language for low-level programming and
verification, and its application to high-assurance optimized cryptographic libraries.
\lowstar is a shallow embedding of a small, sequential, well-behaved
subset of C in \fstar, a dependently-typed variant of ML aimed at
program verification.
Departing from ML, \lowstar does not involve any garbage collection or
implicit heap allocation; instead, it has a structured memory model
\`a la CompCert, and it provides the control required for writing
efficient low-level security-critical code.

By virtue of typing, any \lowstar program is memory safe.
In addition, the programmer can make full use of the verification
power of \fstar to write high-level specifications and verify the
functional correctness of \lowstar code using a combination of SMT
automation and sophisticated manual proofs.
At extraction time, specifications and proofs are erased, and the
remaining code enjoys a predictable translation to C.
We prove that this translation preserves semantics and
side-channel resistance.

We provide a new compiler back-end from \lowstar to C and, 
to evaluate our approach, we implement and verify various cryptographic algorithms,
constructions, and tools for a total of about 28,000
lines of code, specification and proof.
We show that our \lowstar code delivers performance competitive with
existing (unverified) C cryptographic libraries, suggesting our
approach may be applicable to larger-scale low-level
software.
\cf{This suggestion is lukewarm; could also talk about usability and
  the risks of crypto libraries. Noop.}
\end{abstract}

\maketitle

\section{Introduction}

In the pursuit of high performance, 
cryptographic software widely deployed throughout the internet 
is still often subject to dangerous 
attacks~\cite{ws96,
  freestart, Heartbleed, tlstrunc, BhargavanDFPS14, ap13, POODLE,
  tlsattacker, BEAST, cryptoeprint:2016:475, sweet32, CRIME, polybug2,
  chachaoverflow,
  PincusB04, Szekeres2013,
  AfekS07,UseAfterFreeCWE,
  Dobrovitski03,DoubleFreeCWE,
  IntegerOverflowCWE}. 
Recognizing a clear need, 
the programming language,
verification, and 
applied cryptography communities are 
devoting
significant efforts to develop implementations proven
secure by construction against 
broad classes of these attacks.

Focusing on low-level attacks caused by violations of
memory safety, several researchers have used high-level, type-safe
programming languages to implement standard protocols such as Transport Layer
Security (TLS).
For example, \citet{kaloper2015not} provide nqsbTLS, an implementation of TLS in
OCaml, which by virtue of its type and memory safety is impervious to attacks
(like \citet{Heartbleed}) that exploit buffer overflows.
\citet{bhargavan2014proving} program miTLS
in F\#, also enjoying type and memory safety, but go further using a refinement
type system to prove various higher-level
security properties of the protocol.
While this approach is attractive for its simplicity, to get
acceptable performance, both nqsbTLS and miTLS link with fast, unsafe
implementations of complex cryptographic algorithms, such as those provided by
\citet{nocrypto}, an implementation that mixes C and OCaml, and
libcrypto, a component of the widely used \citet{openssl}.
In the worst case, linking with vulnerable C code 
can void all the 
guarantees of the high-level code.

In this paper, we aim to bridge the gap between high-level,
safe-by-construction code, optimized for clarity and ease of
verification, and low-level code exerting fine control
over data representations and memory layout in order to achieve
better performance.
To this end, 
we introduce \lowstar, a domain-specific language
for verified, efficient, low-level programming embedded within 
\fstar~\cite{mumon}, an ML-like language with dependent
types designed for program verification.
We
use \fstar
to prove functional correctness
and security properties of high-level code. 
Where efficiency is paramount, we drop into its C-like \lowstar subset
while still relying on
the verification capabilities of~\fstar
to prove our code is memory safe, functionally correct, and secure.

We have applied \lowstar to program and verify a range of sequential low-level
programs, including libraries for multi-precision arithmetic and
buffers, and various cryptographic algorithms, constructions, and
protocols built on top of them.
Our experiments indicate that compiled \lowstar code yields performance
on par with existing C code. This code can be used on its own,
or used within existing software through the C ABI. In particular, our C code may
be linked to \fstar programs compiled to
OCaml, providing large speed-ups via its foreign-function interface (FFI)
without compromising safety or security.%




\subsection*{An Embedded DSL, Compiled Natively}

\lowstar programs are a subset of \fstar programs:
the programmer
writes \lowstar code using regular \fstar syntax, against a library we
provide that models a lower-level view of memory, akin to the
structured memory layout of a well-defined C program
(this is similar to the structured memory model
of CompCert~\cite{Leroy-Compcert-CACM, 2012-Leroy-Appel-Blazy-Stewart},
rather than the ``big array of bytes'' model systems programmers sometimes use).
\lowstar programs interoperate naturally with
other \fstar programs, and precise specifications of \lowstar and
\fstar code are intermingled when proving properties of their
combination. As usual in \fstar, programs are type-checked and compiled
to OCaml for execution, after erasing their computationally irrelevant parts,
such as proofs and specifications, using a process similar
to Coq's extraction mechanism~\cite{letouzey2002new}. 
In particular, our memory-model library compiles to a simple, heap-based OCaml implementation.

Importantly, \lowstar programs have a second, equivalent but more efficient
semantics via compilation to~C, with the predictable
performance that comes with a native implementation of their
lower-level memory model. 
This compilation is implemented by KreMLin, a new
compiler from the \lowstar subset of \fstar to~C. 
Figure~\ref{fig:bigpicture} illustrates the high-level design of
\lowstar and its compilation to native code.
Our main contributions are as follows:

{\hypersetup{hidelinks=true}
\begin{figure}
  \centering

  \definecolor{butter1}{HTML}{FCE94F}
  \definecolor{butter2}{HTML}{EDD400}
  \definecolor{orange1}{HTML}{FCAF3E}
  \definecolor{orange2}{HTML}{F57900}
  \definecolor{orange3}{HTML}{CE5C00}
  \definecolor{chocolate1}{HTML}{E9B96E}
  \definecolor{chocolate2}{HTML}{C17D11}
  \definecolor{chocolate3}{HTML}{8F5902}
  \definecolor{chameleon1}{HTML}{8AE234}
  \definecolor{chameleon2}{HTML}{73D216}
  \definecolor{chameleon3}{HTML}{4E9A06}
  \definecolor{skyblue1}{HTML}{729FCF}
  \definecolor{skyblue2}{HTML}{3465A4}
  \definecolor{skyblue3}{HTML}{204A87}
  \definecolor{plum1}{HTML}{AD7FA8}
  \definecolor{plum2}{HTML}{75507B}
  \definecolor{plum3}{HTML}{5C3566}
  \definecolor{red1}{HTML}{EF2929}
  \definecolor{red2}{HTML}{CC0000}
  \definecolor{red3}{HTML}{A40000}
  \definecolor{aluminium1}{HTML}{EEEEEC}
  \definecolor{aluminium2}{HTML}{D3D7CF}
  \definecolor{aluminium3}{HTML}{BABDB6}
  \definecolor{aluminium4}{HTML}{888A85}
  \definecolor{aluminium5}{HTML}{555753}
  \definecolor{aluminium6}{HTML}{2E3436}

  \definecolor{skyblue4}{HTML}{9CBBDD}
  \definecolor{plum4}{HTML}{C5A5C1}
  \definecolor{chameleon4}{HTML}{ADEA70}

  \begin{tikzpicture}[lang/.style={
    draw, thick, text width=4em, node distance=3cm, align=center,
    text height=.8em, text depth=.5ex
  },label/.style={
    text=\footnotesize
  },every node/.style={
    font=\sffamily\footnotesize
  }]

    \node [thick, fill=skyblue1, draw=skyblue3, minimum width=16em, minimum height=4em] (emf) {};
    \node [anchor=north west, draw=none] at ($(emf.north west)+(.2em,-.2em)$)
      { EM\fstar };
    \node [anchor=south east, thick, fill=skyblue2, draw=skyblue3, minimum width=12em, minimum height=3em]
      at ($(emf.south east)+(-.4em,.4em)$)
      (lowstar) {};
    \node [anchor=north west, draw=none, text=aluminium1] at (lowstar.north west)
      (lowlabel)
      { \lowstar };
    \node (foemf)
      [lang,fill=skyblue3,text=aluminium1,anchor=south east,text width=5.8em]
      at ($(lowstar.south east)+(-.4em,.4em)$)
      { 1\textsuperscript{st}-order EM\fstar };

    \node (lamstar) [lang, fill=plum2,text=aluminium1, anchor=east, node distance=4em]
      at ($(lowstar.east)+(0,-4.5em)$)
      { \lamstar };
    \node (cstar)   [lang, fill=plum2,text=aluminium1, left of=lamstar] { \cstar };
    \node (c)       [lang, fill=plum2,text=aluminium1, left of=cstar] { Clight };
    \node (cfile)   [lang, fill=chameleon3,text=aluminium1, anchor=west, node distance=4em]
      at ($(c.west)+(0,-4.5em)$)
      { \ttfamily{}.c };
    \node (exe)     [lang, fill=chameleon3,text=aluminium1, right of=cfile] { Exe };

    \draw [->, ultra thick]
      [postaction={
          decorate,
          decoration={
            raise=.4ex,
            text along path,
            text align=center,
            text={|\sffamily\scriptsize\color{aluminium1}|{$\approx$} erase} }
            }]
      [postaction={
          decorate,
          decoration={
            raise=-1.55ex,
            text along path,
            text align=center,
            text={|\sffamily\scriptsize\color{aluminium1}|{\S}3.0}
          }
      }]
    (lowlabel.south) to [out=-90,in=180] (foemf.west);
    \draw [ultra thick,->]
      (foemf.south)
      to [out=-90,in=90]
      node[left,yshift=-.5ex] { \scriptsize partial $\approx$ }
      node[right=.5ex,yshift=-.5ex] { \sref{lamstar} }
      (lamstar);
    \path [ultra thick,->] (cstar)
      edge [loop below]
      node [text=plum3,yshift=.3ex] (hoist) { \scriptsize hoist $\approx$ }
      (cstar);
    \draw [ultra thick,->] (lamstar) --
      node[above] { \normalsize$\approx$ }
      node[below] { \sref{lamstar-to-cstar} }
      ++ (cstar);
    \draw [ultra thick,->] (cstar) --
      node[above] { \normalsize$\approx$ }
      node[below] { \sref{to-clight} }
      ++ (c);
    \draw [ultra thick,->] (c) --
      node[midway,text=plum3] (print) { \hspace{2.5em}\scriptsize{}print }
      ++ (cfile);
    \draw [->, ultra thick] (cfile) --
      node[above] { \scriptsize{}compile }
      ++ (exe);

  \begin{scope}[on background layer]
    \draw [thick, fill=chameleon1, draw=chameleon3]
      ($(cfile.north west)+(-.4em,.4em)$) rectangle ($(exe.south east)+(.4em,-.4em)$);
    \draw [thick, draw=plum3,fill=plum1]
      let \p1 = (c.west) in
      let \p2 = (lamstar.north) in
      ($(\x1, \y2)+(-.4em,.4em)$) rectangle ($(lamstar.south east)+(.4em,-.4em)$);

    \node [draw=none,anchor=north east] at ($(emf.north west)+(0,.2em)$)
      {\footnotesize\color{skyblue3}\fstar};
    \path
      let \p1 = (c.west) in
      let \p2 = (lamstar.north) in
      node [draw=none,anchor=south west]
      at ($(\x1, \y2)+(-.4em,.45em)$)
      {\footnotesize\color{plum3}Kremlin };
    \node [draw=none,anchor=south west] at ($(exe.south east)+(.4em,-.8em)$)
      {\footnotesize\color{chameleon3}GCC/Clang/CompCert };
  \end{scope}

  \end{tikzpicture}

  \caption{\lowstar embedded in \fstar, compiled to C, with soundness and
  security guarantees (details in \sref{formal})}
  \label{fig:bigpicture}
  \vspace{-2ex}
\end{figure}
}


\paragraph*{Libraries for low-level programming within \fstar~(\S\ref{sec:examples})}
At its core, \fstar is a purely functional language
to which effects like state are added programmatically using monads.
In this work, we instantiate the state monad of \fstar to use a CompCert-like
structured memory model that separates the stack and the heap,
supporting bulk 
allocation and deallocation on the stack,
and allocating and freeing individual blocks 
on the heap.
Both the heap and the stack are further divided into disjoint logical
regions, which enables us to manage the separation properties
necessary for modular, stateful verification.
On top of this, we program a library of C-style arrays and structs passed by
reference, with support for pointer arithmetic and pointers to the interior of
an array or a struct.  
By virtue of \fstar typing, our libraries and all their well-typed
clients are guaranteed to be memory safe, e.g., they never access
out-of-bounds or deallocated memory.

\paragraph*{Designing \lowstar, a subset of \fstar easily compiled to C}
We intend to give \lowstar programmers precise control over the
performance profile of the generated C code. As much as possible, 
we aim for the programmer to control even the
syntactic structure of the C code, to
facilitate its review by security experts unfamiliar with \fstar. 
As such, to
a first approximation, \lowstar programs are \fstar programs
well-typed in the state monad described above, which, after all their
computationally irrelevant (ghost) parts have been erased, must
meet several restrictions, as follows: the code 
(1) must be first order, to prevent the need to allocate closures in C; 
(2) must make any heap allocation explicit; 
(3) must not use any recursive datatype, since these would have to be compiled using additional
indirections to C structs; and 
(4) must be monomorphic, since C does not support polymorphism directly. 
%
Importantly, \lowstar heavily leverages \fstar's capabilities for
partial evaluation, hence allowing the programmer to write high-level, reusable
code that is normalized via meta-programming into the \lowstar subset before
the restrictions are enforced.
We emphasize that these restrictions apply only to computationally
relevant code---proofs and specifications are free to use arbitrary
higher-order, dependently typed \fstar, and very often they do.

\paragraph*{A formal translation from \lowstar to Comp\-Cert Clight (\sref{formal})}
Justifying its dual interpretation as a subset of \fstar and a subset of C,
we provide a formal model of \lowstar, called \lamstar,
give a translation from \lamstar to 
Clight~\cite{Blazy-Leroy-Clight-09} and show that it
preserves 
trace equivalence with respect to the original
\fstar semantics. In addition to ensuring that the
functional behavior of a program is preserved, our trace equivalence also
guarantees that the compiler does not accidentally introduce 
side-channels due to 
memory access patterns (as would be the case without the restrictions above)
at least until it reaches Clight, a useful sanity check for cryptographic code.
%
Our theorems cover the translation of standalone \lamstar
programs to~C, proving that execution in C preserves the original
\fstar semantics of the \lamstar program. 

\paragraph*{
KreMLin, a compiler from \lowstar to C (\S\ref{sec:impl})}
Our formal development guides the implementation of KreMLin, a new
tool that emits C code from \lowstar. KreMLin is designed to emit
well-formatted, idiomatic C code suitable for manual review.
The resulting C programs can be
compiled with Comp\-Cert for greatest assurance, and with mainstream C
compilers, including GCC and Clang, for greatest performance.  
We have used KreMLin to extract to C the 20,000+ lines of 
\lowstar code we have written so far.
After compilation, our verified standalone C libraries can be
integrated within larger programs using standard means.
\cf{LATER: \nik{moved it here from the previous para}.Does KreMLin
  support the generation of FFI? Do we really claim that ocaml-interop
  is the only part of Kremlin not covered by our formal results? This
  is our usual ambiguity between \lowstar as our formal source
  language vs as the subset of \fstar currently supported by KreMLin.}

\paragraph*{An empirical evaluation (\S\ref{sec:moreexamples})}
We present a few developments of efficient, verified, interoperable
cryptographic libraries programmed in \lowstar.

(1) We provide \haclstar, a ``high-assurance crypto library''
implementing and proving (in $\sim$6,000 lines of \lowstar) several
cryptographic algorithms, including the Poly1305
MAC~\cite{bernstein2005poly1305}, the ChaCha20 cipher~\cite{chacha20},
and multiplication on the Curve25519 elliptic curve
~\cite{curve25519}.
We package these algorithms to provide the popular NaCl
API~\cite{bernstein2012security}, yielding the first performant
implementation of NaCl verified for memory safety and side-channel
resistance, along with functional correctness proofs for its core
components, including a verified bignum library customized for safe,
fast cryptographic use (\S\ref{sec:haclstar}).  
Using this API, we build new standalone applications such as
 \emph{PneuTube}, a new secure, asynchronous, file transfer 
application whose performance compares favorably with widely 
used, existing utilities like scp.
\cf{Should we cite the CSF paper? Not all of it is new}

(2) Emphasizing the applicability of \lowstar for high-level,
cryptographic security proofs on low-level code, we briefly describe
its use in programming and proving (in $\sim$14,000 lines of \lowstar)
the Authenticated Encryption with Associated Data (AEAD) construction
at the heart of the record layer of the new TLS 1.3 Internet Standard.
We prove memory safety, functional correctness, and cryptographic
security for its main ciphersuites, relying, where available, on
verified implementations of these ciphersuites provided by \haclstar.
The C code extracted from our verified implementation is easily
integrated within other applications, including, for example, an
implementation in \fstar of TLS separately verified and compiled to
OCaml (through OCaml's FFI).


%
\cf{LATER: cite the oakland paper.}

\paragraph*{Trusted computing base} To date,
we have focused on designing and evaluating our methodology of
programming and verifying low-level code shallowly embedded within a
high-level programming language and proof assistant.
We have yet to invest effort in minimizing the trusted computing base
of our work, an effort we plan to expend now that we have evidence
that our methodology is worthwhile.
Currently, the trusted computing base of our verified libraries
includes the implementation of the \fstar typechecker and the Z3 SMT
solver~\citep{MouraB08}.
Additionally, we trust the manual proofs of the metatheory relating
the semantics of \lamstar to CompCert Clight. The KreMLin tool is
informed by this metatheory, but is currently implemented in
unverified OCaml, and is also trusted.
Finally, we inherit the trust assumptions of the C compiler used to
compile the code extracted from \lowstar.

\paragraph*{Supplementary materials}
First, we provide, in the appendix, the hand proofs
of the theorems described in \sref{formal}.
The present paper is focused on the metatheory and tools; we also authored a
companion paper~\cite{record} that describes the cryptographic model we
used for the record layer of TLS 1.3.
Finally, we have an ongoing submission of a paper focused on our
\haclstar library~\cite{haclstar}, where we
describe in greater detail our
proof techniques for reusing the bignum formalization across different
algorithms and implementations, and provide a substantial performance
evaluation.

We also offer numerous software artifacts. Our tool \kremlin~\cite{kremlin} is
actively developed on GitHub, and so is \haclstar~\cite{haclstar-gh}. Most of
the \lowstar libraries live in the \fstar repository, also on GitHub. The
integration of \haclstar within miTLS is also available on GitHub. For
convenience, we offer a regularly-updated Docker image of Project
Everest~\cite{everest-website}, which bundles together
\fstar, miTLS, \haclstar, \kremlin. One may
fetch it via \texttt{docker pull projecteverest/everest}.  The Docker image
contains a \texttt{README.md} with an overview of the proofs and the code.
%



%

\if0

An alternative execution path alternative execution

dual reading: for verification
purposes, it is an \fstar program with all the verification advantages
this entails; for execution purposes, it is a C program, that performs
predictable memory allocation and whose performance model is
clear. The \lowstar code may be low-level, but its proof is usually
not.

Like any \fstar program, a \lowstar program can be

C memory layout. Lik

, a language in
the tradition of ML equipped with dependent types, monadic effects,
refinement types and a weakest precondition calculus.

 If the program is well-typed and falls within the
\lowstar subset, then it abides by the constraints of the C memory model and can
be translated directly to well-defined C code.
A \lowstar program enjoys a dual reading: for verification purposes,
it is an \fstar program with all the verification advantages this
entails; for execution purposes, it is a C program, that
performs predictable memory allocation and whose performance model is
clear. The \lowstar code may be low-level, but its proof is usually not.

Security-critical components of the internet infrastructure have
received sustained interest from the verification community over the past few
years.

\subsection{High-level Verification for Low-Level Code}
Security-critical components of the internet infrastructure have
received sustained interest from the verification community over the past few
years. Indeed, a combination of design flaws and implementation mistakes has led
to a series of highly-publicized attacks, sometimes with devastating consequences,
and the realization we are still depending on substantial amounts of insecure code in
deployed systems.

One crucially important class of software is cryptographic
libraries,\ch{implementations/applications (broader), miTLS and
  HACL* is a crypto library, but PneuTube are not really libraries!?}
which pervade the entire internet stack.
Setting aside the logical or mathematical
attacks, which break the underlying mathematical robustness
assumptions~\cite{freestart} to the point that a brute-force attack may be
possible, cryptography is notoriously difficult to implement correctly. As one
cryptographer accurately puts it, ``the only people who hate [the GCM cipher]
are those who’ve had to implement it''~\cite{green-gcm}. Therefore,
most of the high-profile vulnerabilities in cryptographic
libraries\ch{implementations/applications} have targeted
implementation flaws~\cite{Heartbleed}.
\ch{other examples?}

Many strands of work\jp{ TODO: cite} have shown that re-building parts
of this infrastructure using verification and formal methods is a
promising way of removing such vulnerabilities.
Yet, verification often proves very challenging. One reason is that
cryptographic code works at a very low-level; the performance of cryptography is
often the limiting factor after network speed in many networked applications,
and the code is hand-optimized to squeeze the last bit of performance out of the
system. In practice, this means that any replacement must be at least
as good as the existing C libraries, and offer the same level of portability.

The contribution of this paper is \emph{the design, proof, implementation
and evaluation of \lowstar}, a domain-specific language for
verified low-level programming
that tries to combine the advantages of the approaches above.

\lowstar is a shallow embedding of C inside of \fstar, a language in
the tradition of ML equipped with dependent types, monadic effects,
refinement types and a weakest precondition calculus. The programmer writes
\lowstar code using regular \fstar syntax, against a library we provide that
models the C memory layout. If the program is well-typed and falls within the
\lowstar subset, then it abides by the constraints of the C memory model and can
be translated directly to well-defined C code.
A \lowstar program enjoys a dual reading: for verification purposes,
it is an \fstar program with all the verification advantages this
entails; for execution purposes, it is a C program, that
performs predictable memory allocation and whose performance model is
clear. The \lowstar code may be low-level, but its proof is usually not.

\subsection{An Overview of \lowstar}

\fref{bigpicture} depicts the various steps that convert a \lowstar
source program to executable machine code.\ch{How about calling the
  first box \lowstar?}
Black boxes denote intermediary languages;
colored background rectangles indicate which concrete tool implements a given
translation step in the diagram.

\lowstar is designed as a subset of \fstar. As such, we rely on the latest
formalization of \fstar~\cite{dm4free}, which introduces EM\fstar, a
dependently-typed core calculus with a user-extensible lattice of effects.
EM\fstar is pure, and user-defined effects are pure too; this allows the
verification system of \fstar to reason over pure effect combinators and
programs.\ch{Bringing up EMF* and formalization seems unjustified.
  Very much afraid we'll lose many readers by going in gory formal
  details and \fstar jargon for little gain. I would prefer explaining
  this in intuitive terms from more basic principles, like monads,
  than expecting the reader to know \cite{dm4free} already}

Programs are not generally executed against a pure implementation of the state;
rather, they rely on the execution environment's primitive notion of
state. The original \emfst~\cite{dm4free} semantics model the
built-in state of the OCaml extraction target of \fstar; the authors show that
\emfst is a simulation of EM\fstar, i.e. that a \emph{concrete} implementation
using native state preserves the pure semantics of the original, pure \li+ST+
effect that the programmer dealt with.\ch{Again, why not phrase this in
  terms of actual implementation, instead of theory? In terms of actual
  implementation \fstar erases types using an algorithm very similar to Coq;
  and uses primitive OCaml effects on top of that.}

\ch{How about just saying that effects in \fstar{} can be defined and
  customized by the user using monads. The state effect of \fstar{} is
  parameterized by the precise memory model.  Code using the state
  effect instantiated with HyperHeaps~\cite{mumon} can be translated
  to native OCaml state, with automatic memory management and garbage
  collection. Here we parametrize the state effect of \fstar{} with
  HyperStacks, a memory model corresponding to C stack allocation.}

In a similar fashion,\ch{the similar fashion part doesn't seem understandable
  to someone who didn't read \cite{dm4free}}
we design a ``HyperStack'' memory model that captures the
C stack discipline, and model stack-allocated, mutable, by-reference arrays with
static liveness and bounds checks.
Pointer arithmetic on buffers is checked statically;\ch{Do you really check the
  pointer arithmetic, not the actual memory accesses? So you can't get a off-by-one
  pointer at the end of a loop, as often happens in perfectly valid C?}
``ghost'' lengths that only exist in the type system allow verification to show that
well-typed programs are memory safe. We instantiate EM\fstar with these
``HyperStack'' semantics and obtain \emfhst, a language that enjoys a simulation
to EM\fstar, and whose semantics feature a primitive notion of state with a
stack of frames.  We further define an erasure transformation over \emfhst that
removes calls to lemmas and ghost code while preserving semantics.

\fstar effects are user-defined so this required no change to F* and could be
implemented as a library. This library does not
expose the full power of C; rather, it exposes a \emph{curated} set of
combinators. This strikes a balance between expressiveness, and ease of proof /
programmer productivity. Notably, programs may not take the address of an
arbitrary object, and may not perform any sort of memory trick.\ch{unclear
  what you mean by memory trick, maybe give concrete examples of the things
  you prohibit?}  This discipline
did not turn out to be a restriction for the several thousand lines of
cryptographic code we wrote in \lowstar.\ch{20,000? should maybe define
  a macro for that?}

This code can still be extracted to OCaml using the existing extraction facility
of \fstar; this would, however, be a poor idea: the presence of a garbage
collector exposes our program to side-channels, and the uniform, tagged memory
representation would make array subscripting operations linear instead of
constant-time in the C semantics.\ch{Still don't understand this one,
  seems very much like OCaml jargon}

At this stage, a program may not qualify as \lowstar; the programmer may, for
instance, use partial applications or closures. This is part of our design:
\lowstar is not a general-purpose compilation scheme to C for arbitrary \fstar
programs. Rather, we view it as a DSL that shines for writing cryptographic
systems. \ch{wasn't this already said above, all this seems just an
  unstructured collection of thoughts}

We write out the semantics of the \emfhst language as \lamstar, a core calculus
that thus models the \lowstar subset that the user programs in. This step
contains a degree of informality: we have not written out the output of
instantiating EM\fstar with HyperStack, and claim that \lamstar is the same as
``erased \emfhst'', modulo $\alpha$-renaming. \ch{gory details}

We wish to prove that our compilation scheme to C is sound. For that purpose, we
introduce \cstar, a statement language that we claim is an ``idealized''
subset of C. \cstar has an explicit notion of a stack and of block scope.
\lamstar programs require two simplification before they can be soundly
translated to \cstar; this is due to the function-level notion of scope in
\lamstar, along with the \li+StackInline+ effect our memory model offers.
The translated \cstar program refines the semantics of the
\lamstar program. \ch{gory details}

Finally, we wish to relate \cstar to an actual formalization of C. To that
effect, we define a translation from \cstar to Comp\-Cert's
Clight~\cite{Blazy-Leroy-Clight-09}. The final compilation step may be handled
by a certified compiler, such as
Comp\-Cert~\cite{Leroy-Compcert-CACM,compcert-url}, and is not covered here.
\ch{gory details}

In addition to semantics preservation, our simulations also prove that our
compilation scheme preserves side-channel security properties. Combined with our
proof that well-typed \lowstar programs are free of memory access-based and
branching-based side-channels, our toolchain thus provides a degree of
end-to-end side-channel protection.

\lowstar is \emph{implemented} via KreMLin, a new tool that extracts an \fstar
program into a set of \emph{readable}, \emph{manually-reviewable} C files that
map almost transparently to the original code. KreMLin implements a superset of
the transformations described in the present paper, and can already extract
the 20,000+ lines of cryptographic code we have written.

\lowstar is \emph{evaluted} using two flagship cryptographic developments.
The first one is \haclstar,
a \lowstar library that aims to provide a reference, gold-standard, verified set of
cryptographic routines.\ch{Could mention which ones?}
\haclstar implements the popular \li+box+/\li+unbox+ API
from \li+libsodium+; its performance is comparable to that of the original
\li+sodium+ library. The second is our AEAD\ch{what's AEAD? please explain}
development, that provides a
cryptographically-sound AEAD implementation for several combinations of stream
ciphers and MAC algorithms. We evaluate the interoperability claim via an
ongoing replacement of the cryptographic stack of the miTLS project with our
AEAD library; measured against OpenSSL, our development exhibits comparable
performance.

\ch{How about structuring the contributions better, e.g. separate subsection
  (or just used named paragraphs instead of subsections in the intro)}

\fi

\ifpagelimits
\newpage
\fi
\begin{figure*}[t!]\small
\begin{tabular}{c|c}
\begin{lstlisting}[basicstyle=\footnotesize,language=fstar]
let chacha20 
  (len: uint32{len <= blocklen}) !\label{line:c20:args1}!
  (output: bytes{len = output.length}) !\label{line:c20:args2}!
  (key: keyBytes)
  (nonce: nonceBytes{disjoint [output; key; nonce]}) 
  (counter: uint32) : Stack unit!\label{line:c20:args3}!
  (requires (fun m0 ->  output $\in$ m0 /\ key $\in$ m0 /\ nonce $\in$ m0))!\label{line:c20:spec1}!
  (ensures (fun m0 _ m1 -> modifies$_1$ output m0 m1 /\ !\label{line:c20:spec2}!
      m1.[output] == 
      Seq.prefix len (Spec.chacha20 m0.[key]  m0.[nonce]) counter))) =!\label{line:c20:spec3}!
    push_frame ();! \label{line:c20:push}!
    let state = Buffer.create 0ul 32ul in! \label{line:c20:create}!
    let block = Buffer.sub state 16ul 16ul in! \label{line:c20:sub}!
    chacha20_init block key nonce counter;
    chacha20_update output state len;
    pop_frame ()! \label{line:c20:pop}!
\end{lstlisting}
&\quad
\;\;
\begin{lstlisting}[basicstyle=\footnotesize,language=C]
void chacha20 ( 
  uint32_t len,
  uint8_t *output, 
  uint8_t *key,
  uint8_t *nonce,
  uint32_t counter)




{
  uint32_t state[32] = { 0 };
  uint32_t *block = state + 16;
  chacha20_init(block, key, nonce, counter);
  chacha20_update(output, state, len);
}
\end{lstlisting}
\end{tabular}
\caption{A snippet from ChaCha20 in \lowstar (left) and its C compilation
(right)\jp{
  Add some disjoint clauses.
}\cf{This code is further away from hacl-star.}}
\label{fig:chacha20-both}
\end{figure*}
\section{A \lowstar Tutorial}
\label{sec:examples}

At the core of \lowstar is a library for programming with
structures and arrays manually allocated on the stack or the heap (\sref{dsl}).
\cf{A bit restrictive? FStar.HyperStack is important too!}
Memory safety demands reasoning about the extents and
liveness of these objects, while functional correctness and security
may require reasoning about their contents. Our library
provides 
specifications to allow client code to be
proven safe, correct and secure,
while KreMLin compiles such verified client code to C.

We illustrate the design of \lowstar using several examples from our
codebase. We show the ChaCha20 stream cipher~\cite{chacha20}, focusing on
memory safety (\sref{chacha20}), and the Poly1305
MAC~\cite{bernstein2005poly1305}, focusing on functional correctness.
(\sref{poly1305}).
Going beyond functional correctness, we explain how we prove a
combination of ChaCha20 and Poly1305 cryptographically secure
(\sref{crypto}).
Throughout, we point out key benefits of our approach, notably our use
of dependently typed metaprogramming to work at a relatively high-level of
abstraction at little performance cost. \nik{Can the stackinline
  stuff be folded into an extended version of \sref{crypto}?}

\subsection{A First Example: the ChaCha20 Stream Cipher}
\label{sec:chacha20}



Figure~\ref{fig:chacha20-both} shows code snippets for the core
function of ChaCha20~\cite{chacha20}, a modern stream cipher widely
used for fast symmetric encryption.
This function computes a block of pseudo-random bytes, usable for
encryption, for example, by XORing them with a plaintext message.
On the left is our \lowstar code; on the right its compilation to
C. 
The function takes as arguments an output length and buffer, and some
input key, nonce, and counter. 
It allocates 32 words of auxiliary, contiguous state on the stack; then it calls a
function to initialize the cipher block from the inputs (passing an
interior pointer to the state); and finally it calls another function that
computes a cipher block and copies its first \lst$len$ bytes to the
output buffer.

Aside from the erased specifications at
lines~\ref{line:c20:spec1}--\ref{line:c20:spec3}, the C code is in
one-to-one correspondence with its \lowstar counterpart.
These specifications capture the safe memory usage of \lst$chacha20$.
(Their syntax is explained next, in~\S\ref{sec:dsl}.)
For each argument passed by reference, and for the auxiliary state,
they keep track of their liveness and size. They also capture its
correctness, by describing the final state of the \lst$output$ buffer
using a pure function.

Lines~\ref{line:c20:args1}--\ref{line:c20:args2} use type refinements to require that the
\lst$len$ argument equals the length of the \lst$output$ buffer and it does not
exceed the block size. (Violation of these conditions would
lead to a buffer overrun in the call to \lst$chacha20_update$.)
Similarly, types \lst$keyBytes$ and \lst$nonceBytes$ specify
pointers to fixed-sized buffers of bytes.
The return type \lst$Stack unit$ on \lref{c20:args3} 
says that \lst$chacha20$ returns
nothing and may allocate on the stack, but not on the heap (in
particular, it has no memory leak).
On the next line, the pre-condition \li+requires+ that all arguments
passed by reference
be live.
On lines~\ref{line:c20:spec2}--\ref{line:c20:spec3}, the post-condition 
first \li+ensures+ that the function modifies at most the contents of \li+output+ (and, 
implicitly, that all buffers remain live).
We further explain this specification in the next subsection.
The rest of the post-condition specifies functional correctness:
the \lst$output$ buffer must contain a sequence of bytes
equal to the first \lst$len$ bytes of the cipher specified by 
function \lst$Spec.chacha20$ for the input values of \lst$key$, \lst$nonce$, and \lst$counter$.

As usual for symmetric ciphers, RFC 7539 specifies \lst$chacha20$ as
imperative pseudocode, and does not further discuss its mathematical
properties.
We implement this pseudocode as a series of pure functions in 
\fstar, which can be extracted to OCaml and tested for conformance 
with the RFC test vectors.
Functions such as \lst$Spec.chacha20$ then serve as logical
specifications for verifying our stateful implementation. 
In particular, the last postcondition of \lst$chacha20$ ensures that
its result is determined by its inputs.
We describe more sophisticated functional correctness proofs 
for Poly1305 in \S\ref{sec:poly1305}.

\subsection{\lowstar: An Embedded DSL for Low-Level Code}
\label{sec:dsl}

As in ML, by default \fstar does not provide an
explicit means to reclaim memory or to allocate memory on the stack, nor does
it provide support for pointing to the interior of arrays. Next, 
we sketch the design of a new \fstar library that provides a
structured memory model suitable for program verification, while
supporting low-level features like explicit freeing, stack
allocation, and interior pointers.
In subsequent sections, we describe how programs
type-checked against this library can be compiled safely to C.
First, however, we begin with some background on \fstar.

\paragraph*{Background:} \fstar is a dependently
typed language with support for user-defined monadic effects.
Its types separate computations from values, giving the former
\emph{computation types} of the form \lst!M t$_1$ $\ldots$ t$_n$! where
\lst!M! is an effect label and \lst!t$_1$ $\ldots$ t$_n$! are
\emph{value types}. For example, \lst!Stack unit (...) (...)! on lines
\ref{line:c20:spec1}--\ref{line:c20:spec2}
of Figure~\ref{fig:chacha20-both} is an instance of a
computation type, while types like \lst!unit! are value types.
There are two distinguished computation types: \lst!Tot t! is the type
of a total computation returning a \lst!t!-typed value;
\lst!Ghost t!, a computationally irrelevant computation returning a
\lst!t!-typed value.
Ghost computations are useful for specifications and proofs but
are erased when extracting to OCaml or C.
%
%

To add state to \fstar, one defines a state monad represented (as
usual) as a function from some initial memory \lst!m$_0$:s! to a
pair of a result \lst!r:a! and a final memory \lst!m$_1$:s!, for
some type of memory \lst!s!. Stateful computations are specified
using the computation type:
\begin{lstlisting}[numbers=none]
  ST (a:Type) (pre: s -> Type) (post: s -> a -> s -> Type)
\end{lstlisting}
Here, \lst$ST$ is a computation type constructor applied to three
arguments: a result type \lst!a!; a pre-condition predicate on the
initial memory, \lst$pre$; and a post-condition predicate relating the
initial memory, result and final memory. We generally annotate the
pre-condition with the keyword \lst$requires$ and the post-condition
with \lst$ensures$ for better readability.  A computation \lst$e$ of
type
\lst$ST a (requires pre) (ensures post)$, when run in an initial
memory \lst!m$_0$:s! satisfying \lst!pre m!, produces a result
\lst!r:a! and final memory \lst!m$_1$:s! satisfying
\lst!post m$_0$ r m$_1$!, unless it diverges.\footnote{\fstar recently
  gained support for proving stateful computations terminating. We
  have begun making use of this feature to prove our code terminating,
  wherever appropriate, but make no further mention of this.} \fstar
uses an SMT solver to discharge the verification conditions it
computes when type-checking a program.






\paragraph{Hyper-stacks: A region-based memory model for \lowstar}
For \lowstar, we instantiate the type \lst$s$ in the state monad to
\lst$HyperStack.mem$ (which we refer to as just
``hyper-stack''), a new region-based memory model~\citep{tt97regions}
covering both stack and heap. Hyper-stacks are a
generalization of hyper-heaps, a memory model proposed previously for
\fstar~\cite{mumon}, designed to provide lightweight support for
separation and framing for stateful verification. 
Hyper-stacks augment hyper-heaps with a shape invariant to indicate
that the lifetime of a certain set of regions follow a specific
stack-like discipline. We sketch the \fstar signature of hyper-stacks
next.

\paragraph*{A logical specification of memory} Hyper-stacks partition
memory into a set of regions.
%
%
Each region is identified by an
\lst$rid$ and regions are classified as either stack or heap regions,
according to the predicate \lst$is_stack_region$---we use the type
abbreviation \lst$sid$ for stack regions and \lst$hid$ for heap
regions. A distinguished stack region, \lst$root$, outlives all other stack regions. The
snippet below is the corresponding \fstar code.

\begin{lstlisting}[numbers=none]
type rid
val is_stack_region: rid -> Tot bool
type sid = r:rid{is_stack_region r}
type hid = r:rid{$\lnot$ (is_stack_region r)}
val root: sid
\end{lstlisting}

Next, we show the (partial) signature of \lst$mem$, our model of the
entire memory, which is equipped with a select/update
theory~\cite{mccarthy62} for typed references \lst$ref a$.
Additionally, we have a function to refer to the
\lst$region_of$ a reference, and a relation \lst$r \in m$ to indicate
that a reference is live in a given memory.

\begin{lstlisting}[numbers=none]
type mem
type ref : Type -> Type
val region_of: ref a -> Ghost rid
val `_ \in _`  : ref a -> mem -> Tot Type  (* a ref is contained in a mem *)
val `_ [_] `     : mem -> ref a -> Ghost a   (* selecting a ref *)
val `_ [_] <- _` : mem -> ref a -> a -> Ghost mem (* updating a ref *)
val rref r a = x:ref a {region_of x = r} (* abbrev. for a ref in region r *)
\end{lstlisting}

\paragraph*{Heap regions} By defining the \lst$ST$ monad
over the \lst$mem$ type, we can program stateful primitives for creating
new heap regions, and allocating, reading,
writing and freeing references in those regions---we show some of
their signatures below. Assuming an infinite amount of memory,
\lst$alloc$'s pre-condition is trivial while its post-condition
indicates that it returns a fresh reference in region \lst$r$
initialized appropriately.  Freeing and dereferencing ($!$) require their
argument to be present in the current memory, eliminating double-free
and use-after-free bugs.

\ch{Could save quite a bit of space by bringing the requires on the
  same line as ST. It looks more crammed, but might have to make
  these kind of compromises for space.}

\begin{lstlisting}[numbers=none]
val alloc: r:hid -> init:a -> ST (rref r a) (ensures  (fun m0 x m1 -> x $\not\in$ m0 /\ x \in m1 /\ m1 = (m0[x]<-init)))
val free: r:hid -> x:rref r a -> ST unit (requires (fun m -> x \in m)) (ensures  (fun m$_0$ $\_$ m$_1$ -> x $\not\in$ m$_1$ /\ forall y<>x. m$_0$[y] = m$_1$[y]))
val ($!$): x:ref a -> ST a (requires (fun m -> x \in m)) (ensures  (fun m0 y m1 -> m0 = m1 /\ y = m1[x]))
\end{lstlisting}

Since we support freeing individual references within a region, our
model of regions could seem similar to \citet{berger02reaps}'s
\emph{reaps}. However, at present, we do not support
freeing heap objects \emph{en masse} by deleting heap regions;
indeed, this would require using a special memory allocator.
Instead, for us heap regions serve only to {\em logically} partition
the heap in support of separation and modular
verification, as is already the case for hyper-heaps~\cite{mumon},
and heap region creation is currently compiled to a no-op by KreMLin.

\paragraph*{Stack regions,} which we will henceforth call {\em stack
  frames}, serve not just as a reasoning device, but provide the
efficient C stack-based memory management mechanism. KreMLin maps
stack frame creation and destruction directly to
the C calling convention and lexical scope.
To model this, we extend the signature of \lst$mem$ to include a
\lst$tip$ region representing the currently active stack frame, ghost
operations to \lst$push$ and \lst$pop$ frames on the stack
of an explicitly threaded memory,
and their effectful analogs, \lst$push_frame$ and \lst$pop_frame$ that modify
the current memory.
In \lst$chacha20$ in Fig.~\ref{fig:chacha20-both}, the
\lst$push_frame$ and \lst$pop_frame$ correspond precisely to the
braces in the C program that enclose a function body's scope.
We also provide a derived combinator, \lst[language={}]{with_frame},
which combines \lst$push_frame$ and \lst$pop_frame$ into a single,
well-scoped operation. Programmers are encouraged to use the
\lst[language={}]{with_frame} combinator, but, when more convenient
for verification, may also use \lst$push_frame$ and \lst$pop_frame$
directly. KreMLin ensures that all uses of \lst$push_frame$ and
\lst$pop_frame$ are well-scoped.
Finally, we show the signature of \lst$salloc$ which allocates a
reference in the current \lst$tip$ stack frame.

\begin{lstlisting}[numbers=none]
val tip: mem -> Ghost sid
val push: mem -> Ghost mem
val pop:  m:mem{tip m <> root} -> Ghost mem
val push_frame: unit -> ST unit (ensures  (fun m0 () m1 -> m1 = push m0))
val pop_frame: unit -> ST unit (requires (fun m -> tip m <> root)) (ensures  (fun m0 () m1 -> m1 = pop m0))
val salloc: init:a -> ST (ref a) (ensures (fun m0 x m1 -> x $\not\in$ m0 /\ x \in m1 /\ region_of x = tip m1 /\
                                               tip m0 = tip m1 /\ m1 = (m0[x] <- init)))
\end{lstlisting}

\paragraph*{The \lst$Stack$ effect} The specification of \lst$chacha20$
claims that it uses only stack allocation and has no memory leaks,
using the \lst$Stack$ computation type. This is straightforward to
define in terms of \lst$ST$, as shown below.

\begin{lstlisting}[numbers=none]
effect Stack a pre post = ST a (requires pre)
                           (ensures (fun m0 x m1 -> post m0 x m1 /\  tip m0 = tip m1 /\ (forall r. r \in m1<==>r \in m0)))
\end{lstlisting}

\noindent \lst$Stack$ computations are \lst$ST$ computations that
leave the stack tip unchanged (i.e., they pop all frames they may
have pushed) and yield a final memory with the same domain as the initial memory.
This ensures that \lowstar code with \lst+Stack+ effect has explicitly
deallocated all heap allocated references before returning,
ruling out memory leaks.
As such, we expect all externally callable \lowstar functions to have
\lst$Stack$ effect.
Other code can safely pass pointers to objects allocated in their
heaps into \lowstar functions with \li+Stack+ effect since the definition of
\li+Stack+ forbids the \lowstar code from freeing these references.

\cf{Comment that \lst$Stack$ may in principle return 'dangling
  pointers'?  type safety ensures such pointers cannot be dereferenced.}

\paragraph{Modeling arrays} Hyper-stacks separate heap and
stack memory, but each region of memory still only supports abstract,
ML-style references. A crucial element of low-level programming is
control over the specific layout of objects, especially for arrays and structs.
We describe first our modeling of arrays by implementing an abstract
type for buffers in \lowstar, using just the references provided by
hyper-stacks. Relying on its abstraction, KreMLin compiles our buffers to
native C arrays.

The type `\lst$buffer a$' below is a single-constructor inductive type
with 4 arguments.
Its main \lst$content$ argument
holds a reference to a \lst$seq a$, a purely functional sequence of
\lst$a$'s whose length is determined by the first argument
\lst$max_length$. The refinement type \lst$b:buffer uint32{length b = n}$ is
translated to a C declaration \li+uint32_t b[n]+ by KreMLin and, relying on C
pointer decay,\ch{ha?} further referred to via \li+uint32_t *+.

\begin{lstlisting}[numbers=none]
abstract type buffer a =
  | MkBuffer: max_length:uint32
    -> content:ref (s:seq a{Seq.length s = max_length})
    -> idx:uint32
    -> length:uint32 {idx + length <= max_length} -> buffer a
\end{lstlisting}


\noindent 
The last two arguments of a buffer are there to support creating smaller
sub-buffers from a larger buffer, via the \li+Buffer.sub+ operation
below. A call to `\lst$Buffer.sub b i l$' returning \lst!b$'$! is compiled
to C pointer arithmetic \li!b + i! (as seen in
Figure~\ref{fig:chacha20-both} line~\ref{line:c20:sub} in
\lst$chacha20$). To accurately model this, the \li+content+ field is
shared between \li+b+ and \li+b$'$+, but \li+idx+ and \li+length+
differ, to indicate that the sub-buffer \li+b$'$+ covers only a
sub-range of the original buffer \li+b+. 
The \lst+sub+ operation has computation type \li+Tot+, meaning that it
does not read or modify the state. The refinement on the result
\lst+b$'$+ indicates its length and, using the \lst$includes$
relation, records that \lst$b$ and \lst+b$'$+ are aliased.

\cf{pls check; is there a better place to explain machine integers?
  Also, shouldn't we account for the size of \lst$a$ to prevent
  overflows? Or use 64-bit indexes when compiling to a 64-bit
  architecture? We may also explain that (presumably)
  \lst$Buffer.create$ fails to allocate oversized buffers---I guess C
  programmers explicitly check for allocation errors. }

\cf{We had long discussions about the purity of \lst$sub$; noop for now?}

\begin{lstlisting}[numbers=none]
val sub: b:buffer a -> i:uint32 -> len:uint32{i + len <= b.length} -> Tot (b':buffer a{b'.length = len /\ b `includes` b'})
\end{lstlisting}

\cf{The first refinement above is at odd with my understanding of
  integer overflows. Have we decided to omit \lst$UInt32.v$ for
  simplicity?}

We also provide statically bounds-checked operations for indexing and
updating buffers. The signature of the \lst$index$ function below requires the
buffer to be live and the index location to be within bounds. Its
postcondition ensures that the memory is unchanged and describes what
is returned in terms of the logical model of a buffer as a sequence.

\begin{lstlisting}[numbers=none]
let get (m:mem) (b:buffer a) (i:uint32{i < b.length}) : Ghost a = Seq.index (m[b.content]) (b.idx + i)
val index: b:buffer a -> i:uint32{i < b.length} -> Stack a
  (requires (fun m -> b.content \in m))
  (ensures (fun m0 z m1 -> m1 = m0 /\ z = get m1 b i))
\end{lstlisting}

All lengths and indices are 32-bit machine integers, and refer to the number of
elements in the buffer, not the number of bytes the buffer occupies. 
This currently prevents addressing very large buffers on 64-bit platforms.
(To this end, we may parameterize our development over a C data model,
wherein indices for buffers would reflect the underlying (proper)
\li+ptrdiff_t+ type.)

Similarly, memory allocation remains platform-specific. 
It may cause a (fatal) error as it runs out of memory.
More technically, the type of \li+create+ may not suffice to prevent
pointer-arithmetic overflow; if the element size is greater than a
byte, and if the number of elements is $2^{32}$, then the argument
passed to \li+malloc+ will overflow on a platform where %
\li+sizeof size_t == 4+.
To prevent such cases, KreMLin inserts defensive dynamic checks
(which typically end up eliminated by the C compiler since our
stack-allocated buffer lengths are compile-time constants).
In the future, we may statically prevent it by mirroring the C \li+sizeof+
operator at the \fstar level, and requiring that for each
\li+Buffer.create+ operation, the resulting allocation size, in bytes,
is no greater than what \li+size_t+ can hold.

\paragraph{Modeling structs}
\label{sec:structs}

Generalizing `\lst$buffer t$' (abstractly, a reference to a finite map
from natural numbers \cf{unsigned machine integers?}%
to \lst$t$), we model C-style structs as an
abstract reference to a `\lst$struct key value$', that is, a map from keys
\lst$k:key$ to values whose type `\lst$value k$' depends on the
key. 
For example, we represent the type of a colored point as follows,
using a struct with two fields \lst$X$ and \lst$Y$ for coordinates and one 
field \lst$Color$, itself a nested struct of RGB
values.

\begin{lstlisting}[numbers=none]
type color_fields = R | G | B
type color = struct color_fields (fun R | G | B -> uint32)
type colored_point_fields = X | Y | Color
type colored_point = struct colored_point_fields (fun X | Y -> int32 | Color -> color)
\end{lstlisting}

\cf{Wondering how to better balance the buffer vs struct presentation,
  e.g. we don't have local examples for buffers and we don't discuss
  their framing}

C structs are flatly allocated; the declaration above models a contiguous memory
block that holds 20 bytes or more, depending on alignment constraints. As such,
we cannot directly perform pointer arithmetic within that block; rather, we 
navigate it by referring to fields. 
To this end, our library of structs provides an interface to
manipulate pointers to these C-like structs, including pointers that
follow a path of fields throughout nested structs. The main type
provided by our library is the indexed type \lst$ptr$ shown below,
encapsulating a base reference \lst$content: ref from$ and a path
\lst$p$ of fields leading to a value of type \lst$to$.

\begin{lstlisting}[numbers=none]
abstract type ptr: Type -> Type = Ptr: #from:Type -> content: ref from -> #to: Type -> p: path from to -> ptr to
\end{lstlisting}

When allocating a struct on the stack, the caller provides a
`\lst$struct k v$' literal and obtains a
`\lst$ptr (struct k v)$', a pointer to a struct literal in the current
stack frame (a \lst$Ptr$ with an empty path).

The \lst$extend$ operator below supports extending the access path
associated with a `\lst$ptr (struct k v)$' to obtain a pointer to one of
its fields.

\begin{lstlisting}[numbers=none]
val extend: #key: eqtype -> #value: (key -> Tot Type) -> p: ptr (struct key value) -> fd: key -> ST (ptr (value fd))
  (requires (fun h -> live h p))
  (ensures (fun h0 p$'$ h1 -> h0 == h1 /\ p$'$ == field p fd))
\end{lstlisting}

Finally, the \lst$read$ and \lst$write$ operations allows accessing and
mutating the field referred to by a \lst$ptr$.

\begin{lstlisting}[numbers=none]
val read: #a:Type -> p: ptr a -> ST value
  (requires (fun h -> live h p))
  (ensures (fun h0 v h1 -> live h0 p /\ h0 == h1 /\ v == as_value h0 p))

val write: #a:Type -> b:ptr a -> z:a -> ST unit
  (requires (fun h -> live h b))
  (ensures (fun h0 _ h1 -> live h0 b /\ live h1 b /\ modifies_1 b h0 h1 /\ as_value h1 b == z))
\end{lstlisting}

\subsection{Using \lowstar for Proofs of Functional Correctness and Side-Channel Resistance}
\label{sec:poly1305}

This section and the next illustrate our ``high-level
verification for low-level code'' methodology.
Although programming at a low-level, we rely on features like type
abstraction and dependently typed meta-pro\-gramming, to prove our
code functionally correct, cryptographically secure, and free of a
class of side-channels.

We start with Poly1305~\cite{bernstein2005poly1305}, a Message
Authentication Code (MAC) algorithm.\footnote{Implementation bugs in
  Poly1305 are still a practical concern: in 2016 alone, the Poly1305 OpenSSL
  implementation experienced two correctness bugs~\cite{polybug,polybug2}
  and a buffer overflow~\cite{CVE7054}.}
%
%
Unlike \lst$chacha20$, for which
the main property of interest is implementation safety, Poly1305 has a
mathematical definition in terms of a polynomial in the prime field
$\ii{GF}(2^{130}-5)$, against which we prove our code functionally correct.
Relying on correctness, we then prove injectivity lemmas on encodings
of messages into field polynomials, and we finally prove
cryptographic security of a one-time MAC construction for Poly1305,
specifically showing unforgeability against chosen message attacks
(UF1CMA).
This game-based proof involves an idealization step, justified by a
probabilistic proof on paper, following the methodology we explain in
\S\ref{sec:crypto}.

For side-channel resistance, we use type abstraction to ensure that our
code's branching and memory access patterns are secret independent.
This style of \fstar{} coding is advocated by Zinzindohou{\'e} et
al.~\cite{ZBB16}; we place it on formal ground by showing that
it is a sound way of enforcing secret independence at the source level
(\sref{lamstar}) and that our compilation carries such properties to
the level of Clight (\sref{to-clight}).
To carry our results further down, one may validate the output of the C
compiler by relying on recent tools proving side-channel resistance at
the assembly level~\cite{almeida-usenix2016, almeida16fse}.
We sketch our methodology on a small snippet from our
specialized arithmetic (bigint) library upon which we built Poly1305.

\paragraph*{Representing field elements using bigints}
We represent elements of the field underlying Poly1305 as $130$-bit
integers stored in \lowstar buffers of machine integers called
\emph{limbs}.
Spreading out bits evenly across $32$-bit words yields five limbs
$\ell_i$, each holding $26$ bits of significant data.
A ghost function $\kw{eval} = \sum_{i=0}^4 \ell_i\times2^{26\times i}$ maps
each buffer to the mathematical integer it represents.
Efficient bigint arithmetic departs significantly from elementary
school algorithms. Additions, for instance, can be made more efficient
by leveraging the extra $6$ bits of data in each limb to delay carry
propagation.
For Poly1305, a bigint \lst$b$ is in compact form in state \lst$m$
(i.e., \lst$compact m b$) when all its limbs fit in $26$ bits.
Compactness does not guarantee uniqueness of representation as
$2^{130}-5$ and $0$ are the same number in the field but they have two
different compact representations that both fit in $130$ bits---this
is true for similar reasons for the range $[0,5)$.

\paragraph*{Abstracting integers as a side-channel mitigation}
Modern cryptographic implementations are expected to be protected
against side-channel 
attacks~\cite{Kocher1996}.
As such, we aim to show that the branching behavior and memory
accesses of our crypto code are independent of secrets. To
enforce this, we use an abstract type \lst$limb$ to represent limbs,
all of whose operations reveal no information about the contents of
the \lst$limb$, either through its result or through its branching behavior and
memory accesses. For example, rather than providing a
comparison operator, \lst$eq_leak: limb -> limb -> Tot bool$, whose
boolean result reveals information about the input limbs, we use a
masking operation (\lst$eq_mask$) to compute equality securely. Unlike OCaml, \fstar's
equality is not fully polymorphic, being restricted to only those types
that support decidable equality, \lst$limb$ not being among them. 

\begin{lstlisting}[numbers=none]
val v : limb -> Ghost nat  (* limbs only ghostly revealed as numbers *)
val eq_mask: x:limb -> y:limb -> Tot (z:limb{if v x <> v y then v z = 0 else v z = pow2 26 - 1})
\end{lstlisting}

\noindent In the signature above, \lst$v$ is a function that reveals
an abstract \lst$limb$ as a natural number, but only in ghost code---a
style referred to as translucent abstraction~\cite{mumon}. The
signature of \lst$eq_mask$ claims that it returns a zero limb if the
two arguments differ, although computationally relevant code cannot
observe this fact. \ch{bad transition, what's the precise connection}Note,
the number of limbs in a Poly1305 bigint is a
public constant, i.e., \lst$bigint = b:(buffer limb){b.length = 5}$.

\begin{figure*}[t!]\small
\begin{tabular}{c|c}
\begin{lstlisting}[language=fstar]
let normalize (b:bigint) : Stack unit
  (requires (fun m$_0$ -> compact m$_0$ b))
  (ensures (fun m$_0$ () m$_1$ -> compact m$_1$ b /\ modifies$_1$ b m$_0$ m$_1$ /\
             eval m$_1$ b = eval m$_0$ b % (pow2 130 - 5)))
= let h0 = ST.get() in (* a logical snapshot of the initial state *)
  let ones = 67108863ul in (* 2^26 - 1 *)
  let m5   = 67108859ul in (* 2^26 - 5 *)
  let m    = (eq_mask b.(4ul) ones) & (eq_mask b.(3ul) ones)
        & (eq_mask b.(2ul) ones) & (eq_mask b.(1ul) ones) 
        & (gte_mask b.(0ul) m5) in
  b.(0ul) <- b.(0ul) - m5 & m;
  b.(1ul) <- b.(1ul) - b.(1ul) & m; b.(2ul) <- b.(2ul) - b.(2ul) & m;
  b.(3ul) <- b.(3ul) - b.(3ul) & m; b.(4ul) <- b.(4ul) - b.(4ul) & m;
  lemma_norm h0 (ST.get()) b m (* relates mask to eval modulo *) !\label{line:finalize:lemma}!
\end{lstlisting}

&\quad

\begin{lstlisting}
val poly1305_mac:
  tag:nbytes 16ul ->
  len:u32 ->
  msg:nbytes len{disjoint tag msg} ->
  key:nbytes 32ul{disjoint msg key /\
                  disjoint tag key} ->
  Stack unit
(requires (fun m -> msg $\in$ m /\ key $\in$ m /\ tag $\in$ m))
(ensures (fun m$_0$ _ m$_1$ ->
 let r = Spec.clamp m$_0$[sub key 0ul 16ul] in
 let s = m$_0$[sub key 16ul 16ul] in
 modifies {tag} m$_0$ m$_1$ /\
 m$_1$[tag] ==
 Spec.mac_1305 (encode_bytes m$_0$[msg]) r s))
\end{lstlisting}
\end{tabular}
\caption{Unique representation of a Poly1305 bigint (left) and the top-level spec of Poly1305 (right)}
\label{fig:finalize}
\end{figure*}

\paragraph*{Proving \lst$normalize$ correct and side-channel resistant}
The \lst$normalize$ function of \fref{finalize} modifies a compact
bigint in-place to reduce it to its canonical representation. The code
is rather opaque, since it operates by strategically masking each limb
in a secret independent manner. However, its specification clearly
shows its intent: the new contents of the input bigint is the same as
the original one, \emph{modulo $2^{130}-5$}. At
line~\autoref{line:finalize:lemma}, we see a call to a \fstar lemma,
which relates the masking operations to the modular arithmetic in the
specification---the lemma is erased during extraction.

\paragraph*{A top-level functional correctness spec} Using our
bigint library, we implement \lst$poly1305_mac$ and prove it
functionally correct. Its specification (\fref{finalize}, right)
states that the final value of the 16 byte tag (\lst@m$_1$[tag]@) is the
value of \lst$Spec.mac_1305$, a polynomial of the message and the key
encoded as field elements. We use this mathematical specification as a
basis for the game-based cryptographic proofs of constructions built
on top of Poly1305, such as the AEAD construction, described next.

\if0
\subsection{A Curve: Curve25519}
Curve25519~\cite{curve25519} is an elliptic curve defined of the prime field
$GF(2^{255}-19)$. It is suitable for use in the Diffie-Helman key agreement
protocol over elliptic curves (ECDH). ~\cite{ZBB16} showed how to prove functional
correctness for several elliptic curve computations in \fstar, including Curve25519.
The proof mechanism was generic and relying on modular structure that isolated the
underlying prime field operations, the curve group operations and the group scalar multiplication
(corresponding to the exponentiation of the ECDH algorithm).
The compilation process relied on both the extraction of the \fstar code to OCaml code, and
the OCaml compiler and runtime system. The running code was too slow real-life applications,
and lacked any form of side-channel mitigation.

We took over that implementation of Curve25519 in \fstar and ported it to the \lowstar subset. Using
proofs mechanisms similar to those presented in ~\cite{ZBB16} and the previous section we
show that our \lowstar code correctly implements the $GF(2^{255}-19)$ field operations. Memory-safety
is guaranteed by the \lowstar type-system and we enforce the side-channel mitigation mechanisms
that were part of the original implementation. The generated C code is much faster than the
previous OCaml code.
\fi


\subsection{Cryptographic Provable-Security Example: AEAD}
\label{sec:crypto}


Going beyond functional correctness, we sketch how we use \lowstar
to do security proofs in the standard model of cryptography, using
``authenticated encryption with associated data'' (AEAD) as a
sample construction. 
AEAD is the main protection mechanism for the TLS record layer; it
secures most Internet traffic.

AEAD has a generic security proof by reduction to two core
functionalities: a stream cipher (such as ChaCha20) and a one-time-MAC
(such as Poly1305).
The cryptographic, game-based argument supposes that these two
algorithms meet their intended \emph{ideal functionalities}, e.g.,
that the cipher is indistinguishable from a random function.
Idealization is not perfect, but is supposed to hold against
computationally limited adversaries, except with small probabilities,
say, $\varepsilon_\mathrm{ChaCha20}$ and $\varepsilon_\mathrm{Poly1305}$.
The argument then shows that the AEAD construction also meets its own
ideal functionality, except with probability, say, $\varepsilon_\mathrm{Chacha20} +
\varepsilon_\mathrm{Poly1305}$.

To apply this security argument to our implementation of AEAD, we need
to encode such assumptions.  To this end, we supplement our real
\lowstar code with ideal \fstar code.
%
%
For example, ideal AEAD is programmed as follows:
\begin{itemize}
\item encryption generates a fresh random ciphertext, and it records
  it together with the encryption arguments in a log.

\item decryption simply looks up an entry in the log that matches its
  arguments and returns the corresponding plaintext, or reports an
  error.
\end{itemize}
These functions capture both confidentiality (ciphertexts do not
depend on plaintexts) and integrity (decryption only succeeds on
ciphertexts output by encryption).
Their behaviors are precisely captured by typing, using pre- and
post-conditions about the ghost log shared between them, and
abstract types to protect plaintexts and keys.
%
%
%
%
We show below the abstract type of keys and the encryption function for
idealizing AEAD.
\begin{lstlisting}[numbers=none]
type entry = cipher * data * nonce * plain
abstract type key = { key: keyBytes; log: if Flag.aead then ref (seq entry) else unit }
let encrypt (k:key) (n:nonce) (p:plain) (a:data)  =
  if Flag.aead then let c = random_bytes !$\ell_c$! in k.log := (c, a, n, p) :: k.log; c
  else encrypt k.key n p a
\end{lstlisting}

A module \li+Flag+ declares a set of abstract booleans (\emph{idealization
flags}) that precisely capture each cryptographic assumption.
For every real functionality that we wish to idealize, we branch on
the corresponding flag. In the code above, for instance we idealize
encryption when \li+Flag.prf+ is set.
%
%
%

This style of programming heavily relies on the normalization capabilities of
\fstar. 
At verification time, flags are kept abstract, so that we verify both the real and ideal versions of the code.
%
At extraction time, we reveal these booleans to be \li+false+, allowing the
\fstar normalizer to drop the \li+then+ branch, and replace the
\li+log+ field with \li+unit+, meaning that both the high-level,
list-manipulating code and corresponding type definitions are erased, leaving
only low-level code from the \li+else+ branch to be extracted.

Using this technique, we verify by typing that our AEAD code,
when using \emph{any} ideal cipher and one-time MAC, perfectly
implements ideal AEAD.
We also rely on typing to verify that our code complies with the
pre-conditions of the intermediate proof steps. Finally, we also prove
that our code does not reuse nonces, a common cryptographic pitfall.

\paragraph{Inlining and Type Abstraction} 

In cryptographic constructions, we often rely on type abstraction to
protect private state that depends on keys and other secrets.

A typical C application, such as OpenSSL, achieves limited type abstraction
as follows. The library exposes a public C header file for its clients,
relying on \li+void *+ and opaque heap allocation functions for type abstraction.
\begin{lstlisting}[numbers=none,language=C]
typedef void *KEY_HANDLE;
void KEY_init(KEY_HANDLE *key);
void KEY_release(KEY_HANDLE key);
\end{lstlisting}
Opportunities for mistakes abound, since the \li+void *+ casts are unchecked.
Furthermore, abstraction only occurs at the public header boundary, not between
internal translation units. Finally, this pattern does not allow the caller to
efficiently allocate the actual key on the stack.

The \lowstar discipline allows the programmer to achieve type abstraction and
modularity, while still supporting efficient stack allocation.
As an example, for computing one-time MACs incrementally, we use an accumulator
that holds the current value of a polynomial computation, which depends on a
secret key.  For cryptographic soundness, we must ensure that no information
about such intermediate values leak to the rest of the code.

To this end, all operations on accumulators are defined in a single
module of the form below---our code is similar but more complex, as it
supports MAC algorithms with different field representations and key
formats, and also keeps track of the functional correctness of the
polynomial computation.

\begin{lstlisting}[numbers=none]
module OneTimeMAC
type elem = lbytes (v accLen) (* intermediate value (representing a a field element) *) 
abstract type key (i:macID) = elem
abstract type accum (i:macID) = elem
(* newAcc allocates on the caller's frame *)
let newAcc (i:macID) : StackInline (accum i) (...) = Buffer.create 0ul accLen 
let extend (i:macID) (key: macKey i) (acc:accum i) (word:elem) : Stack unit (...) = add acc word; mul acc key
\end{lstlisting}

The index \lst$i$ is used to separate multiple instances of MACs; for
instance, it prevents calls to extend an accumulator with the wrong
key. Our type-based separation between different kinds of elements is
purely static. At runtime, the accumulator, and probably the key, are
just bytes on the stack (or in registers), whereas the calls to
\lst$add$ and \lst$mul$ are also likely to be have been inlined in the
code that uses MACs.

The \lst$newAcc$ function creates a new buffer for a given index, initialized to
0. The function returns a pointer to the buffer it allocates. The
\li+StackInline+ effect indicates that the function does need to push a frame
before allocation, but instead allocates in its caller's stack frame.
\kremlin textually inlines the function in its caller's body at every
call-site, ensuring that the allocation performed by \li+newAcc+
indeed happens in the caller's stack frame.
From the perspective of \lowstar, \li+newAcc+ is a function in a
separate module, and type abstraction is preserved.

\ifpagelimits
\newpage
\fi
\section{A formal translation from \lowstar to Clight}
\label{sec:formal}

\newcommand\emf{{\sc emf}$^\star$\xspace}
\newcommand\emfST{{\sc emf}$^\star_{\text {\sc st}}$\xspace} 



Figure~\ref{fig:bigpicture} on page~\pageref{fig:bigpicture} provides
an overview of our translation from \lowstar to CompCert Clight,
starting with \emf, a recently proposed model
of \fstar~\cite{dm4free}; then \lamstar, a formal core of \lowstar
after all erasure of ghost code and specifications; then \cstar, an
intermediate language that switches the calling convention closer to
C; and finally to Clight.
In the end, our theorems establish that: (a) the safety and functional
correctness properties verified at the \fstar level carry on to the
generated Clight code (via semantics preservation), and (b) \lowstar
programs that use the secrets parametrically enjoy the trace
equivalence property, at least until the Clight level, thereby
providing protection against side-channels.



\paragraph*{Prelude: Internal transformations in \emf}
We begin by briefly describing a few internal transformations on \emf,
focusing in the rest of this section on the pipeline from \lamstar to
Clight---the formal details are in the appendix. To express
computational irrelevance, we extend \emf with a primitive \kw{Ghost}
effect. An erasure transformation removes ghost subterms, and we prove
that this pass preserves semantics, via a logical relations
argument. Next, we rely on a prior result~\cite{dm4free} showing
that \emf programs in the \lst$ST$ monad can be safely reinterpreted
in \emfST, a calculus with primitive state. We obtain an instance
of \emfST suitable for \lowstar by instantiating its state type
with \lst$HyperStack.mem$. To facilitate the remainder of the
development, we transcribe \emfST to \lamstar, which is a restriction
of \emfST to first-order terms that only use stack memory, leaving the
heap out of \lamstar, since it is not a particularly interesting
aspect of the proof. This transcription step is essentially
straightforward, but is not backed by a specific proof. We plan to
fill this gap as we aim to mechanize our entire proof in the future.


\nik{Mention non-termination here?}

\subsection{\lamstar: A Formal Core of \lowstar Post-Erasure}
\label{sec:lamstar}

\begin{figure}[t]
\vspace{-1em}
\begin{small}
\[\!\!\!\!\!
  \begin{array}{r@{~}c@{~}l@{}}
    \tau  & ::= & \kw{int} \mid \kw{unit} \mid \{\ls{\fd=\tau}\} \mid \kw{buf}~\tau \mid \alpha\\
    \lv   & ::= & x \mid n \mid () \mid \{\ls{\fd=\lv}\} \mid (b, n, \ls{\fd}) \\
    \lexp & ::= & \elet{x:\tau}{\ereadbuf {\lexp_1}{\lexp_2}}{\lexp} \mid \elet{\_}{\ewritebuf{\lexp_1}{\lexp_2}{\lexp_3}}{\lexp} \\
          & \mid &\elet{x}{\enewbuf{n}{(\lexp_1:\tau)}}{\lexp_2} \mid \esubbuf{\lexp_1}{\lexp_2}  \\
          & \mid & \elet{x:\tau}{\ereadstruct {\lexp_1}}{\lexp} \mid \elet{\_}{\ewritestruct{\lexp_1}{\lexp_2}}{\lexp} \\
          & \mid &\elet{x}{\enewstruct{(\lexp_1:\tau)}}{\lexp_2} \mid \estructfield{\lexp_1}{\fd}  \\
          & \mid &\withframe\;\lexp \mid \epop\;\lexp \mid \eif {\lexp_1}{\lexp_2}{\lexp_3} \\
          & \mid &\elet{x:\tau}{d\;\lexp_1}{\lexp_2} \mid \elet{x:\tau}{\lexp_1}{\lexp_2} \mid \{\ls{\fd=\lexp}\} \mid \lexp.\fd  \mid \lv \\
    \lp   & ::= &\cdot \mid \etlet{d}{\lambda y:\tau_1. \; \lexp : \tau_2}, \lp \\
\end{array} 
\]
\end{small}
\caption{\lamstar syntax}
\label{fig:lamstar-syntax}
\end{figure}

The meat of our formalization of \lowstar begins with \lamstar, a
first-order, stateful language, whose state is structured as a stack
of memory regions. It has a simple calling convention using a
traditional, substitutive $\beta$-reduction rule. Its small-step
operational semantics is instrumented to produce traces that record
branching and the accessed memory addresses. As such, our traces
account for side-channel vulnerabilities in programs based on the
program counter model~\cite{molnar05pcmodel} augmented to track
potential leaks through cache behavior~\cite{barthe-ccs2014}. We
define a simple type system for \lamstar and prove that programs
well-typed with respect to some values at an abstract type produce
traces independent of those values, e.g., our bigint library when
translated to \lamstar is well-typed with respect to an abstract type
of \lst$limb$s and leaks no information about them via their traces.

\paragraph{Syntax} Figure~\ref{fig:lamstar-syntax} shows the syntax
of \lamstar. A program $P$ is a sequence of top-level function
definitions, $d$. We omit loops but allow recursive function definitions.
Values $v$ include constants, immutable records, and buffers $(b, n, [])$ and mutable structures $(b, n, \ls{\fd})$
passed by reference, where $b$ is the address of the buffer or structure, $n$ is the
offset in the buffer, and $\ls{\fd}$ designates the path to the structure field to take a reference of (this path, as a list, can be longer than 1 in the case of nested mutable structures.)
Stack allocated buffers (\kw{readbuf}, \kw{writebuf}, \kw{newbuf}, and
\kw{subbuf}), and their mutable structure counterparts (\kw{readstruct}, \kw{writestruct}, \kw{newstruct}, $\structfield$), are the main feature of the expression language, along
with $\withframe\;\lexp$, which pushes a new frame on the stack for
the evaluation of $\lexp$, after which it is popped (using
$\epop\;\lexp$, an administrative form internal to the calculus).
Once a frame is popped, all its local buffers and mutable structures become inaccessible.

Mutable structures can be nested, and stored into buffers, in both cases without extra indirection. However, the converse is not true, as \lamstar currently does not allow arbitrary nesting of arrays within mutable structures without explicit indirection via separately allocated buffers. We leave such generalization as future work.






\paragraph{Type system} \lamstar types include the base
types \kw{int} and \kw{unit}, record types $\{\ls{\fd=\tau}\}$, buffer
types $\kw{buf}\;\tau$, mutable structure types $\kw{struct}\;\tau$, and abstract types $\alpha$.  The typing
judgment has the form, $\Gamma_P; \Sigma; \Gamma \vdash e : \tau$,
where $\Gamma_P$ includes the function signatures; $\Sigma$ is the
store typing; and $\Gamma$ is the usual context of variables. We elide
the rules, as it is a standard, simply-typed type system. The type
system guarantees preservation, but not progress, since it does not
attempt to account for bounds checks or buffer/mutable structure lifetime. However,
memory safety (and progress) is a consequence of \lowstar typing
and its semantics-preserving erasure to \lamstar.

\paragraph{Semantics} We define evaluation contexts $E$
for standard call-by-value, left-to-right evaluation order. The memory
$H$ is a stack of frames, where each frame maps addresses $b$ to a
sequence of values $\ls{v}$. The \lamstar small-step semantics
judgment has the form $P \vdash (H, \lexp) \rightarrow_{\trace}
(H', \lexp')$, meaning that under the program $P$, configuration $(H,
e)$ steps to $(H', e')$ emitting a trace $\trace$, including
reads and writes to buffer references or mutable structure references, and branching behavior, as
shown below.

\vspace{-0.5cm}
\[
\small
\begin{array}{rl}
\trace ::= \cdot \mid \kw{read}(b, n, \ls{\fd}) \mid \kw{write}(b,
n, \ls{\fd}) \mid \kw{brT} \mid \kw{brF} \mid \trace_1, \trace_2
\end{array}
\]

\begin{figure*}
  \small
  \begin{mathpar}
  \inferrule* [Right=WF]
  {
  }
  {
    \lp \vdash (H, \withframe\;\lexp) \rightarrow_{\cdot} (H;\{\}, \epop\;\lexp)
  }

\inferrule* [Right=Pop]
{
}
{
  \lp \vdash (H;\_, \epop\;\lv) \rightarrow_{\cdot} (H, \lv)
}

\inferrule* [Right=LIfF]
{
\;
}
{
  \lp \vdash (H, \eif{0}{\lexp_1}{\lexp_2}) \rightarrow_\brf (H, \lexp_2)
}

\inferrule* [Right=App]
{
  \lp(f)=\lambda y:\tau_1.\;\lexp_1:\tau_2
}
{
  \lp \vdash (H, \elet{x:\tau}{f\;v}{\lexp}) \rightarrow (H, \elet{x:\tau}{\lexp_1[v/y]}{e})
}

\inferrule* [Right=LRd]
{
  H(b, n+n_1, []) = \lv \\
  \trace = \kw{read}(b, n + n_1, [])
}
{
  \lp \vdash (H, \elet{x}{\ereadbuf{(b,n,[])}{n_1}}{\lexp}) \rightarrow_{\trace} (H, \lexp[\lv/x])
}

\inferrule* [Right=New]
{
  b \notin \kw{dom}(H;h) \quad   h_1 = h[b\mapsto \lv^n] \quad \lexp_1 = \lexp[(b, 0)/x] \\
 \trace = \kw{write}(b, 0), \dots, \kw{write}(b, n - 1)
}
{
  \lp \vdash (H;h, \elet{x}{\enewbuf{n}{(\lv:\tau)}}{\lexp}) \rightarrow_{\trace} (H;h_1, \lexp_1)
}

\end{mathpar}

\caption{Selected semantic rules from \lamstar}
\label{fig:lamstar-sem}
\end{figure*}

Figure~\ref{fig:lamstar-sem} shows selected reduction rules from
\lamstar.
Rule {\sc{WF}} pushes an empty frame on the stack, and rule {\sc{Pop}}
pops the topmost frame once the expression has been evaluated.
Rule {\sc{LIfF}} is standard, except for the trace $\kw{brF}$ recorded
on the transition.
Rule {\sc{App}} is a standard, substitutive $\beta$-reduction.
Rule {\sc{LRd}} returns the value at the $(n + n_1)$ offset in the
buffer at address $b$, and emits a $\kw{read}(b, n + n_1, [])$ event.
Rule {\sc{New}} initializes the new buffer, and emits write events
corresponding to each offset in the buffer.

\paragraph{Secret independence}
A \lamstar program can be written against an interface providing
secrets at an abstract type.
For example, for the abstract type \lst$limb$, one might augment the
function signatures $\Gamma_P$ of a program with an interface for the
abstract type $\Gamma_{\kw{limb}} =$
\lst@eq_mask : limb$^2$ -> limb@, and typecheck a source program
with free \lst$limb$ variables ($\Gamma =$ \lst$secret:limb$),
and empty store typing, using the judgment
$\Gamma_{\kw{limb}}, \Gamma_p; \cdot; \Gamma \vdash \lexp : \tau$.
Given any representation $\tau$ for \lst$limb$, an implementation for
\lst$eq_mask$ whose trace is input independent, and any pair of values
$v_0:\tau, v_1:\tau$, we prove that running $e[v_0/\text{\lst{secret}}]$ and $e[v_1/\text{\lst{secret}}]$
produces identical traces, i.e., the traces reveal no information
about the secret $v_i$. We sketch the formal development next, leaving
details to the appendix.

Given a derivation
$\Gamma_s, \Gamma_P; \Sigma; \Gamma \vdash e : \tau$, let $\Delta$ map
type variables in the interface $\Gamma_s$ to concrete types and let $P_s$ contain
the implementations of the functions (from $\Gamma_s$) that operate on
secrets.
To capture the secret independence of $P_s$, we define a notion of an
\emph{equivalence modulo secrets}, a type-indexed relation for values
($v_1 \equiv_{\tau} v_2$) and memories ($\Sigma \vdash H_1 \equiv
H_2$). Intuitively two values (resp. memories) are equivalent modulo
secrets if they only differ in subterms that have abstract types in
the domain of the $\Delta$ map---we abbreviate ``equivalent modulo
secrets'' as ``related'' below.
We then require that each function $f \in P_s$, when applied in
related stores to related values, always returns related results, while
producing \emph{identical} traces.
Practically, $P_s$ is a (small) library written carefully to ensure
secret independence.

Our secret-independence theorem is then as follows:

\begin{theorem}[Secret independence]
  Given
  \begin{enumerate}
  \item a program well-typed against a secret interface, $\Gamma_s$,
    i.e, $\Gamma_s, \Gamma_P; \Sigma; \Gamma \vdash (H, e) : \tau$,
    
  \item a well-typed implementation of the $\Gamma_s$ interface,
    $\Gamma_s; \Sigma; \cdot \vdash_{\Delta} P_s$, such that $P_s$ is
    equivalent modulo secrets,

  \item a pair $(\rho_1, \rho_2)$ of well-typed substitutions for $\Gamma$,
  \end{enumerate}
  
  then either:
  \begin{enumerate}
    \item 
      both programs cannot reduce further, i.e.
      $P_s, P \vdash (H, e)[\rho_1] \nrightarrow$
      and
      $P_s, P \vdash (H, e)[\rho_2] \nrightarrow$, or
    \item 
      both programs make progress with the same trace, i.e.
      there exists $\Sigma' \supseteq \Sigma, \Gamma' \supseteq \Gamma,
      H', e'$, a pair $(\rho_1', \rho_2')$ of well-typed substitutions
      for $\Gamma'$, and a trace $\trace$ such that

  \begin{enumerate}[i)]
  \item
         $P_s, P \vdash (H, e)[\rho_1] \rightarrow^{+}_{\trace} (H', e')[\rho'_1]$
         and
         $P_s, P \vdash (H, e)[\rho_2] \rightarrow^{+}_{\trace} (H', e')[\rho'_2]$, and
  \item  $\Gamma_s, \Gamma_P; \Sigma'; \Gamma' \vdash (H', e') : \tau$
  \end{enumerate}
  \end{enumerate}
\end{theorem}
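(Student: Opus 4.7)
The natural strategy is a \emph{logical-relations/simulation} argument: I would lift the equivalence-modulo-secrets relation $\equiv$ from values and memories to expressions (and evaluation contexts), and then prove by induction on reduction that related configurations either both fail to make progress, or both make progress in one or more steps to related configurations while emitting identical traces $\trace$. I would define $\lexp_1 \equiv \lexp_2$ compositionally so that two expressions are related when they have identical syntactic structure except possibly in subterms of abstract type $\alpha \in \mathrm{dom}(\Delta)$, where any two well-typed concrete values may stand in. The initial hypothesis on the substitutions then gives $(H, \lexp)[\rho_1] \equiv (H, \lexp)[\rho_2]$ for free: since $\Gamma$ only types secrets at abstract type, any two well-typed substitutions are pointwise related by construction. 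Along the way I would establish the obvious compatibility/framing lemmas (plugging related values into related evaluation contexts yields related expressions, well-typedness transports across $\equiv$, and substitution commutes with $\equiv$).

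The core of the proof is a simulation lemma: whenever $\Gamma_s, \Gamma_P; \Sigma; \Gamma' \vdash (H_1, \lexp_1) : \tau$ and $(H_1, \lexp_1) \equiv (H_2, \lexp_2)$, either both configurations are irreducible or there exist $(H_1', \lexp_1')$, $(H_2', \lexp_2')$, some $\Sigma' \supseteq \Sigma$, and a trace $\trace$ such that both configurations reduce in one or more steps by $\trace$ to related, well-typed configurations. The proof is case analysis on the (common) top-level redex. Pure rules (\textsc{App}, record projection, \textsc{LIfF}/\textsc{LIfT}, \textsc{WF}, \textsc{Pop}) are immediate since the scrutinized subterm cannot have abstract type by typing, so both executions select the same branch and relatedness is preserved by the substitution lemma. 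The buffer and struct rules (\textsc{LRd}, \textsc{New}, \kw{writebuf}, \kw{subbuf}, \kw{extend}, \ldots) compute addresses $(b, n, \ls{\fd})$ living at non-abstract type; these must therefore be syntactically identical on both sides, so the emitted $\symread$/$\symwrite$ events coincide, and reads return $\equiv$-related values by the memory equivalence. The only genuinely interesting case is a call to some $f \in P_s$: here the hypothesis that $P_s$ is equivalent modulo secrets—on related stores and related arguments, $f$ produces related results and an identical trace—is precisely what the conclusion demands, which is why the theorem is stated with $\rightarrow^+$ rather than $\rightarrow$, allowing several atomic \lamstar steps for a single call to be bundled together.

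The hard part, and the reason typing features so prominently in the statement, is ensuring that secrets cannot leak into control flow or address computations at all; the whole argument rests on the invariant that the scrutinee of a conditional inhabits $\kw{int}$ and that base buffer/struct values inhabit $\kw{buf}\;\tau$ or the analogous struct type, never an abstract type. Formally this requires a canonical-forms lemma for the equivalence: if $v_1 \equiv v_2$ at a non-abstract type, then $v_1$ and $v_2$ agree on all observable shape (integer payload, record tags, addresses, offsets, field paths), differing only under nested occurrences of abstract types. Subsidiary difficulties are the usual ones for stateful simulation---maintaining the store-typing extension $\Sigma' \supseteq \Sigma$ across fresh allocations (we must be free to pick the same fresh address $b$ in both executions, so that subsequent $\symread$/$\symwrite$ events continue to coincide), and threading the invariant through $\epop$ so that memory equivalence survives the simultaneous disappearance of the top frame from $H_1$ and $H_2$. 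The top-level theorem then follows by instantiating the simulation lemma with $\rho_1 \equiv \rho_2$ obtained from clause (3).
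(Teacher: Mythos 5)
Your proposal matches the paper's proof strategy essentially exactly: the paper also proceeds by a type-indexed equivalence-modulo-secrets on values and memories, lifted to configurations as a common skeleton under a pair of substitutions (which is precisely your compositional $\equiv$ on expressions), proved preserved by a case analysis on the reduction step in which typing forces branch conditions and addresses to be non-abstract and hence identical, and in which calls to $f \in P_s$ are discharged by the equivalence-modulo-secrets hypothesis (accounting for the $\rightarrow^{+}$ in the conclusion). Your auxiliary observations---canonical forms at non-abstract types, choosing the same fresh block in both runs, and threading the invariant through $\epop$---are the same supporting lemmas the paper relies on, so no gap to report.
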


\subsection{\cstar: An Intermediate Language}
\label{sec:lamstar-to-cstar}

\label{sec:low-to-c}
We move from \lamstar to Clight in two steps. The \cstar intermediate
language retains \lamstar's explicit scoping structure, but switches
the calling convention to maintain an explicit call-stack of
continuations (separate from the stack memory regions). \cstar also
switches to a more C-like syntax, separates side effect-free
expressions from effectful statements.
    
\vspace{-.7em}
\[
  \begin{array}{rl}
    \cp & ::= \ls{\ecfun fx{\tau}{\tau}{\ls{\cstmt}}} \\
    \cexp & ::= n \mid () \mid x \mid \cexp+\cexp \mid \{\ls{\fd=\cexp}\} \mid \cexp.\fd \mid \eptrfd{\cexp}{\fd} \\
    \cstmt & ::= \evardecl {\tau}x{\cexp} \mid \evardecl{\tau}{x}{\eapply f{\cexp}} \mid \eif{\cexp}{\ls{\cstmt}}{\ls{\cstmt}} \mid \ereturn \cexp \\
    & \mid \{\ls{\cstmt}\} \mid \earray {\tau}xn \mid \evardecl{\tau}{x}{\eread {\cexp}} \mid \ewrite {\cexp}{\cexp} \mid \memset{\cexp}{n}{\cexp} \\
  \end{array}
\]

\noindent
The syntax is unsurprising, with two notable exceptions.
First, despite the closeness to C syntax, contrary to C and similarly
to \lamstar, block scopes are not required for branches of a
conditional statement, so that any local variable or local array
declared in a conditional branch, if not enclosed by a further block,
is still live after the conditional statement.
Second, non-array local variables are immutable after
initialization.


\begin{figure*}
  \small
  \begin{mathpar}
  \inferrule* [Right=Block]
  {
  }
  {
    \cp \vdash (S, V, \{\ls{\cstmt_1}\};\ls{\cstmt_2}) \step (S;(\{\},V,\symhole;\ls{\cstmt_2}), V, \ls{\cstmt_1})
  }

  \inferrule* [Right=Empty]
{
}
{
  \cp \vdash (S; (M, V', E), V, []) \step (S, V', \fplug{E}{()})
}

\inferrule* [Right=CIfF]
{
  \eval{\cexp}{(V)} = 0
}
{
  \cp \vdash (S, V, \eif{\cexp}{\ls{\cstmt_1}}{\ls{\cstmt_2}};\ls{\cstmt}) \step_\brf (S, V, \ls{\cstmt_2};\ls{\cstmt})
}

\inferrule* [Right=Call]
{
  \cp(f)=\ecfuntwo{y}{\tau_1}{\tau_2}{\ls{\cstmt_1}} \\
  \eval{\cexp}{(V)}=v
}
{
  \cp \vdash (S, V, \tau\;x=f\;\cexp; \ls{\cstmt}) \step (S;(\None, V, \tau\;x=\symhole;\ls{\cstmt}), \{\}[y\mapsto v], \ls{\cstmt_1})
}

\inferrule* [Right=CRead]
{
  \eval{\cexp}{(V)} = (b, n, \ls{\fd}) \quad
  \symget(S, (b, n, \ls{\fd})) = v \quad
  \trace = \symread\;(b,n,\ls{\fd})
}
{
  \cp \vdash (S, V, \evardecl{\tau}{x}{\eread \cexp}; \ls{\cstmt}) \step_{\trace} (S, V[x \mapsto v], \ls{\cstmt})
}

\inferrule* [Right=ArrDecl]
{
  \quad \\\\
  S = S'; (M, V, E) \\
  b\not\in S \\
  V' = V[x\mapsto (b, 0, [])]
}
{
  \cp \vdash (S, V, \tau\;x[n]; \ls{\cstmt}) \step (S';(M[b\mapsto \None^n], V, E), V', \ls{\cstmt})
}

\end{mathpar}
\caption{Selected semantic rules from \cstar}
\label{fig:cstar-sem}
\end{figure*}

\paragraph{Operational semantics, in contrast to \lamstar}
A \cstar evaluation configuration $C$ consists of a stack $S$, a
variable assignment $V$ and a statement list $\ls{\cstmt}$ to be
reduced. A stack is a list of frames. A frame $F$ includes frame
memory $M$, local variable assignment $V$ to be restored upon function
exit, and continuation $E$ to be restored upon function exit.
Frame memory $M$ is optional: when it is $\bot$, the frame is called a
``call frame''; otherwise a ``block frame'', allocated whenever
entering a statement block and deallocated upon exiting such block. A
frame memory is just a partial map from block identifiers to value
lists. Each \cstar statement performs at most one function call, or
otherwise, at most one side effect. Thus, \cstar is deterministic.

The semantics of \cstar is shown to the right in
Figure~\ref{fig:cstar-sem}, also illustrating the translation
from \lamstar to \cstar. There are three main differences.
First, \cstar's calling convention (rule {\sc
Call}) shows an explicit call frame being pushed on the stack, unlike
\lamstar's $\beta$ reduction.
Additionally, \cstar expressions do not have side effects and do not
access memory; thus, their evaluation order does not matter and their
evaluation can be formalized as a big-step semantics; by themselves,
expressions do not produce events. This is apparent in rules like {\sc
CIfF} and {\sc CRead}, where the expressions are evaluated atomically
in the premises.
Finally, \kw{newbuf} in \lamstar is translated to an array declaration
followed by a separate initialization. In \cstar, declaring an array
allocates a fresh memory block in the current memory frame, and makes
its memory locations available but uninitialized.  Memory write
(resp. read) produces a $\symwrite$ (resp. $\symread$)
event. $\memset{\cexp_1}{m}{\cexp_2}$ produces $m$ $\symwrite$ events,
and can be used only for arrays.





\paragraph{Correctness of the \lamstar-to-\cstar transformation}

We proved that execution traces are exactly preserved from
\lamstar to \cstar:

\begin{lemma}[\lamstar to \cstar] \label{lem:lamstar-to-cstar}
 Let $\lp$ be a \lamstar program and $\lexp$ be a \lamstar entry point
 expression, and assume that they compile: $\lowtocd(\lp) = \cp$ for
 some \cstar program $\cp$ and $\lowtoc(\lexp) = \ls{\cstmt}; \cexp$
 for some \cstar list of statements $\ls{\cstmt}$ and expression
 $\cexp$.
 
 Let $V$ be a mapping of local variables containing the initial values
 of secrets. Then, the \cstar program $\cp$ terminates with trace
 $\trace$ and return value $\lv$, i.e., $\cp \vdash ([], V,
 \ls{\cstmt}; \kw{return} ~ \cexp) \stackrel{\trace,\ast}{\rightarrow}
 ([], V', \kw{return} ~ \lv)$ if, and only if, so does the \lamstar
 program: $\lp \vdash (\{\}, \lexp[V])
 \stackrel{\trace,\ast}{\rightarrow} (H', \lv)$; and similarly for
 divergence.
\end{lemma}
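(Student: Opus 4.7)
The plan is to prove Lemma~\ref{lem:lamstar-to-cstar} by establishing a \emph{forward simulation} from \lamstar to \cstar that preserves traces step-by-step (up to stuttering on administrative reductions), together with a matching argument for divergence; determinism of \cstar and of the \lamstar fragment we consider then yields the ``if and only if.'' First I would recast both semantics in a uniform way (configuration triples of stack, local environment, redex/continuation) so that a single simulation relation $R$ relates a \lamstar configuration $(H,\lexp)$ to a \cstar configuration $(S,V,\ls{\cstmt})$. The core of $R$ is a layered correspondence: the \lamstar memory stack $H$ is matched against the projection of $S$ obtained by collapsing adjacent block frames into the enclosing call frame, so that each \lamstar frame (pushed by \withframe) corresponds to the concatenation of one \cstar call frame and its block frames; the variable environment $V$, composed with the stack of saved $V$'s in $S$, realizes the substitutions that \lamstar has performed in $\lexp$; and the continuation encoded by the saved $E$'s in $S$ together with the remaining $\ls{\cstmt}$ reconstructs the \lamstar expression $\lexp$ through the compilation function $\lowtoc$.

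Next I would prove the simulation diagram by induction on the \lamstar step relation. Pure administrative reductions---variable lookup, field projection, record construction---are compiled to the big-step evaluation of \cstar expressions and thus correspond to zero \cstar steps (stuttering) while emitting no events, so $R$ is preserved. For the interesting rules I would check each case: rule \textsc{WF}/\textsc{Pop} matches the push/pop of an empty block frame via \textsc{Block}/\textsc{Empty}; rule \textsc{App} matches \textsc{Call} since substitution on the \lamstar side is simulated by installing the argument into the fresh \cstar environment; rules \textsc{LRd}/\textsc{LWr} match \textsc{CRead}/\textsc{CWrite} and, crucially, emit the same $\kw{read}$/$\kw{write}$ event because the address pair $(b,n,\ls{\fd})$ is preserved by compilation; conditionals match \textsc{CIfF}/\textsc{CIfT} and emit $\kw{brT}/\kw{brF}$; and, for \textsc{New}, the single \lamstar step that allocates and initializes a buffer is matched by a short \cstar sequence consisting of an array declaration followed by a \memset, whose concatenated trace is exactly the $n$ \kw{write} events produced by \textsc{New}. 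Because \cstar is deterministic and never spontaneously diverges on administrative steps, the usual coinductive argument lifts the step simulation to a divergence simulation: if \lamstar produces an infinite trace $\trace$, the simulation constructs an infinite \cstar derivation producing the same $\trace$, and vice versa.

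The main obstacle I expect is the memory/continuation correspondence, specifically keeping the invariant $R$ stable across the function-call boundary. In \lamstar, a call does not push a memory frame---it just substitutes and continues in the caller's frame---whereas \cstar unconditionally pushes a call frame on \textsc{Call} and pops it on \textsc{Empty}/\textsc{Return}. The simulation relation must therefore treat a \cstar call frame as ``transparent'' with respect to $H$, identifying the current \lamstar frame with the union of the active \cstar call frame's block frames and all block frames of its caller up to the next \withframe-induced boundary. Showing that this invariant is re-established after \textsc{Empty} pops a call frame, and that local arrays declared before a \kw{return} remain reachable in exactly the same way as in \lamstar, requires a careful bookkeeping lemma about the compilation $\lowtocd$ and the shape of $S$ it induces. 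A secondary subtlety is that \cstar expressions are pure and evaluated atomically while \lamstar evaluation is small-step under contexts $E$; I would discharge this by proving a ``big-step equivalence'' lemma stating that any maximal sequence of administrative \lamstar steps reducing a compiled expression is simulated by a single \cstar expression evaluation with the same (empty) trace, and then use this lemma uniformly in every case of the main induction.
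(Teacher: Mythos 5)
Your overall strategy is the same as the paper's: prove a simulation from \lamstar to \cstar in which one \lamstar step is matched by zero or more \cstar steps (stuttering on administrative reductions), then obtain the converse direction---and hence the ``if and only if'' and the divergence case---by flipping the diagram using the determinism of \cstar. The paper packages the out-of-sync behaviour caused by \cstar's big-step expression evaluation into a notion of \emph{quasi-refinement}, which plays essentially the role of your ``big-step equivalence lemma''; its flipping lemma additionally assumes safety of the \lamstar system and, crucially, a well-founded measure that decreases on every \lamstar step matched by zero \cstar steps. You should state that measure explicitly: without it, an infinitely stuttering ``simulation'' could relate a diverging \lamstar run to a terminated or stuck \cstar configuration, and the termination/divergence clauses of the lemma would not follow.

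Two points in your setup differ from the paper, and one needs repair. First, the paper defines the simulation relation not through the forward compilation $\lowtoc{(\cdot)}$ but through a back-translation $\ctolow{(\cdot)}$ from \cstar configurations to \lamstar expressions (an ``unravel'' of the \cstar stack composed with a normalization of the head statement): the relation holds when the \lamstar expression reaches, after a minimal number of silent steps, the back-translation of the \cstar configuration. The paper argues this direction is forced: a \cstar configuration carries an explicit call stack of saved environments and continuations, whereas a mid-execution \lamstar term is one flat expression from which the call structure cannot be recovered---after \textsc{App} a callee body and an ordinary let/block nest look identical---so decomposing $\lexp$ in order to apply $\lowtoc{(\cdot)}$ layer by layer, as you propose, is not well defined. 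Second, your frame invariant is wrong as stated: you collapse all block frames of a call into a single unit matching one \lamstar frame, but \lamstar frames are pushed by $\withframe$, whose compilation is a block, so each \cstar \emph{block} frame corresponds to exactly one \lamstar frame (a function body may contain several nested or sequential $\withframe$s, each yielding its own frame), while call frames carry no memory and are simply skipped when collecting $H$ from $S$. Under your merged invariant, popping an inner block frame would have to deallocate only part of a \lamstar frame, and both the \textsc{Pop}/\textsc{Empty} case and the array-allocation case (which frame receives the new block?) of the induction would fail.
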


In particular, if the source \lamstar program is safe, then so is the
target \cstar program. It also follows that the trace equality
security property is preserved from \lamstar to \cstar.
We prove this theorem by bisimulation. In fact, it is enough to prove
that any \lamstar behavior is a \cstar behavior, and flip the diagram
since \cstar is deterministic. That \cstar semantics use
big-step semantics for \cstar expressions complicates the bisimulation
proof a bit because \lamstar and \cstar steps may go out-of-sync at
times. Within the proof we used a relaxed notion of simulation
(``quasi-refinement'') that allows this temporary discrepancy by some
stuttering, but still implies bisimulation.

\subsection{From \cstar to CompCert Clight and Beyond}
\label{sec:to-clight}

CompCert Clight is a deterministic (up to system I/O) subset of C with
no side effects in expressions, and actual byte-level representation
of values. Clight has a realistic formal
semantics \cite{Blazy-Leroy-Clight-09,compcert-url} and tractable
enough to carry out the correctness proofs of our transformations
from \lamstar to C.
More importantly, Clight is the source language of the CompCert compiler
backend, which we can
thus leverage to preserve at least safety and functional correctness properties
of \lowstar programs down to assembly.\footnote{As a subset of C,
Clight can be compiled by any C compiler, but only CompCert provides
formal guarantees.}

Recall that we need to produce an event in the trace whenever a memory
location is read or written, and whenever a conditional branch is
taken, to account for memory accesses and statements in the semantics
of the generated Clight code for the purpose of our noninterference
security guarantees. However, the semantics of CompCert
Clight \emph{per se} produces no events on memory accesses; instead,
CompCert provides a syntactic program annotation mechanism using
no-op \emph{built-in calls}, whose only purpose is to add extra events
in the trace. Thus, we leverage this mechanism by prepending each
memory access and conditional statement in the Clight generated code
with one such built-in call producing the corresponding events.

The main two
differences between \cstar and Clight, which our translation deals
with as described below, are immutable local structures, and scope management
for local variables.

\paragraph{Immutable local structures}
\cstar handles immutable local structures as
first-class values, whereas Clight only supports non-compound
data (integers, floating-points or pointers) as values.

If we naively translate immutable local \cstar structures to C structures in
Clight, then CompCert will allocate them in memory.
This increases the number of memory accesses, which not only
introduces discrepancies in the security preservation proof from \cstar
to Clight, but also introduces significant performance
overhead compared to GCC, which optimizes away structures whose
addresses are never taken.

Instead, we split an immutable structure into the sequence of all its
non-compound fields, each of which is to be taken as a potentially
non-stack-allocated local variable,\footnote{Our benchmark without
this structure erasure runs 20\% slower than with structure erasure,
both with CompCert 2.7.
Without structure erasure, code
generated with CompCert is 60\% slower than with
{\tt gcc -O1}. CompCert-generated code without structure erasure may even
segfault, due to stack overflow, which structure erasure successfully
overcomes.} except for functions that return structures, where, as
usual, we add, as an extra argument to the callee, a pointer to the
memory location written to by the callee and read by the caller.

\paragraph{Local variable hoisting}
Scoping rules for \cstar local arrays are not exactly the same as in
C, in particular for branches of conditional statements. So, it is
necessary to hoist all local variables to function-scope.  CompCert
2.7.1 does support such hoisting but as an unproven elaboration
step. While existing formal proofs (e.g., Dockins'
\cite[\S 9.3]{dockins-phd}) only prove functional
correctness, we also prove preservation of security guarantees, as
shown below.

\paragraph{Proof techniques}
Contrary to the \lamstar-to-\cstar transformation, our subsequent
passes modify the memory layout leading to differences in traces
between \cstar to Clight, due to pointer values. Thus, we need to
address security preservation separately from functional correctness.

   For each pass changing the memory layout, we split it into three
    passes. First, we \emph{reinterpret} the program by replacing each
    pointer value in event traces with the function name and recursion
    depth of its function call, the name of the corresponding local
    variable, and the array index and structure field name within this
    local variable. Then, we perform the actual transformation and
    prove that it exactly preserves traces in this new ``abstract''
    trace model. Finally, we reinterpret the generated code back to
    concrete pointer values.  We successfully used this technique to
    prove functional correctness and security preservation for
    variable hoisting.
    
  For each pass that adds new memory accesses, we split it into
    two passes. First, a reinterpretation pass produces new events
    corresponding to the provisional memory accesses (without actually
    performing those memory accesses). Then, this pass is followed by
    the actual trace-preserving transformation that goes back to the
    non-reinterpreted language but adds the actual memory accesses
    into the program.
    We successfully used this technique to prove functional
    correctness and security preservation for structure return, where
    we add new events and memory accesses whenever a \cstar function
    returns a structure value.

In both cases, we mean \emph{reinterpretation} as defining a new
language with the same syntax and small-step semantic rules except
that the produced traces are different, and relating executions of
the same program in the two languages. There, it is easy to prove
functional correctness, but for security preservation, we need to
prove an invariant on two small-step executions of the same program
with different secrets, to show that two equal pointer values in event
traces coming from those two different executions will actually turn
into two equal abstract pointer values in the reinterpreted language.

Our detailed functional correctness and security preservation proofs
from \lamstar to Clight can be found in the appendix.

\paragraph{Towards verified assembly code}
We conjecture that our reinterpretation techniques can be generalized to most passes
of CompCert down to assembly. 
While we leave such generalization as future work, some guarantees
from C to assembly can be derived by instrumenting CompCert \cite{barthe-ccs2014}
and LLVM \cite{DBLP:conf/popl/ZhaoNMZ12,DBLP:conf/pldi/ZhaoNMZ13,almeida-usenix2016} 
and turning them into \emph{certifying} (rather than certified) compilers where
security guarantees are statically rechecked on the compiled code
through translation validation, thus re-establishing them
independently of source-level security proofs. In this case, rather
than being fully preserved down to the compiled code,
\lowstar-level proofs are still useful to \emph{practically} reduce the
risk of failures in translation validation.
\ch{Can we make this much shorter? it talks about applying
  something we didn't show at all to other layers.}

\ifpagelimits
\newpage
\fi

\section{KreMLin: a Compiler from \lowstar to C}
\label{sec:impl}

\subsection{From \lowstar to Efficient, Elegant C}

As explained previously, \lamstar is the core of \lowstar, post
erasure. Transforming \lowstar into \lamstar proceeds in several
stages. First, we rely on \fstar's existing normalizer and erasure and
extraction facility (similar to features in Coq~\citep{Letouzey08}),
to obtain an ML-like AST for \lowstar terms. Then, we use our new tool
KreMLin that transforms this AST further until it falls within the
\lamstar subset formalized above. KreMLin then performs the \lamstar
to \cstar transformation, followed by the \cstar to C transformation
and pretty-printing to a set of C files. KreMLin generates C11 code
that may be compiled by GCC; Clang; Microsoft's C compiler or
CompCert. We describe the main transformations performed by KreMLin,
beyond those formalized in \sref{formal}, next.

\paragraph{Structures by value}
We described earlier (\sref{structs}) our \lowstar struct library that grants the
programmer fine-grained control over the memory layout, as well as
mutability of interior fields. As an alternative, \kremlin supports immutable,
by-value structs. Such structures, being pure, come with no liveness proof obligations.
The performance of the generated C code, however, is less
predictable: in many cases, the C compiler will optimize and pass such structs
by reference, but on some ABIs (x86), the worst-case scenario may be costly.

Concretely, the \fstar programmer uses tuples and inductive
types. Tuples are monomorphized into specialized inductive
types. Then, inductive types are translated into idiomatic C code:
single-branch inductive types (e.g., records) become actual C structs,
inductives with only constant constructors become C enums, and other
inductives becomes C tagged unions, leveraging C11 anonymous unions
for syntactic elegance. Pattern matches become, respectively,
switches, let-bindings, or a series of cascading if-then-elses.

\paragraph{Whole-program transformations}
\kremlin perform a series of whole-program transformations. First, the
programmer is free to use parameterized type abbreviations. \kremlin substitutes
an application of a type abbreviation with its definition, since C's \li+typedef+ does
not support parameters. (C++11 alias templates would support this use-case.)
Second, \kremlin recursively inlines all \li+StackInline+ functions, as required for
soundness (cf. \sref{crypto}).
Third, \kremlin performs a reachability analysis. Any function that is not
reachable from the \li+main+ function or, in the case of a library, from a
distinguished API module, is dropped. This is essential for generating
palatable C code that does not contain unused helper functions used only for
verification.
Fourth, \kremlin supports a concept of ``bundle'', meaning that several \fstar
modules may be grouped together into a single C translation unit, marking all of
the functions as \li+static+, except for those reachable via the distinguished API
module. This not only makes the code much more idiomatic, but also triggers a
cascade of optimizations that the C compiler is unable to perform across
translation units.

\paragraph{Going to an expression language}
\fstar is, just like ML, an expression language. Two transformations are
required to go to a statement language: \emph{stratification} and
\emph{hoisting}. Stratification places buffer
allocations, assignments and conditionals in statement
position before going to \cstar. Hoisting, as discussed in
\sref{to-clight}, deals with the discrepancy between C99 block scope
and \lowstar \li[language={}]{with_frame}; a buffer allocated under a \li+then+
branch must be hoisted to the nearest enclosing \li+push_frame+,
otherwise its lifetime would be shortened by the resulting C99
block after translation.

\paragraph{Readability}
KreMLin puts a strong emphasis on generating readable C, in the hope that
security experts not familiar with \fstar can review the generated C code.
Names are preserved; we use \li+enum+ and \li+switch+ whenever possible;
functions that take \li+unit+ are compiled into functions with no parameters;
functions that return \li+unit+ are compiled into \li+void+-returning functions.
The internal architecture relies on an abstract C AST and what we believe is a
correct C pretty-printer.

\paragraph{Implementation}
KreMLin represents about 10,000 lines of OCaml, along with a minimal set of
primitives implemented in a few hundred lines of C. After \fstar has extracted and
erased the AEAD development, KreMLin takes less than a second to generate the
entire set of C files. The implementation of KreMLin is optimized for
readability and modularity; there was no specific performance concern in this
first prototype version. KreMLin was designed to support multiple backends; we
are currently implementing a WebAssembly backend to provide verified, efficient
cryptographic libraries for the web.

\subsection{Integrating \kremlin's Output}

\kremlin generates a set of C files that have
no dependencies, beyond a single \li+.h+ file and C11 standard headers, meaning
\kremlin's output can be readily integrated into an existing source tree.

To allow code sharing and re-use, programmers may want to generate a shared
library, that is, a \li+.dll+ or \li+.so+ file that can be distributed along
with a public header (\li+.h+) file. The programmer can achieve this by writing
a distinguished API module in \fstar, exposing only carefully-crafted function
signatures. As exemplified earlier (\fref{chacha20-both}), the translation is
predictable, meaning the programmer can precisely control, in \fstar, what
becomes, in C, the library's public header. The
bundle feature of \kremlin then generates a single C file for the library;
upon compiling it into a shared object, the only visible symbols are those
exposed by the programmer in the header file.

We used this approach for our \haclstar library. Our public header file exposes
functions that have the exact same signature as their counterpart in the NaCL
library. If an existing binary was compiled against NaCL's public header file,
then one can configure the dynamic linker to use our \haclstar library instead,
without recompiling the original program (using the infamous ``LD preload trick'').

The functions exposed by the library comply with the C ABI for the chosen toolchain.
This means that one may use the library from a variety of programming languages,
relying on foreign-function interfaces to interoperate. One popular approach is
to generate bindings for the C library \emph{at run-time} using the ctypes
and the \li+libffi+~\cite{libffi} libraries. This is an approach leveraged by
languages such as JavaScript, Python or OCaml, and requires no recompilation.

An alternative is to write bindings by hand, which allows for better performance
and control over how data is transformed at the boundary, but requires writing
and recompiling potentially error-prone C code. This is the historical way of
writing bindings for many languages, including OCaml. We plan to have \kremlin
generate these bindings automatically. We used this approach in miTLS,
effectively making it a mixed C/OCaml project. We intend to eventually lower all
of miTLS into \lowstar.

\section{Building Verified \lowstar Libraries and Applications}
\label{sec:moreexamples}
\newcommand\fixme[1]{{\color{red}{#1}}}
\begin{table}[ht] \centering
\begin{tabular}{|l|r|r|r|r|}
  \hline
  \multicolumn{1}{|c|}{Codebase} & \multicolumn{1}{c|}{LoC} & \multicolumn{1}{c|}{C LoC} & \multicolumn{1}{c|}{\%annot} & \multicolumn{1}{c|}{Verif. time} \\
  \hline
  \lowstar standard library& 8,936  &  N/A   &  N/A   & 8m \\\hline
  \haclstar                & 6,050  & 11,220 &  28\%  & 12h \\
  miTLS AEAD               & 13,743 & 14,292 & 56.5\% & 1h 10m \\
  \hline
\end{tabular}
\caption{Evaluation of verified \lowstar libraries and applications (time reported on an Intel Core E5 1620v3 CPU)}
\label{tab:applications}
\end{table}

In this section, we describe two examples (summarized in
Table~\ref{tab:applications}) that show how \lowstar can be used to
build applications that balance complex verification goals with high
performance.
First, we describe \haclstar{}, an efficient library of cryptographic
primitives that are verified to be memory safe, side-channel
resistant, and, where there exists a simple mathematical
specification, functionally correct.
%
%
Then, we show how to use \lowstar for type-based cryptographic
security verification by implementing and verifying the AEAD
construction in the Transport Layer Security (TLS) protocol.
We show how this \lowstar library can be integrated within miTLS, 
an \fstar implementation of TLS that is compiled to OCaml.

\subsection{\haclstar: A Fast and Safe Cryptographic Library}
\label{sec:haclstar}


In the wake of numerous security vulnerabilities,
\citet{bernstein2012security} argue that libraries like OpenSSL are
inherently vulnerable to attacks because they are too large, offer too many obsolete options, and
expose a complex API that programmers find hard to use securely.
Instead they propose a new cryptographic API called NaCl that uses a
small set of modern cryptographic primitives, such as
Curve25519~\cite{curve25519} for key exchange, the Salsa family of
symmetric encryption algorithms~\cite{bernstein2008salsa20}, which
includes Salsa20 and ChaCha20, and Poly1305 for message
authentication~\cite{bernstein2005poly1305}.
These primitives were all designed to be fast and easy to implement
in a side-channel resistant coding style.
Furthermore, the NaCl API does not directly expose these low-level
primitives to the programmer. Instead it recommends the use of simple
composite functions for symmetric key authenticated encryption
(\texttt{secretbox}/\texttt{secretbox\_open}) 
and for public key authenticated encryption (\texttt{box}/\texttt{box\_open}).

The simplicity, speed, and robustness of the NaCl API has proved popular among 
developers. Its most popular implementation is Sodium~\cite{libsodium}, which has bindings
for dozens of programming languages and is written mostly in C, with a few components in assembly.
An alternative implementation called TweetNaCl~\cite{bernstein2014tweetnacl} seeks
to provide a concise implementation that is both readable and \emph{auditable} for 
memory safety bugs, a useful point of comparison for our work.
With \lowstar, we show how we can take this approach even further by
placing it on formal, machine-checked ground, without compromising
performance.

\paragraph*{A Verified NaCl Library}
We implement the NaCl API, including all its component algorithms, 
in a \lowstar library called \haclstar, mechanically verifying that 
all our code is memory safe, functionally correct,  and side-channel resistant.
The C code generated from \haclstar is ABI-compatible and can be used as a drop-in replacement for
Sodium or TweetNaCl in any application, in C or any other language, that relies on these libraries.
Our code is written and optimized for 64-bit platforms; on 32-bit
machines, we rely on a stub library for performing 64x64-bit 
multiplications and other 128-bit operations.

We implement and verify four cryptographic primitives:
ChaCha20, Salsa20, Poly1305, and Curve25519, and then use them to
build three cryptographic constructions: AEAD, \texttt{secretbox} 
and \texttt{box}. 
For all our primitives, we prove that our stateful optimized code
matches a high-level functional specification written in \fstar.
These are new verified implementations. Previously, \citet{saw-cryptol} used SAW
and Cryptol to verify C and Java implementations of Chacha20,
Salsa20, Poly1305, AES, and ECDSA. Using a different methodology, \citet{vale}
verifies an assembly version of Poly1305.
Curve25519 has been verified before:
\citet{chen2014verifying} verified an optimized 
low-level assembly implementation using an SMT solver;
\citet{ZBB16} wrote and verified a high-level
library of three curves, including Curve25519, in F* and generated an
OCaml implementation from it.
Our verified Curve25519 code explores a third direction by targeting 
reference C code that is both fast and readable.

A companion paper currently under review~\cite{haclstar} is entirely devoted to the \haclstar
library, and contains an in-depth evaluation of the proof methodology, several
new algorithms that were verified since the present paper was written, along
with a more comprehensive performance analysis.

\begin{table}[h]
  \footnotesize
\begin{tabular}{|l|r|r|r|r||r|}
  \hline
  \multicolumn{1}{|c|}{Algorithm} & \multicolumn{1}{c|}{\haclstar} & \multicolumn{1}{c|}{Sodium} & \multicolumn{1}{c|}{TweetNaCl} &\multicolumn{1}{c||}{OpenSSL} & \multicolumn{1}{c|}{eBACS Fastest}   \\
  \hline
  ChaCha20      & 6.17 cy/B  & 6.97 cy/B  & - & 8.04 cy/B & 1.23 cy/B\\
  Salsa20       & 6.34 cy/B  & 8.44 cy/B  & 15.14 cy/B & - & 1.39 cy/B\\
  Poly1305      & 2.07 cy/B  & 2.48 cy/B  & 32.32 cy/B & 2.16 cy/B & 0.68 cy/B \\
  Curve25519    & 157k cy/mul & 162k cy/mul  & 1663k cy/mul & 359k cy/mul & 145k cy/mul\\
  \hline
  AEAD-ChaCha20-Poly1305          & 8.37 cy/B & 9.60 cy/B  & - & 8.53 cy/B  & \\
  SecretBox     & 8.43 cy/B & 11.03 cy/B  & 50.56 cy/B & -  &  \\
  Box           & 18.10 cy/B & 20.97 cy/B  & 149.22 cy/B & -  &  \\
  \hline
\end{tabular}
\caption{Performance in CPU cycles: 64-bit \haclstar,  64-bit Sodium (pure C, no assembly), 
32-bit TweetNaCl, 64-bit OpenSSL (pure C, no assembly), and the fastest assembly implementation
included in eBACS SUPERCOP. All code was compiled using \texttt{gcc -O3} optimized and
run on a 64-bit Intel Xeon CPU E5-1630. Results are averaged over 1000 measurements, each
processing a random block of $2^{14}$ bytes; Curve25519 was averaged over 1000 random key-pairs.}
\label{tab:haclperf}
\end{table}

\begin{table}[h]
  \footnotesize
  \begin{tabular}{|l|r|r|r|}
    \hline
    \multicolumn{1}{|c|}{Algorithm} & \multicolumn{1}{c|}{\haclstar} & \multicolumn{1}{c|}{OpenSSL} & \multicolumn{1}{c|}{CNG} \\
    \hline
    Curve25519 \hspace{2em} & 17700 mul/s ($\sigma=246$) & 8033 mul/s ($\sigma=120$) & 7490 mul/s ($\sigma=114$) \\
    \hline
  \end{tabular}
  \caption{Performance in operations per second: 64-bit \haclstar, 64-bit
  OpenSSL (assembly disabled) and Microsoft's ``Crypto New Generation'' (CNG)
  library on a 64-bit Windows 10 machine. These results were obtained by writing
  an OpenSSL engine that calls back to either \haclstar, CNG, or OpenSSL
  itself (so as to include the overhead of going through a pluggable engine).
  The \li+speed ecdhx25519+ command runs multiplications for 10s, then counts
  the number of multiplications performed. We show the average over 10 runs of
  this command. The machine is a desktop machine with a 64-bit Intel Xeon CPU
  E5-1620 v2 nominally clocked at 3.70Ghz.}
\end{table}

\paragraph*{Performance}
Table~\ref{tab:haclperf} compares the performance of \haclstar to Sodium, TweetNaCl,
and OpenSSL by running each primitive on a 16KB input; we chose this size since it
corresponds to the maximum record size in TLS and represents a good balance
between small network messages and large files. 
We report averages over 1000 iterations expressed in cycles/byte.
For Curve25519, we measure the time taken for one call to scalar multiplication.
For comparison with state-of-the-art assembly implementations, for each primitive,
we also include the best performance for any implementation (assembly or C)
included in the eBACS SUPERCOP benchmarking framework.\footnote{\url{https://bench.cr.yp.to/supercop.html}} 
These fastest implementation are typically in architecture-specific assembly.

We performed these tests on a variety of 64-bit Intel CPUs (the most
popular desktop configuration) and these performance numbers were
similar across machines. To confirm these measurements, we also ran
the full eBACS SUPERCOP benchmarks on our code, as well as the OpenSSL
\texttt{speed} benchmarks, and the results closely mirrored
Table~\ref{tab:haclperf}. However, we warn the performance numbers 
could be quite different on (say) 32-bit ARM platforms.

We observe that for ChaCha20, Salsa20, and Poly1305, \haclstar achieves comparable performance 
to the optimized C code in OpenSSL and Sodium and significantly better performance than 
TweetNaCl's concise C implementation.
Assembly implementations of these primitives are about 3-4 times faster; they typically
rely on CPU-specific vector instructions and careful hand-optimizations.

Our Curve25519 implementation is about the same speed as Sodium's
64-bit C implementation (\texttt{donna\_c64}) and an order of magnitude
faster than TweetNaCl's 32-bit code.  It is also significantly faster
than OpenSSL because even 64-bit OpenSSL uses a Curve25519 implementation 
that was optimized for 32-bit integers, whereas the implementations in 
Sodium and \haclstar take advantage of the 64x64-bit multiplier available 
on Intel's 64-bit platforms.
The previous \fstar implementation of Curve25519~\cite{ZBB16} 
running in OCaml was not optimized for performance; it
is more than 100x slower than \haclstar.
The fastest assembly code for Curve25519 on eBACS is the one verified
in ~\cite{chen2014verifying}. This implementation is only 1.08x faster
than our C code, at least on the platform on which we tested, which
supported vector instructions up to 256 bits.  We anticipate that the
assembly code may be significantly faster on platforms that support
larger 512-bit vector instructions.

AEAD and \texttt{secretbox} essentially amount to a ChaCha20/Salsa20 cipher sequentially
followed by Poly1305, and their performance reflects the sum of the two primitives.
Box uses Curve25519 to compute a symmetric key, which it then uses to encrypt 
a 16KB input. Here, the cost of symmetric encryption dominates over Curve25519.

In summary, our measurements show that \haclstar is as fast as (or faster than)
state-of-the-art C crypto libraries and within a small factor of hand-optimized assembly code.
This finding is not entirely unexpected, since we wrote our \lowstar code 
by effectively porting the fastest C implementations to \fstar, and any
algorithmic optimization that is implemented in C can, in principle,
be written (and verified) in \lowstar.
What is perhaps surprising is that we get good performance even though
our \lowstar code, and consequently the generated C, heavily relies on 
functional programming patterns such as tail-recursion, and even though
we try to write generic compact code wherever possible, rather than trying 
to mimic the verbose inlined style of assembly code.
We find that modern compilers like GCC and CLANG are able to optimize
our code quite well, and we are able to benefit from their advancements,
without having to change our coding style.
Where needed, KreMLin helps the C compiler by inserting attributes like
\texttt{const}, \texttt{static} and \texttt{inline} that act as optimization hints.

\paragraph*{Balancing Trust and Performance}
All the above performance numbers were obtained with GCC-6 with most
architecture-specific optimizations turned on
(\texttt{-march=native}).  Consequently, any bug in GCC or its plugins
could break the correctness and security guarantees we proved in
\fstar for our source code.  For example, GCC has an auto-vectorizer
that significantly improves the performance of our ChaCha20 and
Salsa20 code in certain use cases, but does so by substantially
changing its structure to take advantage of the parallelism provided
by SIMD vector instructions. To avoid trusting this powerful 
but unverified mechanism, and for more consistent results across platforms,
we turned off auto-vectorization (\texttt{-fno-tree-vectorize}) 
for the numbers in Table~\ref{tab:haclperf}. For similar reasons, 
we turned off link-time optimization (\texttt{-fno-lto}) since 
it relies on an external linker plugin, and can change the 
semantics of our library every time it is linked with a new application.

Ideally, we would completely remove the burden of trust on the C
compiler by moving to CompCert, but at significant performance
cost. Our Salsa20 and ChaCha20 code incurs a relatively modest 3x
slowdown when compiled with CompCert 3.0 (with \texttt{-O3}). 
However, our Poly1305 and Curve25519 code incurs a 30-60x slowdown,
which makes the use of CompCert impractical for our library.
We anticipate that this penalty will reduce as CompCert
improves, and as we learn how to generate C code that would be
easier for CompCert to optimize. For now, we continue to use GCC and
CLANG and comprehensively test the generated code using third-party
tools. For example, we test our code against other implementations,
and run all the tests packaged with OpenSSL. We also 
test our compiled code for side-channel leaks using tools like 
DUDECT.\footnote{\url{https://github.com/oreparaz/dudect}}

\paragraph*{PneuTube: Fast encrypted file transfer}
Using \haclstar, we can build a variety of high-assurance security
applications directly in \lowstar.
PneuTube is a \lowstar program that securely transfers files from a host $A$ to a
host $B$ across an untrusted network.
Unlike classic secure channel protocols like TLS and SSH, PneuTube is
\emph{asynchronous}, meaning that if $B$ is offline, the file may be
cached at some untrusted cloud storage provider and retrieved later. 

PneuTube breaks the file into \emph{blocks} and encrypts each block
using the \texttt{box} API in \haclstar (with an optimization that
caches the result of Curve25519).
It also protects file metadata, including the file name and
modification time, and it hides the file size by padding the file
before encryption to a user-defined size.
We verify that our code is memory-safe, side-channel resistant, and that
it uses the I/O libraries correctly (e.g., it only reads or writes
a file or a socket between calling open and close).

PneuTube's performance is determined by a combination of the crypto
library, disk access (to read and write the file at each end) and  network I/O. 
Its aynchronous design is particularly rewarding on high-latency
network connections, but even when transferring a 1GB file from one
TCP port to another on the same machine, PneuTube takes just 6s.
In comparison, SCP (using SSH with ChaCha20-Poly1305) takes 8 seconds.

\subsection{Cryptographically Secure AEAD for miTLS}
\label{sec:aead}

We use our cryptographically secure AEAD library (\S\ref{sec:crypto})
within miTLS~\cite{mitls}, an existing implementation of TLS in \fstar.
In a previous verification effort, AEAD encryption was idealized as a
cryptographic assumption (concretely realized using bindings to
OpenSSL) to show that miTLS implements a secure authenticated channel.
However, given vulnerabilities such as CVE-2016-7054, this AEAD
idealization is a leap of faith that can undermine security when the
real implementation diverges from its ideal behavior.


We integrated our verified AEAD construction within miTLS at two
levels~\cite{record}.
First, we replace the previous AEAD idealization with a module that
implements a similar ideal interface but translates the state and
buffers to \lowstar representations.
This reduces the security of TLS to the PRF and MAC idealizations in AEAD.
We integrate AEAD at the C level by substituting
the OpenSSL bindings with bindings to the C-extracted version of AEAD.
This introduces a slight security gap, as a small adapter that
translates miTLS bytes to \lowstar buffers and calls into AEAD in C is
not verified.
We confirm that miTLS with our verified AEAD interoperates with
mainstream implementations of TLS 1.2 and TLS 1.3 on ChaCha20-Poly1305
ciphersuites.

\section{Related Work}
\label{sec:related}

\ch{Do we need to relate to Fiat Crypto.
  Independent parallel work.
  \url{https://github.com/mit-plv/fiat-crypto}
  \url{https://people.csail.mit.edu/jgross/personal-website/papers/2017-fiat-crypto-pldi-draft.pdf}
}

\ch{Idris also compiles to C. Want to relate?
  \url{http://docs.idris-lang.org/en/latest/reference/codegen.html}}

Many approaches have been proposed for verifying the functional
correctness and security of efficient low-level code.
A first approach is to build verification frameworks for C using
verification condition generators and SMT solvers~\cite{Kirchner:2015,
  cohen2009vcc, verifast}.
While this approach has the advantage of being able to verify existing
C code, this is very challenging: one needs to deal with the
complexity of C and with any possible optimization trick in the book.
Moreover, one needs an expressive specification language and escape
hatches for doing manual proofs in case SMT automation fails.
So others have deeply embedded C, or C-like languages, into proof
assistants such as Coq~\cite{beringer2015verified,
  Appel15, ChenWSLG16} and Isabelle~\cite{WinwoodKSACN09,
  Schirmer2006} and built program logics and verification
infrastructure starting from that.
This has the advantage of using the full expressive power of the proof
assistant for specifying and verifying properties of low-level programs.
This remains a very labor-intensive task though, because C programs
are very low-level and working with a deep embedding is often cumbersome.
Acknowledging that uninteresting low-level reasoning was a determining
factor in the size of the seL4 verification effort~\cite{Klein09sel4:formal},
\citet{GreenawayLAK14, GreenawayAK12} have recently
proposed sophisticated tools for automatically abstracting the
low-level C semantics into higher-level monadic specifications to ease reasoning.
We take a different approach: we give up on verifying existing C code
and embrace the idea of writing low-level code in a subset of C
shallowly embedded in \fstar{}.
This shallow embedding has significant advantages in terms of reducing
verification effort and thus scaling up verification to larger programs.
This also allows us to port to C only the parts of an \fstar program
that are a performance bottleneck, and still be able to verify the
complete program.

\ch{Can we use the \%annot column to quantitatively support the
  ``reducing verification effort'' claim?}



Verifying the correctness of low-level cryptographic code is
receiving increasing attention~\cite{Appel15, cryptol-s2n,
  beringer2015verified}.
The verified cryptographic applications we have written in \lowstar
and use for evaluation in this paper are an order of magnitude larger
than most previous work.
Moreover, for AEAD we target not only functional correctness, but also
cryptographic security.


In order to prevent the most devastating low-level attacks, several researchers
have advocated dialects of C equipped with type systems for memory safety
\cite{condit2007dependent, jim2002cyclone, Tarditi16}.
Others have designed new languages with type
systems aimed at low-level programming, including for instance linear
types as a way to deal with memory
management~\cite{amani2016cogent, matsakis2014rust}. One drawback is
the expressiveness limitations of such type systems:
once memory safety relies on more complex invariants than these type
systems can express, compromises need to be made, in terms of
verification or efficiency.
\lowstar can perform arbitrarily sophisticated reasoning to establish
memory safety, but does not enjoy the benefits of efficient decision
procedures~\cite{rust} and currently cannot deal with concurrency.

We are not the first to propose writing efficient and verified C code
in a high-level language. LMS-Verify~\cite{lms-verify} recently
extended the LMS meta-programming framework for Scala with
support for lightweight verification. Verification
happens at the generated C level, which has the advantage of taking
the code generation machinery out of the TCB, but has the disadvantage
of being far away from the original source code.

Bedrock~\cite{chlipala2013bedrock} is a generative meta-programming
tool for verified low-level programming in Coq.
The idea is to start from assembly and build up structured code
generators that are associated verification condition generators.
The main advantage of this ``macro assembly language'' view of
low-level verification is that no performance is sacrificed while
obtaining some amount of abstraction.
One disadvantage is that the verified code is not portable.

\ch{Could also compare to Ironclad / Ironfleet, Compiler from Dafny to
  x86 -- translation validation ... their high-level specs are at
  least checked before translating; they do some crypto too}

Our companion paper ``Implementing and Proving
the TLS 1.3 Record Layer''~\cite{record} is available online.
It describes a cryptographic model and
proof of security for AEAD using a combination of \fstar verification
and meta-level cryptographic idealization arguments. To make the point that
verified code need not be slow, the paper mentions that the
AEAD implementation can be ``extracted to C using an experimental backend
for \fstar'', but makes no further claims about this backend. The
current work introduces the design, formalization,
implementation, and experimental evaluation of this C backend for \fstar.
%


\comment{
old ideas of compiling ML to Rust, compare to optimizing compilers
(e.g. OCaml), etc. esp. how we compare to OCaml (extra bit, no pointer
arithmetic).
}\ch{Not sure what this is about}


\section{Conclusion}
This paper advocates a new methodology for carrying out high-level
proofs on low-level code. By embedding a low-level language and memory
model within \fstar, the programmer not only enjoys sophisticated
proofs but also gets to write their low-level code in a more modular
style, using features functional programmers take for granted,
including recursion and type abstraction. Our toolchain, relying on
partial evaluation and the latest advances in C compilers, shows that
we can write code in a style suitable for verification \emph{and}
enjoy the same performance as hand-written C code.

We are currently making progress in three different directions. First,
continuing our integration of AEAD within miTLS, we aim to port the
miTLS protocol layer to \lowstar, in order to get an entire verified,
TLS library in C. Second, parts of our toolchain are unverified. We
plan to formalize and verify using \fstar parts of the \kremlin tool,
notably the \lamstar to \cstar transformation. Third, we are working
on embedding assembly instructions within \lowstar, allowing us to
selectively optimize our code further towards closing the performance
gap that still remains relative to architecture-specific, hand-written
assembly routines.

\section*{Acknowledgments}
We thank the anonymous reviewers for their excellent reviews. We also thank
Abhishek Anand and Mike Hicks, for useful feedback and discussion which helped
shape the work presented here, as well as Armaël Guéneau, for his work on a
mechanized proof.
J-K. Zinzindohou\'e and K. Bhargavan received funding from the
European Research Council (ERC) under the European Union’s Horizon
2020 research and innovation programme (grant agreement no. 683032 - CIRCUS).
C. Hri\c{t}cu was in part supported by the European Research Council
under ERC Starting Grant SECOMP (715753).


\iflong
\newpage
\appendix

\section{Big-stepping a small-step semantics}

Actual observable behaviors will not account for the detailed
sequence of small-step transitions taken. Given an execution first
represented as the sequence of its transition steps from the initial
state, we follow CompCert to derive an observable behavior by only
characterizing termination or divergence and collecting the event
traces, thus erasing all remaining information about the execution
(number of transition steps, sequence of configurations, etc.)  by
\emph{big-stepping} the small-step semantics as shown in
Figure~\ref{fig:bigstep}.

\begin{figure}[!htbp]
  \begin{mathpar}
  \arraycolsep=1.4pt\def\arraystretch{3.2}

\inferrule*
{
  s_0 \stackrel{t_0}{\rightarrow} s_1 \stackrel{t_1}{\rightarrow} \dots \stackrel{t_{n-2}}{\rightarrow} s_{n-1}\stackrel{t_{n-1}}{\rightarrow} s_n \\
  s_0 ~ \text{initial} \\
  s_n ~ \text{final with return value} ~ r \\
  t = t_0 ; t_1 ; \dots ; t_{n-2} ; t_{n-1}
}
{
  \kw{Terminates}(t, r)
}

\\

\inferrule*
{
  s_0 \stackrel{t_0}{\rightarrow} s_1 \stackrel{t_1}{\rightarrow} \dots \\
  s_0 ~ \text{initial} \\
  T = t_0 ; t_1 ; \dots
}
{
  \kw{Diverges}(T)
}

\\

\inferrule*
{
  s_0 \stackrel{t_0}{\rightarrow} s_1 \stackrel{t_1}{\rightarrow} \dots \stackrel{t_{n-2}}{\rightarrow} s_{n-1}\stackrel{t_{n-1}}{\rightarrow} s_n \\
  s_0 ~ \text{initial} \\
  s_n ~ \text{not final} \\
  t = t_0 ; t_1 ; \dots ; t_{n-2} ; t_{n-1}
}
{
  \kw{GoesWrong}(t)
}
\end{mathpar}
\caption{Big-stepping a small-step semantics}
\label{fig:bigstep}
\end{figure}

\section{\cstar and \lamstar Definition}

Notations used in the document are summarized in Figure \ref{fig:notations}.
Function name $f$ and variable name $x$ are of different syntax classes. A term
is closed if it does not contain unbound variables (but can contain function
names). The grammar of \cstar and \lamstar are listed in Figure \ref{fig:cstar-syntax}
and \ref{fig:lowstar-syntax} respectively. \cstar syntax is defined in such a way
that \cstar expressions do not have side effects (but can fail to evaluate because
of e.g. referring to a nonexistent variable). Locations, which only appear during reduction, consist
of a block id, an offset and a list of field names (a ``field path''). The
``getting field address'' syntax $\eptrfd{e}{fd}$ is for constructing a pointer
to a field of a struct pointed to by pointer $e$.

In \lamstar syntax, buffer allocation, buffer write and function application (as well as mutable struct allocation, mutable struct write) are
distinctive syntax constructs (not special cases of let-binding). In this way we
force effectful operations to be in let-normal-form, to be aligned with \cstar (\cstar
does not allow effectfull expressions because of C's nondeterministic expression
evaluation order). Let-binding and anonymous let-binding are also distinctive
syntax constructs, because they need to be translated into different \cstar
constructs. Locations and $\epop\;le$ only appear during reduction.

The operational semantics of \cstar is listed in Figure \ref{fig:cstar-expr-eval}
and \ref{fig:cstar-stmts-reduction}. Because C expressions do not have a
deterministic evaluation order, in \cstar we use a mixed big-step/small-step
operational semantics, where \cstar expressions are evaluated with big-step
semantics defined by the evaluation function (interpreter) $\eval{e}{(p,V)}$,
while \cstar statements are evaluated with small-step semantics. Definitions used in
\cstar semantics are summarized in Figure \ref{fig:cstar-semantics-defs}. A \cstar
evaluation configuration $C$ consist of a stack $S$, a variable assignment $V$
and a statement list $ss$ to be reduced. A stack is a list of frames. A frame
$F$ includes frame memory $M$, variable assignment $V$ to be restored upon
function exit, and continuation $E$ to be restored upon function exit. Frame
memory $M$ is optional: when it is none, the frame is called a ``call frame'';
otherwise a ``block frame''. A frame memory is just a partial map from block ids
to value lists.

Both \cstar and \lamstar reductions generate traces that include memory read/write with
the address, and branching to true/false. Reduction steps that don't have these
effects are silent.

\begin{figure}[!htbp]
\begin{center}
  \begin{tabularx}{\columnwidth}{lRlR}
    $\ls{a}$ & list &
    $\option{a}$ & option $a$ \\
    $\None$ & None &
    $\Some{a}$ & Some a \\
    $n$ & integer &
    $x$ & variable name \\
    $f$ & function name &
    $fd$ & field name \\
    $a\sympartial b$ & partial map & $\{\}$ & empty map \\
    $\{x\mapsto a\}$ & singleton map & $m[x\mapsto a]$ & map update \\
    $[]$ & empty list & $a;b$ & list concat or cons \\
    $\subst{x}{a}{b}$ & substitute $a$ for $x$ in $b$ & & \\
  \end{tabularx}
\end{center}
\caption{Notations}
\label{fig:notations}
\end{figure}

\subsection{\cstar Definition}

The following are the definitions of the syntax and operational semantics of \cstar.

\begin{figure}[!htbp]
\begin{center}
  \begin{tabularx}{\columnwidth}{rlR}
    $p ::= $ & & program \\
      & $\ls d$                      & series of declarations \\[1.2mm]

    $d ::= $ & & declaration \\
      & $\ecfun fxtt{ss}$                & top-level function \\
    & $\evardecl txv$               & top-level value \\
    [1.2mm]

    $ss ::= $ & & statement lists \\
    & $\ls{s}$                  &  \\
    [1.2mm]

    $s ::= $ & & statements \\
    & $\evardecl txe$                  & immutable variable declaration \\
    & $\earray txn$                & array declaration \\
    & $\memset{e}{n}{e}$                & memory set \\
    & $\evardecl{t}{x}{\eapply fe}$    & application \\
    & $\evardecl{t}{x}{\eread e}$      & read \\
    & $\ewrite ee$               & write \\
    & $\eif{e}{ss}{ss}$ & conditional \\
    & \{\stmts\} &  block \\
    & e & expression \\
    & \ereturn e & return \\
    [1.2mm]

    $e ::= $ & & expressions \\
    & $n$    & integer constant \\
    & $()$ & unit value \\
      & $x$                           & variable \\
      & $e_1+e_2$                        & pointer add ($e_1$ is a pointer and $e_2$ is an $\kw{int}$) \\
    & $\{\ls{fd=e}\}$ & struct \\
    & $e.fd$ & struct field projection \\
    & $\eptrfd{e}{fd}$ & struct field address ($e$ is a pointer) \\
    & $loc$                        & location \\
    [1.2mm]

    $loc ::= $ & & locations \\
    & $(b, n, \ls{fd})$ &  \\

  \end{tabularx}
\end{center}
\caption{\cstar Syntax}
\label{fig:cstar-syntax}
\end{figure}

\begin{figure}[!htbp]
\begin{small}
\begin{center}
  \begin{tabularx}{\columnwidth}{rlR}

    $v ::= $ & & values \\
    & $n$    & constant \\
    & $()$ & unit value \\
    & $\{\ls{fd=v}\}$ & constant struct \\
    & $loc$                        & location \\
    [1.2mm]

    $E ::=$ & & evaluation ctx (plug expr to get stmts) \\
    & $\symhole; ss$ & discard returned value \\
    & $t\;x=\symhole; ss$ & receive returned value \\
    [1.2mm]

    $F ::=$ & & frames \\
    & $(\None, V, E)$ & call frame \\
    & $(\Some{M}, V, E)$ & block frame \\
    [1.2mm]

    $M ::=$ & & memory \\
    & $b\sympartial \ls{\option{v}}$ & map from block id to list of optional values \\
    [1.2mm]

    $V ::=$ & & variable assignments \\
    & $x\sympartial v$ & map from variable to value \\
    [1.2mm]

    $S ::=$ & & stack \\
    & $\ls{F}$ & list of frames \\
    [1.2mm]

    $C ::=$ & & configuration \\
    & $(S, V, ss)$ &  \\
    [1.2mm]

    $l ::=$ & & label \\
    & $\symread\;loc$ & read \\
    & $\symwrite\;loc$ & write \\
    & $\brt$ & branch true \\
    & $\brf$ & branch false \\

  \end{tabularx}
\end{center}
\end{small}
\caption{\cstar Semantics Definitions}
\label{fig:cstar-semantics-defs}
\end{figure}

\begin{figure*}[!htbp]
\begin{small}
\begin{flushleft}
\fbox{$\eval{e}{(p, V)}=v$}
\end{flushleft}
\begin{mathpar}

\inferrule* [Right=Var]
{
  V(x) = v
}
{
  \eval{x}{(p,V)}=v
}

\quad
\quad
\quad

\inferrule* [Right=PtrAdd]
{
  \eval{e_1}{(p,V)} = (b, n, []) \\
  \eval{e_2}{(p,V)} = n' \\
}
{
  \eval{e_1+e_2}{(p,V)}=(b, n+n', [])
}

\\

\inferrule* [Right=PtrFd]
{
  \eval{e}{(p,V)} = (b, n, \ls{fd}) \\
}
{
  \eval{\eptrfd{e}{fd}}{(p,V)}=(b, n, \ls{fd};fd)
}

\quad
\quad
\quad

\inferrule* [Right=GVar]
{
  x\not\in V \\
  p(x) = v
}
{
  \eval{x}{(p,V)}=v
}

\quad
\quad
\quad
\quad

\inferrule* [Right=NonMutStruct]
{
  \;
}
{
  \eval{\{\ls{fd=e}\}}{(p,V)}=\{\ls{fd=\eval{e}{(p,V)}}\}
}

\\

\inferrule* [Right=Proj]
{
  \eval{e}{(p,V)}=\{\ls{fd=v}\} \\
  \{\ls{fd=v}\}(fd')=v'
}
{
  \eval{e.fd'}{(p,V)}=v'
}

\quad
\quad
\quad
\quad

\inferrule*[Right=Val]
{
}{
  \eval{v}{(p,V)} = v
}
\end{mathpar}
\end{small}
\caption{\cstar Expression Evaluation}
\label{fig:cstar-expr-eval}
\end{figure*}

\begin{figure*}[!htbp]
\begin{small}
\begin{flushleft}
\fbox{$p \vdash C\step_l C'$}
\end{flushleft}
\begin{mathpar}
\inferrule* [Right=VarDecl]
{
  \eval{e}{(p,V)}=v
}
{
  p \vdash (S, V, t\;x=e; ss) \step (S, V[x\mapsto v], ss)
}

\quad
\quad
\quad
\quad
\quad
\quad

\inferrule* [Right=ArrDecl]
{
  S = S'; (M, V, E) \\
  b\not\in S
}
{
  p \vdash (S, V, t\;x[n]; ss) \step (S';(M[b\mapsto \None^n], V, E), V[x\mapsto (b, 0, [])], ss)
}

\\

\inferrule* [Right=Memset]
{
  \eval{e_1}{(p, V)} = (b, n, []) \\
  \eval{e_2}{(p, V)} = v \\
  \symset(S, (b, n, []), v^m) = S'
}
{
  p \vdash (S, V, \memset{e_1}{m}{e_2}; ss) \step_{\symwrite\;(b,n,[]), \dots, \symwrite\;(b,n+m-1,[])} (S', V, ss)
}

\\

\inferrule* [Right=Read]
{
  \eval{e}{(p, V)} = (b, n, \ls{fd}) \\
  \symget(S, (b, n, \ls{fd})) = v
}
{
  p \vdash (S, V, t\;x=*[e]; ss) \step_{\symread\;(b,n,\ls{fd})} (S, V[x\mapsto v], ss)
}

\quad
\quad
\quad
\quad

\\

\inferrule* [Right=Write]
{
  \eval{e_1}{(p, V)} = (b, n, \ls{fd}) \\
  \eval{e_2}{(p, V)} = v \\
  \symset(S, (b, n, \ls{fd}), v) = S'
}
{
  p \vdash (S, V, *e_1=e_2; ss) \step_{\symwrite\;(b,n,\ls{fd})} (S', V, ss)
}

\quad
\quad
\quad
\quad
\quad

\inferrule* [Right=Ret]
{
  \eval{e}{(p,V)}=v
}
{
  p \vdash (S;(\None, V',E), V, \ereturn\;e; ss) \step (S, V', \fplug{E}{v})
}

\\

\inferrule* [Right=Call]
{
  p(f)=\ecfuntwo{y}{t_1}{t_2}{ss_1} \\
  \eval{e}{(p,V)}=v
}
{
  p \vdash (S, V, t\;x=f\;e; ss) \step (S;(\None, V, t\;x=\symhole;ss), \{\}[y\mapsto v], ss_1)
}

\quad
\quad
\quad
\quad

\inferrule* [Right=RetBlk]
{
  \eval{e}{(p,V)}=v
}
{
  p \vdash (S;(M, V',E), V, \ereturn\;e; ss) \step (S, \{\}, \ereturn\;v)
}

\\

\inferrule* [Right=Expr]
{
  \eval{e}{(p,V)} = v
}
{
  p \vdash (S, V, e; ss) \step (S, V, ss)
}

\quad
\quad
\quad
\quad

\inferrule* [Right=Empty]
{
  \;
}
{
  p \vdash (S; (M, V', E), V, []) \step (S, V', \fplug{E}{()})
}

\\

\inferrule* [Right=Block]
{
  \;
}
{
  p \vdash (S, V, \{ss_1\};ss_2) \step (S;(\{\},V,\symhole;ss_2), V, ss_1)
}

\\

\inferrule* [Right=IfT]
{
  \eval{e}{(p,V)} = n \\
  n\not=0
}
{
  p \vdash (S, V, \eif{e}{ss_1}{ss_2};ss) \step_\brt (S, V, ss_1;ss)
}

\quad
\quad
\quad

\inferrule* [Right=IfF]
{
  \eval{e}{(p,V)} = n \\
  n=0
}
{
  lp \vdash (S, V, \eif{e}{ss_1}{ss_2};ss) \step_\brf (S, V, ss_2;ss)
}

\end{mathpar}
\end{small}
\caption{\cstar Configuration Reduction}
\label{fig:cstar-stmts-reduction}
\end{figure*}

\clearpage

\subsection{\lamstar Definition}

The following are the definitions of the syntax and operational semantics of \lamstar.

\begin{figure}[!htbp]
\begin{center}
  \begin{tabularx}{\columnwidth}{rlR}
    $lp ::= $ & & program \\
      & $\ls{ld}$                      & series of declarations \\[1.2mm]

    $ld ::= $ & & declaration \\
      & $\etlet{x}{\lambda y:t.\;le:t}$                & top-level function \\
    & $\etlet{x:t}{v}$               & top-level value \\
    [1.2mm]

    $le ::= $ & & expressions \\
    & $n$    & constant \\
    & $()$ & unit value \\
    & $x$                           & variable \\
    & $\{\ls{fd=le}\}$ & struct as value \\
    & $le.fd$ & immutable struct field projection \\
    & $\estructfield{le}{fd}$           & sub-structure: mutable structure field projection \\
    & $\esubbuf{le}{le}$                & sub-buffer \\
    & $\eif {le}{le}{le}$            & conditional \\
    & $\elet{x:t}{le}{le}$                   & let-binding \\
    & $\elet{\_}{le}{le}$                   & anonymous let-binding \\
    & $\elet{x:t}{f\;le}{le}$                   & application \\
    & $\elet{x}{\enewbuf{n}{(le:t)}}{le}$                 & new buffer \\
    & $\elet{x:t}{\ereadbuf {le}{le}}{le}$  & read from buffer \\
    & $\elet{\_}{\ewritebuf{le}{le}{le}}{le}$              & write to buffer \\
    & $\elet{x}{\enewstruct{(le:t)}}{le}$                 & new mutable structure \\
    & $\elet{x:t}{\ereadstruct{le}}{le}$                & read from mutable structure \\
    & $\elet{\_}{\ewritestruct{le}}{le}$               & write to mutable structure \\
    & $\withframe\;le$ & with-frame \\
    & $\epop\;le$ & pop frame \\
    & $loc$                        & location \\
    [1.2mm]

  \end{tabularx}
\end{center}
\caption{\lamstar Syntax}
\label{fig:lowstar-syntax}
\end{figure}

\clearpage

\begin{figure}[!htbp]
  $$\begin{array}{rrcl}

    \multicolumn{4}{l}{\textrm{values}} \\
    & lv & ::= & n \mid () \mid \{\ls{fd=lv}\} \mid loc \\
    \\
    \multicolumn{4}{l}{\textrm{evaluation contexts}} \\
    & LE & ::= & \symhole \mid LE.fd \mid \estructfield{LE}{fd} \mid \{\ls{fd=lv};fd=LE;\ls{fd=le}\} \\
    & & \mid & \esubbuf{LE}{le} \mid \esubbuf{lv}{LE} \\
    & & \mid & \eif {LE}{le}{le} \\
    & & \mid & \elet{x:t}{LE}{le} \mid \elet{\_}{LE}{le} \\
    & & \mid & \elet{x:t}{f\;LE}{le} \\
    & & \mid & \elet{x}{\enewbuf{n}{(LE:t)}}{le} \\
    & & \mid & \elet{x:t}{\ereadbuf{LE}{le}}{le} \\
    & & \mid & \elet{x:t}{\ereadbuf{lv}{LE}}{le} \\
    & & \mid & \elet{\_}{\ewritebuf{LE}{le}{le}}{le} \\
    & & \mid & \elet{\_}{\ewritebuf{lv}{LE}{le}}{le} \\
    & & \mid & \elet{\_}{\ewritebuf{lv}{lv}{LE}}{le} \\
    & & \mid & \elet{x}{\enewstruct{(LE:t)}}{le} \\
    & & \mid & \elet{x:t}{\ereadstruct{LE}}{le} \\
    & & \mid & \elet{\_}{\ewritestruct{LE}{le}}{le} \\
    & & \mid & \elet{\_}{\ewritestruct{lv}{LE}}{le} \\
    & & \mid & \epop\;LE \\
    \\
    \multicolumn{4}{l}{\textrm{stack}} \\
    & H & ::= & \ls{h} \\
    \\
    \multicolumn{4}{l}{\textrm{stack frame}} \\
    & h & ::= & b\sympartial \ls{v} \\

\end{array}$$
\caption{\lamstar Semantics Definitions}
\label{fig:lowstar-semantics-defs}
\end{figure}

\clearpage

\begin{figure*}[!htbp]
\begin{scriptsize}
\begin{flushleft}
\fbox{$lp \vdash (H, le) \astep_l (H', le')$}\quad\text{and}\quad
\fbox{$lp \vdash (H, le) \step_l (H', le')$}
\end{flushleft}
\begin{mathpar}

\inferrule* [Right=ReadBuf]
{
  H(b, n+n', []) = lv
}
{
  lp \vdash (H, \elet{x}{\ereadbuf{(b,n,[])}{n'}}{le}) \astep_{\symread\;(b,n+n',[])} (H, \subst{x}{lv}{le})
}

\\

\inferrule* [Right=ReadStruct]
{
  H(b, n, \ls{fd}) = lv
}
{
  lp \vdash (H, \elet{x}{\ereadstruct{(b,n,\ls{fd})}}{le}) \astep_{\symread\;(b,n,\ls{fd})} (H, \subst{x}{lv}{le})
}

\\

\inferrule* [Right=App]
{
  lp(f)=\lambda y:t_1.\;le_1:t_2
}
{
  lp \vdash (H, \elet{x:t}{f\;v}{le}) \astep (H, \elet{x:t}{\subst{y}{v}{le_1}}{le})
}

\\

\inferrule* [Right=WriteBuf]
{
  (b,n+n',[])\in H
}
{
  lp \vdash (H, \elet{\_}{\ewritebuf{(b,n,[])}{n'}{lv}}{le}) \astep_{\symwrite\;(b,n+n',[])} (H[(b, n+n',[])\mapsto lv], le)
}

\\

\inferrule* [Right=WriteStruct]
{
  (b,n,\ls{fd})\in H
}
{
  lp \vdash (H, \elet{\_}{\ewritestruct{(b,n,\ls{fd})}{lv}}{le}) \astep_{\symwrite\;(b,n,\ls{fd})} (H[(b, n,\ls{fd})\mapsto lv], le)
}

\\

\inferrule* [Right=Subbuf]
{
  \;
}
{
  lp \vdash (H, \esubbuf{(b,n,[])}{n'}) \astep (H, (b, n+n',[]))
}

\\

\inferrule* [Right=StructField]
{
  \;
}
{
  lp \vdash (H, \estructfield{(b,n,\ls{fd})}{fd'}) \astep (H, (b, n, (\ls{fd}; fd')))
}

\\

\inferrule* [Right=Let]
{
  \;
}
{
  lp \vdash (H, \elet{x:t}{v}{le}) \astep (H, \subst{x}{v}{le})
}

\quad
\quad
\quad

\inferrule* [Right=ALet]
{
  \;
}
{
  lp \vdash (H, \elet{\_}{v}{le}) \astep (H, le)
}

\quad
\quad
\quad
\quad

\inferrule* [Right=Proj]
{
  \{\ls{fd=lv}\}(fd') = lv'
}
{
  lp \vdash (H, \{\ls{fd=lv}\}.fd') \astep (H, lv')
}

\\

\inferrule* [Right=IfT]
{
  n\not=0
}
{
  lp \vdash (H, \eif{n}{le_1}{le_2}) \astep_\brt (H, le_1)
}

\quad
\quad
\quad

\inferrule* [Right=IfF]
{
  n=0
}
{
  lp \vdash (H, \eif{n}{le_1}{le_2}) \astep_\brf (H, le_2)
}

\\

\inferrule* [Right=NewBuf]
{
  b\not\in H
}
{
  lp \vdash (H, \elet{x}{\enewbuf{n}{(lv:t)}}{le}) \astep_{\symwrite\;(b,0,[]),\dots,\symwrite\;(b,n-1,[])} (H[b\mapsto lv^n], \subst{x}{(b, 0,[])}{le})
}

\\

\inferrule* [Right=NewStruct]
{
  b\not\in H
}
{
  lp \vdash (H, \elet{x}{\enewstruct{(lv:t)}}{le}) \astep_{\symwrite\;(b,0,[])} (H[b\mapsto lv], \subst{x}{(b, 0,[])}{le})
}

\\

\inferrule* [Right=WF]
{
  \;
}
{
  lp \vdash (H, \withframe\;le) \astep (H;\{\}, \epop\;le)
}

\quad
\quad
\quad

\inferrule* [Right=Pop]
{
  \;
}
{
  lp \vdash (H;h, \epop\;lv) \astep (H, lv)
}

\\

\inferrule* [Right=Step]
{
  lp \vdash (H, le) \astep_l (H', le')
}
{
  lp \vdash (H, \fplug{LE}{le}) \step_l (H', \fplug{LE}{le'})
}

\end{mathpar}
\end{scriptsize}
\caption{\lamstar Atomic Reduction and Reduction}
\label{fig:lowstar-reduction}
\end{figure*}

\newpage

\section{\lamstar to \cstar Compilation}

The compilation procedure is defined in Figure \ref{fig:lowtoc} as inference rules, which should be read as a function defined by pattern-matching, with earlier rules shadowing later rules. The compilation is a partial function, encoding syntactic constraints on \lamstar programs that can be compiled. For example, compilable \lamstar top-level functions must be wrapped in a $\withframe$ construct.

\begin{figure*}[!htbp]
\begin{scriptsize}
\begin{flushleft}
\fbox{$\lowtoce{le}=e$}\quad\text{and}\quad
\fbox{$\lowtoc{le}=ss$}\quad\text{and}\quad
\fbox{$\lowtocd{ld}=d$}
\end{flushleft}
\begin{mathpar}
\inferrule*
{
  \;
}
{
  \lowtoce{n} = n
}

\quad
\quad

\inferrule*
{
  \;
}
{
  \lowtoce{(b,n,[])} = (b,n,[])
}

\quad
\quad

\inferrule*
{
  \;
}
{
  \lowtoce{\{\ls{fd=le}\}} = \{\ls{fd=\lowtoce{le}}\}
}

\quad
\quad

\inferrule*
{
  \;
}
{
  \lowtoce{x} = x
}

\quad
\quad

\inferrule*
{
  \;
}
{
  \lowtoce{\esubbuf{le_1}{le_2}} = \lowtoce{le_1} + \lowtoce{le_2}
}

\\

\inferrule*
{
  \;
}
{
  \lowtoce{\estructfield{le}{fd}} = \eptrfd{\lowtoce{le}}{fd}
}

\\

\inferrule*
{
  \lowtoce{le}=e \\
  \lowtoc{le_1}=ss
}
{
  \lowtoc{(\elet{x:t}{f\;le}{le_1})} = (t\;x=f(e); ss)
}

\quad
\quad

\inferrule*
{
  \lowtoce{le_i}=e_i \;(i=1,2) \\
  \lowtoc{le}=ss
}
{
  \lowtoc{\elet{x:t}{\ereadbuf{le_1}{le_2}}{le}} = (t\;x=\eread{e_1+e_2};ss)
}

\\

\inferrule*
{
  \lowtoce{le_i}=e_i \;(i=1) \\
  \lowtoc{le}=ss
}
{
  \lowtoc{\elet{x:t}{\ereadstruct{le_1}}{le}} = (t\;x=\eread{e_1};ss)
}

\\

\inferrule*
{
  \lowtoce{le_i}=e_i \;(i=1,2,3) \\
  \lowtoc{le}=ss
}
{
  \lowtoc{(\elet{\_}{\ewritebuf{le_1}{le_2}{le_3}}{le})} = (\ewrite{e_1+e_2}{e_3}; ss)
}

\\

\inferrule*
{
  \lowtoce{le_i}=e_i \;(i=1,2) \\
  \lowtoc{le}=ss
}
{
  \lowtoc{(\elet{\_}{\ewritestruct{le_1}{le_2}}{le})} = (\ewrite{e_1}{e_2}; ss)
}

\\

\inferrule*
{
  \lowtoce{le}=e \\
  \lowtoc{le_1}=ss
}
{
  \lowtoc{(\elet{x}{\enewbuf{n}{(le:t)}}{le_1})} = (\earray txn; \memset{x}{n}{e}; ss)
}

\\

\inferrule*
{
  \lowtoce{le}=e \\
  \lowtoc{le_1}=ss
}
{
  \lowtoc{(\elet{x}{\enewstruct{(le:t)}}{le_1})} = (\earray tx1; \memset{x}{1}{e}; ss)
}

\\

\inferrule*
{
  \;
}
{
  \lowtoc{(\withframe\;le)} = \{\lowtoc{le}\}
}

\quad
\quad

\inferrule*
{
  \lowtoce{le}=e \\
  \lowtoc{le_i}=ss_i \;(i=1,2)
}
{
  \lowtoc{(\eif{le}{le_1}{le_2})} = (\eif{e}{ss_1}{ss_2})
}

\\

\inferrule*
{
  \lowtoce{le}=e \\
  \lowtoc{le_1}=ss
}
{
  \lowtoc{(\elet{x:t}{le}{le_1})} = (t\;x=e; ss)
}

\quad
\quad

\inferrule*
{
  \lowtoce{le}=e \\
  \lowtoc{le_1}=ss
}
{
  \lowtoc{(\elet{\_}{le}{le_1})} = (e; ss)
}

\quad
\quad

\inferrule*
{
  \;
}
{
  \lowtoc{le} = \lowtoce{le}
}

\\

\inferrule*
{
  \;
}
{
  \lowtocd{(\etlet{x:t}{lv})} = (t\;x=\lowtoce{lv})
}

\quad
\quad

\inferrule*
{
  \lowtoc{le}=ss;e
}
{
  \lowtocd{(\etlet{f}{\lambda x:t_1.\;\withframe\;le:t_2})} = \ecfun{f}{x}{t_1}{t_2}{ss; \ereturn{e}}
}

\end{mathpar}
\end{scriptsize}
\caption{\lamstar to \cstar compilation}
\label{fig:lowtoc}
\end{figure*}

\newpage

\section{Bisimulation Proof} \label{section-bisim}

The main results are Theorem \ref{thm-safety} and \ref{thm-bisim}, in terms of some notions defined before them in this section. The two theorems are proved by using the crucial Lemma \ref{lemma-reverse} to ``flip the diagram'', i.e., proving \cstar refines \lamstar by proving \lamstar refines \cstar. The flipping relies on the fact that \cstar is deterministic modulo renaming of block identifiers. An alternative way of determinization to renaming of block identifiers is to have the stream of random coins for choosing block identifiers as part of the configuration (state).

That \cstar semantics use
big-step semantics for \cstar expressions complicates the bisimulation
proof a bit because \lamstar and \cstar steps may go out-of-sync at
times. Within the proof we used a relaxed notion of simulation
(``quasi-refinement'') that allows this temporary discrepancy by some
stuttering, but still implies bisimulation.

The specific relation between a \lamstar configuration and its \cstar counterpart is defined in Definition \ref{def-R} as Relation $R$.  It is defined in terms of a \cstar-to-\lamstar back-translation, listed in Fig \ref{fig:ctolow}. We need a back-translation here instead of the \lamstar-to-\cstar forward-translation as defined before, because a \cstar configuration has a clear call-stack with each frame containing its own variable environment and continuation, while a \lamstar configuration contains one giant \lamstar expression which makes it impossible to recover the call-stack. Hence for relating two configurations in the middle of reduction, only the \cstar-to-\lamstar direction is possible.

\begin{definition}[(Labelled) Transition System] \label{def-trsys}
  A transition system is a 5-tuple $(\Sigma, L, \step, s_0, F)$, where $\Sigma$ is a set of states, L is a monoid which is the set of labels, $\step\;\subseteq\Sigma\times \kw{option}\;L\times \Sigma$ is the step relation, $s_0$ is the initial state, and $F$ is a set of designated final state.
\end{definition}

In the following text, we use $\epsilon$ to denote an empty label (the unit of the monoid $L$), $l$ to range over non-empty (non-$\epsilon$) labels and $\olabel$ to range over possibly empty labels. When the label is empty, we can omit it. The label of multiple steps is the combined label of each steps, using the addition operator of monoid $L$. We define $a\Downarrow_\olabel a'\defeq a\step^*_\olabel a'$ and $a'\in F$.

\begin{definition}[Safety] \label{def-safety}
  A transition system $A$ is safe iff for all $s$ so that $s_0\step^* s$, $s$ is unstuck, where $\unstuck(s)$ is defined as either $s\in F$ or there exists $s'$ so that $s\step s'$.
\end{definition}

\begin{definition}[Refinement] \label{def-refine}
  A transition system $A$ refines a transition system $B$ (with the same label set) by $R$ (or $R$ is a refinement for transition system $A$ of transition system $B$) iff
  \begin{enumerate}
  \item there exists a well-founded measure $|-|_b$ (indexed by a $B$-state $b$) defined on the set of $A$-states $\{a:\Sigma_A\;|\;a\;R\;b\}$;
  \item $a_0\;R\;b_0$, that is, the two initial states are in  relation $R$;
  \item for all $a:\Sigma_A$ and $b:\Sigma_B$ such that $a\;R\;b$,
    \begin{enumerate}
    \item if $a\step_\olabel a'$ for some $a':\Sigma_A$, then there exists $b':\Sigma_B$ and $n$ such that $b\step^n_\olabel b'$ and $a'\;R\;b'$ and that $n=0$ implies that $|a|_b>|a'|_b$;
    \item if $a\in F_A$, then there exists $b':\Sigma_B$ such that $b\Downarrow_\epsilon b'$ and $a\;R\;b'$.
    \end{enumerate}
  \end{enumerate}
  $A$ refines $B$ iff there exists $R$ so that $A$ refines $B$ by $R$.
\end{definition}

\begin{definition}[Bisimulation] \label{def-bisim}
  A transition system $A$ bisimulates a transition system $B$ iff $A$ refines $B$ and $B$ refines $A$.
\end{definition}

\begin{definition}[Transition System of \cstar and \lamstar] \label{def-trsys-cstar}
  $$\begin{array}{rcl}
    \sys_{\cstar}(p,V,ss)&\defeq&(C, \{\ls{l}\}, p\vdash\step, ([], V, ss), \{([], V', \ereturn{e})\}) \\
    \sys_{\lamstar}(lp,le)&\defeq&(\{(H,le)\}, \{\ls{l}\}, lp\vdash\step, ([], le), \{([], lv)\})
  \end{array}$$
\end{definition}

In the following text, we treat label $\symread/\symwrite\;(b,n)$ and $\symread/\symwrite\;(b,n,[])$ as equal, and if we have a \lamstar value or substitution, we freely use it as a \cstar one because coercion from \lamstar value to \cstar value is straight-forward.

\begin{theorem}[Safety] \label{thm-safety}
  For all \lamstar program $lp$, closed expression $le$ and closing substitution $V$, if $\lowtocd{lp}=p$, $\lowtoc{le}=ss$ and $\sys_{\lamstar}(lp,V(le))$ is safe, then $\sys_{\cstar}(p, V, ss)$ is safe.
\end{theorem}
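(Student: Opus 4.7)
The plan is to derive safety of the compiled \cstar program as a direct consequence of the bisimulation between $\sys_{\cstar}(p,V,ss)$ and $\sys_{\lamstar}(lp,V(le))$ promised by Theorem~\ref{thm-bisim}, which is itself proved using the relation $R$ from Definition~\ref{def-R} and the ``flipping'' Lemma~\ref{lemma-reverse}. Bisimulation preserves safety in both directions (if $A$ bisimulates $B$, each reachable $A$-state is related to a reachable $B$-state and is thus either final or steppable), so safety of \lamstar will propagate to \cstar. Accordingly, I would structure the proof in three movements: establishing initial relatedness, transferring ``can step or is final'' across $R$, and closing under reachability.

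First, I would show $([], V, ss)\mathrel{R}([], V(le))$ at the initial configurations. Because $R$ is defined via the \cstar-to-\lamstar back-translation (Fig.~\ref{fig:ctolow}) and the compilation rules for $\lowtoc{\cdot}$ and $\lowtocd{\cdot}$ are structurally inverse to that back-translation on the compilable fragment, this should reduce to a direct induction on $le$ using the hypotheses $\lowtoc{le}=ss$ and $\lowtocd{lp}=p$. The closing substitution $V$ only needs to be threaded consistently since \cstar variable environments and \lamstar substitutions both replace free variables by values.

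Next, I would invoke the refinement of $\sys_{\cstar}$ by $\sys_{\lamstar}$ obtained by flipping the easier direction. Directly showing that every \cstar step is matched by \lamstar steps is awkward because a single \cstar statement (say, $t\,x = \eread{e}$) packages several \lamstar atomic reductions; conversely, showing that every \lamstar atomic step is matched (with stuttering permitted by the well-founded measure $|\cdot|_b$) by zero or one \cstar step is natural, since \lamstar administrative reductions correspond to expression evaluation inside a \cstar statement. Lemma~\ref{lemma-reverse} then flips this quasi-refinement into a \cstar-refines-\lamstar result by exploiting that \cstar is deterministic modulo renaming of fresh block identifiers. Given this refinement, any reachable \cstar state $c$ related to a reachable \lamstar state $a$ inherits ``$a$ final $\Rightarrow c\Downarrow_\epsilon$ final'' and ``$a$ steps $\Rightarrow c$ steps'', ruling out stuckness.

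The main obstacle I anticipate is the bookkeeping around block-identifier freshness across the two systems. The rules \textsc{NewBuf}/\textsc{NewStruct} in \lamstar and \textsc{ArrDecl} in \cstar both pick a $b\notin H$ (resp.\ $b\notin S$) nondeterministically, so the two executions of a related pair will in general pick different addresses. This is precisely what forces the ``determinism modulo renaming'' phrasing of Lemma~\ref{lemma-reverse}, and the $R$-invariant must maintain a partial bijection between \cstar block ids (threaded through stack frames and continuations) and \lamstar block ids (living in the flat stack $H$). Keeping this bijection coherent across $\withframe$/$\epop$ on the \lamstar side versus block frames and call frames on the \cstar side, and showing it extends after each allocation step, is where I expect the proof's real work to lie; once that invariant is in place, the safety transfer reduces to a straightforward case analysis on the syntactic form of the \cstar statement list at the stuck candidate state.
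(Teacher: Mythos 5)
Your overall route is the same as the paper's: prove the easy direction (\lamstar quasi-refines \cstar, with stuttering absorbed by the well-founded measure and by determinism of \lamstar), then flip it using determinism of \cstar modulo renaming of block identifiers. The paper's proof of Theorem~\ref{thm-safety} is literally an appeal to Lemma~\ref{lemma-back-refine}, Lemma~\ref{lemma-cstar-deter} and Lemma~\ref{lemma-reverse}, and your description of the initial-state lemma, the back-translation-based relation $R$, and the block-identifier bookkeeping all match the paper's development.

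There is one step where your justification does not go through as stated: you derive safety of the \cstar system from Theorem~\ref{thm-bisim} by asserting that a reachable \cstar state related to a reachable \lamstar state ``is thus either final or steppable.'' Under Definition~\ref{def-refine}, refinement says nothing about stuck, non-final states of the refining system --- both clauses are conditionals whose hypotheses (``if $a$ steps'', ``if $a$ is final'') are vacuously satisfied by a stuck state, so ``$A$ bisimulates $B$ and $B$ is safe'' does not by itself yield ``$A$ is safe.'' Even using the $B$-refines-$A$ direction, the matching step count $n$ may be $0$, so one must additionally invoke the well-founded measure to rule out infinite stuttering before $A$ is forced to move. The paper closes exactly this hole: Lemma~\ref{lemma-refine-safety} demands the extra hypothesis that the relation implies unstuckness, and the proof of Lemma~\ref{lemma-reverse} constructs a relation $R'$ that records enough of $B$'s next move to prove $a\;R'\;b \Rightarrow \unstuck(a)$, which is why that lemma's conclusion already contains ``and $A$ is safe.'' Since you do invoke Lemma~\ref{lemma-reverse}, your proof is repairable by taking safety from its conclusion directly rather than from bisimulation; but as written, the appeal to ``bisimulation preserves safety'' is the one genuinely unjustified inference.
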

\begin{proof}
Appeal to Lemma \ref{lemma-reverse}, Lemma \ref{lemma-cstar-deter} and Lemma \ref{lemma-back-refine}.
\end{proof}

\begin{theorem}[Bisimulation] \label{thm-bisim}
  For all \lamstar program $lp$, closed expression $le$ and closing substitution $V$, if $\lowtocd{lp}=p$ and $\lowtoc{le}=ss$, then $\sys_{\cstar}(p, V, ss)$ bisimulates $\sys_{\lamstar}(lp,V(le))$.
\end{theorem}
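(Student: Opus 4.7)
The plan is to derive bisimulation from two ingredients: a direct refinement in one direction (via the back-translation $\ctolow{-}$) and a determinism-based flipping argument for the reverse direction, exactly mirroring the structure used to establish the safety theorem (Theorem~\ref{thm-safety}) just above.

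First, I would set up the relation $R$ of Definition~\ref{def-R} that ties a \cstar configuration $(S, V, \ls{\cstmt})$ to the \lamstar configuration obtained by applying the back-translation $\ctolow{-}$ (Figure on page with $\ctolow$), essentially ``folding'' the call stack $S$ and the pending continuations $E$ of each frame back into one large \lamstar expression. The key observation is that the forward translation $\lowtoc{-}$ and the back-translation are inverses up to this folding, so the initial configurations $([\,],V,\ls{\cstmt})$ and $([\,], V(\lexp))$ are in $R$ whenever $\lowtoc{\lexp} = \ls{\cstmt}$ and $\lowtocd{\lp} = \cp$. I would then prove Lemma~\ref{lemma-back-refine}: $\sys_{\lamstar}(\lp, V(\lexp))$ refines $\sys_{\cstar}(\cp, V, \ls{\cstmt})$ by $R$, with a well-founded measure $|\cdot|_C$ counting the pending \cstar expression-evaluation work (depth of the evaluation context $LE$ times a bound on \cstar big-step expression depth) to handle the stuttering that arises whenever \lamstar performs several small steps (e.g.\ projections on $\{\ls{\fd=\lv}\}$, arithmetic on locations via \kw{subbuf}, or structure-field navigation) that \cstar has already absorbed into a single big-step expression evaluation.

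The proof of Lemma~\ref{lemma-back-refine} proceeds by induction on the \lamstar reduction rule, case-analyzing the shape of $\lexp$ and the surrounding $LE$. Most cases are straightforward once one checks that the back-translated form of a \cstar continuation commutes with the corresponding \lamstar evaluation step; the effectful cases (\kw{readbuf}, \kw{writebuf}, \kw{newbuf}, their struct analogues, \kw{if}, \kw{app}, $\withframe$/$\epop$) emit exactly the same label in both semantics by construction of $\lowtoc{-}$ and the \cstar semantic rules shown in Figure~\ref{fig:cstar-sem}. The purely administrative cases (projection on an immutable record, $\esubbuf$, $\structfield$, $\elet{x}{v}{\lexp}$) are the stuttering cases: \lamstar takes a step with empty label while \cstar stays put, and here the measure $|\cdot|_C$ strictly decreases, satisfying the well-foundedness clause of Definition~\ref{def-refine}. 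Final states correspond by construction, since $\ereturn\;\cexp$ with $\eval{\cexp}{} = \lv$ matches the \lamstar terminal value $\lv$.

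For the reverse direction, rather than prove it directly, I invoke Lemma~\ref{lemma-reverse}, which says that if $A$ refines $B$ by $R$ and $B$ is deterministic, then $B$ refines $A$ by $R^{-1}$. Determinism of \cstar is Lemma~\ref{lemma-cstar-deter} (true because expression evaluation is a total function and each statement has exactly one applicable small-step rule). Combined with Lemma~\ref{lemma-back-refine}, this yields that $\sys_{\cstar}(\cp, V, \ls{\cstmt})$ refines $\sys_{\lamstar}(\lp, V(\lexp))$, and both refinements together are exactly the definition of bisimulation (Definition~\ref{def-bisim}). The main obstacle is the stuttering gap: identifying the right well-founded measure that decreases on every silent \lamstar administrative reduction but stays bounded across effectful steps; I expect this to require the measure to depend on both the residual \lamstar subexpression and the outer \cstar-translated context, which is why tying $|\cdot|_C$ to the size of the evaluation context being ``unstacked'' by $\ctolow{-}$ is essential.
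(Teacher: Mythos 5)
Your proposal matches the paper's own proof: it establishes Lemma~\ref{lemma-back-refine} (\lamstar refines \cstar via the back-translation relation $R$ of Definition~\ref{def-R}, with stuttering absorbed by a well-founded measure) and then flips the direction using determinism of \cstar (Lemma~\ref{lemma-cstar-deter}) to obtain bisimulation. The only cosmetic differences are that the paper routes the stuttering through an explicit ``quasi-refinement'' notion (Lemma~\ref{lemma-quasi}, discharged using determinism of \lamstar) rather than a bare measure, and that the flipping step (Corollary~\ref{coro-reverse}) additionally requires safety of the \lamstar system --- a hypothesis you elide, but which the paper's own theorem statement elides as well.
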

\begin{proof}
Appeal to Corollary \ref{coro-reverse}, Lemma \ref{lemma-cstar-deter} and Lemma \ref{lemma-back-refine}.
\end{proof}

\begin{definition}[Determinism] \label{def-deter}
  A transition system $A$ is deterministic iff for all $s$ so that $s_0\step^* s$, $s\in F$ implies that $s$ cannot take any step, and $s\step_{o_1} s_1$ and $s\step_{o_2} s_2$ implies that $o_1 = o_2$ and $s_1 = s_2$.
\end{definition}

\begin{definition}[Quasi-Refinement] \label{def-quasi-refine}
  A transition system $A$ quasi-refines a transition system $B$ (with the same label set) by $R$ (or $R$ is a quasi-refinement for transition system $A$ of transition system $B$) iff
  \begin{enumerate}
  \item there exists a well-founded measure $|-|_b$ (indexed by a $B$-state $b$) defined on the set of $A$-states $\{a:\Sigma_A\;|\;a\;R\;b\}$;
  \item $a_0\;R\;b_0$, that is, the two initial states are in  relation $R$;
  \item for all $a:\Sigma_A$ and $b:\Sigma_B$ such that $a\;R\;b$,
    \begin{enumerate}
    \item if $a\step_\olabel a'$ for some $a':\Sigma_A$, then there exists $a'':\Sigma_A$, $b':\Sigma_B$ and $n$ such that $a'\step^*_\epsilon a''$ and $b\step^n_\olabel b'$ and $a''\;R\;b'$ and that $n=0$ implies that $|a|_b>|a'|_b$;
    \item if $a\in F_A$, then there exists $b':\Sigma_B$ such that $b\Downarrow_\epsilon b'$ and $a\;R\;b'$.
    \end{enumerate}
  \end{enumerate}
  $A$ quasi-refines $B$ iff there exists $R$ so that $A$ quasi-refines $B$ by $R$.
\end{definition}

\begin{lemma}[Quasi-refine-Refine] \label{lemma-quasi}
  If transition system $A$ is deterministic, then $A$ quasi-refines transition system $B$ implies that $A$ refines $B$.
\end{lemma}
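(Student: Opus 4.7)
The plan is to lift the given quasi-refinement $R$ to a refinement $R'$ by pre-composing with silent reductions in $A$: set
\[
  R' = \{(a,b) \mid \exists a''.\; a \step^*_\epsilon a'' \text{ and } a'' R b\}.
\]
Determinism of $A$ is crucial: the silent reduction sequence $a \step_\epsilon c_1 \step_\epsilon c_2 \step_\epsilon \cdots$ from any state is unique, so whenever $a R' b$ holds I can define $N(a,b)$ as the least $k$ with $c_k R b$, and write $a^*(a,b) := c_{N(a,b)}$. For the well-founded measure on $\{a \mid a R' b\}$ required by Definition~\ref{def-refine}, I will use the lexicographic pair $\|(a,b)\| := (|a^*(a,b)|_b^{\mathrm{old}}, N(a,b))$, where $|-|^{\mathrm{old}}$ denotes the measure supplied by the quasi-refinement; well-foundedness follows from the lex product of two well-founded orders.

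The initial and final-state clauses of Definition~\ref{def-refine} are straightforward: $(a_0, b_0) \in R'$ since $a_0 R b_0$ with zero silent steps; and if $a \in F_A$ then determinism forces $a$ to have no outgoing step, hence $N(a,b) = 0$ and $a R b$, so clause 3(b) of quasi-refinement yields $b'$ with $b \Downarrow_\epsilon b'$ and $a R b'$, giving $a R' b'$.

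The step clause I prove by case analysis on $N(a,b)$. Given $a R' b$ and $a \step_\olabel a'$, if $N(a,b) > 0$ then determinism pins the unique step from $a$ to be $a \step_\epsilon c_1$ along the silent chain, so $\olabel = \epsilon$, $a' = c_1$, $a^*(a',b) = a^*(a,b)$, and $N(a',b) = N(a,b) - 1$. Taking $n = 0$ and $b' = b$ satisfies the step obligation, and the lex measure strictly decreases in its second component. If $N(a,b) = 0$, then $a R b$ directly and I apply clause 3(a) of the quasi-refinement to $a \step_\olabel a'$, obtaining $a'''$, $b'$, $n$ with $a' \step^*_\epsilon a'''$, $b \step^n_\olabel b'$, and $a''' R b'$, which witnesses $a' R' b'$.

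The main obstacle is the strict lex decrease of the new measure in the subcase $N(a,b) = 0$ with $n = 0$: I need $|a|_b^{\mathrm{old}} > |a^*(a',b)|_b^{\mathrm{old}}$, whereas the quasi-refinement hypothesis only yields the inequality for the specific witness $a'''$ it produces, namely $|a|_b^{\mathrm{old}} > |a'''|_b^{\mathrm{old}}$, and $a'''$ may lie strictly after $a^*(a',b)$ on the silent chain from $a'$. I would close this gap either by (i) refining $R'$ so that the canonical witness on the chain is chosen to coincide with the one produced by the quasi-refinement step (tracking it as structural data in the proof rather than changing the relation's signature), or (ii) assuming without loss of generality that $|-|^{\mathrm{old}}$ is monotone along silent chains of $A$, which forces $|a^*(a',b)|_b^{\mathrm{old}} \leq |a'''|_b^{\mathrm{old}}$. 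Once this point is resolved, the remainder is routine verification against Definition~\ref{def-refine}.
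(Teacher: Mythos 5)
Your construction follows the paper's proof essentially step for step: the same lifted relation $R'$ (relate $a$ to $b$ when some silent $A$-reduct of $a$ is $R$-related to $b$), the same use of determinism to pin down the unique silent chain, and the same case split on whether the current state is already $R$-related. The one place you genuinely depart is the measure, and there you are \emph{more} careful than the paper: the paper takes $|a|_b$ to be just the minimal number of silent steps to reach an $R$-related state, which handles your $N(a,b)>0$ case but leaves the $N(a,b)=0$, $n=0$ subcase unjustified --- that measure is already $0$ there and cannot strictly decrease, and the paper's proof simply writes ``it suffices to show $a'\;R'\;b'$'' without discharging the decrease obligation. So the obstacle you single out is real; it is the only delicate point in the lemma, and your lexicographic measure is the right shape of fix.

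To actually close the gap, discard option (ii): nothing in the definition of quasi-refinement lets you assume the given measure is monotone along silent chains, so that is not ``without loss of generality.'' Option (i) works once made precise as follows. By determinism the silent chain $a = c_0 \step_\epsilon c_1 \step_\epsilon \cdots$ is unique, so define $\mu_b(a)$ to be the \emph{minimum}, in the lexicographic order on pairs (old measure, position), of the nonempty set $\{(|c_k|^{\mathrm{old}}_b,\,k) \mid c_k \mathrel{R} b\}$ --- not the pair at the \emph{first} $R$-related position, which is where your version breaks. The lexicographic product of two well-founded orders is well-founded, so this minimum exists, and you now case-split on the position $k_0$ where it is attained rather than on $N(a,b)$. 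If $k_0>0$, determinism forces $a \step_\epsilon c_1$, the chain merely shifts, and $\mu_b(a') \le (|c_{k_0}|^{\mathrm{old}}_b,\,k_0-1) <_{\mathrm{lex}} \mu_b(a)$. If $k_0=0$, minimality at position $0$ gives $\mu_b(a) = (|a|^{\mathrm{old}}_b,\,0)$; clause 3(a) of quasi-refinement yields a witness $a'''$ on the silent chain from $a'$ with $a''' \mathrel{R} b'$ and, when $n=0$ (so $b'=b$), $|a|^{\mathrm{old}}_b > |a'''|^{\mathrm{old}}_b$, whence $\mu_b(a') \le (|a'''|^{\mathrm{old}}_b,\,\cdot) <_{\mathrm{lex}} \mu_b(a)$ already in the first component. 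The initial- and final-state clauses go through exactly as you wrote them.
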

\begin{proof}
  Let $R$ be the quasi-refinement for $A$ of $B$. \\
  Define $R'$ to be: $a\;R'\;b$ iff $\exists n.\;(\exists a'.\;a\step^n_\epsilon a' \eand a'\;R\;b).$ \\
  We are to show that $A$ refines $B$ by $R'$. Unfold Definition \ref{def-refine}. \\
  For Condition 1, define $|a|_b$ to be the minimal of the number $n$ in the definition of $R'$, which uniquely exists. \\
  For Condition 2, we are to show $a_0\;R'\;b_0$. We know that $a_0\;R\;b_0$, so it's obviously true. \\
  For Condition 3(a), we have $a\;R'\;b$ and $a\step_\olabel a'$. \\
  We are to exhibit $b'$ and $n$ so that $b\step^n_\olabel b'$ and $a'\;R'\;b'$ and that $n=0$ implies $|a|_b>|a'|_b$. \\
  From $a\;R'\;b$, we have $a\step^m_\epsilon a''$ and $a''\;R\;b$. \\
  If $m=0$, we know $a\;R\;b$. Because $A$ quasi-refines $B$ by $R$, we have $a'\step^*_\epsilon a_2$ and $b\step^n_\olabel b'$ and $a_2\;R\;b'$ and that $n=0$ implies $|a|_b>|a'|_b$. \\
  Pick $b'$ to be $b'$ and $n$ to be $n$. It suffices to show $a'\;R'\;b'$, which is true because $a'\step^*_\epsilon a_2$ and $a_2\;R\;b'$. \\
  If $m>0$, pick $b'$ to be $b$ and $n$ to be 0. Because $A$ is deterministic, we know $a'\;R'\;b$ with $m-1$ and $|a|_b=m$ and $|a'|_b=m-1$. \\
  For Condition 3(b), we have $a\;R'\;b$ and $a\in F_A$. \\
  We are to exhibit $b'$ such that $b\Downarrow_\epsilon b'$ and $a\;R'\;b'$. \\
  Because $A$ is deterministic and $a\in F_A$, we have $m=0$ and $a\;R\;b$. \\
  Because $A$ quasi-refines $B$ with $R$, we have $b\Downarrow_\epsilon b'$ and $a\;R\;b'$. \\
  Pick $b'$ to be $b'$. It suffices to show $a\;R'\;b'$, which is trivially true. \\
\end{proof}

\begin{lemma}[Refine-Safety] \label{lemma-refine-safety}
  If transition system $A$ refines transition system $B$ by $R$ and for any $a$ and $b$ we have $a\;R\;b$ implies $\unstuck(a)$, then $A$ is safe.
\end{lemma}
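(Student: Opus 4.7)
The plan is to prove $A$ safe by establishing the reachability invariant that every state $a$ reachable from $a_0$ in $A$ is related by $R$ to some state of $B$; once this invariant is in hand, the hypothesis that $a\;R\;b$ implies $\unstuck(a)$ immediately finishes the argument. So the work reduces to an invariant-preservation proof driven by refinement condition 3.

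I would proceed by induction on the length $n$ of a reduction sequence $a_0 \step^n a$. For the base case $n=0$, we have $a = a_0$, and condition 2 of Definition~\ref{def-refine} gives $a_0\;R\;b_0$, so we may take $b = b_0$. For the inductive step, assume $a_0 \step^n a \step_\olabel a'$; by the inductive hypothesis there exists $b$ with $a\;R\;b$. Since $a$ does take a step, we may apply condition 3(a) of refinement to obtain some $b'$ and $m$ with $b \step^m_\olabel b'$ and $a'\;R\;b'$, which discharges the invariant at $a'$. (Note that we do not need $b$ or $b'$ to be reachable from $b_0$ for this argument, though they are, by a trivial second induction using conditions 2 and 3(a); only the existence of a related $B$-state is used.)

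With the invariant proved, safety of $A$ follows directly: for any $a$ with $a_0 \step^* a$, pick $b$ with $a\;R\;b$ from the invariant and apply the hypothesis to conclude $\unstuck(a)$. There is no real obstacle here; the only subtlety worth flagging is that refinement condition 3(a) is stated only for states that take a step, so it does not by itself tell us anything about final states of $A$---but final states are unstuck by definition, so this case is automatically handled and does not require invoking condition 3(b). The argument is entirely structural and never inspects the labels $\olabel$, the measure $|\cdot|_b$, or the distinction between stuttering and non-stuttering steps, which is consistent with the fact that safety is a pure reachability property.
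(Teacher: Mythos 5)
Your proof is correct and matches the paper's argument exactly: the paper likewise observes that $(\exists b.\;a\;R\;b)$ is an invariant of $A$ (established from conditions 2 and 3(a) of Definition~\ref{def-refine}) and then concludes $\unstuck(a)$ from the hypothesis. You merely spell out the induction on the length of the reduction sequence that the paper leaves implicit.
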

\begin{proof}
  From Definition \ref{def-refine} we know $(\exists b.\;a\;R\;b)$ is an invariant of $A$. Hence $\unstuck(a)$ is also an invariant of $A$.
\end{proof}

\begin{lemma}[Deterministic Reverse] \label{lemma-reverse}
  If transition system $A$ is deterministic and transition system $B$ is safe, then $B$ refines $A$ implies that $A$ refines $B$ and $A$ is safe.
\end{lemma}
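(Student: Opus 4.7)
The plan is to construct a refinement relation $R'$ witnessing that $A$ refines $B$ by suitably inverting the given refinement $R_0$ for $B$ of $A$, and then to deduce safety of $A$ via Lemma~\ref{lemma-refine-safety}. The naive choice $R' := R_0^{-1}$ does not quite satisfy the step clause of Definition~\ref{def-refine}, because when $a \step_\olabel a'$ the related $b$ may need a few intermediate $\epsilon$-steps before it exposes a matching transition; I will therefore take the slightly looser relation
\[
a \, R' \, b \;\Longleftrightarrow\; \exists a_*.\; R_0(b, a_*) \wedge a_* \step^{*}_{\epsilon} a,
\]
saying that $b$ is $R_0$-related to some $a_*$ lying on the (unique, by determinism) $\epsilon$-trajectory leading to $a$.

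First I would verify that $R'$ is a quasi-refinement (Definition~\ref{def-quasi-refine}) of $A$ by $B$. The initial clause is immediate with $a_* = a_0$. For the step clause, fix $a \, R' \, b$ with witness $a_*$ and suppose $a \step_\olabel a'$; by safety of $B$ either $b \in F_B$ or $b$ can step. If $b \in F_B$, the final clause of $R_0$ yields $a_* \Downarrow_{\epsilon} a_F$, and determinism of $A$ forces that unique trajectory to pass through $a$, so $\olabel = \epsilon$ and $a' \step^{*}_{\epsilon} a_F$; we may take $n = 0$ and $b' = b$, with a well-founded measure given by the number of steps remaining from $a'$ to $a_F$. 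If instead $b \step_{\olabel'} b_1$, the step clause of $R_0$ produces a matching chain $a_* \step^{m}_{\olabel'} a_1$ with $R_0(b_1, a_1)$; by determinism this chain traverses the unique $A$-trajectory, so the labels align and the step $a \step_\olabel a'$ appears on it, giving the required witnesses (iterating if $b$ still needs further $\epsilon$-steps before catching up with $a$). The final clause is handled symmetrically: when $a \in F_A$, repeated use of safety of $B$ and $R_0$ drives $b$ through $\epsilon$-steps to some $b' \in F_B$ still $R'$-related to $a$.

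Once quasi-refinement is established, Lemma~\ref{lemma-quasi} promotes it to $A$ refines $B$. Safety of $A$ then follows from Lemma~\ref{lemma-refine-safety}, provided $\unstuck(a)$ holds whenever $a \, R' \, b$: if $a \in F_A$ this is immediate, and otherwise safety of $B$ combined with $R_0$ exhibits a step of $b$, which by determinism can only line up with a genuine step of $a$.

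The hard part will be the bookkeeping around the label monoid, specifically arguing that the labels accumulated by the multi-step $A$-chains produced by $R_0$ decompose cleanly into the single step $a \step_\olabel a'$ followed by further $\epsilon$-steps, so that labels match on both sides. Defining the well-founded measure also requires care in the stutter case where $b$ remains still while $a$ advances toward a final state; a natural choice is the length of the remaining $\epsilon$-trajectory in $A$, whose finiteness is ensured by determinism together with the $R_0$-witness certifying termination.
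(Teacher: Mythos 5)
Your strategy is sound and the result it reaches is the right one, but the route is genuinely different from the paper's. The paper does \emph{not} go through quasi-refinement here: it directly exhibits a refinement for $A$ of $B$, using a considerably more elaborate relation in which $a \mathrel{R'} b$ records, ahead of time, where $b$'s \emph{next} transition will land on the unique $A$-trajectory (an $R_0$-witness $a_2$ that may sit on \emph{either} side of $a$, the endpoint $a_3$ of the matching chain, its $\epsilon$-closure $a_4$, and the residual distance $n_1$ from $a$ to $a_3$), together with a lexicographic measure built from that distance and, when $b$ is final, the distance to the final $A$-state. That eager lookahead is exactly what lets the paper meet the strict refinement clause without any stuttering on the $A$-side. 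Your one-sided relation ($b$ is $R_0$-related to some $\epsilon$-ancestor of $a$) cannot satisfy the strict clause directly — when the matching chain overshoots $a'$ the witness ends up \emph{ahead} of $a'$ — but the $a' \step^{*}_{\epsilon} a''$ slack in Definition~\ref{def-quasi-refine} absorbs precisely that case, and Lemma~\ref{lemma-quasi} applies because $A$ is deterministic by hypothesis. So you are trading the paper's heavier invariant for a reuse of machinery the paper only deploys in Lemma~\ref{lemma-back-refine}; the price is that your response to a single $A$-step must be \emph{computed} by iterating on the $B$-side (using safety of $B$ plus the step clause of $R_0$), and the termination of that iteration — finitely many non-advancing $B$-steps by well-foundedness of $R_0$'s measure, interleaved with at most finitely many advances before the match crosses $a$ — needs to be spelled out each time. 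You gesture at this with ``iterating,'' but it is the load-bearing step.

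Two soft spots to tighten. First, the measure: ``length of the remaining $\epsilon$-trajectory'' is not well defined when both systems silently diverge. The fix is to observe that you only ever answer with $n=0$ when $b$ is final or when $b$'s next step carries a non-empty label that $a$ has not yet reached; in both cases $R_0$ (its final clause, resp.\ its matching chain) bounds the remaining $\epsilon$-run, and elsewhere the measure may be taken trivially since $n \geq 1$ is always available. Second, the label-decomposition worry you flag at the end — a single $B$-step whose matching $A$-chain emits more than one event cannot be split to answer the individual $A$-steps — is a genuine granularity assumption, but the paper's own case analysis (``In case $\olabel = l$, only the first disjunct is possible\dots'') makes the same implicit assumption, so you are not worse off than the reference proof on this point.
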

\begin{proof}
  Appealing to Lemma \ref{lemma-refine-safety}, we will exhibit the refinement for $A$ of $B$ and show that it implies unstuckness. \\
  Let $R$ be the refinement for $B$ of $A$. Define $R'$ to be: \\
  $a\;R'\;b$ iff \\
  $b_0\step^*b\eand((\exists \olabel\;b'.\;b\step_\olabel b'\eand\exists n_1\;a_2\;a_3\;a_4.\;(a\step^*_\epsilon a_2\eor a_2\step^*_\epsilon a)\eand b\;R\;a_2\eand a_2\step^*_\olabel a_3\eand a\step^{n_1}_\olabel a_3\eand a_3\step^*_\epsilon a_4\eand b'\;R\;a_4)\eor(b\in F_B\eand \exists n_2\;a_2\;a_3.\;(a\step^*_\epsilon a_2\eor a_2\step^*_\epsilon a)\eand b\;R\;a_2\eand a_2\step^*_\epsilon a_3\eand a\step^{n_2}_\epsilon a_3\eand a_3\in F_A\eand b\;R\;a_3))$. \\
  Let's first prove the fact (Fact 1) that if $b_0\step^* b$ and $a\step^*_\epsilon a'$ and $b\;R\;a$, then $a\;R'\;b$. \\
  Because $B$ is safe, we know that either $b\step_\olabel b'$ or $b\in F_B$. \\
  In the first case, because $B$ refines $A$ by $R$, we have $a'\step^n_\olabel a''$ and $b\;R\;a''$. \\
  It's easy to show $a\;R'\;b$ by choosing the first disjunct and picking $\olabel,b',a_2, a_4$ to be $\olabel,b',a',a''$. $n_1$ and $a_3$ exist in this case. \\
  In the second case, because $B$ refines $A$ by $R$, we have $a'\Downarrow_\epsilon a''$ and $b\;R\;a''$. \\
  It's easy to show $a\;R'\;b$ by choosing the second disjunt and picking $a_2, a_3$ to be $a', a''$. $n_2$ obviously exists. \\
  \\
  Now we are to show $A$ refines $B$ by $R'$. Unfold Definition \ref{def-refine}. \\
  For Condition 1, define $|a|_b$ to be lexicographic order of two numbers. \\
  The first number is the minimal of the number $n_2$ in the definition of $R'$ if $b$ is a value, which uniquely exists; or 0 otherwise. \\
  The second number is the minimal of the number $n_1$ in the definition of $R'$ if $b$ can take a step, which uniquely exists; or 0 otherwise. \\
  For Condition 2, we are to show $a_0\;R'\;b_0$, which is true because of $b_0\;R\;a_0$ and Fact 1. \\
  For Condition 3(a), we have $a\;R'\;b$ and $a\step_\olabel a'$. \\
  We are to exhibit $b'$ and $n$ such that $b\step^n_\olabel b'$ and $a'\;R\;b'$ and that $n=0$ implies $|a|_b>|a'|_b$. \\
  Unfold $a\;R'\;b$, we have the two disjuncts. \\
  In case $\olabel=l$, only the first disjunt is possible, and we have $b\step_l b'$ and $a'\step^*_\epsilon a_4$ and $b'\;R\;a_4$. \\
  Pick $b',n$ to be $b',1$. From Fact 1, we know $a'\;R'\;b'$. \\
  In case $\olabel=\epsilon$, both disjuncts of $a\;R'\;b$ are possible. \\
  If $a\;R'\;b$ because of the first conjunct, we have $b\step_{\olabel'} b'\eand (a\step^*_\epsilon a_2\eor a_2\step^*_\epsilon a)\eand b\;R\;a_2\eand a_2\step^*_{\olabel'} a_3\eand a\step^{n_1}_{\olabel'} a_3\eand a_3\step^*_\epsilon a_4\eand b'\;R\;a_4$. \\
  We case-analyse on whether $\olabel'$ is $\epsilon$. \\
  \\
  If $\olabel'=l$, let the second component of $|a|_b$ (denoted by $|a|_b.2$) be $m$. We case-analyse on whether $m>1$. \\
  If $m>1$, pick $b',n$ to be $b, 0$ (i.e. do not move on the $B$ side). We need to show $a'\;R'\;b$ and $|a|_b>|a'|_b$. \\
  $a'\;R'\;b$ because according to $m>1$ and $a\step_\epsilon a'$, we know that $a'$ is still before the $l$-label step. \\
  $|a|_b>|a'|_b$ is true because according to $m>1$, it must be the case that $|a'|_b.2=|a|_b.2-1$; and as for $|a'|_b.1$, which represents the minimal number of steps to terminate (or 0 otherwise), taking one step will not increase it. \\
  If $m\leq 1$, we know that $a\step_l a''$ for some $a''$. But we also have $a\step_\epsilon a'$, so this case is impossible because of $A$'s determinism. \\
  \\
  If $\olabel'=\epsilon$, because $B$ is safe and $B$ refines $A$ by $R$, we can step on the $B$ side for finite steps to reach $b_2$ such that $b'\step^*_\epsilon b_2$ and $b_2\;R\;a$ and either $b_2\step_{\olabel''} b_3\eand a\step^+_{\olabel''}a''\eand b_3\;R\;a''$ or $b_2\in F_B\eand a\step^*_\epsilon a''\eand a''\in F_A\eand b_2\;R\;a''$. \\
  In the first case, because $A$ is deterministic, we have $a\step_\epsilon a'\step^*_{\olabel''}a''$. \\
  If $\olabel''=\epsilon$, pick $b'$ to be $b_3$. Because of $a'\step^*_\epsilon a''$ and $b_3\;R\;a''$ and Fact 1, we get $a'\;R'\;b_3$. \\
  If $\olabel''=l$, pick $b'$ to be $b_2$. Since $b\step_\epsilon b'\step^*_\epsilon b_2$, we just need to show that $a'\;R'\;b_2$, which is easy to show by choosing the first disjunct for $R'$ and picking $b',a_2,a_4$ to be $b_3, a, a''$. \\
  In the second case ($b_2\in F_B$), it must be case that $a\step_\epsilon a'\step^*_\epsilon a''\in F_A$. Pick $b'$ to be $b_2$, we need to show $a'\;R'\;b_2$, which is true because $a'\step^*_\epsilon a''$ and $a''\;R'\;b_2$. \\
  \\
  If $a\;R'\;b$ because of the second conjunct, we have $b\in F_B\eand (a\step^*_\epsilon a_2\eor a_2\step^*_\epsilon a)\eand b\;R\;a_2\eand a_2\step^*_\epsilon a_3\eand a\step^{n_2}_\epsilon a_3\eand a_3\in F_A\eand b\;R\;a_3$. \\
  Because $A$ is deterministic, it must be the case that $a\step_\epsilon a'\step^*_\epsilon a_3$. \\
  Pick $b',n$ to be $b,0$. We need to show $a'\;R'\;b$ and $|a|_b>|a'|_b$. $a'\;R'\;b$ because $a'\step^*_\epsilon a_3$ and $a_3\;R'\;b$. $|a|_b>|a'|_b$ because $|a|_b.1>|a'|_b.1$, which is true because $a'$ is one step closer to terminate. \\
  \\
  For Condition 3(b), we have $a\;R'\;b$ and $a\in F_A$. \\
  We are to exhibit $b'$ such that $b\Downarrow_\epsilon b'$ and $a\;R'\;b'$. \\
  If $a\;R'\;b$ because of the second disjunct, we have $b\in F_B\eand a\step^*_\epsilon a_3\eand b\;R\;a_3$. Because $a\in F_A$ and $A$ is deterministic, we know that $a_3=a$\\
  Pick $b'$ to be $b$. $a\;R'\;b$ is true because $b\;R\;a$ and Fact 1. \\
  If $a\;R'\;b$ because of the first disjunct, we have $b\step_\olabel b_2\eand a\step^*_\olabel a_4\eand b_2\;R\;a_4$. \\
  Because $a\in F_A$ and $A$ is deterministic, we know that $a_4=a$ and $\olabel=\epsilon$. \\
  If $b_2\in F_B$, pick $b'$ to be $b_2$. $a\;R'\;b_2$ is true with the same reasoning as before. \\
  Otherwise, because $B$ is safe and $B$ refines $A$ by $R$, we can step $b_2$ for finite steps (because $a$ cannot step and $|b|_a>|b_2|_a$) to have $b_2\step^*_\epsilon b_3\eand b_3\;R\;a$. \\
  Pick $b'$ to be $b_3$. $a\;R'\;b_3$ is true with the same reasoning as before. \\
  \\
  Now we prove that $a\;R'\;b$ implies $\unstuck(a)$ for any $a$ and $b$. \\
  Unfolding $a\;R'\;b$, in both disjuncts we have $a\step^na'$ and $b'\;R\;a'$ for some $b'$. \\
  If $n>0$, $\unstuck(a)$ is obviously true. \\
  If $n=0$, we have $b'\;R\;a$. Because $B$ is safe, we know that either $b'\step b''$ or $b'\in F_B$. \\
  In case $b'\in F_B$, we know $a\Downarrow_\epsilon a_2$. Because $A$ is deterministic, $\unstuck(a)$ is true. \\
  In case $b'\step b''$, because $B$ refines $A$ by $R$, we know $a\step^ka_2$ and $b''\;R\;a_2$ and that $k=0$ implies $|b'|_a>|b''|_a$. \\
  Thus $b'$ can step finite number of $k=0$ steps before hitting the $b'\in F_B$ case or the $k>0$ case, in both of which we have $\unstuck(a)$.
\end{proof}

\begin{corollary}[Deterministic Reverse] \label{coro-reverse}
  If transition system $A$ is deterministic and transition system $B$ is safe, then $B$ refines $A$ implies that $A$ bisimulates $B$ and $A$ is safe.
\end{corollary}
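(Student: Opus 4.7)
The proof is essentially an immediate consequence of Lemma~\ref{lemma-reverse} together with the definition of bisimulation (Definition~\ref{def-bisim}). The plan is simply to unpack what bisimulation requires and observe that both directions of refinement are already in hand.

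First I would invoke Lemma~\ref{lemma-reverse} directly: from the three hypotheses of the corollary (namely $A$ deterministic, $B$ safe, and $B$ refines $A$) the lemma gives us both that $A$ refines $B$ and that $A$ is safe. The latter conclusion of the corollary is thus discharged immediately.

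For the bisimulation claim, I would then unfold Definition~\ref{def-bisim}, which requires both $A$ refines $B$ and $B$ refines $A$. The former is provided by Lemma~\ref{lemma-reverse}, while the latter is exactly one of the hypotheses of the corollary. Combining the two yields $A$ bisimulates $B$, completing the proof.

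There is no real obstacle here, since all the substantive work (the careful construction of the relation $R'$, the well-founded measure exploiting determinism of $A$, and the case analyses on $\epsilon$- versus $l$-labelled steps) is already carried out inside Lemma~\ref{lemma-reverse}. The corollary only needs to observe that, once we have refinement in the ``forward'' direction as given, the ``backward'' direction obtained from the lemma together with it assembles into a bisimulation. Accordingly, the proof proposal can be stated in one or two lines: \emph{Appeal to Lemma~\ref{lemma-reverse} to obtain that $A$ refines $B$ and $A$ is safe; combine with the hypothesis $B$ refines $A$ and apply Definition~\ref{def-bisim}.}
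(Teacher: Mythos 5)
Your proof is correct and matches the paper's intended argument: the corollary is an immediate consequence of Lemma~\ref{lemma-reverse} (which yields that $A$ refines $B$ and that $A$ is safe) combined with the hypothesis that $B$ refines $A$ and the definition of bisimulation as two-sided refinement. The paper leaves this step implicit, and your one-line assembly is exactly what is needed.
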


\begin{definition}[Relation $R$] \label{def-R}
  For any $p$ and $lp$, define relation $R_{p,lp}$ as: $(H, le)\;R_{p,lp}\;(S, V, ss)$ iff there exists a minimal $n$ such that $(H, le)\step_{lp}^n(H, le')$ and $(H, le')=\ctolowc{(S, V, ss)}$, where $\ctolowc{(S, V, ss)}\defeq(\mem(S), \unravel(S, V(\ctolow{(\normal{ss}{(p,V)})})))$ and $\mem(S)$ is all the memory parts of $S$ collected together (and requiring that there is no $\None$ in $S$'s memory parts). \\
  $\unravel(S, le)\defeq\foldl\;\unravelframe\;le\;S$ \\
  $\unravelframe((M, V, E), le)\defeq \\ \begin{cases}
    V(\fplug{(\ctolowE{E})}{le}) & \text{if }M=\None\\
    V(\fplug{(\ctolowE{E})}{\epop\;le}) & \text{if }M=\Some{\_}
  \end{cases}$.
\end{definition}

\begin{figure*}[!htbp]
\begin{small}
\begin{flushleft}
\fbox{$\normal{ss}{(p,V)}=ss$}
\end{flushleft}
\begin{mathpar}
\inferrule*
{
  \eval{e}{(p,V)}=v
}
{
  \normal{(t\;x=e;ss)}{(p,V)} = (t\;x=v;ss)
}

\quad
\quad

\inferrule*
{
  \eval{e}{(p,V)}=v
}
{
  \normal{(t\;x=f(e);ss)}{(p,V)} = (t\;x=f(v);ss)
}

\quad
\quad

\inferrule*
{
}
{
  \normal{(\earray{t}{x}{n};ss)}{(p,V)} = (\earray{t}{x}{n};ss)
}

\\

\inferrule*
{
  \eval{e}{(p,V)}=v
}
{
  \normal{(\ereturn{e};ss)}{(p,V)} = (\ereturn{v};ss)
}

\quad
\quad

\inferrule*
{
  \eval{e_i}{(p,V)}=v_i \;(i=1,2)
}
{
  \normal{(t\;x=\eread{e_1}{e_2};ss)}{(p,V)} = (t\;x=\eread{v_1}{v_2};ss)
}

\quad
\quad

\inferrule*
{
  \eval{e_i}{(p,V)}=v_i \;(i=1,2,3)
}
{
  \normal{(\ewrite{e_1+e_2}{e_3};ss)}{(p,V)} = (\ewrite{v_1+v_2}{v_3};ss)
}

\\

\inferrule*
{
  \eval{e}{(p,V)}=v
}
{
  \normal{(e;ss)}{(p,V)} = (v;ss)
}

\quad
\quad

\inferrule*
{
  \eval{e_i}{(p,V)}=v_i \;(i=1,2)
}
{
  \normal{(\memset{e_1}{n}{e_2};ss)}{(p,V)} = (\memset{v_1}{n}{v_2};ss)
}
\end{mathpar}
\end{small}
\caption{Normalize \cstar head expression}
\label{fig:normalize}
\end{figure*}

\clearpage

\begin{figure*}[!htbp]
\begin{scriptsize}
\begin{flushleft}
\fbox{$\ctolowe{e}=le$}\quad\text{and}\quad
\fbox{$\ctolow{ss}=le$}\quad\text{and}\quad
\fbox{$\ctolowd{d}=ld$}\quad\text{and}\quad
\fbox{$\ctolowE{E}=LE$}
\end{flushleft}
\begin{mathpar}
\inferrule*
{
  \;
}
{
  \ctolowe{n} = n
}

\quad
\quad

\inferrule*
{
  \;
}
{
  \ctolowe{(b,n,[])} = (b,n,[])
}

\quad
\quad

\inferrule*
{
  \;
}
{
  \ctolowe{\{\ls{fd=e}\}} = \{\ls{fd=\ctolowe{e}}\}
}

\quad
\quad

\inferrule*
{
}{
  \ctolowe{()} = ()
}

\\

\inferrule*
{
}{
  \ctolowe{x} = x
}

\quad
\quad

\inferrule*
{
  \ctolowe{le_i}=e_i \;(i=1,2)
}
{
  \ctolowe{(e_1+e_2)} = \esubbuf{le_1}{le_2}
}

\quad
\quad

\inferrule*
{
  \ctolowe{le_i}=e_i \;(i=1)
}
{
  \ctolowe{\eptrfd{e_1}{fd}} = \estructfield{le_1}{fd}
}

\\

\inferrule*
{
  \ctolowe{e}=le_1 \\
  \ctolow{ss} = le
}
{
  \ctolow{(t\;x=f(e);ss)} = (\elet{x:t}{f\;le_1}{le})
}

\quad
\quad

\inferrule*
{
  \ctolowe{e}=le_1 \\
  \ctolow{ss} = le
}
{
  \ctolow{(\earray{t}{x}{n};\memset{x}{n}{e};ss)} = (\elet{x}{\enewbuf{n}{(le_1:t)}}{le})
}

\\

\inferrule*
{
  \ctolowe{e}=le_1 \\
  \ctolow{ss} = le \\
  t ~ \text{is a struct type}
}
{
  \ctolow{(\earray{t}{x}{1};\memset{x}{1}{e};ss)} = (\elet{x}{\enewstruct{(le_1:t)}}{le})
}

\quad
\quad

\inferrule*
{
  \ctolowe{e}=le_1 \\
  \ctolow{ss} = le
}
{
  \ctolow{(t\;x=e;ss)} = (\elet{x:t}{le_1}{le})
}

\\

\inferrule*
{
  \ctolowe{e_i}=le_i \; (i=1,2) \\
  \ctolow{ss} = le
}
{
  \ctolow{(t\;x=\eread{e_1+e_2};ss)} = (\elet{\_}{\ereadbuf{le_1}{le_2}}{le})
}

\quad
\quad

\inferrule*
{
  \ctolowe{e_i}=le_i \; (i=1) \\
  \ctolow{ss} = le
}
{
  \ctolow{(t\;x=\eread{e_1};ss)} = (\elet{\_}{\ereadstruct{le_1}}{le})
}

\\

\inferrule*
{
  \ctolowe{e_i}=le_i \; (i=1,2,3) \\
  \ctolow{ss} = le
}
{
  \ctolow{(\ewrite{e_1+e_2}{e_3};ss)} = (\elet{\_}{\ewritebuf{le_1}{le_2}{le_3}}{le})
}

\quad
\quad

\inferrule*
{
  \ctolowe{e_i}=le_i \; (i=1,2) \\
  \ctolow{ss} = le
}
{
  \ctolow{(\ewrite{e_1}{e_2};ss)} = (\elet{\_}{\ewritestruct{le_1}{le_2}}{le})
}

\\

\inferrule*
{
  \ctolow{ss_1}=le_1 \\
  \ctolow{ss} = le
}
{
  \ctolow{(\{ss_1\};ss)} = (\elet{\_}{\withframe\;le_1}{le})
}

\quad
\quad

\inferrule*
{
  \ctolowe{e}=le_1 \\
  \ctolow{ss} = le
}
{
  \ctolow{(e;ss)} = (\elet{\_}{le_1}{le})
}

\\

\inferrule*
{
  \ctolowe{e}=le \\
  \ctolow{ss_i}=le_i \;(i=1,2,3)
}
{
  \ctolow{(\eif{e}{ss_1}{ss_2};ss_3)} = (\elet{\_}{\eif{le}{le_1}{le_2}}{le_3})
}

\quad
\quad

\inferrule*
{
  \ctolowe{e}=le
}
{
  \ctolow{[e]} = le
}

\quad
\quad

\inferrule*
{
  \;
}
{
  \ctolow{[]} = ()
}

\\

\inferrule*
{
  \;
}
{
  \ctolowd{(t\;x=v)} = (\etlet{x:t}{\ctolowe{v}})
}

\quad
\quad

\inferrule*
{
  \ctolow{(ss;e)}=le
}
{
  \ctolowd{(\ecfun{f}{x}{t_1}{t_2}{ss; \ereturn{e}})} = (\etlet{f}{\lambda x:t_1.\;\withframe\;le:t_2})
}

\\

\inferrule*
{
  \ctolow{ss} = le
}
{
  \ctolowE{(\symhole;ss)} = (\elet{\_}{\symhole}{le})
}

\quad
\quad

\inferrule*
{
  \ctolow{ss} = le
}
{
  \ctolowE{(t\;x=\symhole;ss)} = (\elet{x:t}{\symhole}{le})
}
\end{mathpar}
\end{scriptsize}
\caption{\cstar to \lamstar back-translation}
\label{fig:ctolow}
\end{figure*}

\clearpage

\begin{lemma}[\lamstar Refines \cstar] \label{lemma-back-refine}
  For all \lamstar program $lp$, closed expression $le$ and closing substitution $V$, if $\lowtocd{lp}=p$ and $\lowtoc{le}=ss$, then $\sys_{\lamstar}(lp,V(le))$ refines $\sys_{\cstar}(p, V, ss)$.
\end{lemma}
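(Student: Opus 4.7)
}
My plan is to establish the refinement directly using the relation $R_{p,lp}$ of Definition~\ref{def-R}, with the well-founded measure $|(H,le)|_{(S,V,ss)}$ defined as exactly the minimal $n$ witnessing $(H,le) \step^n_{lp} \ctolowc{(S,V,ss)}$, ordered by the usual $<$ on $\mathbb{N}$. The initial-state condition is immediate: for $(H,le_0) = (\emptyset, V(le))$ and $(S_0,V,ss) = ([], V, ss)$ with $\lowtoc{le}=ss$, taking $n=0$ works because the back-translation unravels the empty stack to precisely $V(\ctolow{ss}) = V(le)$ by the obvious mutual inverse property $\ctolow{\lowtoc{\cdot}} = \mathrm{id}$ restricted to compilable \lamstar expressions.

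The core of the argument splits on the value of $n$. When $n>0$, the relation itself produces an internal reduction $(H,le)\astep_\epsilon (H,le_1) \step^{n-1} \ctolowc{(S,V,ss)}$, so any step taken by the \lamstar configuration must coincide (by determinism of atomic \lamstar reduction modulo evaluation context, which is a small auxiliary fact I would prove first) with that first silent step; we match it on the \cstar side with zero steps and the measure strictly decreases from $n$ to $n-1$. When $n=0$, we have $(H,le) = \ctolowc{(S,V,ss)}$ exactly, and I case-analyse on the shape of the \lamstar step $(H,le)\step_l (H',le')$, which, by the form of the back-translation, is forced to occur at the head of the unravelled expression. For each firing rule (\textsc{ReadBuf}, \textsc{WriteBuf}, \textsc{ReadStruct}, \textsc{WriteStruct}, \textsc{NewBuf}, \textsc{NewStruct}, \textsc{Subbuf}, \textsc{StructField}, \textsc{IfT}/\textsc{IfF}, \textsc{App}, \textsc{Let}/\textsc{ALet}, \textsc{Proj}, \textsc{WF}, \textsc{Pop}) I exhibit a matching \cstar step (possibly a short deterministic burst: e.g. \textsc{App} on the \lamstar side matches \textsc{Call} in \cstar followed eventually by the \textsc{WF}-induced \textsc{Block}; \textsc{NewBuf} matches \textsc{ArrDecl} followed by \textsc{Memset}) producing the same trace label, and re-establish $R$ by observing that the new back-translation differs from $(H',le')$ by exactly zero or a few silent steps, which is absorbed into the new measure~$n'$.

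Two supporting lemmas will do most of the work. First, a \emph{commutation lemma} stating that $\ctolowc{(S,V,ss)}$ is stable under the \cstar expression evaluator $\eval{\cdot}{(p,V)}$ used in \textsc{Normalize}: substituting an already-evaluated value inside an evaluation context commutes with $\mathsf{unravel}$. Second, an \emph{unravel/plug lemma}: pushing a fresh call frame $(\None,V,E)$ on $S$ together with installing a new variable environment corresponds on the \lamstar side to placing the substituted callee body inside the context $\ctolowE{E}$ and further unravelling; and analogously pushing a block frame $(\Some{\{\}},V,\symhole;ss_2)$ corresponds to a $\withframe$/$\epop$ pair. Together with straightforward properties of $\normal{\cdot}{(p,V)}$ (idempotence, commutation with substitution of values for program variables), these yield the $R$-invariant across every reduction case.

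The main obstacle I anticipate is the \textsc{App}/\textsc{Call} case, where \cstar allocates an explicit call frame with a saved continuation and a fresh local variable environment, whereas \lamstar performs pure capture-avoiding substitution of $v$ for $y$ through $\subst{y}{v}{\withframe\;le}$. Reconciling these two calling conventions requires carefully stating the unravel/plug lemma so that $\unravelframe((\None, V, \elet{x:t}{\symhole}{\ctolow{ss}}), \cdot)$ on the caller's frame produces exactly the expected \lamstar surface expression after the subsequent \textsc{WF} step fires, and checking that variable names introduced by the \cstar translation do not clash (a standard freshness side condition enforced by $\lowtoc{\cdot}$). The remaining cases (struct field addressing, nested \kw{subbuf} arithmetic, and the interaction of $\kw{newstruct}$ with the single-element array trick in $\ctolow{\cdot}$) are tedious but mechanical once the unravel/plug lemma is in place; and termination of the proof obligation when $n=0$ and the required \cstar burst is nonempty follows because each such burst is of bounded length determined syntactically by the \cstar statement being fired.
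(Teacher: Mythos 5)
Your proposal is sound and rests on the same core ingredients as the paper's proof: the simulation relation $R_{p,lp}$ of Definition~\ref{def-R}, the well-founded measure given by the minimal number $n$ of silent \lamstar steps needed to reach the back-translation, determinism of \lamstar, and a case analysis on the head redex driven by inversion of $R$ (your ``commutation'' and ``unravel/plug'' lemmas play the role of the paper's Lemma~\ref{lemma-eq-normal} and inversion lemmas \ref{lemma-invert-let}--\ref{lemma-invert-pop}, including Lemma~\ref{lemma-lowtoc-ctolow} for the \textsc{App}/\textsc{Call} case you rightly flag as the delicate one). The one genuine architectural difference is that the paper does not prove refinement directly: it first establishes a \emph{quasi}-refinement (Definition~\ref{def-quasi-refine}), which allows the \lamstar side to take extra silent steps \emph{after} the matched step before re-entering $R$ --- flexibility used only in the \textsc{Pop} case, where the residual $\elet{\_}{lv}{le}$ must be discharged by an \textsc{ALet} before realigning with the \cstar stack --- and then converts quasi-refinement into refinement via Lemma~\ref{lemma-quasi} using determinism of \lamstar. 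You instead absorb that post-step stuttering into $R$ itself, observing that $R$ is closed under prepending silent, heap-preserving reductions; this is legitimate and saves you Lemma~\ref{lemma-quasi}, at the cost of re-deriving membership in $R$ for the post-\textsc{Pop} state rather than simply stepping past it. Both routes lean on \lamstar determinism in the end (yours for the $n>0$ argument that the unique next step is the administrative one in the chain; the paper's for the quasi-to-plain conversion), so neither is more economical in hypotheses.

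One small correction: the initial-state condition does not in general hold with $n=0$, because $\ctolowc{(S,V,ss)}$ first normalizes head expressions via $\normal{\cdot}{(p,V)}$ before back-translating, so $V(le)$ may need several administrative steps (e.g.\ reducing $\kw{subbuf}$ applications and projections to locations) to reach $\ctolowc{([],V,ss)}$. This is precisely what the paper's Lemma~\ref{lemma-init} reduces the initial condition to, and it is harmless for your argument since $R$ only demands the existence of \emph{some} $n$; but the justification via $\ctolow{\lowtoc{\cdot}}=\mathrm{id}$ alone is not quite right.
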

\begin{proof}
  We apply Lemma \ref{lemma-quasi} and \ref{lemma-lowstar-deter}, and prove that $\sys_{\lamstar}(lp,le)$ quasi-refines $\sys_{\cstar}(p, ss)$.
  We pick the relation $R_{p,lp}$ in Definition \ref{def-R} to be the simulation relation and prove $R_{p,lp}$ is a quasi-refinement for $\sys_{\lamstar}(lp,le)$ of $\sys_{\cstar}(p, ss)$. \\
  Unfold Definition \ref{def-quasi-refine}. \\
  For Condition 1, define the well-founded measure $|(H,le)|_{(S, V, ss)}$ (where $(H,le)\;R\;(S,V,ss)$) to be the minimal of the number $n$ in $R$'s definition. \\
  For condition 2, appeal to Lemma \ref{lemma-init}. \\
  Now prove Condition 3(a). Let $(H,le)$ be the \lamstar configuration and $C=(S,V,ss)$ be the \cstar configuration. \\
  We are to exhibit $(H'', le'')$ and $C'$ and $n$ such that $(H',le')\step^*(H'',le'')$ and $C\step^n C'$ and $(H'',le'')\;R\;C'$ and that $n=0$ implies $|(H,le)|_C > |(H',le')|_C$. \\
  For all the cases except \emph{Case Pop}, we pick $(H'', le'')$ to be $(H',le')$ (i.e. do not use the extra flexibility offered by Quasi-Refinement). \\
  Induction on $(H, le)\step(H', le')$. \\
  \\
  \emph{Case Let}: on case $(H, \fplug{LE}{\elet{x:t}{lv}{le}})\step(H, \fplug{LE}{\subst{x}{lv}{le}})$. \\
  We are to exhibit $C'$ such that $(S, V, ss)\step^+C'$ and \\
  $(H, \fplug{LE}{\subst{x}{lv}{le}})\;R\;C'$.\\
  Apply Lemma \ref{lemma-invert-let}. \\
  In the first case, we have $le=V(\ctolow{ss'})$ and $lv=\ctolowe{v}$ and $LE=\unravel(S, \symhole)$, where $v\defeq\eval{e}{(p,V)}$. \\
  The \cstar side runs with nonzero steps to $(S, V[x\mapsto v], ss')$. \\
  Pick $C'$ to be this configuration. \\
  It suffices to show that $(H, \fplug{LE}{\subst{x}{lv}{le}})\;R\;(S, V[x\mapsto v], ss')$. \\
  Appealing to Lemma \ref{lemma-eq-normal}, it suffices to show that $\fplug{LE}{\subst{x}{lv}{le}}=\unravel(S, V[x\mapsto v](\ctolow{ss'}))$, which is true. \\
  In the second case, the \cstar side runs with nonzero steps to $(S', V'[x\mapsto v], ss')$. The proof is the same as the first case. \\
  End of case. \\
  \\
  \emph{Case ALet}: on case $(H, \fplug{LE}{\elet{\_}{lv}{le}})\step(H, \fplug{LE}{le})$. \\
  Appealing to Lemma \ref{lemma-invert-alet}, the proof is similar to the previous case. \\
  End of case. \\
  \\
  \emph{Case App}: on case $(H, \fplug{LE}{\elet{x:t}{f\;lv}{le}})\step(H, \fplug{LE}{\elet{x:t}{\subst{y}{lv}{le_1}}}{le})$ and $lp(f)=\lambda y:t_1.\;le_1:t_2$. \\
  Appealing to Lemma \ref{lemma-invert-app}, we have $ss=(t\;x=f(v);ss')$ and $le=V(\ctolow{ss'})$ and $lv=\ctolowe{v}$ and $LE=\unravel(S, \symhole)$. \\
  Because $\lowtocd{lp}=p$, we know $le_1=\withframe\;le_2$ and $\lowtoc{le_2}=ss_2;e$ and $p(f)=\ecfuntwo{y}{t_1}{t_2}{ss_1}$ and $ss_1=(ss_2;\ereturn\;e)$. \\
  Appealing to Lemma \ref{lemma-lowtoc-ctolow}, we know $\ctolow{ss_1}=le_2$ hence $\ctolow{\{ss_1\}}=le_1$. \\
  Pick $C'$ to be $(S;(\None, V, t\;x=\symhole;ss'), \{y\mapsto v\}, \{ss_1\})$. \\
  It suffices to show that \\
  $(H, \fplug{LE}{\elet{x:t}{\subst{v}{lv}{le_1}}{le}})\;R\;(S;(\None, V, t\;x=\symhole;ss'), \{y\mapsto v\}, \{ss_1\})$, which is true. \\
  End of case. \\
  \\
  \emph{Case Withframe}: \\
  on case $(H, \fplug{LE}{\withframe\;le})\step(H;\{\}, \fplug{LE}{\epop\;le})$. \\
  Appealing to Lemma \ref{lemma-invert-withframe}, we have $ss=\{ss_1\};ss_2$ and $le=V(\ctolow{ss_1})$ and $LE=\unravel(S, V(\ctolowE{(\symhole;ss_2)}))$. \\
  Pack $C'$ to be $(S;(\{\}, V, \symhole;ss_2), V, ss_1)$. \\
  It suffices to show that \\
  $(H;\{\}, \fplug{LE}{\epop\;le})\;R\;(S;(\{\}, V, \symhole;ss_2), V, ss_1)$, which is true. \\
  End of case. \\
  \\
  \emph{Case Newbuf}: on case $(H;h, \fplug{LE}{\elet{x}{\enewbuf{n}{(lv:t)}}{le}})\step_{\symwrite\;(b,0,[]),\dots,\symwrite\;(b,n-1,[])}(H;h[b\mapsto lv^n], \fplug{LE}{\subst{x}{(b,0,[])}{le}})$ and $b\not\in H;h$. \\
  We have $(H;h, \fplug{LE}{\elet{x}{\enewbuf{n}{(lv:t)}}{le}})\;R\;(S, V, ss)$. \\
  We are to exhibit $C'$ so that \\
  $(H;h[b\mapsto lv^n], \subst{x}{(b,0,[])}{le})\;R\;C'$ and $(S, V, ss)\step^+C'$. \\
  Appealing to Lemma \ref{lemma-invert-newbuf}, we have $ss=(\earray{t}{x}{n};\memset{x}{n}{v};ss')$ and $le=V(\ctolow{ss'})$ and $lv=\ctolowe{v}$ and $LE=\unravel(S, \symhole)$ and $S=S';(M, V', E)$. \\
  Pick $C'$ to be $(S';(M[b\mapsto v^n], V', E), V[x\mapsto(b,0, [])], ss')$. \\
  It suffices to show that $(H;h[b\mapsto lv^n], \fplug{LE}{\subst{x}{(b,0,[])}{le}})\;R\;(S';(M[b\mapsto v^n], V', E), V[x\mapsto(b,0,[])], ss')$, which is true. \\
  End of case. \\
  \\
  \emph{Case Newstruct}: on case $(H;h,\fplug{LE}{\elet{x}{\enewstruct{(lv:t)}}{le}})\step_{\symwrite\;(b,0,[])}(H;h[b\mapsto lv], \subst{x}{(b,0,[])}{le})$ and $b\not\in H;h$. \\
  We have $(H;h, \elet{x}{\enewstruct{(lv:t)}}{le})\;R\;(S, V, ss)$. \\
  We are to exhibit $C'$ so that \\
  $(H;h[b\mapsto lv], \fplug{LE}{\subst{x}{(b,0,[])}{le}})\;R\;C'$ and $(S, V, ss)\step^+C'$. \\
  Appealing to Lemma \ref{lemma-invert-newstruct}, we have $ss=(\earray{t}{x}{1};\memset{x}{1}{v};ss')$ and $le=V(\ctolow{ss'})$ and $lv=\ctolowe{v}$ and $LE=\unravel(S, \symhole)$ and $S=S';(M, V', E)$. \\
  Pick $C'$ to be $(S';(M[b\mapsto v], V', E), V[x\mapsto(b,0, [])], ss')$. \\
  It suffices to show that $(H;h[b\mapsto lv], \fplug{LE}{\subst{x}{(b,0,[])}{le}})\;R\;(S';(M[b\mapsto v], V', E), V[x\mapsto(b,0,[])], ss')$, which is true. \\
  End of case. \\
  \\
  \emph{Case Readbuf}: on case \\
  $(H, \fplug{LE}{\elet{x:t}{\ereadbuf{(b,n,[])}{n'}}{le}})\step_{\symread\;(b,n+n',[])}(H, \fplug{LE}{\subst{x}{lv}{le}})$ and $H(b,n+n',[])=lv$. \\
  Appealing to Lemma \ref{lemma-invert-readbuf}, we have $ss=(t\;x=(b,n,[])[n'];ss')$ and $le=V(\ctolow{ss'})$ and $LE=\unravel(S, \symhole)$. \\
  Pick $C'$ to be $(S, V[x\mapsto v], ss')$ where $v=\lowtoce{lv}$. \\
  We know $C\step^+_{\symread\;(b,n+n',[])}C'.$ \\
  It suffices to show that $(H, \fplug{LE}{\subst{x}{lv}{le}})\;R\;(S, V[x\mapsto v], ss')$, which is true because $\subst{x}{lv}{le}=V[x\mapsto v](\ctolow{ss'})$. \\
  End of case. \\
  \\
  \emph{Case Readstruct}: on case \\
  $(H, \fplug{LE}{\elet{x:t}{\ereadstruct{(b,n,\ls{fd})}}{le}})\step_{\symread\;(b,n,\ls{fd})}(H, \fplug{LE}{\subst{x}{lv}{le}})$ and $H(b,n,\ls{fd})=lv$. \\
  Appealing to Lemma \ref{lemma-invert-readstruct}, we have $ss=(t\;x=\eread{(b,n,\ls{fd})};ss')$ and $le=V(\ctolow{ss'})$ and $LE=\unravel(S, \symhole)$. \\
  Pick $C'$ to be $(S, V[x\mapsto v], ss')$ where $v=\lowtoce{lv}$. \\
  We know $C\step^+_{\symread\;(b,n,\ls{fd})}C'.$ \\
  It suffices to show that $(H, \fplug{LE}{\subst{x}{lv}{le}})\;R\;(S, V[x\mapsto v], ss')$, which is true because $\subst{x}{lv}{le}=V[x\mapsto v](\ctolow{ss'})$. \\
  End of case. \\
  \\
  \emph{Case Writebuf}: on case \\
  $(H, \fplug{LE}{\elet{\_}{\ewritebuf{(b,n,[])}{n'}{lv}}{le}})\step_{\symwrite\;(b,n+n',[])}(H[(b,n+n',[])\mapsto lv], \fplug{LE}{le})$ and $(b,n+n',[])\in H$. \\
  Appealing to Lemma \ref{lemma-invert-writebuf}, we have $ss=((b,n,[])[n']=v;ss')$ and $le=V(\ctolow{ss'})$ and $lv=\ctolowe{v}$ and $LE=\unravel(S, \symhole)$. \\
  Pick $C'$ to be $(S', V, ss')$ where $\symset(S, (b,n,[]), v) = S'$. \\
  We know $C\step^+_{\symwrite\;(b,n+n',[])}C'.$ \\
  It suffices to show that $(H[(b,n+n',[])\mapsto lv], \fplug{LE}{le})\;R\;(S', V, ss')$, which is true. \\
  End of case. \\
  \\
  \emph{Case Writestruct}: on case \\
  $(H, \fplug{LE}{\elet{\_}{\ewritestruct{(b,n,\ls{fd})}{lv}}{le}})\step_{\symwrite\;(b,n,\ls{fd})}(H[(b,n,\ls{fd})\mapsto lv], \fplug{LE}{le})$ and $(b,n,\ls{fd})\in H$. \\
  Appealing to Lemma \ref{lemma-invert-writestruct}, we have $ss=(\ewrite{(b,n,\ls{fd})}{v};ss')$ and $le=V(\ctolow{ss'})$ and $lv=\ctolowe{v}$ and $LE=\unravel(S, \symhole)$. \\
  Pick $C'$ to be $(S', V, ss')$ where $\symset(S, (b,n,\ls{fd}), v) = S'$. \\
  We know $C\step^+_{\symwrite\;(b,n,\ls{fd})}C'.$ \\
  It suffices to show that $(H[(b,n,\ls{fd})\mapsto lv], \fplug{LE}{le})\;R\;(S', V, ss')$, which is true. \\
  End of case. \\
  \\
  \emph{Case Subbuf}: on case $(H, \fplug{LE}{\esubbuf{(b,n,[])}{n'}})\step(H, \fplug{LE}{(b,n+n',[])})$. \\
  Pick $C'$ to be $(S, V, ss)$. \\
  Because $(H, \fplug{LE}{\esubbuf{(b,n,[])}{n'}})\;R\;C'$ with some $m$, it must be the case that $m\geq 1$ and $(H, \fplug{LE}{(b,n+n',[])})\;R\;C'$ with $m-1$. \\
  End of case. \\
  \\
  \emph{Case Structfield}: on case $(H, \fplug{LE}{\estructfield{(b,n,\ls{fd})}{fd'}})\step(H, \fplug{LE}{(b,n,(\ls{fd}; fd'))})$. \\
  Pick $C'$ to be $(S, V, ss)$. \\
  Because $(H, \fplug{LE}{\estructfield{(b,n,\ls{fd})}{fd'}})\;R\;C'$ with some $m$, it must be the case that $m\geq 1$ and $(H, \fplug{LE}{(b,n,(\ls{fd};fd'))})\;R\;C'$ with $m-1$. \\
  End of case. \\
  \\
  \emph{Case IfTrue}: on case $(H, \fplug{LE}{\eif{n}{le_1}{le_2}})\step_\brt (H, \fplug{LE}{le_1})$ and $n\not=0$. \\
  Appealing to Lemma \ref{lemma-invert-if}, we have $ss=\eif{e}{ss_1}{ss_2};ss'$ and $\eval{e}{(p,V)}=n$ and $le_i=V(\ctolow{ss_i})$ ($i=1,2$) and $LE=\unravel(S, V(\ctolowE{(\symhole;ss')}))$. \\
  Pick $C'$ to be $(S, V, ss_1;ss')$. \\
  We know $C\step^+_\brt C'.$ \\
  It suffices to show that $(H, \fplug{LE}{le_1})\;R\;(S, V, ss_1;ss')$, which is true. \\
  End of case. \\
  \\
  \emph{Case IfFalse}: on case $(H, \fplug{LE}{\eif{n}{le_1}{le_2}})\step_\brf (H, \fplug{LE}{le_2})$ and $n=0$. \\
  Similar to previous case. \\
  End of case. \\
  \\
  \emph{Case Proj}: on case $(H, \fplug{LE}{\{\ls{fd=lv}\}.fd'})\step(H, \fplug{LE}{lv'})$ and $\{\ls{fd=lv}\}(fd')=lv'$. \\
  Pick $C'$ to be $(S, V, ss)$. \\
  Because $(H, \fplug{LE}{\{\ls{fd=lv}\}.fd'})\;R\;C'$ with some $m$, it must be the case that $m\geq 1$ and $(H, \fplug{LE}{lv'})\;R\;C'$ with $m-1$. \\
  End of case. \\
  \\
  \emph{Case Pop}: on case $(H;h, \fplug{LE}{\epop\;lv})\step(H, \fplug{LE}{lv})$. \\
  Apply Lemma \ref{lemma-invert-pop}. \\
  In the first case, from $LE=\unravel(S', V'(\ctolowE{\symhole;ss'}))$ we know $LE=(\elet{\_}{\symhole}{le})$ and $le=\unravel(S',V'(ss'))$. \\
  pick $C'$ to be $(S',V',ss')$ and $(H'', le'')$ to be $(H, le)$. \\
  Obviously $(H, \fplug{LE}{lv})\step^*(H, le)$. It suffices to show that $(H,le)\;R\;(S',V',ss')$, which is true. \\
  In the second case, pick $C'$ to be $(S',V',\fplug{E}{v})$ and $(H'', le'')$ to be $(H, \fplug{LE}{lv})$. \\
  To suffices to show $(H, \fplug{LE}{lv})\;R\;(S', V', \fplug{E}{v})$, which follows from $LE=\unravel(S',V'(\ctolowE{E}))$. \\
  \\
  For Condition 3(b), because \lamstar and \cstar's values are almost the same (except that \cstar locations have a field-path component), every \lamstar value has an obvious corresponding \cstar value, so condition 3(b) is trivially true.
\end{proof}

\begin{lemma}[\cstar Deterministic] \label{lemma-cstar-deter}
  For all $p$ and $ss$, transition system $\sys_{\cstar}(p, ss)$ is deterministic, modulo renaming of block identifiers.
\end{lemma}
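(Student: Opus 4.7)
The plan is to argue by case analysis on the active statement head of the configuration. First, I would establish a supporting lemma: the expression evaluator $\eval{e}{(p,V)}$ is a (partial) function, provable by a straightforward induction on $e$ using the rules of Figure~\ref{fig:cstar-expr-eval} — each rule is syntax-directed and the premises recurse on strict subexpressions, and there is no side effect, so evaluation yields a unique value when it yields anything at all. This immediately kills the only potential source of nondeterminism inside the premises of the statement reduction rules.

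Next, I would inspect the \cstar small-step rules in Figure~\ref{fig:cstar-stmts-reduction} and observe that they are syntax-directed on the head of the statement list $ss$: each of \textsc{VarDecl}, \textsc{ArrDecl}, \textsc{Memset}, \textsc{Read}, \textsc{Write}, \textsc{Ret}/\textsc{RetBlk}, \textsc{Call}, \textsc{Expr}, \textsc{Empty}, \textsc{Block}, \textsc{IfT}, and \textsc{IfF} applies only for a particular shape of the head statement (or, in the case of \textsc{Empty}, only when $ss$ is empty, with the shape of the top frame also being syntax-directed). The only overlap worth checking is \textsc{IfT} vs.\ \textsc{IfF}, which is resolved by the fact that the guard expression evaluates to a unique integer by the supporting lemma, and \textsc{Ret} vs.\ \textsc{RetBlk}, which is resolved by the shape of the top frame ($M=\None$ vs.\ $M\not=\None$). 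For each applicable rule, the successor configuration and emitted label are then determined entirely by $\eval{\cdot}{(p,V)}$ and by the deterministic store-update functions $\symget$ and $\symset$, so the successor is unique.

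The single remaining source of nondeterminism is the rule \textsc{ArrDecl}, which chooses a fresh block identifier $b \notin S$. Here I would make the ``modulo renaming of block identifiers'' clause precise by defining $C \sim C'$ to hold iff there is a bijection $\pi$ on block identifiers, fixing those already appearing in $S$, such that $\pi(C) = C'$ (lifted in the obvious way through stacks, variable assignments, memories, and locations). Two successors of \textsc{ArrDecl} that pick different fresh blocks $b_1, b_2$ are then related by the transposition $\pi = (b_1\;b_2)$, which fixes all previously-allocated blocks. I would then prove by induction on the step relation that if $C_1 \sim C_2$ and $p \vdash C_1 \step_\olabel C_1'$, then there exists $C_2'$ with $p \vdash C_2 \step_\olabel C_2'$ and $C_1' \sim C_2'$ — the key observation being that all other rules are equivariant under such renamings, because the semantics never compares a block identifier against anything other than another block identifier via equality.

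The main obstacle I anticipate is the bookkeeping around equivariance: one must be careful that $\pi$ is extended consistently whenever a new block is allocated on both sides, and that the labels emitted (which mention concrete addresses $b$) are compared modulo $\pi$ rather than syntactically — or, alternatively, that the statement is phrased so that the ``determinism'' conclusion compares traces and successors jointly up to a single bijection. Once this invariant is properly formulated, each reduction-rule case is routine.
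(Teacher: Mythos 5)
The paper states Lemma~\ref{lemma-cstar-deter} without proof, so there is no official argument to compare against; your sketch supplies the expected one, and it is essentially correct. The three load-bearing observations are all right: expression evaluation is a partial function because Figure~\ref{fig:cstar-expr-eval} is syntax-directed and effect-free; the statement rules are syntax-directed on the head of $\ls{\cstmt}$, with the only overlaps (\textsc{IfT}/\textsc{IfF} and \textsc{Ret}/\textsc{RetBlk}) discharged by unique guard evaluation and by the $\None$ versus $\Some{M}$ shape of the top frame; and \textsc{ArrDecl} is the sole rule that makes a free choice, namely of the fresh block identifier, which is exactly what the ``modulo renaming'' caveat absorbs. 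Two points deserve tightening. First, Definition~\ref{def-deter} also requires that reachable final states cannot step; this is immediate because a final configuration $([], V', \ereturn{e})$ has an empty stack while \textsc{Ret}, \textsc{RetBlk} and \textsc{Empty} all require a non-empty one, but it should be stated. Second, the label issue you flag at the end is not mere bookkeeping: \textsc{Read}, \textsc{Write} and \textsc{Memset} emit events containing concrete block identifiers, so two successors related by your bijection $\pi$ emit labels that agree only up to $\pi$, whereas Definition~\ref{def-deter} — and its consumer Lemma~\ref{lemma-reverse} — demand syntactic equality $o_1 = o_2$. The clean resolution, which the paper itself suggests in Section~\ref{section-bisim}, is to determinize allocation rather than quotient by it: thread a counter or stream of coins for block identifiers through the configuration so that \textsc{ArrDecl} has a unique successor, making $\sys_{\cstar}(p,ss)$ literally deterministic and rendering the renaming apparatus unnecessary. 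If you keep the quotient formulation, you must restate determinism (and re-verify Lemma~\ref{lemma-reverse}) modulo a single bijection covering both successors and traces, which is precisely the invariant-propagation burden you anticipate in your closing paragraph.
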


\begin{lemma}[\lamstar Deterministic] \label{lemma-lowstar-deter}
  For all $lp$ and $le$, transition system $\sys_{\lamstar}(lp, le)$ is deterministic, modulo renaming of block identifiers.
\end{lemma}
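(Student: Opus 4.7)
The plan is to prove Lemma~\ref{lemma-lowstar-deter} by a standard argument about unique decomposition of expressions into an evaluation context and a redex, combined with a case analysis showing that each redex admits at most one atomic reduction (up to the choice of fresh block identifiers for allocation). Informally, \lamstar's small-step semantics has rule \textsc{Step} as the only way to build a non-atomic step from an atomic one, via some context $\fplug{LE}{\cdot}$; so determinism reduces to (i)~uniqueness of the decomposition $le = \fplug{LE}{le_r}$ with $le_r$ a redex, and (ii)~uniqueness of the atomic step $(H, le_r) \astep_{\ell} (H', le_r')$, up to renaming of freshly allocated block identifiers.

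For step~(i), I would prove by structural induction on $le$ that if $le$ is not a value, then there exists a unique pair $(LE, le_r)$ such that $le = \fplug{LE}{le_r}$ and $le_r$ matches the head of exactly one atomic reduction rule. The grammar of evaluation contexts $LE$ in Figure~\ref{fig:lowstar-semantics-defs} (included in the appendix) encodes a left-to-right, call-by-value strategy in which: in each compound form, at most one subterm is ``active''; all other subterms are either already values or syntactically to the right of the active subterm. This gives a deterministic search for the innermost reducible position. The classical lemma is ``either $le$ is a value, or $le = \fplug{LE}{le_r}$ for a unique $(LE, le_r)$ with $le_r$ a redex'', proved by induction on $le$; the remaining elementary case is to verify that \kw{pop} is treated as a redex marker (administrative) and does not overlap with value positions.

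For step~(ii), I would do a case analysis on the redex $le_r$. The rules \textsc{App}, \textsc{Let}, \textsc{ALet}, \textsc{Proj}, \textsc{Subbuf}, \textsc{StructField}, \textsc{IfT}/\textsc{IfF}, \textsc{ReadBuf}, \textsc{ReadStruct}, \textsc{WriteBuf}, \textsc{WriteStruct}, \textsc{WF}, and \textsc{Pop} each have disjoint syntactic head forms and, given the heap $H$, completely determine the next heap, expression, and trace label. The only non-determinism is in \textsc{NewBuf} and \textsc{NewStruct}, which pick some $b \notin H$; any two valid choices $b_1, b_2$ yield configurations that differ only by swapping $b_1$ and $b_2$ throughout the heap and the continuation. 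Here I would define a block-identifier renaming as a bijection $\pi$ on block ids, lifted compatibly to values, heaps, traces, and expressions, and state ``determinism modulo renaming'' as: if $(H, le) \step_\ell (H_1, le_1)$ and $(H, le) \step_{\ell'} (H_2, le_2)$, then $\ell = \ell'$ and there exists $\pi$ (identity outside the freshly allocated block) with $\pi(H_1) = H_2$ and $\pi(le_1) = le_2$.

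The main obstacle, and the only non-trivial content of the proof, is managing this renaming bookkeeping for the allocation rules: one must verify that the subsequent computation commutes with $\pi$, i.e., that swapping the names of fresh blocks in the heap and in the continuation produces an $\alpha$-equivalent execution. I would discharge this by proving once and for all a small equivariance lemma stating that $(H, le) \step_\ell (H', le')$ implies $(\pi(H), \pi(le)) \step_{\pi(\ell)} (\pi(H'), \pi(le'))$ for any block-id bijection $\pi$, by induction on the derivation; determinism modulo renaming then follows because all remaining rules are fully deterministic and \textsc{NewBuf}/\textsc{NewStruct} are deterministic up to the choice of the fresh identifier, which equivariance absorbs. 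The corresponding \cstar lemma~\ref{lemma-cstar-deter} follows by the same pattern, using the big-step expression evaluator in place of the substitutive redex rules.
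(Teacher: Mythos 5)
The paper states Lemma~\ref{lemma-lowstar-deter} (and its \cstar counterpart, Lemma~\ref{lemma-cstar-deter}) without any proof --- both are bare assertions used to ``flip the diagram'' in Lemma~\ref{lemma-reverse} --- so there is no official argument to compare yours against. Your proposal is a correct and standard way to discharge it: the evaluation-context grammar in the appendix encodes a left-to-right call-by-value strategy with no overlapping productions, so unique decomposition into a context and a redex holds; every atomic rule except \textsc{NewBuf} and \textsc{NewStruct} is syntax-directed and fully determined by the heap; and the residual nondeterminism in choosing a fresh block is absorbed by the equivariance lemma you state. One detail to repair when writing this out: the labels emitted by \textsc{NewBuf} and \textsc{NewStruct} themselves mention the freshly chosen block identifier (they are $\symwrite\;(b,0,[]),\dots,\symwrite\;(b,n-1,[])$), so in your formal statement of ``determinism modulo renaming'' you cannot conclude $\ell = \ell'$ on the nose for those steps; the conclusion must be $\pi(\ell) = \ell'$, consistent with the $\pi(\ell)$ appearing in your own equivariance lemma. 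This is not cosmetic: the refinement definitions in the appendix compare traces literally, so either the renaming must be threaded through traces as well as configurations wherever determinism is invoked, or one should determinize allocation outright --- the appendix itself suggests fixing a CompCert-style allocator, or carrying the stream of fresh identifiers in the configuration --- after which your argument gives determinism on the nose and the ``modulo renaming'' qualifier disappears.
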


\begin{lemma}[Init] \label{lemma-init}
  For all \lamstar program $lp$, closed expression $le$ and closing substitution $V$, if $\lowtocd{lp}=p$ and $\lowtoc{le}=ss$, then $([],V(le))\;R_{p,lp}\;([],V,ss)$.
\end{lemma}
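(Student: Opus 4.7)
}
The plan is to unfold the definition of $R_{p,lp}$ from Definition~\ref{def-R} and witness the existence of the required reduction by simply taking $n = 0$, reducing the claim to a syntactic equality between $([], V(le))$ and $\ctolowc{([], V, ss)}$.

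First, I would simplify $\ctolowc{([], V, ss)}$ using its defining equation. Since the stack component is empty, $\mem([]) = []$; moreover, $\unravel([], -) = \foldl\;\unravelframe\;(-)\;[]$ reduces immediately to its second argument, since folding $\unravelframe$ over the empty stack is the identity. Hence $\ctolowc{([], V, ss)} = ([], V(\ctolow{\normal{ss}{(p,V)}}))$. With $n = 0$, it therefore suffices to prove the syntactic equality $V(le) = V(\ctolow{\normal{ss}{(p,V)}})$.

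Second, I would use the round-tripping property between $\lowtoc$ and $\ctolow$ that was already invoked earlier (Lemma~\ref{lemma-lowtoc-ctolow} in the proof of Lemma~\ref{lemma-back-refine}): since $\lowtoc{le} = ss$, we have $\ctolow{ss} = le$. This reduces the goal to $V(\ctolow{\normal{ss}{(p,V)}}) = V(\ctolow{ss})$. I would establish this via an auxiliary lemma stating that, for any statement list $ss$ and any $(p,V)$, $V(\ctolow{\normal{ss}{(p,V)}}) = V(\ctolow{ss})$. The proof is a short case analysis on the head form of $ss$, using the rules of Figure~\ref{fig:normalize} in tandem with the back-translation clauses of Figure~\ref{fig:ctolow}: the only effect of $\normal{-}{(p,V)}$ is to replace the head side-effect-free expressions by their values $v = \eval{e}{(p,V)}$, and a simple sub-lemma on $\ctolowe$ gives $V(\ctolowe{e}) = \ctolowe{v}$ for such closed-under-$V$ expressions (by induction on $e$, handling variable lookup, pointer arithmetic, field access, and struct literals).

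The main obstacle, modest though it is, will be the auxiliary commutation lemma $V(\ctolowe{e}) = \ctolowe{\eval{e}{(p,V)}}$. It is delicate only in the cases involving pointer arithmetic ($e_1 + e_2$) and struct field addresses ($\eptrfd{e}{fd}$), where one must check that the \lamstar values produced by $\ctolowe$ after substitution coincide with the \cstar values obtained by big-step evaluation followed by back-translation; this amounts to a routine inspection that the two paths produce the same triples $(b, n, \ls{fd})$. Once this is established, the overall claim falls out immediately.
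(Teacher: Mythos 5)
You have the right skeleton---unfold $R_{p,lp}$, observe that $\mem([])=[]$ and that $\unravel$ over the empty stack is the identity, so the goal becomes relating $([],V(le))$ to $([], V(\ctolow{(\normal{ss}{(p,V)})}))$---but your decision to take $n=0$ and prove a \emph{syntactic equality} is where the argument breaks. The auxiliary commutation lemma you rely on, $V(\ctolowe{e}) = \ctolowe{\eval{e}{(p,V)}}$, is false as an equality. Take $e = y+1$ with $V(y)=(b,n,[])$: the left-hand side is $V(\esubbuf{y}{1}) = \esubbuf{(b,n,[])}{1}$, a \lamstar redex, while the right-hand side is $\ctolowe{(b,n+1,[])} = (b,n+1,[])$, its contractum. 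The same mismatch arises for $\eptrfd{e}{fd}$ (back-translated to a $\structfield$ redex) and for field projections on struct literals. This is not an incidental wrinkle: it is precisely the reason Definition~\ref{def-R} quantifies over a minimal $n$ with $(H,le)\step^n(H,le')$ rather than demanding $le = \ctolowc{(S,V,ss)}$ outright---\cstar evaluates expressions in big-step fashion while \lamstar contracts them one small (silent) step at a time, so the two sides are only equal \emph{up to} these administrative reductions. The cases you single out as ``delicate but routine'' are exactly the cases where equality fails and reduction is required.

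The paper's own proof, terse as it is, reduces the goal to $([], V(le))\step^{*}([], V(\ctolow{(\normal{\lowtoc{le}}{(p,\{\})})}))$, i.e.\ it keeps the $\step^{*}$ and never claims $n=0$. To repair your proposal, restate your two auxiliary facts in reduction form: $(H, \fplug{LE}{V(\ctolowe{e})}) \step^{*} (H, \fplug{LE}{\ctolowe{\eval{e}{(p,V)}}})$ silently (by induction on $e$, using the \textsc{Subbuf}, \textsc{StructField} and \textsc{Proj} rules), and consequently $V(\ctolow{ss}) \step^{*} V(\ctolow{(\normal{ss}{(p,V)})})$ (this is essentially Lemma~\ref{lemma-eq-normal} specialized to the empty stack). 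Combined with the round-trip $\ctolow{(\lowtoc{le})}=le$ --- which, note, you cannot get directly from Lemma~\ref{lemma-lowtoc-ctolow} as stated, since that lemma only covers the $\stmts;\ereturn{e}$ shape of function bodies and needs to be generalized to arbitrary entry-point expressions --- this yields the required witness $n$, and minimality is immediate since \lamstar is deterministic.
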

\begin{proof}
  Unfold $R$'s definition, it suffices to show: \\
  $([], V(le))\step^*([], V(\ctolow{(\normal{\lowtoc{le}}{(p, \{\})})}))$.
\end{proof}

\begin{lemma}[Equal-Normalize] \label{lemma-eq-normal}
  If $H=\mem(S)$ and $le=\unravel(S, V(\ctolow{ss}))$ and $\normal{ss}{(p,V)}=ss'$, then $(H, le)\step^*(H, \unravel(S, V(\ctolow{ss'})))$.
\end{lemma}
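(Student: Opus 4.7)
The plan is to reduce the statement to a single auxiliary fact about expression back-translation and then perform a case analysis on the rule of $\normal$ that applies to $ss$. The auxiliary fact is: for any \cstar expression $e$ and closing substitution $V$ with $\eval{e}{(p,V)}=v$, the \lamstar reduction $V(\ctolowe{e}) \step^* \ctolowe{v}$ holds from any memory $H$. I would prove this sub-lemma by structural induction on $e$, using \emph{Case Proj} for field projection, \emph{Case Subbuf} for the pointer-add form (which back-translates to $\esubbuf{-}{-}$), \emph{Case Structfield} for $\eptrfd{-}{fd}$, and the congruence closure under the Step rule of \lamstar to drive reductions under each evaluation context in Figure~\ref{fig:lowstar-semantics-defs}. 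The constant, unit, variable and location cases are immediate since $V$ already substitutes variables.

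With this in hand, I would case-split on the form of $ss$ matching one of the rules of $\normal$ in Figure~\ref{fig:normalize}. Consider a representative case, $ss = (t\;x=e;ss_1)$ with $\eval{e}{(p,V)}=v$ and $ss' = (t\;x=v;ss_1)$. By the definition of $\ctolow$, $V(\ctolow{ss}) = \elet{x{:}t}{V(\ctolowe{e})}{V(\ctolow{ss_1})}$ and $V(\ctolow{ss'}) = \elet{x{:}t}{\ctolowe{v}}{V(\ctolow{ss_1})}$. Using the sub-lemma inside the evaluation context $LE = \elet{x{:}t}{\symhole}{V(\ctolow{ss_1})}$, I get a \lamstar reduction from $V(\ctolow{ss})$ to $V(\ctolow{ss'})$. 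Then lifting one more time through the ambient unravel context, $\unravel(S, V(\ctolow{ss}))$ reduces to $\unravel(S, V(\ctolow{ss'}))$ at the same memory $H = \mem(S)$ (the memory is unchanged because expression evaluation has no side effects). The other head forms handled by $\normal$ (application, read, write, return, memset, bare expression) are analogous: each one puts its sub-expressions in subterm positions of an evaluation context of Figure~\ref{fig:lowstar-semantics-defs}, so the sub-lemma applies. The $\earray$ case is vacuous since $\normal$ leaves it unchanged.

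For the multi-expression cases such as $\ewrite{e_1+e_2}{e_3}$ or $\memset{e_1}{n}{e_2}$, I would chain the sub-lemma applications in left-to-right order, using the fact that the \lamstar evaluation contexts for $\esubbuf{-}{-}$, $\ewritebuf$ and $\ewritestruct$ reduce their arguments left-to-right before firing the effectful step. Since each intermediate shape is itself of the form $V(\ctolow{ss''})$ for a partially normalized $ss''$, the reductions compose.

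The main obstacle I anticipate is not any single case but the bookkeeping around the unravel context: I need to confirm that $\unravel$ commutes with \lamstar reduction on its expression argument, i.e.\ that a step $le \step le'$ under the current frame lifts to $\unravel(S, le) \step \unravel(S, le')$. This is essentially the observation that $\unravel$ builds a well-formed evaluation context by iterating $\unravelframe$, so the Step rule of \lamstar applies. Once that commutation is stated as a small helper, the rest of the proof is a mechanical traversal of the cases of $\normal$.
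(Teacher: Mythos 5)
Your proposal is correct, and it follows the argument the paper implicitly relies on: the paper states Lemma~\ref{lemma-eq-normal} without proof, and the natural justification is exactly your decomposition into (i) an expression-level sub-lemma showing $V(\ctolowe{e})\step^*\ctolowe{v}$ by silent, memory-preserving steps (\textsc{Subbuf}, \textsc{StructField}, \textsc{Proj}, and congruence), and (ii) a case analysis on the single $\normal$ rule that fires, lifted through the evaluation context assembled by $\unravel$. Your closing observation that $\unravel$ composes $\ctolowE{E}$ and $\epop$ contexts into a \lamstar evaluation context, so the \textsc{Step} rule applies, is precisely the bookkeeping fact needed; the only caveats (the missing $\ctolowe$ clause for field projection and for globals) are gaps in the paper's definitions that render those cases vacuous rather than problems with your argument.
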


\begin{lemma}[Invert Let] \label{lemma-invert-let}
  If $(H, \fplug{LE}{\elet{x:t}{lv}{le}})\;R\;(S, V, ss)$, then either $ss=(t\;x=e;ss')$ and $le=V(\ctolow{ss'})$ and $lv=\ctolowe{v}$ and $LE=\unravel(S, \symhole)$, where $v\defeq\eval{e}{(p,V)}$ or $S=S';(\None, V', t\;x=\symhole;ss')$ and $ss=\ereturn{v}$ and $le=V'(\ctolow{ss'})$ and $lv=\ctolowe{v}$ and $LE=\unravel(S', \symhole)$.
\end{lemma}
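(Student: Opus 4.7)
The plan is to unfold the simulation relation $R_{p,lp}$ and structurally analyze the resulting equation between the given \lamstar term and the back-translation of $(S, V, ss)$. Unfolding yields a minimal $n\geq 0$ and an expression $le'$ with $(H,\fplug{LE}{\elet{x:t}{lv}{le}})\step_{lp}^n(H,le')$ and $le'=\unravel(S, V(\ctolow{\normal{ss}{(p,V)}}))$. First I would argue that $n=0$: the only reduction directly applicable at the head of $\fplug{LE}{\elet{x:t}{lv}{le}}$ that changes its outer shape is the Let rule, and after such a step the term no longer exhibits a let-with-value at position $LE$; since the outer skeleton of any $\unravel$ output is determined syntactically by $S$ and $\ctolow{\normal{ss}{(p,V)}}$, a match at $n>0$ would contradict minimality for a \cstar configuration whose back-translation already agrees at $n=0$. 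So I obtain the syntactic equation $\fplug{LE}{\elet{x:t}{lv}{le}}=\unravel(S, V(\ctolow{\normal{ss}{(p,V)}}))$.

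Next I would case-analyze where the let-binding $\elet{x:t}{lv}{le}$ can arise in the unravel output. Inspecting the rules of $\ctolow$ and $\unravelframe$, the only two constructors that produce a let-binding with \emph{named} binder $x:t$ and a \emph{value} in RHS position are (i) $\ctolow{(t\;x=e;ss')}=\elet{x:t}{\ctolowe{e}}{\ctolow{ss'}}$, whose RHS is a value exactly when the head expression of $ss$ has been normalized to $v=\eval{e}{(p,V)}$; and (ii) $\ctolowE{(t\;x=\symhole;ss')}=\elet{x:t}{\symhole}{\ctolow{ss'}}$, arising from $\unravelframe$ on a call frame with this continuation, whose hole is filled with the inner unraveled expression. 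Case (i) places the let in the translated core, forcing $ss=(t\;x=e;ss')$, $lv=\ctolowe{v}$, $le=V(\ctolow{ss'})$, and $LE=\unravel(S,\symhole)$; this is Case 1. Case (ii) requires the stack to decompose as $S=S';(\None,V',t\;x=\symhole;ss')$; for the RHS hole to be filled with a bare value $lv$, the inner unraveled expression must itself be a value, which given the shape of $\ctolow$ happens only when $ss=\ereturn{v}$ with $\ctolowe{v}=lv$, giving $LE=\unravel(S',\symhole)$ and $le=V'(\ctolow{ss'})$; this is Case 2.

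The main obstacle is the bookkeeping needed to rule out spurious decompositions. I need to show that $LE$ cannot descend strictly inside the translated core $V(\ctolow{\normal{ss}{(p,V)}})$ past its outermost let, and cannot pick up a let contributed by a block frame. The first is handled by enumerating the $\ctolow$ rules: apart from $t\;x=e;ss'$, every rule that produces a let either binds anonymously ($\_$) or has a non-value RHS drawn from the compound forms $f\;le_1$, $\ereadbuf{\cdot}{\cdot}$, $\ereadstruct{\cdot}$, $\ewritebuf{\cdot}{\cdot}{\cdot}$, $\ewritestruct{\cdot}{\cdot}$, $\enewbuf{n}{(\cdot\!:\!t)}$, $\enewstruct{(\cdot\!:\!t)}$, $\withframe\;\cdot$, or $\eif{\cdot}{\cdot}{\cdot}$, none of which is a value. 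The second is handled by observing that $\unravelframe$ for a block frame $(\Some{M},V',E)$ plugs $\epop\;le$ into $\ctolowE{E}$, so the RHS position of any let it contributes is $\epop\;le$, which is also never a value. These two observations close the analysis and yield exactly the two cases stated.
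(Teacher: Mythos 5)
The paper states this inversion lemma with no proof at all, so there is nothing to compare against directly; judging your proposal on its own, the structural analysis in your second and third paragraphs is the right content. The enumeration showing that a \emph{named} let whose right-hand side is already a \emph{value} can only be contributed either by $\ctolow{(t\;x=e;ss')}$ after normalization of the head expression $e$, or by $\ctolowE{(t\;x=\symhole;ss')}$ on a topmost call frame whose hole is filled by a value (which forces $ss=\ereturn{v}$), while anonymous binders, compound right-hand sides ($f\;le_1$, $\kw{readbuf}$, $\kw{newbuf}$, $\withframe$, \dots) and the $\epop$-shaped holes of block frames are all excluded, is exactly the case split the lemma asserts.

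The gap is the claim $n=0$ in your first paragraph. Minimality of $n$ in Definition~\ref{def-R} is minimality \emph{for the given} $(S,V,ss)$; it does not entail that this configuration's back-translation ``already agrees at $n=0$'' --- that is the very thing you are trying to establish, and your appeal to minimality is circular. Worse, the claim is false for arbitrary $R$-related pairs: take $S=[]$ and $ss=[5]$, so that $\ctolowc{([],V,[5])}=([],5)$, and pair it with the \lamstar term $\elet{x:t}{5}{x}$, which has the required shape ($LE=\symhole$, $lv=5$) and steps by \textsc{Let} to $5$; then $R$ holds with minimal $n=1$, yet neither disjunct of the conclusion holds ($ss$ is not a declaration and $S$ has no call frame). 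What actually makes the lemma usable inside Lemma~\ref{lemma-back-refine} is an invariant your proof never isolates: every time $R$ is re-established in the simulation, the $n$ pending reductions are exactly those supplied by Lemma~\ref{lemma-eq-normal}, i.e., pure-expression steps (\textsc{Subbuf}, \textsc{Structfield}, \textsc{Proj}) and never a \textsc{Let}. Under that strengthening your observation that the only step available from $\fplug{LE}{\elet{x:t}{lv}{le}}$ is \textsc{Let} does force $n=0$, and the remainder of your argument goes through. You should therefore either fold this restriction into the definition of the relation (or carry it as an auxiliary invariant and prove it preserved by every case of the simulation), or concede that the lemma as literally stated does not follow from Definition~\ref{def-R} alone.
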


\begin{lemma}[Invert ALet] \label{lemma-invert-alet}
  If $(H, \fplug{LE}{\elet{\_}{lv}{le}})\;R\;(S, V, ss)$, then either $ss=(e;ss')$ and $le=V(\ctolow{ss'})$ and $lv=\ctolowe{v}$ and $LE=\unravel(S, \symhole)$, where $v\defeq\eval{e}{(p,V)}$ or $S=S';(\None, V', \symhole;ss')$ and $ss=\ereturn{v}$ and $le=V'(\ctolow{ss'})$ and $lv=\ctolowe{v}$ and $LE=\unravel(S', \symhole)$.
\end{lemma}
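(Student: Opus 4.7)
The plan is to mimic the inversion proof for Lemma~\ref{lemma-invert-let} but adapted to the anonymous-let case, by unfolding the relation $R_{p,lp}$ (Definition~\ref{def-R}) and inspecting the back-translation rules from Fig.~\ref{fig:ctolow} that can produce a $\elet{\_}{-}{-}$ as the head redex reachable from the translated configuration. Concretely, I would start from the hypothesis $(H, \fplug{LE}{\elet{\_}{lv}{le}})\;R\;(S, V, ss)$ and unfold it to obtain that $(H, \unravel(S, V(\ctolow{\normal{ss}{(p,V)}})))$ reduces in finitely many silent steps to $(H, \fplug{LE}{\elet{\_}{lv}{le}})$. Using Lemma~\ref{lemma-eq-normal}, normalization only collapses evaluable \cstar expressions to values and does not create or destroy the surrounding statement structure, so the head-let must already come from some syntactic form produced by $\ctolow$ or $\ctolowE$.

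Next I would enumerate, by inspection of Fig.~\ref{fig:ctolow}, the places where an anonymous let is generated: (i) the expression-statement rule $\ctolow{(e; ss')} = \elet{\_}{\ctolowe{e}}{\ctolow{ss'}}$; (ii) the with-frame rule $\ctolow{(\{ss_1\}; ss')} = \elet{\_}{\withframe\;\ctolow{ss_1}}{\ctolow{ss'}}$; and (iii) the continuation rule $\ctolowE{(\symhole; ss')} = \elet{\_}{\symhole}{\ctolow{ss'}}$. Case (ii) is immediately ruled out because its body is a $\withframe$, not a value $lv$; so the head-let in the \lamstar side must stem from either (i) applied to the current topmost frame's statement list, or (iii) applied to a parent call frame that is currently returning. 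Splitting on these two possibilities yields exactly the two disjuncts of the statement: in the first, $ss = (e; ss')$ with $\eval{e}{(p,V)} = v$, $lv = \ctolowe{v}$, $le = V(\ctolow{ss'})$, and the evaluation context is pinned down by $LE = \unravel(S, \symhole)$; in the second, $ss = \ereturn{v}$ with $S = S'; (\None, V', \symhole; ss')$ so that the back-translated expression at the top of the parent frame supplies the anonymous let, forcing $le = V'(\ctolow{ss'})$ and $LE = \unravel(S', \symhole)$.

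The main obstacle will be ensuring that $LE$ is uniquely determined up to its decomposition through $\unravel$: one must argue that the evaluation context surrounding the head-let in $\unravel(S, V(\ctolow{\normal{ss}{(p,V)}}))$ corresponds to exactly one ``hole'' under the reduction strategy of \lamstar, i.e.\ that no other subterm can be the head redex simultaneously. This is an injectivity-style observation about $\unravel\circ\ctolow$: since the back-translation is syntax-directed and never introduces nested $\elet{\_}{-}{-}$ bodies whose left-hand side is a value without it being the outermost reducible position, the decomposition into $LE$ and $\elet{\_}{lv}{le}$ is forced. The remaining equalities then follow routinely by reading off the appropriate clause of Fig.~\ref{fig:ctolow} and Fig.~\ref{fig:normalize}, exactly as in the proof of Lemma~\ref{lemma-invert-let}.
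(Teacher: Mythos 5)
The paper states Lemma~\ref{lemma-invert-alet} without proof, so there is no official argument to compare against; your syntax-directed inversion of the back-translation is clearly the intended strategy, and it matches how the lemma is consumed in the \emph{Case ALet} step of Lemma~\ref{lemma-back-refine}. Two points, however, need repair. First, you unfold Definition~\ref{def-R} in the wrong direction: the relation requires the given \lamstar configuration to reduce in a minimal number $n$ of steps \emph{to} the back-translation $\ctolowc{(S,V,ss)}$, not the back-translation to reduce to the given configuration. This matters because the crux of the inversion is showing that $n=0$, i.e.\ that $\fplug{LE}{\elet{\_}{lv}{le}}$ is literally equal to $\unravel(S, V(\ctolow{(\normal{ss}{(p,V)})}))$, after which the disjuncts can be read off syntactically. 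Your proposal never establishes this: minimality of $n$ alone does not force $n=0$, since the \textsc{ALet} step is itself silent and the residual term could again have an anonymous-let-on-value head (e.g.\ $le=\elet{\_}{lv'}{le''}$), so one must argue separately that the administrative steps counted by $n$ correspond only to the big-step evaluation of \cstar expressions (\textsc{Subbuf}, \textsc{Proj}, \textsc{StructField}) and that a let binding a value is a synchronization point. Your closing paragraph gestures at this with the ``injectivity of $\unravel\circ\ctolow$'' remark, but that is precisely the step that needs to be made precise rather than deferred.

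Second, your enumeration of the clauses of Fig.~\ref{fig:ctolow} that emit an anonymous let is incomplete: besides the expression-statement, block, and continuation rules you list, the translations of $\ewrite{e_1+e_2}{e_3}$, $\ewrite{e_1}{e_2}$, and $\eif{e}{ss_1}{ss_2}$ (and, as printed, the read rules) also produce $\elet{\_}{-}{-}$ forms. All of them bind a non-value ($\ewritebuf$, $\ewritestruct$, $\eif$, \ldots), so they are excluded by the same argument you use to discard the $\withframe$ case, but the case analysis should say so explicitly, since the whole force of an inversion lemma lies in its exhaustiveness.
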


\begin{lemma}[Invert Newbuf] \label{lemma-invert-newbuf}
  If $(H;h, \fplug{LE}{\elet{x}{\enewbuf{n}{(lv:t)}}{le}})\;R\;(S, V, ss)$, then $ss=(\earray{t}{x}{n};\memset{x}{n}{v};ss')$ and $le=V(\ctolow{ss'})$ and $lv=\ctolowe{v}$ and $LE=\unravel(S, \symhole)$ and $S=S';(M, V', E)$.
\end{lemma}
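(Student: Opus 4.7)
The plan is to unfold the definition of $R_{p,lp}$ and then trace the specific \cstar shape that could produce $\enewbuf$ at the head of an evaluation context on the \lamstar side. By the definition of $R$, there is a minimal $n$ such that $(H;h, \fplug{LE}{\elet{x}{\enewbuf{n}{(lv:t)}}{le}}) \step^n_{lp} (H;h, le')$ with $le' = \unravel(S, V(\ctolow{\normal{ss}{(p,V)}}))$. Memory preservation across these steps already rules out Newbuf, Newstruct, Writebuf, and Writestruct as intermediate reductions.

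First I would argue $n = 0$. The only non-memory-changing, unlabeled \lamstar atomic steps are Let, ALet, Proj, Subbuf, Structfield, App, and Pop; each requires its redex to lie inside $LE$ (never past the head $\enewbuf$, which is itself not silent). So any such step would reduce administrative structure strictly inside $LE$. On the other side, I would verify by straightforward induction on $ss$ and on $S$ that $\unravel(S, V(\ctolow{\normal{ss}{(p,V)}}))$ is already in administrative normal form: $\normal$ evaluates all head sub-expressions to values, so $\ctolow$ never yields a let-body already reduced to a value, and $\unravel$ inserts only $\epop$s and $\elet{\_}{\symhole}{\cdot}$-style frames whose holes stand for unreduced sub-terms. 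Hence no silent reduction can be pending, forcing $n = 0$ and $le' = \fplug{LE}{\elet{x}{\enewbuf{n}{(lv:t)}}{le}}$ literally.

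With the syntactic equality in hand, I would do a case analysis on the outermost shape of $ss$ using the $\ctolow$ and $\normal$ rules. Looking at the definition of $\ctolow$, the only production with an $\enewbuf$ head is the one for $(\earray{t}{x}{n};\memset{x}{n}{e};ss')$, yielding $\elet{x}{\enewbuf{n}{(\ctolowe{e}:t)}}{\ctolow{ss'}}$. Normalization merely rewrites the sub-expression $e$ of $\memset$ to its evaluated value $v = \eval{e}{(p,V)}$, so $ss = (\earray{t}{x}{n};\memset{x}{n}{v};ss')$ with $lv = \ctolowe{v}$ and $le = V(\ctolow{ss'})$. The surrounding evaluation context must then be exactly $\unravel(S,\symhole)$, i.e., $LE = \unravel(S, \symhole)$. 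Finally, because the ArrDecl rule in the \cstar semantics requires the topmost frame to carry a (non-$\None$) memory component, and because our $R$ requires $\mem(S)$ to be well defined (no $\None$ slots), the stack must decompose as $S = S';(M, V', E)$, as claimed.

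The main obstacle is the $n = 0$ step: it forces a careful inductive characterization of which \lamstar terms arise as back-translations of \cstar configurations, specifically that administrative forms produced by $\ctolow$ and $\unravel$ are never silently reducible. Once that invariant is in place, the rest is a mechanical reading of the $\ctolow$ and $\normal$ rules.
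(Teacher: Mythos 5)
The paper never proves this lemma: Invert Newbuf, like the other inversion lemmas in the appendix, is stated bare and then invoked as a black box in the Case Newbuf of Lemma~\ref{lemma-back-refine}. So there is no official argument to compare against; your overall strategy---unfold $R_{p,lp}$, show the pending-step count is zero, then read the conclusions off the $\ctolow$, $\normal$, and $\unravel$ rules---is the natural one and matches what the surrounding bisimulation proof tacitly assumes.

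That said, two of your justifications do not go through as written. For the zero-step claim: a redex cannot ``lie inside $LE$''---by unique decomposition the only step available from $\fplug{LE}{\elet{x}{\enewbuf{n}{(lv:t)}}{le}}$ is \textsc{NewBuf} at the hole, so there are no silent steps to rule out; what you must exclude is that the witnessing steps of $R$ include that \textsc{NewBuf} step itself. The definition of $R$ constrains only the two \emph{endpoint} memories to coincide, not the intermediate ones, so ``memory preservation rules out Newbuf/Writebuf as intermediates'' is not immediate (a \kw{writebuf} of the value already stored leaves memory unchanged, and an allocation could in principle be undone by a later \textsc{Pop}). You need the stronger observation that \textsc{NewBuf} adds a fresh block and the only domain-shrinking rule, \textsc{Pop}, discards an entire frame, so the memory cannot return to exactly $H;h$; your administrative-normal-form argument about the \emph{target} addresses the wrong direction, since it says nothing about whether the source can take labelled steps toward it. Separately, your claim that $\normal{}{}$ rewrites the $\memset$ argument to its value contradicts the rules: $\normal{(\earray{t}{x}{n};ss)}{(p,V)}$ returns its argument unchanged, so the initializer being a value must instead be extracted from matching $V(\ctolowe{e})$ against the value $lv$ already present in the source term (and strictly $e$ could be a variable bound by $V$, a looseness the paper's own statement shares). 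Finally, deriving $S=S';(M,V',E)$ from the premise of \textsc{ArrDecl} is circular---you cannot assume the \cstar configuration can step; that decomposition should come from $\mem(S)=H;h$ forcing a nonempty stack with a block frame on top.
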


\begin{lemma}[Invert Newstruct] \label{lemma-invert-newstruct}
  If $(H;h, \fplug{LE}{\elet{x}{\enewstruct{(lv:t)}}{le}})\;R\;(S, V, ss)$, then $ss=(\earray{t}{x}{1};\memset{x}{1}{v};ss')$ and $le=V(\ctolow{ss'})$ and $lv=\ctolowe{v}$ and $LE=\unravel(S, \symhole)$ and $S=S';(M, V', E)$.
\end{lemma}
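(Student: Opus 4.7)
The plan is to follow the same pattern as the preceding inversion lemmas (Invert Let, ALet, Newbuf), but specialize the argument to the \enewstruct form. First I would unfold the definition of $R_{p,lp}$ from Definition~\ref{def-R}: the hypothesis gives us a minimal $n$ with $(H;h, \fplug{LE}{\elet{x}{\enewstruct{(lv:t)}}{le}}) \step_{lp}^n (H;h, le')$, where $(H;h, le') = \ctolowc{(S,V,ss)} = (\mem(S),\,\unravel(S, V(\ctolow{\normal{ss}{(p,V)}})))$. In particular $\mem(S) = H;h$, which already forces $S$ to be non-empty, so $S = S';(M,V',E)$ for some frame, giving one of the conjuncts immediately.

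Next I would argue that the expression $\fplug{LE}{\elet{x}{\enewstruct{(lv:t)}}{le}}$ is itself in normal form with respect to the atomic \astep relation \emph{at the newstruct redex position}, because all the subterms of \enewstruct{} are already values (by the shape of the context decomposition). Combined with minimality of $n$ and determinism (Lemma~\ref{lemma-lowstar-deter}), the $n$ administrative steps between $\unravel(S,V(\ctolow{\normal{ss}{(p,V)}}))$ and the expression in the statement must all fire \emph{outside} of the \enewstruct{} node. Therefore the back-translation of $\normal{ss}{(p,V)}$ itself must already syntactically contain a \enewstruct subterm sitting in evaluation-context position $\unravel(S,\symhole)$.

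Then I would inspect the inference rules in Figure~\ref{fig:ctolow} defining $\ctolow{ss}$: the only clause producing an \enewstruct head is the rule whose left-hand side is $(\earray{t}{x}{1};\memset{x}{1}{e};ss')$ with $t$ a struct type, yielding $\elet{x}{\enewstruct{(le_1:t)}}{le}$ with $\ctolowe{e} = le_1$ and $\ctolow{ss'} = le$. Applying this to $\normal{ss}{(p,V)}$ (after accounting for the $\normal{-}{}$ substitution turning expressions into their values under $V$), we read off that $ss = (\earray{t}{x}{1};\memset{x}{1}{v};ss')$, that $lv = \ctolowe{v}$, that $le = V(\ctolow{ss'})$, and that the surrounding \unravel wraps the redex with exactly $\unravel(S,\symhole)$, matching the stated $LE$.

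The only delicate step is justifying that the administrative $n$-step reduction cannot ``manufacture'' the \enewstruct node out of something else in $\ctolow{ss}$: this amounts to a case analysis on the other \astep rules (Subbuf, Structfield, Proj, Let, ALet, App), none of which introduce a \enewstruct head, together with the observation that the only \astep rule that produces \enewstruct is \emph{itself} the Newstruct rule, which by minimality of $n$ has not yet been applied. This is the analog of the corresponding argument in Lemma~\ref{lemma-invert-newbuf}, and I would package it as a shared auxiliary lemma stating that the head constructor of the redex pointed to by $LE$ is preserved by the minimal back-translation witness.
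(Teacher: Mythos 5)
The paper states Lemma~\ref{lemma-invert-newstruct} (like its siblings, Lemmas~\ref{lemma-invert-let} and~\ref{lemma-invert-newbuf}) without any proof, so there is no official argument to compare against; what follows assesses your sketch on its own terms. Your overall strategy --- unfold $R_{p,lp}$, note that $\mem(S)=H;h$ forces $S=S';(M,V',E)$, and then invert the back-translation through the unique clause of Figure~\ref{fig:ctolow} whose right-hand side has a $\kw{newstruct}$ head to read off the remaining conjuncts --- is the right shape, and your final syntactic reading-off of $ss$, $lv$, $le$ and $LE$ from $\unravel$, $V(\cdot)$, $\normal{\cdot}{}$ and $\ctolow{\cdot}$ is essentially what the lemma amounts to.

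The gap is in the middle step. You claim the $n$ steps supplied by Definition~\ref{def-R} ``fire outside of the $\kw{newstruct}$ node,'' and you later worry that some $\astep$ rule might ``manufacture'' a $\kw{newstruct}$ head; both concerns rest on the wrong mechanism. In Definition~\ref{def-R} the $n$ steps run \emph{from} the \lamstar configuration of the hypothesis \emph{towards} $\ctolowc{(S,V,ss)}$, and the term $\fplug{LE}{\elet{x}{\enewstruct{(lv:t)}}{le}}$ has a \emph{unique} redex, namely the $\kw{newstruct}$ itself: $lv$ is already a value, $LE$ is an evaluation context, and unique decomposition (Lemma~\ref{lemma-lowstar-deter}) means no step can fire ``outside'' it. Hence if $n\geq 1$ the first step is necessarily \textsc{NewStruct}, which allocates a fresh block and changes the heap, contradicting the requirement in $R$ that the reduction end at $(H;h,\,le')$ with the \emph{same} heap $H;h$. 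The correct conclusion is therefore simply $n=0$, i.e., the hypothesis expression already \emph{equals} $\unravel(S, V(\ctolow{(\normal{ss}{(p,V)})}))$, after which your inversion of $\ctolow{\cdot}$ goes through --- with one caveat you should confront explicitly: read literally with the ``earlier rules shadow later rules'' convention, the $\enewbuf{n}{\cdot}$ clause of Figure~\ref{fig:ctolow} shadows the $\enewstruct{\cdot}$ clause at $n=1$, so ``the only clause producing a $\kw{newstruct}$'' requires restricting the newbuf clause to non-struct element types (or reordering the clauses). So your endpoint is right, but the justification for why the $\kw{newstruct}$ redex survives the $n$ steps should be replaced by the $n=0$ argument.
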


\begin{lemma}[Invert Readbuf] \label{lemma-invert-readbuf}
  If \\
  $(H;h, \fplug{LE}{\elet{x:t}{\ereadbuf{(b,n,[])}{n'}}{le}})\;R\;(S, V, ss)$, then $ss=(t\;x=(b,n,[])[n'];ss')$ and $le=V(\ctolow{ss'})$ and $LE=\unravel(S, \symhole)$.
\end{lemma}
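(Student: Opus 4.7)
The plan is to prove Lemma~\ref{lemma-invert-readbuf} by a direct inversion argument along the same lines as the preceding Invert lemmas (Let, ALet, Newbuf, Newstruct), exploiting the fact that \kw{readbuf} in \lamstar is not eliminable by any purely administrative reduction.

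First, I would unfold Definition~\ref{def-R} to obtain a minimal $n$ and a \lamstar expression $le'$ such that $(H;h, \fplug{LE}{\elet{x:t}{\ereadbuf{(b,n,[])}{n'}}{le}}) \step^n (H;h, le')$ and $(H;h, le') = \ctolowc{(S,V,ss)}$. In particular, this immediately gives $\mem(S) = H;h$ and $le' = \unravel(S, V(\ctolow{(\normal{ss}{(p,V)})}))$.

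The key step is to argue that $n = 0$. Among all reduction rules of \lamstar, the purely administrative ones (Let, ALet, Subbuf, Structfield, Proj, and the administrative part of IfT/IfF, NewBuf, NewStruct, Pop) rewrite the syntactic head redex but leave the memory operation \kw{readbuf} untouched, while memory-observing rules (ReadBuf, WriteBuf, ReadStruct, WriteStruct, Newbuf, Newstruct, WriteBuf) either consume the very \kw{readbuf} in question or act on an unrelated sub-expression. Since the outermost redex in the source configuration is already the targeted \kw{readbuf} (it sits under an evaluation context $LE$ with a fully-reduced buffer value $(b,n,[])$ and index $n'$), minimality forces $n = 0$: any further step would eliminate the \kw{readbuf} itself, contradicting the assumption on $le'$. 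Thus the configurations coincide syntactically.

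With $n = 0$, the equation $\fplug{LE}{\elet{x:t}{\ereadbuf{(b,n,[])}{n'}}{le}} = \unravel(S, V(\ctolow{(\normal{ss}{(p,V)})}))$ is inverted by induction on the stack $S$: each frame of $S$ contributes an outer layer of $\unravel$ that is either $\elet{\_}{\symhole}{\cdot}$ or $\elet{x':t'}{\symhole}{\cdot}$, possibly wrapped in $\epop$. Since the hole of $LE$ ultimately contains a \kw{readbuf} let-binding (not a $\kw{return}$, $\epop$, or administrative form), the innermost contribution must come from the topmost list of statements in $ss$ after normalization. Consulting the $\ctolow$ clauses in Figure~\ref{fig:ctolow}, the only production whose result is $\elet{x:t}{\ereadbuf{\cdot}{\cdot}}{\cdot}$ is the translation of $t\,x=\eread{e_1+e_2};ss'$. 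Pushing this back through $\normal{\cdot}{(p,V)}$ (which only evaluates pure sub-expressions into values but preserves the statement head) yields $ss = t\,x = (b,n,[])[n']; ss'$ with $\eval{e_1}{(p,V)} = (b,n,[])$ and $\eval{e_2}{(p,V)} = n'$, together with $le = V(\ctolow{ss'})$ and $LE = \unravel(S, \symhole)$, as required.

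The main obstacle I anticipate is the inductive unpacking of the $\unravel$ equation to pin down that the hole of $LE$ lies at the top of $ss$ rather than deeper in the call stack: this requires carefully tracking how each frame $F$ of $S$ contributes either a call-frame layer (with an $\fplug{E}{\cdot}$ and a variable rebinding) or a block-frame layer (with an added $\epop$), and ruling out that any of these outer layers could itself be the innermost \kw{readbuf} redex of the original configuration. The other inversion lemmas appeal to the same structural analysis, so I would extract it as a shared sub-lemma rather than redo it per effectful construct.
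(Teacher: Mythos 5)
The paper itself gives no proof of this lemma: all of the ``Invert'' lemmas in the appendix are stated without argument and used as black boxes in the proof of Lemma~\ref{lemma-back-refine}, so there is nothing to compare against line by line. Your proposal supplies a proof that is consistent with how the paper deploys the lemma, and the two-stage structure --- first pin the stuttering index of Definition~\ref{def-R} to zero, then invert the $\unravel\circ\ctolow\circ\,\normal{\cdot}{}$ equation frame by frame --- is exactly what the surrounding development needs. You also correctly notice that the paper's conclusion $ss=(t\;x=(b,n,[])[n'];ss')$ really means ``$ss$ normalizes to that form under $\eval{\cdot}{(p,V)}$,'' mirroring the way Lemma~\ref{lemma-invert-let} is phrased with an explicit $v\defeq\eval{e}{(p,V)}$.

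The one step you should tighten is the claim that ``minimality forces $n=0$.'' Minimality alone does not: if the configuration reached an $\ctolowc{}$-image only \emph{after} firing the \kw{readbuf}, the minimal $n$ would simply be positive and the lemma's conclusion would not follow, so you must positively exclude $n\geq 1$ rather than derive a contradiction ``with the assumption on $le'$.'' The correct exclusion is the one you gesture at in your opening sentence: by unique decomposition and determinism of \lamstar, the only step available from $\fplug{LE}{\elet{x:t}{\ereadbuf{(b,n,[])}{n'}}{le}}$ is \textsc{ReadBuf}, which emits a $\symread$ event, whereas the $n$ catch-up steps in Definition~\ref{def-R} must be the silent administrative reductions (\textsc{Let}, \textsc{Subbuf}, \textsc{Proj}, \textsc{StructField}, \dots) --- this is not literally written into Definition~\ref{def-R}, but it is the only reading under which the \emph{Case Subbuf} and \emph{Case Readbuf} steps of the main bisimulation proof go through. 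State that reading explicitly (or strengthen $R$ to require $\step^n_\epsilon$) and your argument is complete; the shared sub-lemma you propose for unpacking $\unravel$ is worth extracting, since every other inversion lemma needs the same unique-decomposition analysis.
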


\begin{lemma}[Invert Readstruct] \label{lemma-invert-readstruct}
  If \\
  $(H;h, \fplug{LE}{\elet{x:t}{\ereadstruct{(b,n,\ls{fd})}}{le}})\;R\;(S, V, ss)$, then $ss=(t\;x=\eread{(b,n,\ls{fd})};ss')$ and $le=V(\ctolow{ss'})$ and $LE=\unravel(S, \symhole)$.
\end{lemma}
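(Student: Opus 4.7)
The plan is to unfold the simulation relation $R_{p,lp}$ and read off the required structural constraints. By Definition~\ref{def-R}, the hypothesis supplies a minimal $m$ such that $(H;h,\ \fplug{LE}{\elet{x:t}{\ereadstruct{(b,n,\ls{fd})}}{le}})\step_{lp}^{m}(H;h,\ le^\star)$ with $le^\star=\unravel(S,\ V(\ctolow{\normal{ss}{(p,V)}}))$. The outermost form of the source expression is itself a readstruct redex, which is atomically reducible and would be consumed by any nonzero small-step; minimality of $m$ therefore forces $m=0$, yielding the syntactic equation
\[ \fplug{LE}{\elet{x:t}{\ereadstruct{(b,n,\ls{fd})}}{le}} \;=\; \unravel(S,\ V(\ctolow{\normal{ss}{(p,V)}})). \]

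Next, I would factor the right-hand side as $\fplug{\unravel(S,\symhole)}{V(\ctolow{\normal{ss}{(p,V)}})}$ by the definitions of $\unravel$ and $\unravelframe$. Since both $LE$ and $\unravel(S,\symhole)$ are left-to-right evaluation contexts and \lamstar semantics is deterministic about redex position, the two must coincide. To justify this I would induct on $S$: each wrapper introduced by $\unravelframe$ has the outer shape $V'(\ctolowE{E})$ around either the hole directly or around $\epop$ applied to the hole. Inspecting the two productions for $\ctolowE{E}$ in Figure~\ref{fig:ctolow}, every such wrapper has a top-level $\elet{\_}{\cdot}{\cdot}$, a top-level $\elet{x':t'}{\cdot}{\cdot}$, or a top-level $\epop$; none of these matches the form $\elet{x:t}{\ereadstruct{\_}}{\_}$, so the redex cannot sit at the boundary of, or outside, the innermost frame. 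This gives $LE=\unravel(S,\symhole)$ together with
\[ \elet{x:t}{\ereadstruct{(b,n,\ls{fd})}}{le} \;=\; V(\ctolow{\normal{ss}{(p,V)}}). \]

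Finally, I would case-analyze the head statement of $\normal{ss}{(p,V)}$ against the back-translation rules of Figure~\ref{fig:ctolow}. The only rule producing a top-level let-binding around an $\ereadstruct{\_}$ (reading the $\elet{\_}$ of that row as $\elet{x:t}$, to match the forward translation) is the one translating $(t\;x=\eread{e_1};ss')$ where $e_1$ is not of the form $e_1'+e_2'$; the readbuf rule is excluded precisely by the absence of a pointer-addition shape. Normalization has reduced $e_1$ to a value $v_1$ with $\ctolowe{v_1}=(b,n,\ls{fd})$, whence $v_1=(b,n,\ls{fd})$. Combining, we obtain $ss=(t\;x=\eread{(b,n,\ls{fd})};ss')$ and $le=V(\ctolow{ss'})$, as required.

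The main obstacle I expect is the ``contexts coincide'' step: one must verify carefully that no $\unravelframe$ wrapper can masquerade as the readstruct binder, and that the pointer-addition shape for readbuf is genuinely distinguishable from a bare location after normalization. Both are local syntactic checks, but they are easy to slip on because readbuf and readstruct share the head symbol $\eread{}$ on the \cstar side and the two back-translation rules partially overlap, so the inversion relies on the ordering of the rules (earlier shadowing later) together with the post-normalization shape of $e_1$.
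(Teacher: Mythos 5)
The paper states this inversion lemma without proof, so your argument stands on its own. Its overall shape is the right one — force the catch-up slack to zero, factor the resulting syntactic equation into the $\unravel$ wrappers versus the innermost back-translated statement, then invert the rules of Figure~\ref{fig:ctolow} — and you correctly spot the two delicate syntactic points: the anonymous-versus-named binder mismatch in the $\ctolow{(t\;x=\eread{e_1};ss)}$ row, and the fact that normalization preserves the syntactic $e_1+e_2$ shape so that the readbuf and readstruct rows remain distinguishable after the head expression has been evaluated.

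The gap is your very first step. Minimality of $m$ does \emph{not} force $m=0$: in Definition~\ref{def-R}, minimality merely selects the least $n$ at which the (deterministic) reduction sequence from $(H,le)$ hits $\ctolowc{(S,V,ss)}$, and nothing you say rules out that least $n$ being positive. The \textsc{ReadStruct} reduction leaves the heap unchanged, so after it fires the configuration can perfectly well continue on to $\ctolowc{(S,V,ss)}$ for an $ss$ whose head is not a read statement (take $le=x$, so the redex reduces to the bare value $H(b,n,\ls{fd})$, with $LE$ consisting of the $\epop$ wrapper of a block frame; the resulting expression is the unraveling of a configuration whose head statement is just an expression). The reason $m=0$ actually holds is different: by the paper's convention that an omitted label on $\step$ denotes the empty label (Definition~\ref{def-trsys} and surrounding text), the $n$ catch-up steps in $R$ must be \emph{silent}, and \textsc{ReadStruct} emits $\symread\;(b,n,\ls{fd})$, so it cannot be one of them. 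You must invoke this explicitly — under the alternative reading where the slack steps may be labelled, the lemma is false as stated, so the entire inversion hinges on this point and ``minimality'' is not a substitute for it. A smaller gap of the same flavor sits in your context-coincidence step: excluding the $\unravelframe$ wrappers only places the redex somewhere inside $V(\ctolow{\normal{ss}{(p,V)}})$; to conclude $LE=\unravel(S,\symhole)$ you also need the redex to sit at the very top of that innermost expression, which holds because normalization has already reduced the head statement's subexpressions to values, leaving nothing for an evaluation context to descend into.
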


\begin{lemma}[Invert Writebuf] \label{lemma-invert-writebuf}
  If \\
  $(H;h, \fplug{LE}{\elet{\_}{\ewritebuf{(b,n,[])}{n'}{lv}}{le}})\;R\;(S, V, ss)$, then $ss=((b,n,[])[n']=v;ss')$ and $le=V(\ctolow{ss'})$ and $lv=\ctolowe{v}$ and $LE=\unravel(S, \symhole)$.
\end{lemma}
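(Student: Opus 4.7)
The plan is to mirror the structure of the earlier Invert lemmas (especially Invert Readbuf and the pair Invert Let/Invert ALet). First I would unfold the definition of $R_{p,lp}$: from $(H;h, \fplug{LE}{\elet{\_}{\ewritebuf{(b,n,[])}{n'}{lv}}{le}})\;R\;(S,V,ss)$ we obtain some minimal $n$ and an expression $le''$ with $(H;h, \fplug{LE}{\elet{\_}{\ewritebuf{(b,n,[])}{n'}{lv}}{le}}) \step^n (H;h, le'')$ and $(H;h, le'') = \ctolowc{(S,V,ss)} = (\mem(S), \unravel(S, V(\ctolow{(\normal{ss}{(p,V)})})))$. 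In particular $h$ is the topmost frame of $\mem(S)$, i.e.\ $S = S';(M,V',E)$, and $le'' = \unravel(S, V(\ctolow{(\normal{ss}{(p,V)})}))$.

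Next I would structurally analyse $\unravel(S, V(\ctolow{(\normal{ss}{(p,V)})}))$, matching it against the given plug $\fplug{LE}{\elet{\_}{\ewritebuf{(b,n,[])}{n'}{lv}}{le}}$, and argue that the innermost \lamstar term produced at the hole of $\unravel(S, \symhole)$ must come from back-translating the head statement of $\normal{ss}{(p,V)}$. Inspecting the cases of the $\ctolow$ function in Figure~\ref{fig:ctolow}, the only clause that can yield a head \lamstar expression of the form $\elet{\_}{\ewritebuf{e_1}{e_2}{e_3}}{\_}$ is the one translating $(\ewrite{e_1+e_2}{e_3};ss')$. Combined with the normalization rules of Figure~\ref{fig:normalize} (which evaluate each subexpression of a head \ewrite\ to a value in $(p,V)$), this forces $ss$ to begin with a statement of the form $\ewrite{e_1+e_2}{e_3};ss'$ whose evaluations in $(p,V)$ yield exactly $(b,n,[])$, $n'$, and $v$ with $\ctolowe{v}=lv$; hence $le = V(\ctolow{ss'})$ and $LE = \unravel(S,\symhole)$.

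To conclude, I would observe that no other \lamstar redex can reduce into the \ewritebuf\ head pattern that is already present; since the head is $\elet{\_}{\ewritebuf{(b,n,[])}{n'}{lv}}{le}$ with all three subterms already values, no $\step$ step can operate on the head without consuming the \ewritebuf\ itself (which would change the shape). Thus the intermediate reductions in the $R$ unfolding can only happen strictly inside $LE$, and by minimality of $n$ the top-level shape is preserved, giving the claimed equalities.

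The main obstacle will be ruling out the ``returning from a call'' alternative that appears in Invert Let and Invert ALet (where the head pattern is produced by popping a call frame with a returned value). Here no such alternative can arise: a call frame in $S$ is back-translated through $\ctolowE{E}$ to either $\elet{\_}{\symhole}{\_}$ or $\elet{x:t}{\symhole}{\_}$, whose holes are filled by the returned \cstar value, producing at most $\elet{\_}{v}{\_}$ or $\elet{x:t}{v}{\_}$ --- never a \ewritebuf. Checking this case explicitly (and similarly for intermediate block frames, which only insert $\epop\;le$) is what guarantees uniqueness of inversion and yields the single-case conclusion of the lemma, in contrast to the two-case conclusions of Invert Let and Invert ALet.
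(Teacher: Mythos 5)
The paper states Lemma~\ref{lemma-invert-writebuf} (like all the inversion lemmas in this section) without any proof, so your proposal is supplying a missing argument rather than shadowing an existing one. Your overall strategy is the right one: unfold Definition~\ref{def-R}, match the term against $\unravel(S, V(\ctolow{(\normal{ss}{(p,V)})}))$, observe that the only clause of $\ctolow$ in Figure~\ref{fig:ctolow} producing a $\kw{writebuf}$-headed let is the one for $\ewrite{e_1+e_2}{e_3};ss'$, and check that neither the continuations $\ctolowE{E}$ of call frames (whose holes are filled with returned values) nor the $\epop$ wrappers inserted by $\unravel$ for block frames can manufacture that pattern. Your explanation of why this lemma has a one-case conclusion while Invert Let/ALet need two is exactly the relevant observation.

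The gap is in your final step. You claim the $n$ intermediate reductions in the unfolding of $R$ ``can only happen strictly inside $LE$'' and that minimality then preserves the top-level shape. This is backwards, and if it were true it would sink the conclusion: if steps could fire inside $LE$, the term reached after $n$ steps would be $\fplug{LE'}{\cdots}$ for some reduced $LE'\neq LE$, and the matching would identify $\unravel(S,\symhole)$ with $LE'$ rather than with the original $LE$. What is actually true, and what you need, is that \emph{no} step other than \textsc{WriteBuf} itself is available: $LE$ is an evaluation context and $\elet{\_}{\ewritebuf{(b,n,[])}{n'}{lv}}{le}$ is already an atomic redex with all arguments values, so by unique decomposition every position inside $LE$ is frozen. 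Hence either $n=0$ (and the syntactic matching closes the proof on the nose), or the first of the $n$ steps is the \textsc{WriteBuf} step --- and you still owe an argument ruling the latter out, which ``minimality of $n$'' does not supply, since minimality merely selects the least witness. The intended argument is that the catch-up steps tolerated by $R$ are the silent administrative ones (\textsc{Subbuf}, \textsc{StructField}, \textsc{Proj} --- precisely the cases where the proof of Lemma~\ref{lemma-back-refine} keeps $C'=C$ and decrements the measure), whereas \textsc{WriteBuf} emits a $\symwrite$ event and mutates the heap, while Definition~\ref{def-R} requires the heap after the $n$ steps to still be $H;h$. Even that leaves a corner case ($lv$ equal to the value already stored, so the heap is literally unchanged), which is really a looseness in Definition~\ref{def-R} rather than a defect of your proof; but your write-up must at least make the ``unique redex, hence $n=0$'' step explicit, because that is where the whole lemma lives.
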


\begin{lemma}[Invert Writestruct] \label{lemma-invert-writestruct}
  If \\
  $(H;h, \fplug{LE}{\elet{\_}{\ewritestruct{(b,n,\ls{fd})}{lv}}{le}})\;R\;(S, V, ss)$, then $ss=(\ewrite{(b,n,\ls{fd})}{v};ss')$ and $le=V(\ctolow{ss'})$ and $lv=\ctolowe{v}$ and $LE=\unravel(S, \symhole)$.
\end{lemma}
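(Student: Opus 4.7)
The plan is to follow the same pattern as the preceding inversion lemmas (e.g., Invert Writebuf, Invert Readstruct), by unfolding the definition of $R_{p,lp}$ and inverting the back-translation $\ctolow{-}$. First, I would unfold Definition~\ref{def-R}: the hypothesis $(H;h, \fplug{LE}{\elet{\_}{\ewritestruct{(b,n,\ls{fd})}{lv}}{le}})\;R\;(S, V, ss)$ means that $H;h = \mem(S)$ and that the \lamstar expression equals $\unravel(S, V(\ctolow{\normal{ss}{(p,V)}}))$ up to the minimum number of silent \lamstar steps folded in. Because the outer evaluation context $LE$ exposes an $\ewritestruct$ redex at its hole, I can argue that this redex sits in the ``top'' frame processed by $\unravel$ (the remaining frames contribute only $\unravelframe$ wrappers, which produce outer $\elet{\_}{\symhole}{\cdot}$ or $\elet{\_}{\epop\;\symhole}{\cdot}$ shapes that are absorbed into $LE$), so $LE = \unravel(S, \symhole)$ and the head of $V(\ctolow{\normal{ss}{(p,V)}})$ must itself be an $\elet{\_}{\ewritestruct{\cdot}{\cdot}}{\cdot}$.

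Next I would inspect Figure~\ref{fig:ctolow} to find every clause whose output is an $\elet{\_}{\ewritestruct{\cdot}{\cdot}}{\cdot}$ form: the unique such clause is the one for $\ctolow{(\ewrite{e_1}{e_2};ss')}$, which applies precisely when the left-hand expression $e_1$ is \emph{not} a pointer-addition $e_1'+e_2'$ (the latter is handled by the writebuf clause listed above it, and the case split in Figure~\ref{fig:ctolow} is read top-to-bottom). Since $\normal{-}{(p,V)}$ in Figure~\ref{fig:normalize} only evaluates sub-expressions and does not rewrite the head statement shape, this forces $\normal{ss}{(p,V)} = (\ewrite{e_1}{v};ss')$ with $\eval{e_1}{(p,V)} = (b,n,\ls{fd})$ and $\eval{v}{(p,V)} = v$ (since $\normal{-}{(p,V)}$ has already reduced both operands to values). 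Using the minimality of the step count in $R$, there is no room for further silent \lamstar evaluation on either side, so $ss$ itself must already have the form $(\ewrite{e_1}{v};ss')$, and indeed the whole write takes \cstar{}-values of the form $(b,n,\ls{fd})$ and $v$. Rewriting with $\lowtoc/\ctolow$ gives $ss = (\ewrite{(b,n,\ls{fd})}{v};ss')$.

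Finally, from the back-translation equation for this clause, $\ctolow{(\ewrite{(b,n,\ls{fd})}{v};ss')} = \elet{\_}{\ewritestruct{(b,n,\ls{fd})}{\ctolowe{v}}}{\ctolow{ss'}}$, combined with applying the substitution $V$, I obtain $lv = \ctolowe{v}$ and $le = V(\ctolow{ss'})$, as required. The four claimed equalities then read off directly.

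The main obstacle I anticipate is the careful bookkeeping around $\unravel$ and the minimality condition in $R$: in principle the $\ewritestruct$ could appear buried under extra $\elet{\_}{\epop\;\symhole}{\cdot}$ wrappers contributed by non-top stack frames, and one must rule out that any of those wrappers ``own'' the redex by showing that the head of the generated \lamstar expression is always determined by the head of $\ctolow{\normal{ss}{(p,V)}}$ on the top frame (the other frames contribute only outer constructors). This is essentially the same argument as for Invert Writebuf, and I expect a short induction on the stack $S$ to suffice; it is the one place where one has to be careful not to confuse the intended redex with a syntactically similar sub-term inside a different frame.
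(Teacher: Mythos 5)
The paper states Lemma~\ref{lemma-invert-writestruct} (like all the other inversion lemmas in \S\ref{section-bisim}) without proof, so there is no official argument to compare against; your plan is a correct and natural way to discharge it, and it matches the evident intent: unfold Definition~\ref{def-R}, observe that the exposed $\ewritestruct$ redex with value operands can only take a \emph{labelled} step (so the minimal $n$ in $R$ is forced to $0$ and the configuration literally equals $\ctolowc{(S,V,ss)}$), and then invert $\unravel$ and the unique $\ctolow{}$ clause producing an $\ewritestruct$ head. Two small refinements: (i) you do not really need minimality, only that no silent step is available from a writestruct redex, which follows from determinism of \lamstar; and (ii) strictly speaking the inversion pins down $\normal{ss}{(p,V)}$ rather than $ss$ itself (and Figure~\ref{fig:normalize} is even missing an explicit clause for the struct-write statement), but the paper's own lemma statement shares this looseness with its siblings (e.g.\ Lemma~\ref{lemma-invert-writebuf}), so you are not introducing a new gap.
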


\begin{lemma}[Invert App] \label{lemma-invert-app}
  If $(H, \fplug{LE}{\elet{x:t}{f\;lv}{le}})\;R\;(S, V, ss)$, then $ss=(t\;x=f(v);ss')$ and $le=V(\ctolow{ss'})$ and $lv=\ctolowe{v}$ and $LE=\unravel(S, \symhole)$.
\end{lemma}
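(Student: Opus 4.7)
The plan is to invert the definition of $R_{p,lp}$. Unfolding, the hypothesis gives a minimal $n$ such that
$(H, \fplug{LE}{\elet{x:t}{f\;lv}{le}}) \step^n (H, \unravel(S, V(\ctolow{\normal{ss}{(p,V)}})))$,
with memory $H$ unchanged throughout. The first step I would take is to argue that $n = 0$. Among the reductions that preserve $H$ (Let, ALet, App, Proj, Subbuf, Structfield, IfT/F, Readbuf, Readstruct), any step acting at the head position $\elet{x:t}{f\;lv}{le}$ would have to be App, which rewrites that very application away — impossible, since the target expression visibly still exhibits $f\;lv$ at the head. Any step strictly inside the context $LE$ is blocked because the final context $\unravel(S,-)$ is built purely out of frame continuations of the shape $\elet{x:t}{\symhole}{-}$ and $\epop\;\symhole$, which admit no administrative redex until a value is supplied in the hole; minimality of $n$ then forces $n = 0$.

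With $n = 0$ in hand, I get the syntactic equation $\fplug{LE}{\elet{x:t}{f\;lv}{le}} = \unravel(S, V(\ctolow{\normal{ss}{(p,V)}}))$. The next step is a case analysis on $\normal{ss}{(p,V)}$ driven by the $\ctolow$ rules of Figure~\ref{fig:ctolow}. The only production whose head translation yields a let-binding whose bound expression is an application is $\ctolow{(t\;x=f(e);ss')} = \elet{x:t}{f\;\ctolowe{e}}{\ctolow{ss'}}$; every other production produces a let whose right-hand side is a $\ctolowe{-}$ expression (which by inspection cannot be an application), a structural construct ($\withframe$, $\memset$, $\eread$, $\ewrite$, $\enewbuf$, $\enewstruct$), or a naked value. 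Because normalization only evaluates the head expression of $ss$ without altering the outer statement shape, this forces $ss = (t\;x=f(e);ss')$ with $\eval{e}{(p,V)} = v$ and $\normal{ss}{(p,V)} = (t\;x=f(v);ss')$.

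Reading off the remaining equalities is then routine: matching $\unravel(S, V(\elet{x:t}{f\;\ctolowe{v}}{\ctolow{ss'}}))$ against $\fplug{LE}{\elet{x:t}{f\;lv}{le}}$ yields $LE = \unravel(S, \symhole)$, $lv = \ctolowe{v}$ (values are closed), and $le = V(\ctolow{ss'})$, which are exactly the conclusions. Observe that, unlike the Invert Let and Invert ALet lemmas, there is no second ``call-frame'' disjunct: a call-frame continuation back-translates to $\elet{x:t}{\symhole}{-}$, which plugged with a value produces $\elet{x:t}{lv}{-}$, never $\elet{x:t}{f\;lv}{-}$, so that branch is vacuous here.

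The main obstacle I expect is the justification of $n = 0$, since App is itself a silent, memory-preserving redex sitting at the very position we wish to preserve. The argument hinges on the observation that any App contraction at the head would destroy the literal occurrence of $f\;lv$, so a minimal derivation landing on a form that still exhibits $f\;lv$ in head position cannot have taken an App step there; and redexes lurking inside $LE$ are excluded by the stuck, value-awaiting shape of the continuation contexts produced by $\unravel(S, -)$. This head-position bookkeeping is essentially identical to what is needed in the sibling invert lemmas, and I would discharge it uniformly by a short induction on $n$ combined with a standard analysis of evaluation contexts.
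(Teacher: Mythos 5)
The paper states Lemma~\ref{lemma-invert-app} (like the other inversion lemmas) without any proof, so there is nothing to compare against line by line; your proposal is an attempt to fill a gap the authors left open. Your overall architecture is the natural one --- unfold $R$, force $n=0$, then case-analyse the productions of $\ctolow{\cdot}$ to isolate the Call clause --- and your case analysis is correct, as is the nice observation that the call-frame disjunct present in Lemmas~\ref{lemma-invert-let} and~\ref{lemma-invert-alet} is vacuous here because a plugged return value yields $\elet{x:t}{lv}{\cdot}$, never an application.

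The genuine gap is the justification of $n=0$, and you have correctly identified it as the crux but not actually discharged it. Your argument --- ``impossible, since the target expression visibly still exhibits $f\;lv$ at the head'' --- is circular: that the target has the form $\elet{x:t}{f\;\ctolowe{v}}{\cdot}$ is precisely the conclusion $ss=(t\;x=f(v);ss')$ you are trying to establish, so you may not assume it when ruling out $n\geq 1$. The non-circular worry is concrete: if $n\geq 1$, unique decomposition forces the first step to be \textsc{App}, after which the body $\withframe\;(\subst{y}{lv}{le_2})$ can in principle run to completion --- \textsc{WF}, the whole call, \textsc{Pop} --- returning the memory to $H$ and leaving $\fplug{LE}{\elet{x:t}{lv'}{le}}$, which is exactly the back-translation of a statement list beginning with a plain declaration $t\;x=e$ with $\eval{e}{(p,V)}=lv'$. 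Under the literal reading of $(H,le)\step^n(H,le')$ as constraining only the endpoint memories, this is an apparent counterexample to the lemma itself, with minimal $n>0$ and $ss$ not a call. To close the argument you must either (i) read the definition of $R$ as requiring every intermediate configuration to retain memory $H$ --- which kills the \textsc{WF} step immediately after \textsc{App}, leaving only $n=1$ to exclude, and $n=1$ fails because no clause of $\ctolow{\cdot}$ produces a \emph{named} let of a $\withframe$ --- or (ii) strengthen the lemma with the synchronization invariant actually maintained in the proof of Lemma~\ref{lemma-back-refine}. Your closing promise of ``a short induction on $n$'' does not by itself resolve this; the induction needs exactly one of these two ingredients to go through.
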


\begin{lemma}[Invert Withframe] \label{lemma-invert-withframe}
  If $(H, \fplug{LE}{\withframe\;le})\;R\;(S, V, ss)$, then $ss=\{ss_1\};ss_2$ and $le=V(\ctolow{ss_1})$ and $LE=\unravel(S, V(\ctolowE{(\symhole;ss_2)}))$.
\end{lemma}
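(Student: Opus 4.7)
The plan is to unfold the simulation relation $R_{p,lp}$, argue that the witnessing step count $n$ must be zero, and then perform a purely syntactic inversion on the back-translation.

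Unfolding Definition~\ref{def-R}, the hypothesis gives a minimal $n$ with $(H, \fplug{LE}{\withframe\;le}) \step_{lp}^n (H, le')$ and $(H, le') = \ctolowc{(S,V,ss)}$, where $\ctolowc{(S,V,ss)} = (\mem(S),\,\unravel(S, V(\ctolow{\normal{ss}{(p,V)}})))$. Because \lamstar evaluation contexts pin the redex at the hole and $\withframe\;le$ is not a value, the only available reduction from $\fplug{LE}{\withframe\;le}$ is rule WF, which pushes a fresh frame and steps to $(H;\{\}, \fplug{LE}{\epop\;le})$, extending the memory component to $H;\{\}$. Since the relation demands the memory component remain $H$ across all $n$ steps, I would conclude $n = 0$, leaving the purely syntactic equation $\fplug{LE}{\withframe\;le} = \unravel(S, V(\ctolow{\normal{ss}{(p,V)}}))$ together with $H = \mem(S)$.

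The remainder is a syntactic inversion of the right-hand side. Neither the function $\unravel$ nor its frame wrapper $\unravelframe$ (which only introduces contexts of the form $\ctolowE{E}$) ever produces a $\withframe$, and $\normal$ only reduces head C-expressions to values without altering the statement skeleton. Inspecting the cases of $\ctolow$ in Figure~\ref{fig:ctolow}, the only back-translation rule producing a $\withframe$ at an outer evaluation position is the block rule $\ctolow{(\{ss_1\};ss_2)} = \elet{\_}{\withframe\;\ctolow{ss_1}}{\ctolow{ss_2}}$. Hence $ss = \{ss_1\};ss_2$ for some $ss_1, ss_2$, and $V(\ctolow{ss}) = \elet{\_}{\withframe\;V(\ctolow{ss_1})}{V(\ctolow{ss_2})}$. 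Matching $\fplug{LE}{\withframe\;le}$ against $\unravel(S, \elet{\_}{\withframe\;V(\ctolow{ss_1})}{V(\ctolow{ss_2})})$ yields $le = V(\ctolow{ss_1})$ and $LE = \unravel(S, \elet{\_}{\symhole}{V(\ctolow{ss_2})})$; using $\ctolowE{(\symhole;ss')} = \elet{\_}{\symhole}{\ctolow{ss'}}$, this is exactly $\unravel(S, V(\ctolowE{(\symhole;ss_2)}))$, as claimed.

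The main obstacle is the step that rules out every other origin of the outer $\withframe$. This requires a short structural lemma characterizing, for each possible head statement of $ss$, the outer evaluation shape of $\unravel(S, V(\ctolow{ss}))$, and verifying that no head statement other than a block and no stack frame can place a $\withframe$ at the next-to-reduce position. This enumeration parallels the reasoning implicit in the other invert lemmas of this appendix, and in the mechanization effort I would factor it out as a single auxiliary ``head-shape'' lemma shared across all invert cases.
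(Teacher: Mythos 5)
Your two-stage plan---force the catch-up count $n$ in Definition~\ref{def-R} to zero, then invert the back-translation syntactically---is the natural one, and since the paper states Lemma~\ref{lemma-invert-withframe} (like all of its inversion lemmas) without proof, there is no official argument to measure it against. The syntactic half of what you wrote is sound: the block rule is indeed the only clause of $\ctolow{\cdot}$ that manufactures a $\withframe$, and your observation that a $\withframe$ buried under a pending let-binding or inside a stored continuation $\ctolowE{E}$ cannot sit at the hole of the evaluation context is exactly the ``head-shape'' enumeration these lemmas all need.

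The gap is in the step that forces $n=0$. You justify it by asserting that ``the relation demands the memory component remain $H$ across all $n$ steps,'' but Definition~\ref{def-R} constrains only the endpoints: it asks for $(H,le)\step^{n}(H,le')$, and $\step^{n}$ is ordinary $n$-fold composition, so intermediate memories are unconstrained. Under that literal reading the lemma itself fails. Take $S=[]$, $ss=[e]$ with $\eval{e}{(p,V)}=v$, and the \lamstar configuration $([],\withframe\;v)$: rules \textsc{WF} and \textsc{Pop} give $([],\withframe\;v)\step([];\{\},\epop\;v)\step([],v)$, two silent steps that push a frame and pop it again, returning to the original memory, and $([],v)=\ctolowc{([],V,[e])}$ since $\normal{[e]}{(p,V)}=[v]$ and $\unravel([],v)=v$. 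So the hypothesis of the lemma holds with minimal $n=2$, yet $[e]$ is not of the form $\{ss_1\};ss_2$. Your $n=0$ claim is therefore not a consequence of the definition as written but a strengthening of it---one the lemma genuinely needs. To close the gap you must either (a) record explicitly that the catch-up steps in $R$ are restricted to the silent, memory-preserving administrative reductions (\textsc{Subbuf}, \textsc{StructField}, \textsc{Proj}), which is all that Lemma~\ref{lemma-eq-normal} ever produces and is almost certainly the intended reading, or (b) prove separately that a configuration whose head redex is a $\withframe$ cannot be related to its target through a push-then-pop detour. Either fix is short, but one of them has to appear; as written, the pivotal step of your proof rests on a property that Definition~\ref{def-R} does not state.
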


\begin{lemma}[Invert If] \label{lemma-invert-if}
  If $(H, \fplug{LE}{\eif{n}{le_1}{le_2}})\;R\;(S, V, ss)$, then $ss=\eif{e}{ss_1}{ss_2};ss'$ and $\eval{e}{(p,V)}=n$ and $le_i=V(\ctolow{ss_i})$ ($i=1,2$) and $LE=\unravel(S, V(\ctolowE{(\symhole;ss')}))$.
\end{lemma}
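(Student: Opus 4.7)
The plan is to unfold the definition of relation $R_{p,lp}$ and then invert on the syntactic form of the back-translated \lamstar expression, using the structural uniqueness of the back-translation $\ctolow{}$ to pin down the shape of $ss$.

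First I unfold the hypothesis: $(H, \fplug{LE}{\eif{n}{le_1}{le_2}}) \; R \; (S, V, ss)$ gives a minimal $m$ and an expression $le^*$ such that $(H, \fplug{LE}{\eif{n}{le_1}{le_2}}) \step^m (H, le^*)$ and $le^* = \unravel(S, V(\ctolow{(\normal{ss}{(p,V)})}))$. Since the condition $n$ is already a value and the \lamstar rules {\sc IfT}/{\sc IfF} are labelled (not silent administrative moves like {\sc Subbuf}, {\sc Proj}, {\sc StructField}, or {\sc Pop}/{\sc WF}), minimality of $m$ forces every step in the reduction sequence to fire inside $LE$, strictly reducing the ``depth'' of pending administrative subterms. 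Hence $le^*$ still has the form $\fplug{LE^*}{\eif{n}{le_1}{le_2}}$ for some reduced context $LE^*$ with the same hole. This is the same normal-form argument used in the preceding Invert lemmas.

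Next I inspect Figure~\ref{fig:ctolow} to find the productions of $\ctolow{}$ (and of $\unravel$) that can yield an $\eif$ in evaluation position. The only rule whose right-hand side contains an $\eif$ is $\ctolow{(\eif{e}{ss_1}{ss_2};ss_3)} = (\elet{\_}{\eif{\ctolowe{e}}{\ctolow{ss_1}}{\ctolow{ss_2}}}{\ctolow{ss_3}})$, where the $\eif$ sits under an anonymous let-binding; correspondingly, the only evaluation context former of \lamstar that exposes such a position is $\elet{\_}{LE}{le}$, which is indeed in the grammar. Neither \unravelframe nor the other \ctolow clauses introduce a bare $\eif$ at an evaluation hole. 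Tracing the anonymous-let wrapping of the $\eif$ through \unravel (which only prepends outer continuations), I conclude that the topmost statement of the \emph{current} frame's statement list of $ss$, after normalization, must be $\eif{e}{ss_1}{ss_2};ss'$ for some $e$, $ss_1$, $ss_2$, $ss'$; there is no way to ``split'' the $\eif$ across a frame boundary.

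Finally, the normalization rule for conditionals only rewrites the head scrutinee to its value, so $\normal{(\eif{e}{ss_1}{ss_2};ss')}{(p,V)} = \eif{\eval{e}{(p,V)}}{ss_1}{ss_2};ss'$. Matching this against $\fplug{LE^*}{\eif{n}{le_1}{le_2}}$ under $\unravel(S, V(-))$ yields $\eval{e}{(p,V)} = n$ and $le_i = V(\ctolow{ss_i})$ for $i=1,2$, while the surrounding context $LE$ is exactly the back-translation of the anonymous-let frame wrapped by all outer stack frames: $LE = \unravel(S, V(\ctolowE{(\symhole;ss')}))$. The main obstacle is making the minimality argument in the first paragraph fully rigorous, because one must rule out that any of the administrative reductions could expose or consume the $\eif$; this is handled by induction on the evaluation context $LE$, noting that the image of \unravel only produces $\eif$ positions that are already at the hole of an $\elet{\_}{LE}{\cdot}$ former and hence stable under administrative reduction steps.
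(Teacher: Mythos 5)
The paper states Lemma~\ref{lemma-invert-if} without proof (as it does for the whole battery of inversion lemmas feeding Lemma~\ref{lemma-back-refine}), so there is nothing to compare against except the lemma's intended use; your overall strategy --- unfold $R$, argue that the conditional redex coincides with the back-translation, then invert $\ctolow{}$ and $\unravel$ --- is the natural and correct one, and your third and fourth paragraphs (uniqueness of the $\eif$-producing clause of $\ctolow{}$, the frame-boundary check, the final matching) are sound. One caveat you silently paper over: Figure~\ref{fig:normalize} in fact has no clause for conditionals, so the equation $\normal{(\eif{e}{ss_1}{ss_2};ss')}{(p,V)} = \eif{\eval{e}{(p,V)}}{ss_1}{ss_2};ss'$ that you rely on is an assumption about an omitted rule, not something you can read off the figure; it is worth flagging explicitly, since without \emph{some} such clause $\ctolowc{}$ is undefined on these configurations and $R$ could not hold at all.

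The genuine gap is in your first paragraph. You claim that minimality of $m$ ``forces every step in the reduction sequence to fire inside $LE$,'' concluding only that $le^*$ has the form $\fplug{LE^*}{\eif{n}{le_1}{le_2}}$ for some \emph{reduced} context $LE^*$. This is wrong on two counts. First, \lamstar reduction is deterministic and evaluation contexts contain no residual redexes (everything left of the hole is already a value), so the unique redex of $\fplug{LE}{\eif{n}{le_1}{le_2}}$ is the conditional itself; nothing can ``fire inside $LE$.'' Second, and more damagingly, if $LE^*$ could differ from $LE$ your final matching would only establish $LE^* = \unravel(S, V(\ctolowE{(\symhole;ss')}))$, not the claimed equation for the original $LE$. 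The correct argument is simpler and stronger: the only applicable rules are {\sc IfT}/{\sc IfF}, which emit $\brt$/$\brf$, whereas the catch-up steps counted by $m$ in Definition~\ref{def-R} must be silent administrative steps (if labelled steps were admitted, the lemma would be false outright --- take $ss = ss_1;ss'$ with the branch already consumed on the \cstar side, related via one {\sc IfT} step). Hence $m = 0$, $\fplug{LE}{\eif{n}{le_1}{le_2}}$ equals the back-translation \emph{on the nose}, and the rest of your argument goes through with $LE^* = LE$. You should replace the ``steps fire inside $LE$'' reasoning with this $m=0$ observation, and note explicitly that the lemma depends on reading the $\step^n$ in $R$'s definition as label-free.
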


\begin{lemma}[Invert Pop] \label{lemma-invert-pop}
  If $(H;h, \fplug{LE}{\epop\;lv})\;R\;(S, V, ss)$, then either $ss=e$ and $\eval{e}{(p,V)}=v$ and $\ctolowe{v}=lv$ and $S=S';(M, V', \symhole;ss')$ and $LE=\unravel(S', V'(\ctolowE{\symhole;ss'}))$, or $ss=\ereturn{e}$ and $\eval{e}{(p,V)}=v$ and $\ctolowe{v}=lv$ and $S=S';(M,V',E)$ and $LE=\unravel(S',V'(\ctolowE{E}))$.
\end{lemma}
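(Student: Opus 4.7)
The plan is to unfold the definition of the relation $R_{p,lp}$ and trace back which subterm of the unravelled \cstar image corresponds to the $\epop\;lv$ redex in the \lamstar expression. By hypothesis there exist a minimal $n$ and $le' = \unravel(S, V(\ctolow{\normal{ss}{(p,V)}}))$ with $(H;h,\fplug{LE}{\epop\;lv})\step^n_{lp}(H;h,le')$ and $\mem(S) = H;h$. The decisive observation is that the memory $H;h$ is invariant along these $n$ steps, so no memory-altering rule may fire: in particular WF and Pop are both excluded, together with NewBuf, NewStruct, WriteBuf and WriteStruct. All remaining rules (App, Let, ALet, Proj, IfT, IfF, Subbuf, StructField, ReadBuf, ReadStruct) preserve both the multiset of $\withframe$ occurrences and the multiset of $\epop$ occurrences, so the $\epop\;lv$ redex on the left is in structural bijection with some $\epop$ subterm on the right.

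I will then inspect the definition of $\ctolow$ clause by clause and verify that $\ctolow$ never produces $\epop$ directly (it produces $\withframe$ for explicit source blocks, but such terms are preserved by the exclusion of WF). Hence every $\epop$ in $le'$ must be contributed by a block-frame clause of $\unravelframe$, and under the convention that the topmost stack frame yields the innermost wrapping in the foldl, the innermost such $\epop$ arises from the topmost block frame of $S$. Since outer $\epop$s always have as their argument a term containing the next $\epop$ inward---never a plain value---the particular $\epop\;lv$ in our hypothesis must occupy the innermost position. A case split on the shape of $ss$ then determines when the corresponding innermost argument $V(\ctolow{\normal{ss}{(p,V)}})$ can be a \lamstar value: the only admissible shapes are $ss = e$ (yielding Case 1, with the block continuation form $\symhole;ss'$ forced by the shape of the Block rule that spawned the frame) and $ss = \ereturn{e}$ (yielding Case 2, whose more permissive $E$ reflects that Ret and RetBlk unwind from either a call or a block frame). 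In both cases $v = \eval{e}{(p,V)}$ and $lv = \ctolowe{v}$ are immediate from the singleton clauses of $\normal$ and $\ctolow$, and matching the outer context of $\fplug{LE}{\epop\;lv}$ against $\unravelframe$ applied to the topmost frame reads off the stated formulas for $S$ and $LE$.

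The main obstacle will be the bookkeeping that justifies uniqueness of this match: one must verify, as a mild inverse-of-compile property paralleling Lemma \ref{lemma-lowtoc-ctolow}, that no other shape of $ss$ or $S$, and no sneaky rewriting during the $n$ intermediate normalization steps, could manufacture an $\epop\;lv$ redex at some other location. This reduces to a structural induction over $S$ combined with a clause-by-clause inspection of $\ctolow$ and $\ctolowE$, using the fact that $\normal$ only replaces sub-expressions by their $\eval$ images and thus cannot introduce spurious $\epop$s. I do not expect any new conceptual difficulty beyond this finite case enumeration, which closely mirrors the pattern of the invert-let, invert-newbuf, invert-readbuf and invert-withframe lemmas already handled in the appendix.
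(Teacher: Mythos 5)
The paper states this lemma, like all the other inversion lemmas in the appendix, with no proof at all, so there is no official argument to measure you against; on its own terms, your second paragraph is the right syntactic analysis: read the nesting structure off $\unravelframe$, note that $\ctolow{(\cdot)}$ and $\ctolowE{(\cdot)}$ never emit $\epop$, identify the unique $\epop$ whose argument can be a value with the topmost block frame of $S$, and case-split on when $V(\ctolow{(\normal{ss}{(p,V)})})$ is a value. The genuine gap is in your first paragraph. The definition of $R_{p,lp}$ only equates the \emph{endpoint} memories of the $n$-step reduction; it says nothing about intermediate configurations, so ``the memory $H;h$ is invariant along these $n$ steps'' is not an available hypothesis and cannot be used to exclude \textsc{Pop}, \textsc{WF}, \textsc{NewBuf}, etc. Worse, excluding \textsc{Pop} is self-defeating: by determinism of \lamstar (Lemma~\ref{lemma-lowstar-deter}), $\epop\;lv$ is the unique redex of $\fplug{LE}{\epop\;lv}$, so if $n\geq 1$ the \emph{only} rule that can fire first is \textsc{Pop} itself, which immediately shrinks the memory to $H$. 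The conclusion you actually need is $n=0$, i.e.\ that $\fplug{LE}{\epop\;lv}$ already equals $\unravel(S,V(\ctolow{(\normal{ss}{(p,V)})}))$; once that is established your multiset bijection over intermediate steps is moot --- which is just as well, since it is false as stated (rule \textsc{App} substitutes in a function body of the form $\withframe\;le$, and \textsc{IfT}/\textsc{IfF} discard a branch, so the multiset of $\withframe$ occurrences is not preserved by the ``remaining rules'').

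To actually force $n=0$ you must rule out the degenerate reduction in which \textsc{Pop} fires, a later \textsc{WF} pushes an empty frame, and \textsc{NewBuf}/\textsc{NewStruct} rebuild a frame equal to $h$ (the freshness side condition does not forbid reusing $h$'s block identifiers once they have been popped); minimality of $n$ alone does not exclude this. Two repairs work: either strengthen (or explicitly read) the definition of $R$ so that the $n$ steps are the silent, memory-preserving administrative reductions (\textsc{Subbuf}, \textsc{StructField}, \textsc{Proj}, \textsc{Let}/\textsc{ALet} of values), which is how $R$ is in fact used throughout the proof of Lemma~\ref{lemma-back-refine}; or count $\epop$ occurrences, observing that the target $\unravel(S,\cdot)$ contains exactly one spine-position $\epop$ per block frame of $S$, hence exactly $|H;h|$ of them, while a \textsc{Pop} step strictly decreases that count and only \textsc{WF} can restore it. Finally, your case enumeration should also account for $ss=[]$, since $\ctolow{[]}=()$ is likewise a value; the paper's own statement omits that case, so you inherit rather than introduce the omission, but a complete proof has to discharge it.
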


\begin{lemma}[Low2C-C2Low] \label{lemma-lowtoc-ctolow}
  If $\lowtoc{le}=ss;e$, then $\ctolow{(ss;\ereturn\;e)}=le$.
\end{lemma}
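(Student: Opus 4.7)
The plan is to prove the lemma by structural induction on $le$, exploiting the fact that the compilation rules in Figure~\ref{fig:lowtoc} and the back-translation rules in Figure~\ref{fig:ctolow} are organized in one-to-one correspondence: each syntactic form of $le$ (other than a pure value-or-expression) is compiled by one distinguished rule producing a specific leading \cstar statement, and $\ctolow{\cdot}$ pattern-matches on exactly that statement to recover the original constructor. So for each inductive case I expect a direct ``apply the forward rule, then apply the backward rule'' calculation that reduces to the inductive hypothesis on the tail.

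First I would establish an auxiliary fact: $\ctolowe{\lowtoce{le}}=le$ for every $le$ in the value/pure-expression fragment ($n$, $()$, $x$, locations, immutable records, record projections, $\esubbuf{\cdot}{\cdot}$, $\estructfield{\cdot}{\fd}$). This goes by a short induction on $le$, using the matching pairs of rules in Figures~\ref{fig:lowtoc} and~\ref{fig:ctolow}. This sub-lemma handles the base case of the main induction, where $\lowtoc{le}=\lowtoce{le}$, so $ss=[]$ and the goal is $\ctolow{[\ereturn\;\lowtoce{le}]}=le$, which follows from the sub-lemma together with the convention that a trailing $\ereturn$ in a function-body statement list back-translates the same way as a trailing expression statement (matching the convention used in $\ctolowd$, where $\ctolow{(ss;e)}$ is what is expected).

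For the inductive cases I would go through each syntactic form of $le$: the compound binders $\elet{x:t}{le_1}{le_2}$, $\elet{\_}{le_1}{le_2}$, $\elet{x:t}{f\;le_1}{le_2}$, the effectful operations $\elet{x}{\enewbuf{n}{(le_1:t)}}{le_2}$, $\elet{x}{\enewstruct{(le_1:t)}}{le_2}$, the read/write forms on buffers and structs, the $\withframe\;le_1$ form, and the conditional. In every case, applying the forward rule yields a \cstar statement list of the shape $s_0;\lowtoc{le_2}$, hence (after appending $\ereturn\;e$ at the tail) a list $s_0;ss';\ereturn\;e$ where $\lowtoc{le_2}=ss';e$. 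The corresponding back-translation rule fires on $s_0$, reconstructs the outer constructor from the sub-expressions via the auxiliary fact on $\lowtoce$, and delegates the tail to $\ctolow{(ss';\ereturn\;e)}$, which equals $le_2$ by the inductive hypothesis. The conditional is slightly different because both branches are themselves arbitrary \lamstar expressions; there I would appeal to the induction hypothesis separately on each branch, after observing that the compilation of a conditional embeds both branches via $\lowtoc{\cdot}$ producing statement lists that are recovered verbatim by the matching $\ctolow$ clause.

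The main obstacle I anticipate is purely bookkeeping: the lemma slides a $\ereturn\;e$ across the tail of the statement list, while the inductive hypothesis naturally delivers information about the entire sequence $ss;e$ (or $ss;\ereturn\;e$) only at the outermost level. To keep the induction clean I would strengthen the statement to a two-pronged form: (i) if $\lowtoc{le}=ss;e$ then $\ctolow{(ss;\ereturn\;e)}=le$, and (ii) if $\lowtoc{le}=ss;e$ then $\ctolow{(ss;e)}=le$ as well (the latter being what $\ctolowd$ actually uses). Proving both simultaneously avoids any ambiguity about whether the trailing expression is under a $\kw{return}$ marker and makes the step case transparent, since the induction on $le_2$ can appeal to whichever form is convenient. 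Everything else is a routine check that no two forward-translation rules can produce the same leading statement, which is immediate by inspection of Figure~\ref{fig:lowtoc}.
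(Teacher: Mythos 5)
The paper states this lemma in the appendix without any accompanying proof, so there is no official argument to compare against; your induction is the canonical one and almost certainly what the authors intend. The architecture is right: the sub-lemma $\ctolowe{\lowtoce{le}}=le$ on the pure fragment, structural induction on $le$, peeling off the head statement produced by the unique applicable clause of Figure~\ref{fig:lowtoc} and matching it against the corresponding clause of Figure~\ref{fig:ctolow}, with the induction hypothesis discharging the continuation. Your two-pronged strengthening covering both $\ctolow{(ss;\ereturn\;e)}$ and $\ctolow{(ss;e)}$ is also a genuine fix rather than mere bookkeeping: Figure~\ref{fig:ctolow} has no clause for a trailing $\kw{return}$, even though both the lemma statement and its use in the \textsc{App} case of Lemma~\ref{lemma-back-refine} apply $\ctolow{\cdot}$ to a list ending in $\ereturn\;e$, so some such convention has to be made explicit.

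Two points need more care. First, your closing claim that ``no two forward-translation rules can produce the same leading statement'' is not immediate by inspection: $\elet{x}{\enewbuf{1}{(le_1:\tau)}}{le_2}$ and $\elet{x}{\enewstruct{(le_1:\tau)}}{le_2}$ with $\tau$ a struct type compile to the identical prefix $\earray{\tau}{x}{1};\memset{x}{1}{e}$, and the two $\ctolow{\cdot}$ clauses for that prefix overlap, so the back-translation cannot recover both source forms and the lemma fails for one of them as literally stated. It only holds under an extra convention (e.g.\ that $\kw{newbuf}$ is never used at struct element type, which is consistent with the paper's restriction that arrays of mutable structures require explicit indirection); you should state whichever disambiguation you rely on. Second, the cases $\withframe\;le_0$ and $\eif{le_0}{le_1}{le_2}$ are vacuous for this lemma: their translations are a single block statement or a single conditional statement, never a list ending in an expression, so the hypothesis $\lowtoc{le}=ss;e$ cannot hold, and your uniform ``$s_0;\lowtoc{le_2}$'' shape claim does not literally apply to them. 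That costs you nothing, but the separate treatment you sketch for the conditional---appealing to the induction hypothesis on each branch---would require yet another strengthening (a round-trip fact for statement lists that do not end in an expression) which you have not stated; it is cleaner to observe that these cases are simply impossible under the hypothesis.
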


\section{From \cstar to CompCert C and beyond}

To further back our claim that KreMLin offers a practical yet
trustworthy way to preserve security properties of \fstar programs down to
the executable code, we have to demonstrate that security guarantees
can be propagated from \cstar down to assembly.

Our idea here is to use the CompCert verified C compiler
\cite{Leroy-Compcert-CACM,compcert-url}. CompCert formally proves the
preservation of functional correctness guarantees from C down to
assembly code (for x86, PowerPC and ARM platforms.)

\ignore{
\subsection{Side channels and verified compilation}

Regarding side-channel security guarantees, one has to be careful
because the compiler may very well introduce additional memory
accesses due to register allocation. Thus, although we prove the
preservation of the sequences of memory accesses from \lowstar to \cstar, as
well as the preservation of branchings in the program counter model,
it may very well be the case that compiled code accesses many more
\emph{spilling locations} in one secret-controlled branch (PW: but
there is no secret-controlled branch (TR: you are right in the source,
but experiments should be performed to show that the compiler does not
introduce any)) than in another, yielding potentially different cache
behaviors, which an attacker could exploit to deduce some secrets.

To mitigate this issue, there are several solutions. One by Barthe et
al. \cite{barthe-ccs2014}, consists in performing a static analysis on
the Mach code (after most CompCert passes, but just before assembly
generation), with alias analysis to determine the sequence of memory
accesses and prove that they do not change across different
secret-controlled branches. This way, the security property is
guaranteed, actually turning CompCert into a certifying compiler to
this respect. That work has been extended to a full-scale LLVM static
analysis in \cite{almeida-usenix2016}.

Another solution (not explored yet, due to lack of time) would be to
directly make all intermediate passes annotate the generated code with
hints about memory accesses they are adding, and use those annotations
to introduce a pass (just before assembly generation) adding unused
spilling memory accesses to guarantee that those locations are always
accessed across all secret-controlled branches. This way, we could
actually avoid performing a full-scale static analysis on the code, by
distinguishing between source-level memory accesses to stack-allocated
variables (which are exactly preserved) and additional spilling memory
accesses. In practice, this approach could avoid the risk of static
analysis failures.

We claim that the latter approach (although not modular, since it
would require pervasive modifications across all passes of CompCert)
is feasible because all additional memory accesses introduced by
CompCert are at constant offsets within the stack block of the
function; and we claim that those memory accesses are not leaked
through function calls, neither directly nor by pointer arithmetics
(if they were, then they would correspond to no valid operation in the
source program.)

However, generating Clight code (as we describe below) and then using
Barthe et al.'s instrumented CompCert \cite{barthe-ccs2014} or Almeida
et al.'s instrumented LLVM \cite{almeida-usenix2016} is already
practical enough and, at least at the metatheoretical level, already
gives enough guarantees about the preservation of side-channel
security properties.
}

\subsection{Reminder: CompCert Clight}

CompCert Clight \cite{Blazy-Leroy-Clight-09} is a subset of C with no
side effects in expressions, and actual byte-level representation of
values. Syntax in Figure~\ref{fig:clight-syntax}. Semantics
definitions in Figure~\ref{fig:clight-semantics-defs}. Evaluation of
expressions in Figure~\ref{fig:clight-expr}. Small-step semantics in
Figure~\ref{fig:clight-stmts-reduction}.

The semantics of a Clight program is given by the return value of its
$\kw{main}$ function called with no arguments.\footnote{CompCert does
  not support semantics preservation with system arguments.}  Thus,
given a Clight program $p$, the initial configuration of a CompCert
Clight transition from $p$ is $(\{ \}, [], [], [], \kw{int} ~ r =
\kw{main}())$, and a configuration is final with return value $i$ if,
and only if, it is of the form $(\{ \}, \_, \_V, \_, [])$ with $\_V(r)
= i$.

\newcommand{\claddr}[1]{\kw{\&}{#1}}
\newcommand{\clderef}[1]{\kw{*}{#1}}
\newcommand{\clread}[3]{{#1} =_{#2} \left\lbrack{#3} \right\rbrack}
\newcommand{\clwrite}[3]{{#2} =_{#1} {#3}}
\newcommand{\clvalloc}{\kw{valloc}}
\newcommand{\clalloc}{\kw{Alloc}}
\newcommand{\clunkn}{\kw{unkn}}
\newcommand{\clannot}{\kw{annot}}

\begin{figure}[h]
\begin{center}
  \begin{tabularx}{\columnwidth}{rlR}
    $p ::= $ & & program \\
      & $\ls d$                      & series of declarations \\[1.2mm]

    $d ::= $ & & declaration \\
    & $\ecfun fxtt{\ls{ad},ss}$                & top-level function \\
    & & with stack-allocated local variables $\ls{ad}$ \\
    & ${ad}$ & top-level value \\ [1.2mm]
    
    ${ad} ::= $ & & array declaration \\
    & $\earray{t}{x}{n}{}$               & uninitialized global variable \\
    [1.2mm]

    $ss ::= $ & & statement lists \\
    & $\ls{s}$                  &  \\
    [1.2mm]
    
    $s ::= $ & & statements \\
    & $\_x = e$ & assign rvalue to a non-stack-allocated local variable \\
    & $\_x = {\eapply fe}$                  & application \\
    & $\clread{\_x}{t}{e}$               & memory read from lvalue \\
    & $\clwrite tee$               & memory write rvalue to lvalue \\
    & $\clannot(\symread,t,e)$ & annotation to produce read event \\
    & $\clannot(\symwrite,t,e)$ & annotation to produce write event \\
    & $\eif{e}{ss}{ss}$ & conditional \\
    & $\{\stmts\}$ &  block \\
    & \ereturn e & return \\
    [1.2mm]
    
    $e ::= $ & & expressions \\
    & $n$    & integer constant (rvalue) \\
      & $x$                           & stack-allocated variable (lvalue) \\
      & $\_x$                           & non-stack-allocated variable (rvalue) \\ 
     & $e_1+_te_2$                        & pointer add (rvalue, $e_1$ is a rvalue pointer to a value of type $t$ and $e_2$ is a rvalue $\kw{int}$) \\
    & $e._tfd$ & struct field projection (lvalue, $e$ lvalue) \\
    & $\claddr{e}$ & address of a lvalue (rvalue, $e$ lvalue) \\
    & $\clderef{e}$ & pointer dereference (lvalue, $e$ rvalue)
  \end{tabularx}
\end{center}
\caption{Clight Syntax}
\label{fig:clight-syntax}
\end{figure}

Just like C, there are two ways to evaluate Clight expressions: in
lvalue position or in rvalue position. Roughly speaking, in an
expression assignment $\clwrite t{e_l}{e_r}$, expression $e_l$ is said to be at
lvalue position and thus must evaluate into a memory location, whereas
$e_r$ is said to be at rvalue position and evaluates into a value
(integer or pointer to memory location). The operation $\claddr{e}$
takes an lvalue $e$ and transforms it into a rvalue, namely the
pointer to the memory location $e$ designates as an
lvalue. Conversely, $\clderef{e}$ takes an rvalue $e$, which must
evaluate to a pointer, and turns it into the corresponding memory
location as an lvalue.

\paragraph{Memory accesses in the trace}
To account for memory accesses in the trace, we make each statement
perform at most one memory access in our generated Clight code. Then,
we prepend each such memory access statement with a \emph{built-in
  call}, a no-op annotation $\clannot(\mathit{ev},t,e)$ whose
semantics is merely to produce the corresponding memory access event
$\mathit{ev}(b, n)$ in the trace, where $e$ evaluates to the pointer
to offset $n$ within block $e$.

\begin{figure}[h]
\begin{small}
\begin{center}
  \begin{tabularx}{\columnwidth}{rlR}
    $v ::= $ & & values \\
    & $n$    & integer constant \\
    & $(b, n)$                        & memory location \\
    & $\clunkn{}$                        & defined but unknown value \\
    [1.2mm]

    ${vf} ::= $ & & value fragments \\
    & $n$    & byte constant \\
    & $((b,n), n')$ & $n'$-th byte of pointer value $(b,n)$ \\
    & $\clunkn{}$                        & defined but unknown value fragment \\
    [1.2mm]
    
    $V ::=$ & & stack-allocated variable assignments \\
    & $x\sympartial b$ & map from variable to memory block identifier \\
    [1.2mm]

    $\_V ::=$ & & non-stack-allocated variable assignments \\
    & $\_x\sympartial v$ & map from variable to value \\
    [1.2mm]

    $E ::=$ & & evaluation ctx (plug expr to get stmts) \\
    & $\symhole; ss$ & discard returned value \\
    & $\_x=\symhole; ss$ & receive returned value \\
    [1.2mm]

    $F ::=$ & & frames \\
    & $(V, \_V, E)$ & stack frame \\
    [1.2mm]
    
    $S ::=$ & & stack \\
    & $\ls{F}$ & list of frames \\
    [1.2mm]

    $M ::=$ & & memory \\
    & $(b, n) \sympartial vf$ & map from block id and offset to value fragment \\
    [1.2mm]

    $C ::=$ & & configuration \\
    & $(S, V, \_V, M, ss)$ &  \\
    [1.2mm]
  \end{tabularx}
\end{center}
\end{small}
\caption{Clight Semantics Definitions}
\label{fig:clight-semantics-defs}
\end{figure}

\newcommand{\cllv}[2]{\kw{lv}({#1}, {#2})}
\newcommand{\clrv}[2]{\kw{rv}({#1}, {#2})}

\begin{figure*}[h]
\begin{scriptsize}
\begin{flushleft}
  \fbox{$\cllv{e}{(p, V, \_V)}=(b,n)$} \quad \text{and} \quad
  \fbox{$\clrv{e}{(p, V, \_V)}=v$}
\end{flushleft}
\begin{mathpar}
\inferrule* [Right=Var]
{
  V(x) = b
}
{
  \cllv{x}{(p,V, \_V)}=(b, 0)
} 

\quad
\quad
\quad

\inferrule* [Right=GVar]
{
  x \not\in V \\
  p(x) = b
}
{
  \cllv{x}{(p,V, \_V)}=(b, 0)
} 

\\

\inferrule* [Right=PtrFd]
{
  \cllv{e}{(p,V, \_V)} = (b, n) \\
}
{
  \cllv{e._t{fd}}{(p,V, \_V)}=(b, n + \textsf{offsetof}(t, fd))
}

\quad
\quad
\quad
\quad
\quad

\inferrule* [Right=PtrDeref]
{
  \clrv{e}{(p,V,\_V)} = (b,n)
}
{
  \cllv{\clderef{e}}{(p,V,\_V)} = (b,n)
}

\\

\inferrule* [Right=RVar]
{
  \_V(\_x) = v
}
{
  \clrv{\_x}{(p,V,\_V)}=v
} 

\quad
\quad
\quad

\inferrule* [Right=PtrAdd]
{
  \clrv{e_1}{(p,V,\_V)} = (b, n) \\
  \clrv{e_2}{(p,V,\_V)} = n' \\
}
{
  \clrv{e_1+e_2}{(p,V)}=(b, n+n')
} 

\quad
\quad
\quad
\quad

\inferrule* [Right=AddrOf]
{
  \cllv{e}{(p,V,\_V)}= (b, n)
}
{
  \clrv{\claddr{e}}{(p,V,\_V)}= (b,n)
}
\end{mathpar}
\end{scriptsize}
\caption{\label{fig:clight-expr}Clight Expression Evaluation}
\end{figure*}

\begin{figure*}[h]
\begin{scriptsize}
\begin{flushleft}
  \fbox{$p \vdash C\step_\olabel C'$}
\end{flushleft}
\begin{mathpar}
\inferrule* [Right=Read]
{
  \cllv{e}{(p, V, \_V)} = (b, n) \\
  \symget(M, (b, n), \kw{sizeof}(t)) = v
}
{
  p \vdash (S, V, \_V, M, \clread{\_x}{t}{e}; ss) \step (S, V, \_V[\_x\mapsto v], M, ss)
} 

\quad
\quad
\quad
\quad

\\

\inferrule* [Right=Write]
{
  \cllv{e_1}{(p, V, \_V)} = (b, n) \\
  \clrv{e_2}{(p, V, \_V)} = v \\
  \symset(M, (b, n), \kw{sizeof}(t), v) = S'
}
{
  p \vdash (S, V, \_V, M, \clwrite{t}{e_1}{e_2}; ss) \step (S', V, \_V, M, ss)
} 

\\

\inferrule* [Right=Annot]
{
  \cllv{e}{(p, V, \_V)} = (b, n) \\
}
{
  p \vdash (S, V, \_V, M, \clannot{\mathit{ev}, e}; ss) \step_{\mathit{ev}\;(b,n)}(S, V, \_V, M, SS)
}

\\

\inferrule* [Right=Ret]
{
  \clrv{e}{(p,V,\_V)}=v
}
{
  p \vdash (S;(V', \_V',E), V, M, \ereturn\;e; ss) \step (S, V', \_V', M, \fplug{E}{v})
}

\\

\inferrule* [Right=Call]
{
  p(f)=\ecfuntwo{\_y}{t_1}{t_2}{ads; ss_1} \\
  \clrv{e}{(p,V,\_V)}=v \\
  \clvalloc{}(ads, \None, M) = (V', M')
}
{
  p \vdash (S, V, \_V, M, \_x=f\;e; ss) \step (S;(V, \_V, \_x=\symhole;ss), V', \{\}[\_y\mapsto v], M', ss_1)
} 

\\

\inferrule* [Right=AllocNil]
{
}
{
  \clvalloc{}([], V, M) = (V, M)
}

\quad
\quad
\quad
\quad
\quad

\inferrule* [Right=AllocCons]
{
  \clalloc(M,n \times \kw{sizeof}(t)) = (b,M_1) \\
  \clvalloc{}(ads, V[x \mapsto (b,0)], M_1) = (V', M')
}
{
  \clvalloc{}((\earray{t}{x}{n}{}; ads), V, M) = (V', M')
}
  
\\

\inferrule* [Right=Expr]
{
  \clrv{e}{(p,V,\_V)} = v
}
{
  p \vdash (S, V, \_V, M, e; ss) \step (S, V, \_V, M, ss)
} 

\quad
\quad
\quad

\inferrule* [Right=Empty]
{
  \;
}
{
  p \vdash (S, V, \_V, M, []) \step (S, V, \_V, M, \ereturn\;\clunkn{})
} 

\\

\inferrule* [Right=Block]
{
  \;
}
{
  p \vdash (S, V, \_V, M, \{ss_1\};ss_2) \step (S; V, \_V, M, ss_1; ss_2)
} 

\\

\inferrule* [Right=IfT]
{
  \clrv{e}{(p, V, \_V)} = v \\
  v \not= 0 \\
  v \not= \clunkn{}
}
{
  p \vdash (S, V, \_V, M, \eif{e}{ss_1}{ss_2};ss) \step_\brt (S, V, ss_1;ss)
} 

\quad
\quad
\quad

\inferrule* [Right=IfF]
{
  \clrv{e}{(p, V, \_V)} = 0
}
{
  p \vdash (S, V, \_V, M, \eif{e}{ss_1}{ss_2};ss) \step_\brf (S, V, \_V, M, ss_2;ss)
} 
\end{mathpar}
\end{scriptsize}
\caption{\label{fig:clight-stmts-reduction}Clight Configuration Reduction}
\end{figure*}

\paragraph{The CompCert memory model}
The semantics of CompCert Clight statements depends on the CompCert
memory model \cite{Leroy-Blazy-memory-model}. Here we need three
operations: $\symget$, $\symset$ and $\clalloc$, whose description
follows.\footnote{In the current version of CompCert and its memory
  model \cite{2012-Leroy-Appel-Blazy-Stewart}, each memory location is
  equipped with a \emph{permission} to more faithfully model the fact
  that the memory locations of a local variable cannot be read or
  coincidentally reused in a Clight program after exiting the scope of
  the variable. To this end, thanks to this permission model, the
  CompCert memory model also defines a $\kw{free}$ operation which
  invalidates memory accesses while preventing from reusing the memory
  block for further allocations; although we do not describe it here,
  CompCert Clight actually uses this operator to free all local
  variables upon function exit. Thus, it is also necessary to amend
  the semantics of \cstar in a similar way.}

$\symget(M, (b, n), n')$ reads $n'$ bytes from memory $M$ at offset
$n$ within block $b$, and decodes the obtained byte fragments into a
value. It fails if not all locations from $(b, n)$ to $(b, n'-1)$ are
defined. It returns $\clunkn$ if all locations are defined but the
decoding fails.

$\symset(M, (b, n), n', v)$ writes $n'$ bytes from memory $M$ at
offset $n$ within block $b$ corresponding to encoding value $v$ into
$n'$ value fragments. It fails if not all locations from $(b, n)$ to
$(b, n'-1)$ are defined.

$\clalloc{}(M, n)$ returns a pair $(b, M')$ where $b \not\in M$ is a
fresh block identifier and all locations from $(b,0)$ to $(b,n-1)$ in
$M'$ contain $\clunkn$.

\subsection{Issues}

\paragraph{Local structures}

The main difference between \cstar and Clight is that \cstar allows structures
as values. Although converting a \cstar value into a Clight value is no
problem in terms of memory representation (since the layout of Clight
structures\footnote{which can be chosen when configuring CompCert with
  a suitable platform} is already formalized in CompCert with basic
proofs such as the fact that two distinct fields of a structure
designate disjoint sets of memory locations), local structures cause
issues in terms of managing memory accesses (due to our desire for
noninterference in terms of memory accesses), as we describe in
Section~\ref{sec:local-struct}.

\paragraph{Stack-allocated local variables}

Another difference between \cstar and Clight is that, whereas \cstar allows
the user to stack-allocate local variables on the fly, Clight mandates
all local variables of a function to be hoisted to the beginning of
the function (in fact, the list of all stack-allocated variables of a
function is actually part of the function definition), and so they are
allocated all at once when entering the function.

Hoisting local variables is not supported in the verified part of
CompCert.
\ignore{
In practice, when compiling a piece of C code with CompCert,
an unverified elaboration pass performs this operation.\footnote{See
  \cite{compcert-url}, section ``Architecture of the compiler''. By
  reading the source code of the unverified elaborator in
  \texttt{cparser/Unblock.ml}, we figured out the hoisting strategy
  adopted by CompCert 2.7.1, as we describe it here.} This unverified
phase of CompCert hoists all local variables of a function, without
trying to merge two local variables belonging to disjoint C code
blocks, contrary to GNU GCC
\footnote{If we were to follow the GNU GCC hoisting strategy, then we
  would have to think about how to merge two variables of different
  types. However, type annotations are present in CompCert until
  Clight, and CompCert then erases those type annotations in a pass
  from Clight to C$\sharp$minor, which is very similar to Clight
  except that it no longer has any C-style types (structs, unions,
  etc.) so that a stack-allocated variable only designates a memory
  block of some constant size in bytes. So, in that case, I claim that
  hoisting could and should be done at the level of C$\sharp$minor,
  rather than in Clight or KreMLin. However, this would require
  modifying CompCert Clight so that local variable allocation sites
  would be attached to code blocks rather than functions.}  Regardless
of the hoisting strategy, verifying this pass may be an interesting
problem within CompCert itself.

Instead of trying to prove hoisting on Clight or any language of
CompCert, we could perform this phase at the \cstar level. We propose a
solution in Section~\ref{sec:hoisting}.

However,
}

Consider the following \cstar example, for a given conditional expression
$e$:
  \[
  \eifthenelse
      {e}
      {\earray{\kint}{x}{1}{18}; }
      {\{ \earray{\kword}{x}{1}{42}; \ewrite{x+0}{1729} \} }
   \]
After hoisting, following the same strategy as the corresponding
unverified pass of CompCert, \cstar code will look like this:
    \[
  \earray{\kint}{x_1}{1}{};
  \earray{\kword}{x_2}{1}{};
  \]
  \[
  \eifthenelse
      {e}
      {\ewrite{x_1+0}{18}; }
      {\{ \ewrite{x_2+0}{42}; \ewrite{x_2+0}{1729} \} }
  \]
This example shows the following issue:
when producing the trace, the
  memory blocks corresponding to the accessed memory location will
  differ between the non-hoisted and the hoisted \cstar code. Indeed, in
  the non-hoisted \cstar code, only one variable $x$ is allocated in the
  stack, whereas in the hoisted \cstar code, two variables $x_1$ and $x_2$
  corresponding to both branches will be allocated anyway, regardless
  of the fact that only one branch is executed. Thus, in the \cstar code
  before hoisting, allocating $x$ will create, say, block 1, whereas
  after hoisting, two variables will be allocated, $x_1$ at block 1,
  and $x_2$ at block 2. Thus, the statement $\ewrite{x+0}{1729}$ in
  the \cstar code before hoisting will produce $\symread\;(1,0,[])$ on the
  trace, whereas its corresponding translation after hoisting,
  $\ewrite{x_2+0}{1729}$, will produce
  $\symread\;(2,0,[])$.

\ignore{
Anyway, a similar problem of discrepancy of traces of memory accesses
will also exist in CompCert passes after Clight, this time because the
structure of the stack will change across intermediate languages of
CompCert (in particular, at some point, all stack-allocated variables
of one function call will be merged into one single stack block for
the entire stack frame.) Thus, so far, CompCert will very well
preserve functional correctness down to the assembly, but will require
more work to \emph{preserve} security guarantees beyond Clight, unless
we instrument it as described before.
}

One quick solution to ensure that those event traces are exactly
preserved, is to replace the actual pointer $(b, n,
\mathit{fd})$ on $\symread$ and $\symwrite$ events with $(f, i, x, n,
\mathit{fd})$ where $f$ is the name of the function, $i$ is the
recursion depth of the function (which would be maintained as a global
variable, increased whenever entering the function, and decreased
whenever exiting) and $x$ is the local variable being accessed. (In
concurrent contexts, one could add a parameter $\theta$ recording the
identifier of the current thread within which $f$ is run, and so the
global variable maintaining recursion depth would become an array
indexed by thread identifiers.)

\ignore{
[Catalin: Isn't this much weakening the attacker model though?
          And how does this new model apply to machine code?]

[Tahina: This abstract trace model is only for the purpose of proofs
  of memory transformations, beginning with the \cstar to Clight
  proof. This model also applies to C$\sharp$minor (exact
  preservation, the only difference being the erasure of C-style type
  information), and it can also be applied to Cminor (where all local
  variables are merged into one single stack frame) when translating
  from C$\sharp$minor. Then, from Cminor on, it becomes useless. In
  fact, from Cminor on, the local variable region becomes contiguous
  for a given function call, and (with the notable exception of in the
  function inlining pass, which may very well merge several stack
  frames of inlined callees into one single one. Anyway, function
  inlining requires special treatment.) there are no pointer
  transformations until the time spilling locations are introduced in
  memory when generating assembly code.

  Once again, the ultimate trace model that we want for our final
  side-effect security correctness statement will definitely be the
  one with concrete pointer values, and abstract pointer values are
  only for the purpose of compiler verification.]
}

Finally, we adopted this solution as we describe in
Section~\ref{sec:norm-traces}, and we heavily use it to prove the
correctness of hoisting within \cstar in Section~\ref{sec:hoisting}.

\subsection{Summary: from \cstar to Clight}

Given a \cstar program $p$ and an entrypoint $ss$, we are going to
transform it into a CompCert Clight program in such a way that both
functional correctness and noninterference are preserved.

This will not necessarily mean that traces are exactly preserved
between \cstar and Clight, due to the memory representation discrepancy
described before. Instead, by functional correctness, we mean that a
safe \cstar program is turned into a safe Clight program, and for such
safe programs, termination, I/O events and return value are preserved;
and by noninterference, we mean that if two executions with different
secrets produce identical (whole) traces in \cstar, then they will also
produce identical traces in CompCert Clight (although the trace may
have changed between \cstar and Clight.)

\begin{enumerate}
\item In section \ref{sec:unambiguous-variables}, we transform a \cstar
  programs into a program with unambiguous variable names, and we take
  advantage of such a syntactic property by enriching the
  configuration of \cstar with more information regarding variable names,
  thus yielding the \cstar2 language.
\item In section \ref{sec:norm-traces-detail}, we execute the obtained \cstar2
  program with a different, more abstract, trace model, as proposed
  above. This is not a program transformation, but only a
  reinterpretation of the same \cstar program with a different operational
  semantics, which we call \cstar3.
\item In section \ref{sec:hoisting}, we transform the \cstar3 program into
  a \cstar3 program where all local arrays are hoisted from block-scope to
  function-scope. The abstract trace model critically helps in the
  success of this proof where memory state representations need to
  change.
\item In section \ref{sec:struct-return}, we transform the obtained
  \cstar3 program into a \cstar3 program where functions returning structures
  are replaced with functions taking a pointer to the return location
  as additional argument. Thus, we need to account for an additional
  memory access, which we do through the \cstar4 intermediate semantics,
  another reinterpretation of the source \cstar3 program producing new
  events at functions returning structures. Then, our reinterpreted
  \cstar4 program is translated back to \cstar3 with those additional memory
  accesses made explicit.
\item In section \ref{sec:struct-events}, we reinterpret our obtained
  \cstar3 program with a different event model where memory access events
  of structure type are replaced by the sequences of memory access
  events of all their non-structure fields. We call this new language
  \cstar5.
\item In section \ref{sec:local-struct-detail}, we transform our \cstar5
  program back into \cstar3 by erasing all local structures that are not
  local arrays, replacing them with their individual non-structure
  fields. Thus, the more ``elementary'' memory accesses introduced in
  the \cstar3 to \cstar5 reinterpretation are made concrete.
\item We then reinterpret the obtained \cstar3 program back into \cstar2 as
  described in \ref{sec:norm-traces-detail}, reverting to the traces
  with concrete memory locations at events, thus accounting for all
  memory accesses.
\item Finally, in section \ref{sec:clight-gen}, we compile the
  obtained \cstar2 program, now in the desired form, into CompCert Clight.
\end{enumerate}

\subsection{Normalized event traces in \cstar} \label{sec:norm-traces}

As described above, traces where memory locations explicitly appear
are notoriously hard to reason about in terms of semantics
preservation for verified compilation. Thus, it becomes desirable to
find a common representation of traces that can be preserved between
different memory layouts across different intermediate languages.

In particular here, we would like to replace concrete pointers into
abstract pointers representing the local variable being modified in a
given nested function call.

\subsubsection{Disambiguation of variable names} \label{sec:unambiguous-variables}

To this end, we first need to disambiguate the names of the local
variables of a \cstar function:

\begin{definition}[Unambiguous local variables]
  We say that a list of \cstar instructions has \emph{unambiguous local
    variables} if, and only if, it contains no two distinct array
  declarations with the same variable name, and does not contain both
  an array declaration and a non-array declaration with the same
  variable name.
  
  We say that a \cstar program has unambiguous local variables if, and
  only if, for each of its functions, its body has unambiguous local
  variables.

  We say that a \cstar transition system $\sys(p, V, ss)$ has unambiguous
  local variables if, and only if, $p$ has unambiguous local
  variables, $ss$ has unambiguous local variables, and $V$ does not
  define any variable with the same name as an array declared in $ss$.
\end{definition}

\begin{lemma}[\label{lemma-cstar-alpha} Disambiguation]
  There exists a transformation $T$ on lists of instructions (extended
  to programs by morphism) such that, for any \cstar program $p$, and for
  any list of \cstar instructions $ss$, %
  for any variable mapping $V'$ 
  such that $\sys(T(p),V',T(ss))$ has unambiguous local variables,
  there exists a variable mapping $V$ 
  such that $\sys(p,V,ss)$ and $\sys(T(p),V',T(ss))$ have the same
  execution traces.
\end{lemma}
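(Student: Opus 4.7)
The plan is to define $T$ as a standard alpha-renaming pass that walks each function body of $p$ (and, separately, the entry statement list $ss$) and replaces every binder introduced by a non-array declaration $t\,x = e$, an array declaration $t\,x[n]$, or a function call $t\,x = f(e)$ with a globally fresh name, propagating the new name to every in-scope use in the enclosing continuation. Free variables (formal parameters, globals, and any variables referring to an outer $V$) are left untouched. A simple induction on the walk shows that $T(ss)$ and every body of $T(p)$ satisfy the unambiguity invariant syntactically, and that the free-variable sets of $ss$ and $T(ss)$ coincide.

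Given $V'$ satisfying the hypothesis for $\sys_{\cstar}(T(p), V', T(ss))$, I construct $V$ by taking $V'$ restricted to the free variables of $ss$; the hypothesis ensures this restriction does not collide with any array declared in $ss$, so $\sys_{\cstar}(p, V, ss)$ is well-formed and, crucially, lookups on free variables return the same values in both systems.

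The core of the argument is a lock-step forward simulation between $\sys_{\cstar}(p, V, ss)$ and $\sys_{\cstar}(T(p), V', T(ss))$, parameterized by a partial injective renaming $\sigma$ on variable names that is extended on the fly as binders are encountered. Two configurations are related when (i) their stacks have the same shape and identical frame memories; (ii) per-frame variable maps satisfy $V_{\text{orig}}(x) = V_{\text{trans}}(\sigma(x))$ on the in-scope variables; and (iii) the pending statement lists coincide after applying $\sigma$. Expression evaluation from Fig.~\ref{fig:cstar-expr-eval} respects $\sigma$ by a routine induction on $e$, since the only interaction with variables is {\sc Var}/{\sc GVar}. The reduction rules {\sc VarDecl}, {\sc ArrDecl}, and {\sc Call} extend $\sigma$ with the fresh mapping dictated by $T$; {\sc Block} pushes an empty frame in both systems; {\sc Ret}/{\sc RetBlk}/{\sc Empty} pop corresponding frames and restore the saved $\sigma$; and {\sc Read}/{\sc Write}/{\sc Memset}/{\sc IfT}/{\sc IfF} produce identical labels because evaluated pointers, integers, and branch conditions match under the relation.

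The main obstacle, and the reason the conclusion is phrased as equality of execution traces rather than as identity of configurations throughout, is {\sc ArrDecl}'s generation of a fresh block identifier $b \notin S$. Both systems perform exactly the same sequence of allocations (alpha-renaming preserves control flow), so choosing the fresh-name oracle consistently across the two runs makes the chosen $b$'s coincide; this is the ``modulo renaming of block identifiers'' caveat already invoked in Lemma~\ref{lemma-cstar-deter}. Once block identifiers agree, every $\symread$/$\symwrite$ event carries the same $(b, n, \ls{\fd})$ triple and every conditional emits the same $\brt$/$\brf$, so the label streams are identical step for step. Big-stepping as in Fig.~\ref{fig:bigstep} then yields equality of the observable traces for the terminating, diverging, and going-wrong cases alike.
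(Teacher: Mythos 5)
Your proposal is correct and is essentially the paper's proof: the paper disposes of this lemma with the single word ``$\alpha$-renaming,'' and you have simply spelled out the standard simulation argument (renaming-parameterized lock-step bisimulation, with block identifiers reconciled via the determinism-modulo-renaming caveat of Lemma~\ref{lemma-cstar-deter}) that this shorthand implicitly invokes. No divergence in approach, only in level of detail.
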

\begin{proof} $\alpha$-renaming. \end{proof}

The noninterference property can be proven to be stable by such
$\alpha$-renaming:


\begin{lemma}
  Let $p$ be a \cstar program and $ss$ be a list of instructions. Assume
  that for any $V_1, V_2$ such that $\sys(p, V_1, ss)$ and $\sys(p,
  V_2, ss)$ are both safe, then they have the same execution
  traces.

  Then, for any $V_1', V_2'$ such that $\sys(T(p), V_1', T(ss))$ and
  $\sys(T(p), V_2', T(ss))$ are both safe, they have the same
  execution traces.
\end{lemma}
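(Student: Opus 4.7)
The plan is to reduce the claim on $T(p)$ to the assumed noninterference property on $p$ by routing both executions through Lemma~\ref{lemma-cstar-alpha}. Given $V_1', V_2'$ such that $\sys(T(p), V_i', T(ss))$ is safe for $i = 1, 2$, I would first apply Lemma~\ref{lemma-cstar-alpha} to each side: this yields variable mappings $V_1, V_2$ together with the equality of execution traces $\mathrm{traces}(\sys(p, V_i, ss)) = \mathrm{traces}(\sys(T(p), V_i', T(ss)))$. Since trace equality in the CompCert-style big-stepping of a small-step semantics (Figure~\ref{fig:bigstep}) preserves the $\kw{Terminates}/\kw{Diverges}/\kw{GoesWrong}$ classification, and both of the renamed systems are safe, the $V_i$ constructed on the source side must also yield safe executions of $\sys(p, V_i, ss)$.

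With safety of both $\sys(p, V_1, ss)$ and $\sys(p, V_2, ss)$ established, I would then invoke the noninterference hypothesis on $p$ to obtain $\mathrm{traces}(\sys(p, V_1, ss)) = \mathrm{traces}(\sys(p, V_2, ss))$. Chaining the three equalities in order, $\mathrm{traces}(\sys(T(p), V_1', T(ss))) = \mathrm{traces}(\sys(p, V_1, ss)) = \mathrm{traces}(\sys(p, V_2, ss)) = \mathrm{traces}(\sys(T(p), V_2', T(ss)))$, gives the desired conclusion.

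The main obstacle is making sure Lemma~\ref{lemma-cstar-alpha} can actually be applied in this direction: as stated, it produces a source-side $V$ given a target-side $V'$ for which the target system has unambiguous local variables, but the quantifier structure is a bit delicate because we need the \emph{same} $p$ and $ss$ on the source side to appear in both invocations. Since $T$ depends only on syntax (not on $V$), both invocations do produce mappings against the same $\sys(p, \_, ss)$, so this goes through. A minor subtlety is that safety of $\sys(T(p), V_i', T(ss))$ must propagate back to safety of $\sys(p, V_i, ss)$, which follows because Lemma~\ref{lemma-cstar-alpha} guarantees identical traces and hence identical $\unstuck$/final-state behaviour in the big-step view. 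Overall the proof is short and purely reduces the renamed-side statement to the source-side hypothesis via two applications of the disambiguation lemma.
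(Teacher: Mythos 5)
Your proposal is correct and follows essentially the same route as the paper's own proof: apply Lemma~\ref{lemma-cstar-alpha} to each $V_i'$ to obtain a source-side $V_i$ with identical traces (hence safety of $\sys(p, V_i, ss)$), invoke the hypothesis on $p$, and conclude by transitivity of trace equality. The additional care you take about safety transferring back through trace equality matches the paper's ``thus in particular, $\sys(p, V_1, ss)$ is safe'' remark.
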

\begin{proof}
  By Lemma~\ref{lemma-cstar-alpha}, there is some $V_1$ such that
  $\sys(p, V_1, ss)$ and $\sys(T(p), V_1', T(ss))$ have the same
  execution traces, thus in particular, $\sys(p, V_1, ss)$ is
  safe. Same for some $V_2$. By hypothesis, $\sys(p, V_1, ss)$ and
  $\sys(p, V_2, ss)$ have the same execution traces, thus the result
  follows by transitivity of equality.
\end{proof}

Thus, we can now restrict our study to \cstar programs whose functions
have no two distinct array declarations with the same variable names.

Let us first enrich the configuration $(S, V, ss)$ of \cstar small-step
semantics with additional information recording the current function
$f$ being executed (or maybe $\None$) and the set $A$ of the variable
names of local arrays currently declared in the scope. Thus, a \cstar
stack frame $(\None, V, E)$ becomes $(\None, V, E, f, A)$ where $f$ is
the caller, a block frame $(M, V, E)$ becomes $(M, V, E, f, A)$ where
$f$ is the enclosing function of the block, and the configuration $(S,
V, ss)$ becomes $(S, V, ss, f, A)$ where $f$ is the current function
(with the frames of $S$ changed accordingly.) Let us then change some
rules accordingly as described in Fig.~\ref{fig:cstar-2}, leaving
other rules unchanged except with the corresponding $f$ and $A$
components preserved.

\begin{figure*}
\begin{small}
  \begin{mathpar}
\inferrule* [Right=$\text{ArrDecl}_2$]
{
  \eval{e}{(p,V)}=v \\
  S = S'; (M, V, E, f', A') \\
  b\not\in S
}
{
  p \vdash (S, V, t\;x[n]=e; ss, f, A) \step (S';(M[b\mapsto v^n], V, E, f', A'), V[x\mapsto (b, 0, [])], ss, f, A\cup\{x\})
}

\\

\inferrule* [Right=$\text{Ret}_2$]
{
  \eval{e}{(p,V)}=v
}
{
  p \vdash (S;(\None, V',E, f', A'), V, \ereturn\;e; ss, f, A) \step (S, V', \fplug{E}{v}, f', A')
}

\\
    
\inferrule* [Right=$\text{Call}_2$]
{
  p(f)=\ecfuntwo{y}{t_1}{t_2}{ss_1} \\
  \eval{e}{(p,V)}=v
}
{
  p \vdash (S, V, t\;x=f\;e; ss, f', A') \step (S;(\None, V, t\;x=\symhole;ss, f', A'), \{\}[y\mapsto v], ss_1, f, \{\})
} 

  \end{mathpar}
\end{small}
\caption{\cstar2 Amended Configuration Reduction} \label{fig:cstar-2}
\end{figure*}

Let us call \cstar2 the obtained language where the initial state of the
transition system $\sys(p, V, ss)$ shall now be $([], V, ss, \None,
[])$.

Then, it is easy to prove the following:
\begin{lemma}[\label{lemma-cstar-to-cstar-2]}\cstar to \cstar2]
    If $\sys(p, V, ss)$ has unambiguous local variables and is safe in
    \cstar, then it has the same execution traces in \cstar as in \cstar2 (and in
    particular, it is also safe in \cstar2.)

    Thus, both functional correctness and noninterference are
    preserved from \cstar to \cstar2.
\end{lemma}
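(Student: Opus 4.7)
}

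The plan is to establish a lock-step bisimulation between \cstar and \cstar2 whose underlying relation is simply the forgetful projection $\pi$ that erases the extra annotations $(f, A)$ from \cstar2 configurations. Concretely, I would define $\pi(S;(M, V, E, f, A)) = S'; (M, V, E)$ where $S'$ is $\pi$ applied pointwise to $S$, and analogously drop the trailing $(f, A)$ from the top-level configuration. For the initial state, $\pi([], V, ss, \None, \emptyset) = ([], V, ss)$, so initial configurations match. The central claim is then that a \cstar2 step $C_2 \step_\ell C_2'$ exists iff the \cstar step $\pi(C_2) \step_\ell \pi(C_2')$ exists, with identical labels, under the invariant introduced below.

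First I would state and maintain a run-time invariant $\mathcal{I}$ on \cstar2 configurations that captures exactly what ``unambiguous local variables'' means dynamically: (i) the current statement list $ss$ and the enclosing $E$ in each frame of the current function $f$ together have unambiguous array declarations; (ii) $A$ is precisely the set of array names already declared in the scope of the current function; (iii) $V$ has no binding whose name clashes with a yet-to-be-declared array in $ss$. I would then prove, by case analysis on the reduction rules of Fig.~\ref{fig:cstar-2} (and the unchanged ones inherited from Fig.~\ref{fig:cstar-stmts-reduction}), that $\mathcal{I}$ is preserved: the only nontrivial cases are $\text{ArrDecl}_2$ (where $A$ grows by $x$, which by unambiguity does not collide with existing names), $\text{Call}_2$ (where $A$ is reset to $\emptyset$ because we enter a fresh function body whose own unambiguity is given by the program's assumption), and $\text{Ret}_2$ (where $(f', A')$ is restored from the popped frame, and unambiguity of the caller is preserved since it was invariant when the call was made).

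Once $\mathcal{I}$ is invariant, the bisimulation is immediate: neither the premises of the \cstar rules nor their labels depend on $(f, A)$, so every \cstar step lifts uniquely to a \cstar2 step with the same label (using the $f, A$ recorded in the enclosing \cstar2 frames), and every \cstar2 step projects to the corresponding \cstar step. By induction on the length of executions, safety of $\sys(p, V, ss)$ in \cstar transfers to $\sys(p, V, ss)$ in \cstar2, and both produce identical (finite or infinite) traces, which after big-stepping yields equality of observable behaviors; this gives both functional correctness and noninterference preservation.

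The main obstacle, I expect, is the bookkeeping around $\text{Call}_2$ and $\text{Ret}_2$: one must show that resetting $A$ to $\emptyset$ on call and restoring $(f', A')$ on return is consistent with unambiguity across nested calls, and in particular that block identifiers $b$ freshly chosen by $\text{ArrDecl}_2$ are still fresh with respect to the whole stack (which is unchanged by $\pi$), so the freshness side-condition $b \notin S$ behaves identically in both semantics. Apart from this, the proof is largely administrative.
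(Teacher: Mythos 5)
Your proposal matches the paper's proof, which is exactly a lock-step bisimulation in which the common parts of the configurations (everything except the \cstar2-specific $f, A$ components) are kept equal; your forgetful projection $\pi$ is just an explicit rendering of that relation, and your added bookkeeping invariant is a harmless elaboration since no \cstar2 rule's premises or labels actually depend on $(f,A)$. The argument is correct and essentially identical to the paper's.
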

\begin{proof}
  Lock-step bisimulation where the common parts of the configurations
  (besides the \cstar2-specific $f, A$ parts) are equal between \cstar and
  \cstar2.
\end{proof}

Then, we can prove an invariant over the small-step execution of a \cstar2
program:
\begin{lemma}[\label{lemma-cstar-2-invar}\cstar2 invariant]
Let $p$ be a \cstar2 program and $V$ a variable environment such that
$\sys(p,V,ss)$ has unambiguous local variables.

Let $n \in \mathbb N$. Then, for any \cstar2 configuration $(S, V', ss',
f, A)$ obtained after $n$ \cstar2 steps from $(\{\}, V, ss, \None, \{\})$,
the following invariants hold:
\begin{enumerate}
\item for any variable or array declaration $x$ in $ss'$, it does not
  appear in $A_1$
\item any variable name in $A$ or in an array declaration of $ss'$ is
  in an array declaration of $ss$ (if $f = \None$) or the body of $f$
  (otherwise.)
\item for each frame of $S$ of the form $(\_, \_, E, f'', A'')$, then
  any variable name or array declaration in $E$ does not appear in
  $A''$, and any variable name in $A''$ or in an array declaration of
  $E$ is in an array declaration in $ss$ (if $f'' = \None$) or the
  body of $f''$ (otherwise.)
\item \label{cstar-2-invar-memory-block-nil} if $S = S'; (M, \_, \_,
  f', A')$ with $M \not= \None$, then $f' = f$, $A' \subseteq A$ and
  for all block identifiers $b$ defined in $M$, there exists a unique
  variable $x \in A$ such that $V'(x) = b$
\item \label{cstar-2-invar-memory-block-cons} for any two consecutive
  frames $(M, \_, \_, f_1, A_1)$ just below $(\_, V_2, \_, f_2, A_2)$
  with $M \not= \None$, then $f_1 = f_2$ and $A_1 \subseteq A_2$ and
  $V_1(x) = V_2(x)$ for all variables $x \in A_1$, and for all block
  identifiers $b$ defined in $M$, there exists a unique variable $x
  \in A_2$ such that $V_2(x) = b$
\item \label{cstar-2-invar-memory-disjoint} for any two different
  frames of $S$ of the form $(M_1, \_, \_, \_, \_)$ and $(M_2, \_, \_,
  \_, \_)$ with $M_1 \not= \None$ and $M_2 \not= \None$, the sets of
  block identifiers of $M_1$ and $M_2$ are disjoint
\end{enumerate}
\end{lemma}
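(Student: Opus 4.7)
The plan is to proceed by strong induction on $n$, the number of \cstar2 transition steps taken from the initial configuration. The base case $n = 0$ is immediate: the initial configuration is $([], V, ss, \None, \{\})$, so Invariants 3--6 are vacuous since the stack is empty, while Invariants 1 and 2 reduce directly to the hypothesis that $\sys(p,V,ss)$ has unambiguous local variables (in particular, no declaration inside $ss$ collides with $A = \{\}$, and all declarations in $ss$ are declarations in $ss$).

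For the inductive step, I would do a case analysis on the transition rule applied at step $n+1$. The silent cases (\textsc{VarDecl}, \textsc{Memset}, \textsc{Read}, \textsc{Write}, \textsc{Expr}, \textsc{IfT/IfF}) neither modify $S$, $f$, $A$ nor introduce new variable declarations, so Invariants 1--6 are inherited verbatim, with Invariant 1 preserved because the remaining $ss'$ is a sublist of the previous one. The interesting cases are the four rules that change the frame structure or the declaration set:
\begin{enumerate}
\item \textsc{ArrDecl}$_2$ consumes the head array declaration $t\;x[n]$, adjoining $x$ to $A$ and a freshly allocated block $b$ to the topmost (block) frame's memory $M$. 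Invariant 1 is preserved because the consumed declaration is removed from $ss'$ and, by the previous Invariant 1 together with unambiguity, $x$ was not yet in $A$. Invariant 2 is preserved since $x$ already belonged to a declaration of $ss$ (or the current body of $f$) by the previous Invariant 2. For Invariants 4--6, the premise $b \notin S$ yields both disjointness and uniqueness of the new binding.

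\item \textsc{Block} pushes an empty block frame $(\{\}, V, \square; ss_2, f, A)$ and recursively executes $ss_1$. The new top-level $A$ stays the same, so Invariants 1--3 carry over (declarations inside $ss_1$ still lie in the body of $f$). The new frame satisfies Invariant 5 with equalities $f_1 = f_2$, $A_1 = A_2$, $V_1 = V_2$, and an empty memory.

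\item \textsc{Call}$_2$ pushes a call frame recording the caller's $(V, f', A')$, then resets to $(\{\}[y \mapsto v], ss_1, f, \{\})$. The new $A = \{\}$ and $ss_1$ being the body of $f$ make Invariants 1--2 hold directly; Invariant 3 for the pushed frame is exactly the previous Invariants 1--2 before the call.

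\item \textsc{Ret}$_2$ / \textsc{Empty} pop frames, restoring the saved $(V', f', A')$. Invariants 1--3 for the restored configuration follow from the previous Invariant 3 applied to the popped frame; Invariants 4--6 for the new top frame are inherited from Invariants 5--6 applied to the frame immediately below.
\end{enumerate}

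The main obstacles will be Invariants 5 and 6, which track the coupling between the memory of a frame and its enclosing variable assignment across the entire stack. For Invariant 5, the delicate point arises in \textsc{ArrDecl}$_2$: the updated assignment $V[x \mapsto (b,0,[])]$ at the top must still agree with the next frame's $V_2$ on all variables in the next frame's $A_2$; this holds because $x \notin A$ by Invariant 1, and $A_2 \supseteq A$ already mapped $x$ nowhere in $V_2$, so the extension is compatible. Invariant 6 is only threatened by \textsc{ArrDecl}$_2$ as well, and rests entirely on the freshness condition $b \notin S$, which per the semantics is global across all memory frames in the stack. Once these two are verified in the one allocating rule, the remaining cases become straightforward bookkeeping, and no additional strengthening of the invariant is needed.
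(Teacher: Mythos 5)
Your proposal is correct and follows exactly the route the paper takes: the paper's own proof is simply ``by induction on $n$ and case analysis on the step relation,'' and your case analysis (with the allocation rule's freshness premise $b \notin S$ carrying Invariants 4--6, and unambiguity carrying Invariants 1--2) fills in precisely the bookkeeping that argument requires.
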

\begin{proof}
  By induction on $n$ and case analysis on $\step$.
\end{proof}

Then, we can prove a strong invariant between two executions of
the same \cstar2 program with different secrets. This strong invariant
will serve as a preparation towards changing the event traces of \cstar2.

\begin{lemma}[\label{lemma-cstar-2-noninterference-invar} \cstar2 noninterference invariant]
Let $p$ be a \cstar2 program, and $V_1, V_2$ be two variable environments
such that $\sys(p, V_1, ss)$ and $\sys(p, V_2, ss)$ have unambiguous
local variables, are both safe and produce the same traces in \cstar2.

Let $n \in \mathbb N$. Then, for any two \cstar2 configurations $(S_1,
V_1', ss_1, f_1, A_1)$ and $(S_2, V_2', ss_2, f_2, A_2)$ obtained
after $n$ \cstar2 execution steps, the following invariants hold:
\begin{itemize}
\item $ss_1 = ss_2$
\item $S_1, S_2$ have the same length
\item $f_1 = f_2$
\item $A_1 = A_2$
\item $V_1(x) = V_2(x)$ for each $x \in A_1$
\item for each $i$, if the $i$-th frames of $S_1, S_2$ are $(M_1,
  V_1'', E_1, f_1'', A_1'')$ and $(M_2, V_2'', E_2, f_2'', A_2'')$,
  then $E_1 = E_2$, $f_1'' = f_2''$ and $A_1'' = A_2''$ and $V_1''(x)
  = V_2''(x)$ for any $x \in A_1''$. Moreover, $M_1 = \None$ if and
  only if $M_2 = \None$, and if $M_1 \not= \None$, then $M_1$ and
  $M_2$ have the same block domain.
\end{itemize}
Thus, the $n+1$-th step in both executions applies the same \cstar2 rule.
\end{lemma}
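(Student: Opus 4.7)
The plan is to prove Lemma~\ref{lemma-cstar-2-noninterference-invar} by strong induction on $n$, using the hypothesis that the two executions produce identical traces to force the trace-relevant choices to coincide at every step, while relying on Lemma~\ref{lemma-cstar-2-invar} (applied to each execution independently) to keep the local bookkeeping about $f, A$, and the memory block domain well-behaved. The base case $n=0$ is immediate: both executions start in $(\{\}, V_i, ss, \None, \{\})$, so all components required by the invariant hold trivially (the clause on shared values in $A_1$ is vacuous, and there are no stack frames). For the inductive step, I would assume the invariant at step $n$ and observe that since $ss_1 = ss_2$, the head statement has the same syntactic shape in both executions, so the only way the step rules could differ is through the value of some expression that drives the rule choice (an \kw{if} condition or a pointer expression in a read/write).

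For silent-step cases (\textsc{VarDecl}, \textsc{Expr}, \textsc{Empty}, \textsc{Block}, \textsc{Ret}, \textsc{Call}), the rule chosen depends only on the syntactic form of the head statement, so both executions step by the same rule. I would then check mechanically that the six bullet points of the invariant are re-established: $ss$ continues to match since both executions perform the same syntactic update; $f$ and $A$ are updated identically as they only depend on the syntactic form of the statement and on the frame shape (which agrees by IH); the agreement $V_1'(x) = V_2'(x)$ for $x \in A$ is preserved because these rules only add bindings for fresh non-array variables, never mutating array variables; and the memory block domains evolve identically since the only silent rule that touches memory is \textsc{ArrDecl}, which extends the domain with a fresh block. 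Here I would appeal to Lemma~\ref{lemma-cstar-deter} (determinism modulo block renaming) to justify choosing the same canonical fresh $b$ in both executions, thereby keeping the domains literally equal.

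For trace-emitting cases (\textsc{Read}, \textsc{Write}, \textsc{Memset}, \textsc{IfT}/\textsc{IfF}), the key additional leverage is the trace-equality hypothesis. Since both executions produce identical whole traces and, by IH, have already emitted the same prefix of events in lockstep, the events emitted at step $n+1$ must also coincide. For a memory-access rule this means the location argument of the \kw{read} or \kw{write} event is identical across the two executions, so the expression evaluating the lvalue produces the same $(b,n,\ls{fd})$ under $V_1'$ and $V_2'$; for an \kw{if} this means both take the \brt{} branch or both take the \brf{} branch. In either case the same rule fires, $ss$ is updated identically, and only the stored or loaded \emph{value} may differ, which does not violate the invariant since reads bind into non-array variables and writes do not alter array bindings or block domains.

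The main obstacle I anticipate is handling the \textsc{Ret} and \textsc{Call} cases, where the stack structure changes. Here I have to use the frame-level clauses of the IH (the $E$, $f''$, $A''$ components of every frame agreeing and $V''$ agreeing on $A''$) to show that after popping a call frame the restored caller configuration still satisfies the invariant, and that pushing a call frame with a fresh, empty $A = \{\}$ trivially preserves everything. A subtler point is that when a \textsc{Ret} places the returned value into a non-array binding $\_x$ in the caller, the values may differ across executions, which is fine because $\_x \notin A$ by Lemma~\ref{lemma-cstar-2-invar}. The remaining grind is to chase the six invariant clauses through each of the ten or so rules; none is conceptually hard once the lockstep argument for trace-emitting steps is in place, but the bookkeeping is where the proof will be longest and most prone to overlooked side-conditions about freshness of block identifiers and scoping of variable names.
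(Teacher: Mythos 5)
Your proof follows essentially the same route as the paper's, which is stated in one line as ``induction on $n$ and case analysis on $\step$, also using the invariant of Lemma~\ref{lemma-cstar-2-invar}'', with the observation that the rules are syntax-directed so both executions fire the same rule. You merely fill in the detail the paper elides --- in particular, making explicit that the trace-equality hypothesis is what forces agreement in the \textsc{IfT}/\textsc{IfF} case and that fresh-block choices must be canonicalized (determinism modulo block renaming) to keep the memory domains literally equal --- so the proposal is correct and matches the paper's argument.
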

\begin{proof}
  By induction on $n$ and case analysis on $\step$, also using the
  invariant of Lemma~\ref{lemma-cstar-2-invar}. In particular the
  equality of codes is a consequence of the fact that there are no
  function pointers in \cstar.\footnote{If we were to allow function
    pointers in \cstar, then we would have to add function call/return
    events into the \cstar trace beforehand, and assume that traces with
    those events are equal before renaming = prove that they are equal
    on the \lowstar program as well. This might have consequences in the
    proof of function inlining in the \fstar-to-\cstar translation.} Then,
  both executions apply the same \cstar2 rules since \cstar2 small-step rules
  are actually syntax-directed.
\end{proof}

\subsubsection{Normalized traces} \label{sec:norm-traces-detail}

Now, consider an execution of \cstar2 from some initial state. In fact,
for any block identifier $b$ defined in $S$, it is easy to prove that
it actually corresponds to some variable defined in the scope. The
corresponding \textsc{VarOfBlock} algorithm is shown in
Figure~\ref{fig:cstar-2-block-to-variable}. \ignore{ \textbf{FIXME: global
  variables}\footnote{Global variables (top-level values) are not
  allocated anywhere in the memory of a \cstar program, so \cstar cannot
  access them. \cstar can only access the local variables of the
  entrypoint list of statements, which is not part of the program.}}

\begin{figure}
  \textbf{Algorithm:} \textsc{VarOfBlock}
  
  \textbf{Inputs:}
  \begin{itemize}
    \item \cstar2 configuration $(S, V, \_, \_, A)$ such that the invariants
      of Lemma~\ref{lemma-cstar-2-invar} hold
    \item Memory block $b$ defined in $S$
  \end{itemize}
  \textbf{Output:} function, recursion depth and local variable
  corresponding to the memory block

  Let $S = S_1 ; (M, \_, \_, f, \_) ; S_2$ such that $b$ defined in
  $M$. (Such a decomposition exists and is unique because of
  Invariant~\ref{cstar-2-invar-memory-disjoint}. $f$ may be $\None$.)

  Let $n$ be the number of frames in $S_1$ of the form $(\None, \_,
  \_, f', \_)$ with $f' = f$.

  Let $V'$ and $A'$ such that $S_2 = (\_, V', \_, \_, A) ; \_$, or $V'
  = V$ and $A' = A$ if $S_2 = \{\}$.

  Let $x$ such that $V'(x) = (b, 0)$ (exists and is unique because of
  Invariants~\ref{cstar-2-invar-memory-block-nil}
  and~\ref{cstar-2-invar-memory-block-cons}.)

  \textbf{Result:} $(f, n, x)$
  
  \caption{\cstar2: retrieving the local variable corresponding to a
    memory block}
  \label{fig:cstar-2-block-to-variable}
\end{figure}

Then, let \cstar3 be the \cstar2 language where the \textsc{Read} and
\textsc{Write} rules are changed according to
Figure~\ref{fig:cstar-3}, with event traces where the actual pointer
is replaced into an abstract pointer obtained using the
\textsc{VarOfBlock} algorithm above.

\begin{figure*}
\begin{small}
  \begin{mathpar}
\inferrule* [Right=$\text{Read}_3$]
{
  C = (S, V, t\;x=*[e]; ss, f, A) \\
  \eval{e}{(p, V)} = (b, n, \ls{fd}) \\
  \symget(S, (b, n, \ls{fd})) = v \\
  \ell = \textsc{VarOfBlock}(C, b)
}
{
  p \vdash C \step_{\symread\;(\ell,n,\ls{fd})} (S, V[x\mapsto v], ss, f, A)
} 

\\

\inferrule* [Right=$\text{Write}_3$]
{
  C = (S, V, *e_1=e_2; ss, f, A) \\
  \eval{e_1}{(p, V)} = (b, n, \ls{fd}) \\
  \eval{e_2}{(p, V)} = v \\\\
  \symset(S, (b, n, \ls{fd}), v) = S' \\
  \ell = \textsc{VarOfBlock}(C, b)
}
{
  p \vdash C \step_{\symwrite\;(\ell,n,\ls{fd})} (S', V, ss, f, A)
} 
  \end{mathpar}
\end{small}
\caption{\cstar3 Amended Configuration Reduction} \label{fig:cstar-3}
\end{figure*}

\begin{lemma}[\label{lemma-cstar-2-to-cstar-3-correct}\cstar2 to \cstar3 functional correctness]
  If $\sys(p, V, ss)$ has no unambiguous variables, then $\sys(p, V,
  ss)$, has the same behaviors in \cstar2 with event traces with
  $\symread, \symwrite$ removed, as in \cstar3 with event traces with
  $\symread, \symwrite$ removed.
\end{lemma}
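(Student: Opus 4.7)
The plan is to establish a lock-step bisimulation between \cstar2 and \cstar3, with the trivial relation of configuration equality. The key observation is that every reduction rule of \cstar3 is syntactically identical to the corresponding rule of \cstar2 except for $\text{Read}_3$ and $\text{Write}_3$, and even these modified rules compute the same resulting configuration $(S', V', ss', f, A)$ as $\text{Read}_2$ and $\text{Write}_2$; they differ only in the label emitted, with \cstar2 emitting $\symread\;(b,n,\ls{fd})$ (resp.\ $\symwrite\;(b,n,\ls{fd})$) and \cstar3 emitting $\symread\;(\ell,n,\ls{fd})$ (resp.\ $\symwrite\;(\ell,n,\ls{fd})$), where $\ell = \textsc{VarOfBlock}(C,b)$.

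Concretely, I would prove by induction on the length of reduction sequences that, starting from the common initial configuration $(\{\}, V, ss, \None, \{\})$, for every $n$, the $n$-step \cstar2 reductions and the $n$-step \cstar3 reductions reach the same configuration, with traces that agree on all non-$\symread$, non-$\symwrite$ events (i.e., on $\brt$, $\brf$, and silent steps). This is immediate per step: for every non-memory rule, both languages emit the same label and perform the same state update; for Read/Write, both emit a $\symread$ or $\symwrite$ event (differing only in the pointer component, which is projected away by the filter) and perform the same state update. Here I implicitly use the fact, ensured by the hypothesis of unambiguous local variables combined with the invariants of Lemma~\ref{lemma-cstar-2-invar}, that $\textsc{VarOfBlock}(C, b)$ is well defined whenever $b$ is a block reachable by evaluating a pointer expression in the current configuration, so $\text{Read}_3$ and $\text{Write}_3$ never get stuck for reasons that $\text{Read}_2$ and $\text{Write}_2$ would not.

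To lift this per-step correspondence to behaviors, I would unfold the big-step characterisation of Figure~\ref{fig:bigstep}: \kw{Terminates}, \kw{Diverges}, and \kw{GoesWrong} all depend only on the sequence of configurations, the finality of the last state, and the concatenation of emitted labels. Since the bisimulation preserves configurations pointwise and preserves the labels modulo filtering $\symread$ and $\symwrite$, it follows that termination with a given return value, divergence with a filtered trace, and stuckness at a given configuration are all preserved in both directions once traces are filtered. The main (mild) obstacle is simply being careful that \textsc{VarOfBlock} is total at every call site of $\text{Read}_3$ and $\text{Write}_3$, which is exactly what the invariants of Lemma~\ref{lemma-cstar-2-invar} provide; beyond that the argument is routine bookkeeping.
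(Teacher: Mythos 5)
Your proposal matches the paper's proof: both argue by lock-step bisimulation with configuration equality, observing that \cstar2 and \cstar3 differ only in the labels emitted by the \textsc{Read} and \textsc{Write} rules, and both invoke the invariants of Lemma~\ref{lemma-cstar-2-invar} to show that \textsc{VarOfBlock} is defined so \cstar3 does not get stuck. Your write-up is just a more detailed elaboration of the same argument, including the routine lift to big-step behaviors.
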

\begin{proof}
  Lock-step bisimulation with equal configurations. Steps
  \textsc{Read} and \textsc{Write} need the invariant of
  Lemma~\ref{lemma-cstar-2-invar} on \cstar2 to prove that \cstar3 does not
  get stuck (ability to apply \textsc{VarOfBlock}.)
\end{proof}

\begin{lemma}[\label{lemma-cstar-2-varofblock-invert}\textsc{VarOfBlock} inversion]
  Let $C_1, C_2$ two \cstar2 configurations such that invariants of
  Lemma~\ref{lemma-cstar-2-invar} and
  Lemma~\ref{lemma-cstar-2-noninterference-invar} hold. Then, for any
  block identifiers $b_1, b_2$ such that $\textsc{VarOfBlock}(C_1,
  b_1)$ and $\textsc{VarOfBlock}(C_2, b_2)$ are both defined and
  equal, then $b_1 = b_2$.
\end{lemma}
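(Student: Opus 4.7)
The plan is to exploit the strong frame-by-frame correspondence between $C_1$ and $C_2$ granted by Lemma~\ref{lemma-cstar-2-noninterference-invar} to show that \textsc{VarOfBlock} decomposes the two stacks at matching frame positions, and then read off equality of the block identifiers from the fact that the witnessing variable lives in a shared $A$-set.

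First I would unpack the hypothesis. Let $C_1 = (S_1, V_1, \_, f^{\text{top}}_1, A^{\text{top}}_1)$ and $C_2 = (S_2, V_2, \_, f^{\text{top}}_2, A^{\text{top}}_2)$. By Lemma~\ref{lemma-cstar-2-noninterference-invar}, the stacks $S_1$ and $S_2$ have equal length, $f^{\text{top}}_1 = f^{\text{top}}_2$, $A^{\text{top}}_1 = A^{\text{top}}_2$, and for every index $i$ the $i$-th frames agree on their $f$- and $A$-components and on the values of all variables in their common $A$-set; moreover $V_1$ and $V_2$ agree on $A^{\text{top}}_1$. Write the \textsc{VarOfBlock} decomposition for $C_j$ as $S_j = S_j^L ; (M_j, V'_j, E_j, f_j, A_j) ; S_j^R$ with $b_j$ defined in $M_j$, and let $(f, n, x)$ be the (equal) result of both calls.

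Next I would argue that both decompositions pick the same index in the stack. Since the $f$-components of corresponding frames in $S_1$ and $S_2$ coincide, the same holds for the subsequences of block frames (those with $M \neq \bot$) whose function tag is $f$: the $n$-th such frame occurs at the same position in both stacks. This is where the $(f,n)$ pair of \textsc{VarOfBlock} does its work: it pinpoints a frame index $i$ that must be the chosen one in both decompositions (here I would use Invariant~\ref{cstar-2-invar-memory-disjoint} of Lemma~\ref{lemma-cstar-2-invar} to rule out that $b_j$ could live in a different block frame with the same $(f,n)$-count, since block sets across frames are disjoint and the counting is deterministic). Consequently $S_1^L$ and $S_2^L$ have the same length, and the frames $(M_1, V'_1, \ldots)$ and $(M_2, V'_2, \ldots)$ sit at matching positions with $f_1 = f_2 = f$ and $A_1 = A_2$.

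Finally, I would read off $b_1 = b_2$ from the correspondence of the enclosing variable environment. Let $V''_j$ and $A''_j$ denote the environment/$A$-set the algorithm uses to look up $x$: either the leftmost frame of $S_j^R$ when it is nonempty, or the top-level $V_j, A^{\text{top}}_j$ otherwise. The case split is consistent between $j = 1, 2$ because $S_j^L$ have equal length. By Invariants~\ref{cstar-2-invar-memory-block-nil} and~\ref{cstar-2-invar-memory-block-cons} of Lemma~\ref{lemma-cstar-2-invar}, $x \in A''_1 = A''_2$ and $V''_j(x) = (b_j, 0)$. Lemma~\ref{lemma-cstar-2-noninterference-invar} then forces $V''_1(x) = V''_2(x)$, yielding $b_1 = b_2$. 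The main obstacle I anticipate is bookkeeping: carefully matching the two stack decompositions index by index, and cleanly handling the $S_j^R = \{\}$ boundary case versus the inductive case; the mathematical content itself is essentially a direct application of the two preservation invariants.
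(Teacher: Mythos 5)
Your overall architecture is sound, and your final step (reading $b_1=b_2$ off the agreement of the enclosing variable environments on their common $A$-set) matches the paper's first case exactly. The gap is in the middle step, where you claim that the two \textsc{VarOfBlock} decompositions select frames at the same stack position. Your justification misreads the $n$ component of the algorithm: $n$ is not an index into the subsequence of block frames tagged $f$, but the number of \emph{call} frames (those with $M=\None$) tagged $f$ lying below the selected frame, i.e.\ the recursion depth of the enclosing invocation. Consequently $(f,n)$ does \emph{not} pinpoint a unique frame: every block frame belonging to a single invocation of $f$ (a chain of nested blocks with no intervening call to $f$) carries the same $(f,n)$. Your parenthetical appeal to the disjointness invariant cannot close this hole, because disjointness of block domains is a property of a single stack and says nothing about whether $b_1$'s frame in $S_1$ and $b_2$'s frame in $S_2$ sit at the same index.

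The paper treats precisely this situation as a separate case: when $F_1$ and $F_2$ are at different levels, all frames between $F_1$ and the frame of $S_1$ corresponding to $F_2$'s level must be block frames of the same invocation (otherwise $f$ or $n$ would differ), and the invariant on consecutive block frames (Invariant~\ref{cstar-2-invar-memory-block-cons} of Lemma~\ref{lemma-cstar-2-invar}) propagates the binding $V'(x)=b_1$ up that chain to the matching level, at which point the noninterference invariant of Lemma~\ref{lemma-cstar-2-noninterference-invar} gives $V_1'(x)=V_2'(x)$ and hence $b_1=b_2$. To repair your proof you either need to add this propagation argument (which, run as a proof by contradiction against disjointness, also shows the two levels must in fact coincide), or adopt the paper's two-case structure directly; the same-level case you wrote out is then the easy half.
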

\begin{proof}
   Assume $\textsc{VarOfBlock}(C_1, b_1) = \textsc{VarOfBlock}(C_2,
   b_2) = (f, n, x)$. When applying $\textsc{VarOfBlock}$, consider
   the frames $F_1, F_2$ holding the memory states defining $b_1,
   b_2$. Consider the variable mapping $V'_2$ in the frame just above
   $F_2$ (or in $C_2$ if such frame is missing.) Then, it is such that
   $V'_2(x) = b_2$.
  \begin{itemize}
  \item If $F_1$ and $F_2$ are at the same level in their respective
    stacks, then the variable mapping $V'_1$ in the frame directly
    above $F_1$ (or in $C_1$ if such frame is missing) is such that
    $V'_1(x) = b_1$, and also $V'_1(x) = V'_2(x)$ by the invariant, so
    $b_1 = b_2$.
  \item Otherwise, without loss of generality (by symmetry), assume
    that $F_2$ is strictly above $F_1$ (i.e. $F_2$ is strictly closer
    to the top of its own stack than $F_1$ is in its own stack.) Thus,
    in the stack of $C_1$, all frames in between $F_1$ and the frame
    $F_1''$ corresponding to $F_2$ are of the form $(M', V', \_, f',
    \_)$ with $f' = f$ and $M' \not= \None$ (otherwise the functions
    and/or recursion depths would be different.) By
    invariant~\ref{cstar-2-invar-memory-block-cons} of
    Lemma~\ref{lemma-cstar-2-invar}, it is easy to prove that $V'(x) =
    b_1$, and thus also for the variable mapping $V'_1$ in the frame
    just above $F_1''$ (or in $C_1$ directly if there is no such
    frame.) By invariants of
    Lemma~\ref{lemma-cstar-2-noninterference-invar}, we have $V'_1(x)
    = V'_2(x)$, thus $b_1 = b_2$.
  \end{itemize}
\end{proof}
  
\begin{lemma}[\label{lemma-cstar-2-to-cstar-3-noninterference}\cstar2 to \cstar3 noninterference]
  Assume that $\sys(p, V_1, ss)$ and $\sys(p, V_2, ss)$ have no
  unambiguous variables. Then, they are both safe in \cstar2 and produce
  the same traces in \cstar2, if and only if they are both safe in \cstar3 and
  produce the same traces in \cstar3.
\end{lemma}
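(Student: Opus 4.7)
The plan is to leverage the two intermediate results already proven: the lock-step functional-correctness bisimulation of Lemma~\ref{lemma-cstar-2-to-cstar-3-correct}, and the \textsc{VarOfBlock} inversion property of Lemma~\ref{lemma-cstar-2-varofblock-invert}. Since \cstar3 differs from \cstar2 only in the events emitted by the \textsc{Read} and \textsc{Write} rules (the configurations themselves step identically), safety is preserved in both directions as an immediate corollary of Lemma~\ref{lemma-cstar-2-to-cstar-3-correct}. What remains is to transfer trace equality across the two event models.

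For the forward direction, assume the two \cstar2 executions of $p$ starting from $V_1,V_2$ are both safe and produce identical \cstar2 traces. I would argue by induction on the common execution length $n$ that at every step, the corresponding pair of configurations satisfies the strong invariants of Lemma~\ref{lemma-cstar-2-noninterference-invar}. Under those invariants, whenever either execution reaches a \textsc{Read} or \textsc{Write} step, the evaluated pointers $(b_1,n,\ls{fd})$ and $(b_2,n,\ls{fd})$ arise from evaluating the same expression in variable environments that agree on all live local arrays (invariants on $A$, $V$, and the frames). Applying \textsc{VarOfBlock} to $b_1$ in $C_1$ and to $b_2$ in $C_2$ then yields the same triple $(f,n,x)$, since the decomposition of each stack into frames is identical and the relevant $V'(x)=b_i$ binding matches by the invariants. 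Hence each \cstar3 event emitted is identical across the two executions, giving equal \cstar3 traces.

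For the reverse direction, assume both \cstar3 executions are safe and produce identical \cstar3 traces. Again by induction on $n$, I would maintain the invariants of Lemma~\ref{lemma-cstar-2-noninterference-invar} and, in addition, the fact that the two executions remain in lock-step. At each \textsc{Read}/\textsc{Write} step, the equality of abstract pointer events $\textsc{VarOfBlock}(C_1,b_1)=\textsc{VarOfBlock}(C_2,b_2)$ together with Lemma~\ref{lemma-cstar-2-varofblock-invert} yields $b_1=b_2$, so the corresponding \cstar2 events, which differ from the \cstar3 ones only in carrying $b_i$ instead of the abstract triple, are also equal. Thus the two \cstar2 executions produce identical traces.

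The main obstacle is establishing, in both directions, that the preconditions of Lemma~\ref{lemma-cstar-2-varofblock-invert} and Lemma~\ref{lemma-cstar-2-noninterference-invar} genuinely hold jointly on the two executions at every step. The subtlety is that Lemma~\ref{lemma-cstar-2-noninterference-invar} is itself stated under the hypothesis that the two \cstar2 executions are safe and emit identical traces, so in the reverse direction I cannot invoke it directly; instead I need to strengthen the induction hypothesis to carry both the invariants and the equality of traces-so-far, and close the loop by using \textsc{VarOfBlock} inversion precisely at the \textsc{Read}/\textsc{Write} steps to promote equal \cstar3 events into the equal concrete-pointer events required to extend trace equality for \cstar2 by one more step. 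Once this strengthened induction goes through, the biconditional follows.
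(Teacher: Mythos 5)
Your proposal is correct and follows essentially the same route as the paper's proof: both rely on the invariants of Lemma~\ref{lemma-cstar-2-invar} and Lemma~\ref{lemma-cstar-2-noninterference-invar}, on the observation that \cstar2 and \cstar3 share the same configurations and steps so that only the \textsc{Read}/\textsc{Write} events differ, and on Lemma~\ref{lemma-cstar-2-varofblock-invert} to match those events in the reverse direction. Your explicit strengthening of the induction hypothesis in the reverse direction (carrying trace-equality-so-far to avoid a circular appeal to Lemma~\ref{lemma-cstar-2-noninterference-invar}) is a detail the paper's terse proof leaves implicit, and it is the right way to close that step.
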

\begin{proof}
  Use the invariants of Lemma~\ref{lemma-cstar-2-invar} and
  Lemma~\ref{lemma-cstar-2-noninterference-invar}. In fact, the
  configurations and steps are the same in \cstar2 as in \cstar3, only the
  traces differ between \cstar2 and \cstar3. \textsc{Read} and \textsc{Write}
  steps match between \cstar2 and \cstar3 thanks to
  Lemma~\ref{lemma-cstar-2-varofblock-invert}.
\end{proof}

\begin{lemma}[\cstar3 invariants]
  The invariants of Lemma~\ref{lemma-cstar-2-invar} and
  Lemma~\ref{lemma-cstar-2-noninterference-invar} also hold in \cstar3.
\end{lemma}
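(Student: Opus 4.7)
The plan is to observe that the transition from \cstar2 to \cstar3 changes only the \emph{labels} emitted by the \textsc{Read} and \textsc{Write} rules (replacing a concrete block identifier $b$ by its abstract counterpart computed via \textsc{VarOfBlock}); the configuration component of each transition, i.e., the triple $(S, V, ss, f, A)$ together with the frame structure and the memory updates performed by $\symget/\symset$, is completely unchanged. Since both Lemma~\ref{lemma-cstar-2-invar} and Lemma~\ref{lemma-cstar-2-noninterference-invar} are statements purely about the configuration component (domains of memories, equality of $f$, $A$, continuations, variable bindings on $A$, disjointness of block domains, etc.) and never refer to the emitted trace, the lemmas transport to \cstar3 essentially for free.

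Concretely, I would first establish a simulation lemma stating that, for every $n$, there is a step-indexed bijection between \cstar2 and \cstar3 executions of length $n$ that preserves configurations exactly (only the label components of the derivations differ). This is proved by a straightforward induction on $n$ with a case split on the \cstar3 rule applied; for every rule other than $\text{Read}_3$ and $\text{Write}_3$ the two semantics share the rule verbatim, and for those two rules the configuration update is literally the same as in $\text{Read}_2$/$\text{Write}_2$ (the only difference is that $\text{VarOfBlock}(C,b)$ is computed to form the label, which depends on the configuration but does not modify it).

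Having this correspondence, the proof of the first invariant proceeds by replaying the induction of Lemma~\ref{lemma-cstar-2-invar}, observing that in each inductive case the premises and the updated configuration are identical to the \cstar2 case, so the invariant is preserved. For the second invariant, I would do the same replay of the proof of Lemma~\ref{lemma-cstar-2-noninterference-invar}; the only place where the argument might seem to depend on traces is the step where we use equality of traces to conclude that the two executions apply the \emph{same} rule. Here I would invoke Lemma~\ref{lemma-cstar-2-to-cstar-3-noninterference} (or, more directly, the fact that the configurations match and \cstar3 is still syntax-directed) to recover the same conclusion: equal \cstar3 traces imply equal \cstar2 traces on the corresponding executions, and hence the invariant from Lemma~\ref{lemma-cstar-2-noninterference-invar} applies.

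The only mildly delicate point will be making precise the statement ``equal traces in \cstar3 imply we can use Lemma~\ref{lemma-cstar-2-noninterference-invar}'' without circularity, since Lemma~\ref{lemma-cstar-2-to-cstar-3-noninterference} itself relies on invariants. I would avoid the circularity by proving both invariants simultaneously by induction on the number of execution steps, using at step $n$ only the instances of the invariants at steps $< n$, which is exactly the structure of the \cstar2 proofs. No new lemma machinery is needed.
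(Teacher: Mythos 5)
Your proposal is correct and matches the paper's (largely implicit) argument: the paper leaves this particular lemma without a written proof, but its one-line proof of the analogous \cstar4 invariants lemma and its proof of Lemma~\ref{lemma-cstar-2-to-cstar-3-noninterference} rest on exactly your observation that \cstar2 and \cstar3 share configurations and transition steps and differ only in the labels emitted by \textsc{Read}/\textsc{Write}, so the configuration-only invariants of Lemma~\ref{lemma-cstar-2-invar} transfer verbatim and the noninterference invariant of Lemma~\ref{lemma-cstar-2-noninterference-invar} transfers once trace equality is transported. Your explicit treatment of the potential circularity with Lemma~\ref{lemma-cstar-2-to-cstar-3-noninterference}---resolving it by a simultaneous step-indexed induction and using Lemma~\ref{lemma-cstar-2-varofblock-invert} to convert equal abstract labels back into equal concrete ones at each step---is in fact more careful than anything the paper writes down for this lemma.
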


\subsection{Local variable hoisting} \label{sec:hoisting}

On \cstar3, hoisting can be performed, which will modify the structure of
the memory (namely the number of memory blocks allocated), which is
fine thanks to the fact that event traces carry abstract pointer
representations instead of concrete pointer values.

\paragraph{Memory allocator and dangling pointers}
However, we have to cope with dangling pointers whose address should
not be reused. Consider the following \cstar code:
\[
\begin{array}{l}
  \earray{\kint}{x}{1}{18}; \\
  \earray{\kint *}{p}{1}{x}; \\
  \{ \\
  ~ \earray{\kint}{y}{1}{42}; \\
  ~ \ewrite{p}{y}; \\
  \} \\
  \{ \\
  ~ \earray{\kint}{z}{1}{1729}; \\
  ~ \evardecl{\kint *}{q}{\eread{p}}; \\
  ~ f(q) \\
  \} \\
\end{array}
\]
With a careless memory allocator which would reuse the space of $y$
for $z$, the above program would call $f$ not with a dangling pointer
to $y$, but instead with a valid pointer to $z$, which might not be
expected by the programmer. Then, if $f$ uses its argument to access
memory, what should the \textsc{VarOfBlock} algorithm compute? I claim
that such a \cstar3 program generated from a safe \fstar program should never
try to access memory through dangling pointers.

As far as I understood, a \lowstar program obtained from a well-typed \fstar
program should be safe \emph{with any memory allocator}, including
with a memory allocator which never reuses previously allocated block
identifiers, as in CompCert.\footnote{Formally, a \lowstar (or \cstar)
  configuration should be augmented with a state $\Sigma$ so that the
  \lowstar \textsc{NewBuf} rule (or the \cstar \textsc{ArrDecl} rule), instead
  of picking a block identifier $b$ not in the domain of the memory,
  call an allocator $\kw{alloc}$ with two parameters, the domain $D$
  of the memory and the state $\Sigma$, and returning the fresh block
  $b \not\in D$ and a new state $\Sigma'$ for future
  allocations. Then, a CompCert-style allocator would, for instance,
  use $\mathbb N$ as the type of block identifiers, as well as for the
  type of $\Sigma$, so that if $\kw{alloc}(D, \Sigma) = (b, \Sigma')$,
  then it is ensured that $b \not\in D$, $\Sigma \leq b$ and $b <
  \Sigma'$. In that case, the domain of the memory being always within
  $\Sigma$, could then be easily proven as an invariant of \lowstar (or
  \cstar).} In particular, a \lowstar program safe with such a CompCert-style
allocator will actually never try to access memory through a dangling
pointer to a local variable no longer in scope.

Then, traces with concrete pointer values are preserved from \lowstar to
\cstar2 \emph{with the allocator fixed} in advance in all of \lowstar, \cstar and
\cstar2; and functional correctness and noninterference are also
propagated down to \cstar3 using the same memory allocator.

There should be a way to prove the following:
\begin{lemma}
  If a \cstar3 program is safe with a CompCert-style memory allocator,
  then it is safe with any memory allocator and the traces (with
  abstract pointer representations) are preserved by change of memory
  allocator.
\end{lemma}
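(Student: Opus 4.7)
The plan is to prove this by bisimulation between two executions of the same \cstar3 program: one with a CompCert-style allocator $\kw{alloc}_0$ (which uses a monotonically increasing counter and never reuses block identifiers) and one with an arbitrary allocator $\kw{alloc}_1$. The core observation is that the \cstar3 semantics only exposes block identifiers through \textsc{VarOfBlock}, which reads off a symbolic name $(f, n, x)$ from the call stack structure, and never from the raw identifier itself. So if two runs agree on the stack structure and the variable/memory contents up to a bijection on block IDs, they agree on all emitted events.

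First, I would define a partial bijection $\phi$ from block identifiers in the $\kw{alloc}_0$ run to those in the $\kw{alloc}_1$ run, maintained alongside the bisimulation, and the relation $C_0 \sim_\phi C_1$ asserting that (i) the two configurations have the same $ss$, $f$, $A$, and stack shape; (ii) their variable mappings $V$ and continuations $E$ are equal after replacing each pointer block $b$ in $C_0$ by $\phi(b)$; and (iii) their memories agree value-by-value after the same replacement. Then I would prove the main lemma: if $C_0 \sim_\phi C_1$ and $C_0 \step_\olabel C_0'$, then there exist $C_1', \phi' \supseteq \phi$ with $C_1 \step_\olabel C_1'$ and $C_0' \sim_{\phi'} C_1'$. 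The cases for pure expression evaluation, control flow, and calls/returns follow immediately since the abstract structure is preserved; the allocation rule $\text{ArrDecl}_2$ extends $\phi$ to map the fresh $\kw{alloc}_0$ block to the fresh $\kw{alloc}_1$ block (freshness in the latter is guaranteed by the allocator's contract); and for $\text{Read}_3$/$\text{Write}_3$, Lemma~\ref{lemma-cstar-2-varofblock-invert}'s argument adapts to show that \textsc{VarOfBlock} applied to $b$ in $C_0$ equals \textsc{VarOfBlock} applied to $\phi(b)$ in $C_1$, since both depend only on the shared stack shape, the shared function/array-name structure, and the $\phi$-related variable maps.

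From this step-wise simulation, safety transfers as in Lemma~\ref{lemma-refine-safety}: any non-stuck step of $C_0$ yields a non-stuck step of $C_1$, and final states match. Trace equality is immediate since each simulated step emits the identical $\olabel$, and the $\symread/\symwrite$ events in \cstar3 carry the abstract $(\ell, n, \vec{fd})$ rather than any $\phi$-sensitive datum.

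The main obstacle will be memory cells holding pointer values whose target block has been popped (``dangling pointers''). Under $\kw{alloc}_0$, such a block ID is permanently fresh, so the cell survives in memory but is never a live key of $\phi$; under $\kw{alloc}_1$, the same ID position might now correspond to a reused live block. I would handle this by making $\phi$ an ever-growing partial injection that remembers all past allocations in the $\kw{alloc}_0$ run, and by weakening the memory invariant: stored pointer values are related if the $\kw{alloc}_0$ side's block is either currently live (then $\phi$ sends it to the currently live corresponding block) or dead (then no constraint is imposed on the $\kw{alloc}_1$ side's value, which may happen to alias a reused live block). The key lemma then is that any step of $C_0$ which would actually consult such a cell via \textsc{VarOfBlock} must involve a block that is live in $C_0$ (otherwise $C_0$ is stuck, contradicting safety), so the weakened invariant suffices to conclude that $\phi$ relates the accessed blocks and the matching step of $C_1$ produces the same abstract event. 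This is the delicate part of the argument and is the place where the CompCert-style safety hypothesis is really used.
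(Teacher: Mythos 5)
Your proposal follows essentially the same route as the paper's: a lock-step simulation between the two runs, maintaining a renaming of block identifiers together with an invariant that the code, stack shape, variable maps and memories agree up to that renaming, with the key obligation being that \textsc{VarOfBlock} is stable under the renaming so that the abstract $\symread$/$\symwrite$ events coincide; safety and trace preservation then follow by the determinism-based flipping used elsewhere in the appendix. The one place you should correct yourself is the insistence that $\phi$ be a partial injection (or bijection): because the arbitrary target allocator may \emph{reuse} identifiers, two distinct source blocks---one dead, one live, which the CompCert-style allocator keeps distinct forever---can end up mapped to the same target identifier, so the renaming must be allowed to be functional but non-injective, which is exactly the choice the paper makes. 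Your own weakening of the memory invariant on dead blocks already does the real work here: dropping the injectivity claim is what makes the invariant maintainable at the $\text{ArrDecl}_2$ step, and your observation that a safe source run never consults \textsc{VarOfBlock} on a dead block (so only the live portion of $\phi$, which \emph{is} injective, is ever relevant to the emitted events) is the right justification for why non-injectivity on dead blocks is harmless.
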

\begin{proof}
  Lock-step simulation where the configurations have the same
  structure but a (functional but not necessarily injective) renaming
  of block identifiers from a CompCert-style allocator to any
  allocator is maintained and augmented throughout the execution. In
  particular, we have to prove that \textsc{VarOfBlock} is stable
  under such renaming.
\end{proof}

If so, then for the remainder of this paper, we can consider a
CompCert-style allocator.

\paragraph{Hoisting}

\begin{definition}[Hoisting]
  For any list of statements $\mathit{ss}$ with unambiguous local
  variables, the \emph{hoisting} operation $\kw{hoist}(\mathit{ss}) =
  (\mathit{ads}, \mathit{ss}')$ is so that $\mathit{ads}$ is the list
  of all array declarations in $\mathit{ss}$ (regardless of their
  enclosing code blocks) and $\mathit{ss}'$ is the list of statements
  $\mathit{ss}$ with all array declarations replaced with $()$.

  Then, hoisting the local variables in the body $ss$ of a function is
  defined as replacing $ss$ with the code block $\{ \mathit{ads};
  \mathit{ss}' \} $ where $\kw{hoist}(\mathit{ss}) = (\mathit{ads},
  \mathit{ss}')$; and then, hoisting the local variables in a program
  $p$, $\kw{hoist}(p)$, is defined as hoisting the local variables in
  each of its functions.
\end{definition}

\begin{definition}[Renaming of block identifiers] \label{def:cstar-3-rename-blocks}
  Let $C_1, C_2$ be two \cstar3 configurations. Block identifier $b_1$ is
  said to \emph{correspond} to block identifier $b_2$ from $C_1$ to
  $C_2$ if, and only if, either $\textsc{VarOfBlock}(C_1, b_1)$ is
  undefined, or $\textsc{VarOfBlock}(C_1, b_1)$ is defined and equal
  to $\textsc{VarOfBlock}(C_2, b_2)$.

  Then, value $v_1$ corresponds to $v_2$ from $C_1$ to $C_2$ if, and
  only if, either they are equal integers, or they are pointers $(b_1,
  n_1, fd_1)$, $(b_2, n_2, fd_2)$ such that $fd_1 = fd_2$, $n_1 = n_2$
  and $b_1$ corresponds to $b_2$ from $C_1$ to $C_2$, or they are
  structures with the same field identifiers and, for each field $f$,
  the value of the field $f$ in $v_1$ corresponds to the value of the
  field $f$ in $v_2$ from $C$ to $C'$.
\end{definition}

\begin{theorem}[Correctness of hoisting]
  If $\sys(p, V, ss)$ is safe in \cstar3 with a CompCert-style allocator,
  then $\sys(p, V, ss)$ and $\sys(\kw{hoist}(p), V, \kw{hoist}(ss))$
  have the same execution traces (and in particular, the latter is
  also safe) in \cstar3 using the same CompCert-style allocator.
\end{theorem}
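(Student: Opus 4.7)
The plan is to prove correctness of hoisting via a forward simulation from the source \cstar3 execution to the hoisted execution that preserves traces exactly, and then to flip it into a trace equivalence using the determinism of \cstar3. The crucial point that makes this workable is that \cstar3 event traces already use the abstract pointer representation of \textsc{VarOfBlock}, which maps a block identifier to a triple $(f, n, x)$ of function name, recursion depth, and source variable name---data invariant under hoisting.

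First I would define a simulation relation $R$ between a source configuration $C = (S, V, ss, f, A)$ and a hoisted configuration $C' = (S'', V', ss'', f, A')$. The relation enforces: (i) $ss''$ and the continuations in the frames of $S''$ are obtained from $ss$ and the frames of $S$ by replacing every array declaration with $()$, with the enclosing function body wrapped in a block that first runs all of its hoisted declarations; (ii) the two stacks have the same number of call frames, with matching function names and recursion depths; (iii) $A' \supseteq A$ at every level, reflecting that all arrays of the enclosing function are already declared after hoisted function entry; (iv) a partial injection $\rho$ between block identifiers, defined exactly on blocks currently live in the source, with $V(x) = (b, 0, [])$ iff $V'(x) = (\rho(b), 0, [])$ and similarly at each corresponding frame; and (v) the contents of every live source block $b$ agree with those of $\rho(b)$ after recursively rewriting any stored pointer values through $\rho$.

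The core lemma is then that for any $R$-related pair $(C, C')$ and any block $b$ live in $C$, $\textsc{VarOfBlock}(C, b) = \textsc{VarOfBlock}(C', \rho(b))$: both traversals find the same enclosing function and recursion depth because the call-frame structures coincide (Lemma~\ref{lemma-cstar-2-invar}), and they retrieve the same variable name because the variable-to-block maps commute with $\rho$. Armed with this, the simulation proceeds by case analysis on the source step. Read, write and memset steps produce identical abstract trace events by the core lemma; conditionals, assignments, blocks, and expression evaluation step in lockstep. The two nontrivial cases are function call---matched on the hoisted side by a \textsc{Call} step followed by entering the hoisting wrapper block and executing the chain of pre-declared array allocations, each of which \emph{reserves} an entry in $\rho$ that will be filled in later; and source \textsc{ArrDecl}, matched by a zero-step stutter that extends $\rho$ so that the freshly allocated source block corresponds to the pre-allocated hoisted block bound to the same variable name, with content $\None$ on both sides so that condition (v) holds trivially.

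The main obstacle will be the interaction between freshly allocated blocks and the partial bijection $\rho$ across the function-call boundary. The CompCert-style allocator assumption, justified just above the theorem statement, ensures block identifiers are never reused, so $\rho$ can be extended monotonically and remains injective. The unambiguous-variables precondition is essential here: it guarantees that each source array declaration names a unique variable, so the intended extension of $\rho$ is well-defined and does not clash with a previously bound block. A small well-founded measure counting as-yet-unexecuted hoisted array declarations in the current function prefix justifies the stuttering in the sense of Definition~\ref{def-quasi-refine}; Lemma~\ref{lemma-quasi} then promotes the quasi-refinement to a refinement, and Lemma~\ref{lemma-reverse} combined with the determinism of \cstar3 flips the diagram to yield trace equality in both directions, which in particular implies that the hoisted program is safe.
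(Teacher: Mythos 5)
Your proposal matches the paper's proof: a forward (downward) simulation whose invariant combines the invariants of Lemma~\ref{lemma-cstar-2-invar} with a correspondence between source and target blocks under which variable maps, memory contents and \textsc{VarOfBlock} results agree, with extra target steps at function entry/exit, and the diagram flipped into trace equality via determinism of \cstar3. The only real difference is presentational: the paper defines block correspondence semantically (a source block whose \textsc{VarOfBlock} is undefined corresponds to \emph{every} target block, which is how it dispatches stored dangling pointers once a variable falls out of scope), whereas you carry an explicit partial injection $\rho$ restricted to live blocks and would need the same convention --- or a monotonically growing $\rho$ over all blocks ever allocated --- to make your contents clause (v) well-defined.
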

\begin{proof}
  Forward downward simulation from \cstar3 before to \cstar3 after hoisting,
  where one step before corresponds to one step after, except at
  function entry where at least two steps are required in the compiled
  program (function entry, followed by entering the enclosing block
  that was added at function translation, then allocating all local
  variables if any), and at function exit, where two steps are
  required in the compiled program (exiting the added block before
  exiting the function.)

  Then, since \cstar3 is deterministic, the forward downward simulation is
  flipped into an upward simulation in the flavor of CompCert; thus
  preservation of traces.

  For the simulation diagram, we combine the invariants of
  Lemma~\ref{lemma-cstar-2-invar} with the following invariant between
  configurations $C = (S, V, ss, f, A)$ before hoisting and $C' = (S',
  V', ss', f', A')$ after hoisting:
  \begin{itemize}
  \item for all variables $x$ defined in $V$, $V(x)$, if defined,
    corresponds to $V'(x)$ from $C$ to $C'$
  \item $ss'$ is obtained from $ss$ by replacing all array
    declarations with $()$
  \item $f' = f$
  \item $A \subseteq A'$
  \item the set of variables declared in $ss$ is included in $A'$
  \item if a block identifier $b$ corresponds to $b'$ from $C$ to
    $C'$, then the value $\symget(S, b, n, fds)$, if defined, corresponds
    to $\symget(S', b', n, fd)$ from $C$ to $C'$
  \end{itemize}
  Each frame of the form $(\None, V_1, E, f_1, A_1)$ in $S$ is
  replaced with two frames in $S'$, namely $(\None, V_1', E, f_1,
  A_1') ; (M', V_2', \symhole, f_2, A_2')$ where:
  \begin{itemize}
  \item for all variables $x$ defined in $V_1$, $V_1(x)$ corresponds to
    $V'_1(x)$ from $C$ to $C'$
  \item all array declarations of $E$ are present in $A_1'$
  \item $E'$ is obtained from $E$ by replacing all array
    declarations with $()$
  \item $f_1' = f_1$
  \item $A_1 \subseteq A_1'$
  \item $A_2'$ contains all variable names of arrays declared in
    $f_2$, and is the block domain of $M'$
  \item $V_2'$ is defined for all variable names in $A_2'$ as a block
    identifier valid in $M'$
  \item all memory locations of arrays declared in $f_2$ are valid in
    $M'$
  \end{itemize}
  Each frame of the form $(M, V_1, E, f_1, A_1)$ in $S$ with $M \not=
  \None$ is replaced with one frame in $S'$, namely $(\{ \}, V_1', E',
  f_1', A_1')$ where:
  \begin{itemize}
  \item all blocks of $M$ are defined in $A_1'$
  \item for all variables $x$ defined in $V_1$, $V_1(x)$, if defined,
    corresponds to $V_1'(x)$ from $C$ to $C'$
  \item all array declarations of $E$ are present in $A_1'$
  \item $E'$ is obtained from $E$ by replacing all array declarations
    with $()$
  \item $f_1' = f_1$
  \item $A_1 \subseteq A_1'$
  \end{itemize}

  The fact that we are using a CompCert-style memory allocator is
  crucial here to ensure that, once a source block identifier $b$
  starts corresponding to a target one, it remains so forever, in
  particular after its block has been freed (i.e. after its
  corresponding variable has fallen out of scope), since in the latter
  case, it corresponds to any block identifier and nothing has to be
  proven then (since accessing memory through it will fail in the
  source, per the fact that the CompCert-style memory allocator will
  never reuse $b$.)
\end{proof}

\subsection{Local structures} \label{sec:local-struct}

\cstar has structures as values, unlike CompCert C and Clight, which both
need all structures to be allocated in memory. With a naive \cstar-to-C
compilation phase, where \cstar structures are compiled as C structures
and passed by value to functions, we experienced more than 60\%
slowdown with CompCert compared to GCC -O1, using the \lamstar
benchmark in Figure~\ref{fig:struct-erase-benchmark}, extracted to C as Figure~\ref{fig:struct-erase-before}. This
is because, unlike GCC, CompCert cannot detect that a structure is
never taken its address, which is mostly the case for local structures
in code generated from \cstar. This is due to the fact that, even at the
level of the semantics of C structures in CompCert, a field access is
tantamount to reading in memory through a constant offset. In other
words, CompCert has no view of C structures other than as memory
regions. To solve this issue, we replace local structures with their
individual non-compound fields, dubbed as \emph{structure erasure}. Our
benchmark after structure erasure is shown in Figure~\ref{fig:struct-erase-after}.

\begin{figure}
\begin{lstlisting}[language=fstar]
module StructErase
open FStar.Int32
open FStar.ST

type u = { left: Int32.t; right: Int32.t }

let rec f (r: u) (n: Int32.t): Stack unit (fun _ -> true) (fun _ _ _ -> true)  =
 push_frame();
 (
  if lt n 1l
  then ()
  else
   let r' : u = { left = sub r.right 1l ; right = add r.left 1l } in
   f r' (sub n 1l)
 );
 pop_frame()

let test () = 
 let r : u = { left = 18l ; right = 42l } in
 let z2 = mul 2l 2l in
 let z4 = mul z2 z2 in
 let z8 = mul z4 z4 in
 let z16 = mul z8 z8 in
 let z24 = mul z8 z16 in
 let z = mul z24 2l in
 f r z  (* without structure erasure, CompCert segfaults
           if replaced with 2*z *)
\end{lstlisting}
\caption{\lowstar benchmarking for structure erasure}
\label{fig:struct-erase-benchmark}
\end{figure}

\begin{figure}
\end{figure}

\begin{figure}
\begin{lstlisting}
typedef struct {
  int32_t left;
  int32_t right;
} StructErase_u;

void StructErase_f(StructErase_u r, int32_t n) {
  if (n < (int32_t )1) { } else {
    StructErase_u r_ = {
      .left = r.right - (int32_t )1,
      .right = r.left + (int32_t )1
    };
    StructErase_f(r_, n - (int32_t )1);
  }
}

void StructErase_test() {
  StructErase_u r = {
    .left = (int32_t )18,
    .right = (int32_t )42
  };
  int32_t z2 = (int32_t )4;
  int32_t z4 = z2 * z2;
  int32_t z8 = z4 * z4;
  int32_t z16 = z8 * z8;
  int32_t z24 = z8 * z16;
  int32_t z = z24 * z2;
  StructErase_f(r, z);
  return;
}
\end{lstlisting}
\caption{Extracted C code, before structure erasure}
\label{fig:struct-erase-before}
\end{figure}

\begin{figure}
\begin{lstlisting}
void StructErase_f(int32_t r_left, int32_t r_right, int32_t n) {
  if (n < (int32_t )1) { } else {
    int32_t r__left = r_right - (int32_t )1
    int32_t r__right = r_left + (int32_t )1;
    StructErase_f(r__left, r__right, n - (int32_t )1);
  }
}

void StructErase_test() {
  int32_t r_left = (int32_t )18;
  int32_t r_right = (int32_t )42;
  int32_t z2 = (int32_t )4;
  int32_t z4 = z2 * z2;
  int32_t z8 = z4 * z4;
  int32_t z16 = z8 * z8;
  int32_t z24 = z8 * z16;
  int32_t z = z24 * z2;
  StructErase_f(r_left, r_right, z);
  return;
}
\end{lstlisting}
\caption{Extracted C code, after structure erasure}
\label{fig:struct-erase-after}
\end{figure}

In our noninterference proofs where we prove that memory accesses are
the same between two runs with different secrets, treating all local
structures as memory accesses would become a problem, especially
whenever a field of a local structure is read as an expression (in
addition to the performance decrease using CompCert.)  This is another
reason why, in this paper (although a departure from our current
KreMLin implementation), we propose an easier proof based on the fact
that \cstar local structures should not be considered as memory regions in
the generated C code.

In addition to buffers (stack-allocated arrays), \cstar uses local
structures in three ways: as local expressions, passed as an argument
to a function by value, and returned by a function. Here we claim that
it is always possible to not take them as memory accesses, except for
structures returned by a function: in the latter case, it is necessary
for the caller to allocate some space on its own stack and pass a
pointer to it to the callee, which will use this pointer to store its
result; then, the caller will read the result back from this memory
area. Thus, we claim that, at the level of CompCert Clight, the only
additional memory accesses due to local structures are structures
returned by value.

So we extend \cstar3 with the ability for functions to have several
arguments, all of which shall be passed at each call site (there shall
be no partial applications.)

\subsubsection{Structure return} \label{sec:struct-return}
To handle structure return, we also have to account for their memory
accesses by adding corresponding events in the trace. Instead of
directly adding the memory accesses and trying to prove both program
transformation and trace transformation at the same time, we will
first add new $\symread$ and $\symwrite$ events at function return,
without those events corresponding to actual memory accesses yet;
then, in a second pass, we will actually introduce the corresponding
new stack-allocated variables.

We assume given a function $\kw{FunResVar}$ such that for any list of
statements $ss$ and any variable $x$, $\kw{FunResVar}(ss, x)$ is a
local variable that does not appear in $ss$ and is distinct from
$\kw{FunResVar}(ss, x')$ for any $x' \not= x$.

Let $p$ be a program $p$ and $ss$ be an entrypoint list of statements,
so we define $\kw{FunResVar}(f, x) = \kw{FunResVar(ss', x)}$ if $f
(\_) \{ ss' \}$ is a function defined in $p$, and
$\kw{FunResVar}(\None, x) = \kw{FunResVar}(ss,x)$.

Then, we define \cstar4 as the language \cstar3 where the $\textsf{Ret}_2$
function return rule is replaced with two rules following
Figure~\ref{fig:cstar-4}, adding the fake $\symread$ and
$\symwrite$. We do not produce any such memory access event if the
result is discarded by the caller; thus, we also need to check in the
callee whether the caller actually needs the result. To prepare for
the second pass where this check will be done by testing whether the
return value pointer argument is null, we need to account for this
test in the event trace in \cstar4 as well.

\begin{figure*}
\begin{small}
  \begin{mathpar}
\inferrule* [Right=$\text{Ret}_4\text{Some}$]
{
  \eval{e}{(p,V)}=v \\
  \kw{FunResVar}(f', x) = x' \\
  \theta = \brt ; \symwrite\;(x', 0, []);\symread\;(x',0,[])
}
{
  p \vdash (S;(\None, V', t ~ x = \symhole; ss', f', A'), V, \ereturn\;e; ss, f, A) \step_{\theta}(S, V', t ~ x = v ; ss', f', A')
}

\\

\inferrule* [Right=$\text{Ret}_4\text{None}$]
{
  \eval{e}{(p,V)}=v
}
{
  p \vdash (S;(\None, V', \symhole; ss', f', A'), V, \ereturn\;e; ss, f, A) \step_{\brf}(S, V', ss', f', A')
}
  \end{mathpar}
\end{small}
\caption{\cstar4 Amended Configuration Reduction} \label{fig:cstar-4}
\end{figure*}

\begin{theorem}[\cstar3 to \cstar4 functional correctness]
  If $\sys(p, V, ss)$ is safe in \cstar3 and has unambiguous local
  variables, then it has the same behavior and trace as in \cstar4 with
  $\brt$, $\brf$, $\symread$ and $\symwrite$ events removed.
\end{theorem}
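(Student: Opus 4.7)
The plan is to establish a forward simulation, and then a bisimulation, between the \cstar3 and \cstar4 transition systems of $\sys(p,V,ss)$. Since \cstar4 differs from \cstar3 only in replacing rule $\text{Ret}_2$ with $\text{Ret}_4\text{Some}$ and $\text{Ret}_4\text{None}$, the underlying state transitions are essentially identical; the only differences are (i) the extra events $\brt,\symwrite,\symread$ (resp.\ $\brf$) emitted at function return, all of which lie in the filter set $\{\brt,\brf,\symread,\symwrite\}$, and (ii) the fact that $\text{Ret}_4\text{None}$ collapses into a single step the pair of \cstar3 steps $\text{Ret}_2$ followed by the $\text{Expr}$ rule that evaluates and discards the value in the intermediate statement list $v;\ls{s}'$.

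First I would define a simulation relation $R$ between \cstar3 and \cstar4 configurations that is equality almost everywhere, with one ``stuttering'' case relating a \cstar3 configuration whose top statement list is $v;\ls{s}'$ (just after a $\text{Ret}_2$ into a discarding context) to the corresponding \cstar4 configuration whose top statement list is already $\ls{s}'$. Starting from identical initial configurations, I would prove a forward simulation diagram: for each \cstar3 step, exhibit a matching sequence of zero or one \cstar4 steps whose emitted events are either identical to the \cstar3 event or lie entirely in $\{\brt,\brf,\symread,\symwrite\}$. Every rule other than $\text{Ret}_2$ is shared between the two languages and produces identical events in lock-step. For the return case, the caller's continuation context $E$ is, by the grammar of evaluation contexts (and confirmed as an invariant by Lemma~\ref{lemma-cstar-2-invar}), either $t\;x=\symhole;\ls{s}'$ or $\symhole;\ls{s}'$; in the first case \cstar4 applies $\text{Ret}_4\text{Some}$ step-for-step, emitting $\brt;\symwrite;\symread$; in the second, \cstar4 applies $\text{Ret}_4\text{None}$, producing the combined effect of the \cstar3 $\text{Ret}_2{+}\text{Expr}$ pair while emitting $\brf$. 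In every case the new events lie in the filter set.

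Having the forward direction, and noting that \cstar4 remains deterministic (its rules are syntax-directed and coincide with those of \cstar3 on state, modulo renaming of block identifiers and the syntactically disjoint $\text{Ret}_4$ case split), I would invoke the flipping technique of Lemma~\ref{lemma-reverse} (or Corollary~\ref{coro-reverse}) together with the assumed safety of \cstar3 to promote the forward simulation to a bisimulation and transfer safety to \cstar4. Combined with the observation that every event that \cstar4 produces but \cstar3 does not lies in the removed set, this yields the stated equality of behavior and of the event-filtered traces.

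The main obstacle is the bookkeeping around $\text{Ret}_4\text{None}$: one has to justify the stuttering carefully, showing that the intermediate \cstar3 configuration with statement list $v;\ls{s}'$ is indeed related by $R$ to the one-step-ahead \cstar4 configuration, and that the subsequent \cstar3 $\text{Expr}$ step re-synchronizes the two systems without emitting any event. Beyond this, the proof is routine case analysis on \cstar3 reductions, closely paralleling the \lamstar-to-\cstar bisimulation proof (Lemma~\ref{lem:lamstar-to-cstar}), and inheriting the determinism and safety-transfer machinery already developed.
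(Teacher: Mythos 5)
Your proof is correct, but it is considerably heavier than the paper's, which disposes of the theorem in a single sentence: once the $\brt$, $\brf$, $\symread$, $\symwrite$ events are erased, \cstar3 and \cstar4 are declared to be \emph{the same language}, so there is nothing left to prove. You instead build an explicit forward simulation with a stuttering case and then flip it using determinism of \cstar4 and the assumed safety of \cstar3. What your extra machinery buys is precision on exactly the point the paper elides: as written, $\text{Ret}_4\text{None}$ steps directly to $(S,V',\ls{s}')$, whereas $\text{Ret}_2$ with context $\symhole;\ls{s}'$ first produces the intermediate configuration $(S,V',v;\ls{s}')$ and needs a subsequent silent $\textsc{Expr}$ step; so the two transition systems are not literally identical even after event erasure --- only their big-stepped behaviors are, which is harmless precisely because behaviors erase step counts and the extra step is silent. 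Your stuttering case handles this honestly. One small caution: the refinement and flipping lemmas in the appendix are stated for transition systems over the same label monoid, so to invoke them you must work throughout with the event-filtered version of \cstar4 (equivalently, filter both sides), which is also how the theorem statement has to be read in any case, since \cstar3 itself emits $\symread$ and $\symwrite$ events. With that understood, your argument goes through; it is doing by hand what the paper asserts by inspection.
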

\begin{proof}
  With all such events removed, \cstar3 and \cstar4 are actually the same
  language.
\end{proof}

\begin{lemma}[\cstar4 invariants]
  The \cstar3 invariants of Lemma~\ref{lemma-cstar-2-invar}
  and~\ref{lemma-cstar-2-noninterference-invar} also hold on \cstar4.
\end{lemma}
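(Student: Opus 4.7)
The plan is to prove this lemma by induction on the number of execution steps, leveraging the fact that \cstar4 differs from \cstar3 only in the semantics of function return: the single rule $\textsf{Ret}_2$ is replaced by two rules $\textsf{Ret}_4\textsf{Some}$ and $\textsf{Ret}_4\textsf{None}$ that perform exactly the same state update but additionally emit events into the trace. Crucially, the invariants stated in Lemma~\ref{lemma-cstar-2-invar} and Lemma~\ref{lemma-cstar-2-noninterference-invar} are properties of configurations (stacks, memory blocks, variable assignments, code fragments) and do not refer to traces at all, so the augmentation of traces is irrelevant to them.

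First I would handle the structural invariants of Lemma~\ref{lemma-cstar-2-invar}. I would replay the original induction case-by-case on the \cstar4 step relation. For every rule shared with \cstar3 the argument is literally unchanged. For the two new return rules, both perform precisely the state transformation of $\textsf{Ret}_2$ (popping a call frame, restoring $V'$, and placing the returned value in the parent's continuation), so the preservation of the unambiguous-names, block-ownership, frame-consistency, and disjointness invariants follows exactly as in the original proof of Lemma~\ref{lemma-cstar-2-invar}. The split into the $\textsf{Some}$/$\textsf{None}$ variants dispatches on the shape of the evaluation context $E$ in the caller frame, which is recorded in the configuration itself, so there is no new case analysis needed beyond recognising that both cases collapse to the old $\textsf{Ret}_2$ case at the level of states.

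Next I would lift the noninterference invariant of Lemma~\ref{lemma-cstar-2-noninterference-invar} to \cstar4. Again I would induct on step count. Because the structural invariants just proved guarantee that the two executions have matching code $ss_1=ss_2$, matching stack shapes, and in particular matching continuations $E_1 = E_2$ at the top of each frame, both executions must agree on whether the caller discards the result (applying $\textsf{Ret}_4\textsf{None}$) or binds it (applying $\textsf{Ret}_4\textsf{Some}$). The invariant is therefore preserved on the state component by the same argument as in \cstar3. The only subtle point is that the lemma's hypothesis in its \cstar3 form required the two runs to produce \emph{the same traces}, and \cstar4 traces now carry extra $\brt,\brf,\symread,\symwrite$ events from returns; I would show that whenever $\textsf{Ret}_4\textsf{Some}$ fires in both runs the events coincide, since $\kw{FunResVar}(f',x)$ is a purely syntactic function of the caller $f'$ and the bound variable $x$, both of which are equal in the two configurations by the already-established $E_1=E_2$ and $f'_1=f'_2$ clauses.

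The main obstacle I anticipate is purely bookkeeping: carefully threading the hypothesis that the two \cstar4 runs have equal traces through the induction, noting that the extra events added at returns are \emph{deterministic functions of the matching parts of the configurations}, so trace equality on the augmented trace model degenerates to trace equality on the \cstar3-style trace plus syntactic equality of the return-induced events. Once this observation is made, the inductive step for the return rules is immediate and every other rule is handled exactly as in the \cstar3 proof, so no genuinely new reasoning is required.
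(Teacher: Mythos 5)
Your proposal is correct and rests on the same observation as the paper's proof: \cstar4 has exactly the configurations and transitions of \cstar3 and differs only in the emitted trace events, so the structural invariants of Lemma~\ref{lemma-cstar-2-invar} (which never mention traces) carry over verbatim, and the noninterference invariant transfers as well. The paper dispatches the second part more directly than you do---rather than replaying the induction and checking that the new return events coincide, it simply notes that two executions with equal \cstar4 traces also have equal \cstar3 traces (the \cstar3 trace is obtained by removing the added events), so Lemma~\ref{lemma-cstar-2-noninterference-invar} applies as-is; your additional argument that the $\kw{FunResVar}$-induced events match across the two runs is really the content of the subsequent \cstar3-to-\cstar4 noninterference theorem rather than of this lemma.
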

\begin{proof}
  This is true because the invariants of
  Lemma~\ref{lemma-cstar-2-invar} actually do not depend on the traces
  produced; and it is obvious to prove that, if two executions have
  the same traces in \cstar4, then they have the same traces in \cstar3
  (because in \cstar3, some events are just removed.)
\end{proof}

\begin{theorem}[\cstar3 to \cstar4 noninterference]
  If $\sys(p, V_1, ss)$ and $\sys(p, V_2, ss)$ are safe in \cstar3, have
  unambiguous local variables, and produce the same traces in \cstar3, then
  they also produce the same traces in \cstar4.
\end{theorem}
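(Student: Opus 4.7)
The plan is to proceed by induction on the number of execution steps $n$, showing that (i) both \cstar4 executions can take an $n$-th step, (ii) they emit the same event trace on that step, and (iii) after the step, the pairs of configurations still satisfy the structural invariant of Lemma~\ref{lemma-cstar-2-noninterference-invar}. Since \cstar3 and \cstar4 share exactly the same configurations and same reduction relation up to the labels they emit, and since removing all $\brt$, $\brf$, $\symread$, $\symwrite$ labels collapses \cstar4 traces into \cstar3 traces, I can freely import both the single-execution invariants of Lemma~\ref{lemma-cstar-2-invar} and the two-executions invariants of Lemma~\ref{lemma-cstar-2-noninterference-invar} into \cstar4 as hypotheses that hold at every step, simply using the \cstar3-equivalence of traces given in the theorem.

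With these invariants in hand, the key observation is that the only rules in which \cstar4 departs from \cstar3 are $\text{Ret}_4\text{Some}$ and $\text{Ret}_4\text{None}$, and the extra labels they emit are syntactically determined by (a) the shape of the caller's evaluation context stored in the top frame of $S$ (either $\symhole;ss'$ or $t~x=\symhole;ss'$), (b) the enclosing function name $f'$ stored in that same frame, and (c) the variable $x$ appearing in the context when the result is consumed. By Lemma~\ref{lemma-cstar-2-noninterference-invar}, at every step the two executions have stacks of the same length in which corresponding frames carry the same continuation $E$, the same function identifier $f''$, and the same bound variable lists. Hence whenever one execution selects $\text{Ret}_4\text{Some}$ (resp. $\text{Ret}_4\text{None}$) so does the other, with the same $f'$ and the same $x$, so $\kw{FunResVar}(f',x)$ and therefore the full label $\brt;\symwrite\;(x',0,[]);\symread\;(x',0,[])$ (resp. $\brf$) match exactly. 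For every other rule, the emitted label is the same as in \cstar3, so trace agreement is preserved by hypothesis.

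Putting these together, an induction on $n$ shows that the two \cstar4 executions take the same step with the same label at each stage; in particular, neither can get stuck while the other progresses, because both rely only on conditions already satisfied in their shared \cstar3 behaviour. The claim follows. The main subtlety will be handling the order in which the invariants are invoked: the argument is slightly circular in that we use the Lemma~\ref{lemma-cstar-2-noninterference-invar} invariant to conclude label agreement, but the invariant itself only requires \cstar3-trace equality, which is given. So there is no genuine circularity, but the write-up must carefully distinguish the \cstar3-hypothesis (which is global) from the \cstar4-conclusion being built step by step.
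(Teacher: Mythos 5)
Your proposal is correct and matches the paper's argument, which is stated in one line (``two such executions actually make the same \cstar4 steps'') and relies on exactly the ingredients you spell out: the configurations and reduction steps of \cstar4 coincide with those of \cstar3, the invariants of Lemma~\ref{lemma-cstar-2-invar} and Lemma~\ref{lemma-cstar-2-noninterference-invar} carry over to \cstar4 (the paper records this as a separate lemma just before the theorem), and the only new labels, emitted by $\text{Ret}_4\text{Some}$/$\text{Ret}_4\text{None}$, are determined by frame data ($E$, $f'$, $x$) that the invariant forces to agree. Your remark on the apparent circularity is also resolved the same way the paper implicitly resolves it: the invariant's hypothesis is \cstar3-trace equality, which is given, not the \cstar4-trace equality being proved.
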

\begin{proof}
  Two such executions actually make the same \cstar4 steps.
\end{proof}

Then, we define the $\kw{StructRet}$ structure return transformation
from \cstar4 to \cstar3 in Figure~\ref{fig:cstar-3-struct-ret}, thus removing
all structure returns from \cstar3 programs.

\begin{figure*}
\begin{small}
\[
\kw{StructRet}(\_, \kw{return}\; e, x) = \left\{
\begin{array}{ll}
  \eif{x}{ *[x] = e}{()} ; \kw{return}\;() & \text{if} ~ x \not= \None \\
  \kw{return}\; e & \text{otherwise}
\end{array}
\right.
\]

\[
\kw{StructRet}(f', t ~ x = f(e), \_) = \left\{
\begin{array}{ll}
  t ~ x'[1]; f(x', e); t ~ x = *[x'] & \text{if} ~ t ~ \text{is a} ~ \kw{struct} \\
  & \text{and} ~ x' = \kw{FunRetVar}(f', x) \\
  t ~ x = f(e) & \text{otherwise}
\end{array}
\right.
\]

\[
\kw{StructRet}(\_, f(e), \_) = \left\{
\begin{array}{ll}
  f(0, e) & \text{if the return type of } ~ f ~ \text{is a} ~ \kw{struct} \\
  f(e) & \text{otherwise}
\end{array}
\right.
\]

\newcommand{\ecfuntwoargs}[7]                {%
  \ensuremath{\kw{fun}\;#1\,(#2:#3,#4:#5):#6\,\{\;#7\}}%
}

\[
\kw{StructRet}(\ecfun{f}{x}{t}{t'}{ss}) = \left\{
\begin{array}{ll}
  \ecfuntwoargs{f}{r}{t'*}{x}{t}{\kw{unit}}{\{ \kw{StructRet}(f, ss, r) \} } &
  \text{if} ~ t' ~ \text{is a} ~ \kw{struct} \\
  & (r ~ \text{fresh}) \\
  \ecfun{f}{x}{t}{t'}{\{ \kw{StructRet}(f, ss, \None) \} } & \text{otherwise}
\end{array}
\right.
\]
\end{small}
\caption{\cstar4 to \cstar3 structure return
  transformation} \label{fig:cstar-3-struct-ret}
\end{figure*}

Then, the transformation back to \cstar3 exactly preserves the traces of
\cstar4 programs, so that we obtain both functional correctness and
noninterference at once:

\begin{theorem}[$\kw{StructRet}$ correctness]
  If $\sys(p, V, ss)$ is safe in \cstar4 and has unambiguous local
  variables, then it has the same behavior and trace as
  $\sys(\kw{StructRet}(p), V, \kw{StructRet}(ss, \None))$.
\end{theorem}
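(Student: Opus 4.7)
The plan is to prove this by exhibiting a lock-step bisimulation between the \cstar4 execution of $(p,V,ss)$ and the \cstar3 execution of $(\kw{StructRet}(p),V,\kw{StructRet}(ss,\None))$, in the spirit of the refinement arguments used earlier in the appendix. Because the \cstar4 semantics was carefully instrumented in Figure~\ref{fig:cstar-4} to emit exactly the trace $\brt;\symwrite(x',0,[]);\symread(x',0,[])$ on $\text{Ret}_4\text{Some}$ and $\brf$ on $\text{Ret}_4\text{None}$, our only task is to show that the \cstar3 code produced by $\kw{StructRet}$ generates precisely these events in the same order, and otherwise behaves identically. Since \cstar3 is deterministic (Lemma~\ref{lemma-cstar-deter} generalises), it will then suffice, as in Lemma~\ref{lemma-reverse}, to prove one direction of refinement.

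First I would set up the simulation relation $R$ between a \cstar4 configuration $C_4=(S_4,V_4,ss_4,f,A)$ and a \cstar3 configuration $C_3=(S_3,V_3,ss_3,f,A')$. The relation requires: (i) the enclosing function names and the statement lists are related by $\kw{StructRet}$; (ii) each caller frame of shape $(\None,V'_4,t\,x=\symhole;ss',f',A'_4)$ in $S_4$ with $t$ a struct type corresponds, in $S_3$, to a frame of shape $(\None,V'_3,t\,x=*[x'];ss'',f',A'_3)$ preceded by a block frame holding the freshly allocated $x'=\kw{FunResVar}(f',x)$; (iii) frames with non-struct return types match directly; (iv) $A'=A\cup\{x'\mid x'\text{ introduced by }\kw{StructRet}\}$ and the additional \cstar3 bindings are for the auxiliary $x'$ variables only. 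Crucially, $x'$ and its allocated block carry no computational content in the source \cstar4 execution; in the target \cstar3 execution they exist to materialise exactly the events the \cstar4 instrumentation pretended to produce.

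The main case analysis is then on the \cstar4 step. Most rules are mirrored in \cstar3 one-for-one; I would discharge those cases routinely. The interesting cases are calls to struct-returning functions and the two $\text{Ret}_4$ rules. For a call $t\,x=f(e)$ with $t$ a struct, \cstar3 takes several steps: the $\earray{t}{x'}{1}$ array declaration (no events, as in \textsc{ArrDecl}), then the call $f(x',e)$ which in \cstar3 is a non-struct-returning call, so control transfers to $\kw{StructRet}(f)$'s body. For $\text{Ret}_4\text{Some}$, \cstar4 emits $\brt;\symwrite(x',0,[]);\symread(x',0,[])$; on the \cstar3 side the compiled body runs $\eif{x'}{*[x']=e}{()};\kw{return}\,()$ which emits $\brt$ (since $x'$ is the non-null pointer just allocated), then $\symwrite$ to $(x',0,[])$, then returns, and finally the caller executes $t\,x=*[x']$ producing the matching $\symread$. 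For $\text{Ret}_4\text{None}$, $x'$ is absent so the caller supplied $0$ for the pointer argument, the compiled test takes the false branch emitting only $\brf$, and the caller continues with no assignment, matching \cstar4's behavior.

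The hard part will be the bookkeeping around $\textsc{VarOfBlock}$ for the freshly allocated $x'$ block: the events emitted by \cstar3 on that block are abstract pointers $(f',n,x')$ whereas \cstar4 emits $(f',n,x')$ by fiat via $\kw{FunResVar}$. I would show that the recursion depth computed by $\textsc{VarOfBlock}$ on the \cstar3 block coincides with the recursion depth in \cstar4, using the invariants of Lemmas~\ref{lemma-cstar-2-invar} and~\ref{lemma-cstar-2-noninterference-invar} extended through the $\kw{StructRet}$ construction and the uniqueness property of $\kw{FunResVar}$. Once this alignment is in place, trace equality is immediate, safety transfers, and because \cstar3 is deterministic the simulation flips into full trace equivalence, yielding functional correctness and noninterference in a single stroke.
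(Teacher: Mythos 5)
Your proposal is correct and follows essentially the same route as the paper: a forward simulation (with stuttering at struct-returning calls and returns) whose invariant accounts for the renaming of block identifiers introduced by the fresh $x'$ arrays, matching the $\brt;\symwrite;\symread$ (resp.\ $\brf$) events by construction, and then flipped into a bisimulation using the determinism of \cstar3. The only cosmetic difference is that you phrase the block-identifier bookkeeping via $\textsc{VarOfBlock}$ where the paper invokes the renaming correspondence of Definition~\ref{def:cstar-3-rename-blocks}; these amount to the same invariant.
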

\begin{proof}
  Forward downward simulation, where the compilation invariant also
  involves block identifier renaming from
  Definition~\ref{def:cstar-3-rename-blocks} due to the new local
  arrays introduced by the transformed program.

  Each \cstar4 step is actually matched by the same \cstar3 step, except for
  function return and return from block: for the $\textsc{RetBlock}$
  rule, the simulation diagram has to stutter as many times as the
  level of block nesting in the source program before the actual
  application of a $\textsc{Ret}_4$ rule. Then, when the
  $\textsc{RetSome}_4$ rule applies, the trace events are produced by
  the transformed program, the $\brt$ and the $\symwrite$ events from
  within the callee, then the callee blocks are exited, and finally
  the $\symread$ event is produced from within the caller.

  Then, the diagram is turned into bisimulation since \cstar3 is
  deterministic.
\end{proof}

After a further hoisting pass, we can now restrict our study to those
\cstar3 programs with unambiguous local variables, functions with multiple
arguments, function-scoped local arrays, and no functions returning
structures.

\subsubsection{Events for accessing structure buffers} \label{sec:struct-events}

Now, we transform an access to one structure into the
sequence of accesses to all of its individual atomic (non-structure)
fields.

\ignore{
\textbf{NOTE:} we need to have some typing information added to the
$\symread$, $\symwrite$ events of \cstar and each of \cstar$n$: whenever we
read or write from a buffer, we need to mark the type of the data
read/written. This makes sense in the context of observing memory
accesses, since without this data, two programs accessing the same
memory location but reading data of different types, thus of
potentially different lengths, would still be considered as having the
same trace, which we would like to rule out.  So, in the remaining
parts of this document, we assume that $\symread$ and $\symwrite$
events carry the type of the data read or written.
}

Consider the following transformation for \cstar3 $\symread$ (and
similarly for $\symwrite$) events:

\[
\begin{array}{ll}
  \multicolumn{2}{l}{\llbracket \symread\;(f, i, x, j, \ls{fd}, t) \rrbracket} \\
  = & \llbracket \symread\; (f, i, x, j, \ls{fd};fd_1, t_1) \rrbracket \\
  ; & \dots \\
  ; & \llbracket \symread\; (f, i, x, j, \ls{fd};fd_n, t_n) \rrbracket \\
  \text{if} & t = \kw{struct} \{ fd_1: t_1, \dots, fd_n: t_n \} \\
\\
  \multicolumn{2}{l}{\llbracket \symread\;(f, i, x, j, \ls{fd}, t) \rrbracket} \\
  = & \symread\; (f, i, x, j, \ls{fd}, t) \\
  \multicolumn{2}{l}{\text{otherwise}}
\end{array}
\]

Then, let \cstar5 be the \cstar3 language obtained by replacing each
$\symread$, $\symwrite$ event with its translation. Then, it is easy
to show the following:
\begin{lemma}[\cstar3 to \cstar5 correctness]
  Let $p$ be a \cstar3 program. Then, $p$ has a trace $t$ in \cstar5 if, and
  only if, there exists a trace $t'$ such that $p$ has trace $t'$ in
  \cstar3 and $\llbracket t' \rrbracket = t$.
\end{lemma}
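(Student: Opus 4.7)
The proof is essentially a matter of unpacking the definition: \cstar5 is defined to be the same language as \cstar3 except that the labels attached to each reduction step are rewritten via $\llbracket \cdot \rrbracket$. Since configurations and the underlying step relation are identical, the plan is to establish a lock-step bisimulation where related configurations are simply equal, and where each single step in \cstar3 emitting label $\olabel$ corresponds to the same single step in \cstar5 emitting $\llbracket \olabel \rrbracket$.

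First I would make precise the fact that $\llbracket \cdot \rrbracket$ extends to a monoid homomorphism on traces: it distributes over concatenation (with $\llbracket \cdot \rrbracket = \cdot$ and $\llbracket \brt \rrbracket = \brt$, $\llbracket \brf \rrbracket = \brf$, and identically on all non-$\symread$/$\symwrite$ events). This is immediate since the translation acts pointwise on atomic events. Next, I would show the single-step correspondence: for any \cstar3 configuration $C$ and any step $p \vdash C \step_{\olabel} C'$, the very same configurations are related by $p \vdash C \step_{\llbracket \olabel \rrbracket} C'$ in \cstar5, and conversely. This is a routine case analysis on the reduction rules: all rules except \textsc{Read}$_3$ and \textsc{Write}$_3$ are unchanged and the label $\olabel$ is already atomic and preserved by $\llbracket \cdot \rrbracket$; for \textsc{Read}$_3$ and \textsc{Write}$_3$, the configurations step identically and the produced label is precisely the input of $\llbracket \cdot \rrbracket$ whose output is the \cstar5 label.

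Lifting to multi-step traces is then by induction on the length of the reduction sequence, using the homomorphism property to split $\llbracket t_1 ; t_2 \rrbracket = \llbracket t_1 \rrbracket ; \llbracket t_2 \rrbracket$. For the ``only if'' direction, given a \cstar5 execution with trace $t$, the same sequence of configurations forms a \cstar3 execution whose trace $t'$ satisfies $\llbracket t' \rrbracket = t$. For the ``if'' direction, given a \cstar3 trace $t'$, the same execution in \cstar5 produces $\llbracket t' \rrbracket$. Termination, divergence, and going-wrong status are preserved because the set of reachable configurations and the notion of final configuration are identical; only the observable labels change.

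The only mildly subtle point is the treatment of nested structures in the $\llbracket \symread \rrbracket$ clause: the translation recursively expands a struct read into reads of its fields, which themselves may be structs. One has to check that this recursion is well-founded (which it is, by structural induction on the field type $t$) and that the translation is total on all events that actually arise during execution, which follows from the fact that types in \cstar3 are finite first-order struct/base types. I do not expect a genuine obstacle here; the entire lemma is of the ``reinterpretation'' flavor described earlier in \S\ref{sec:to-clight}, and the bisimulation is degenerate in the sense that the state components are literally equal on both sides.
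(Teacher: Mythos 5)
Your proposal is correct and matches the paper's intent: the paper states this lemma without an explicit proof (calling it easy), and its stated methodology for such ``reinterpretation'' passes is exactly the degenerate lock-step bisimulation with literally equal configurations that you describe, with only the emitted labels rewritten homomorphically. Your additional remarks on the well-foundedness of the recursive expansion over struct types and on preservation of termination/divergence are the right details to make the argument precise.
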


Thus, this trace transformation preserves functional correctness; and,
although this trace transformation is not necessarily injective (since
it is not possible to disambiguate between an access to a 1-field
structure and an access to its unique field), noninterference is also
preserved.

After such transformation, all $\symread$ and $\symwrite$ events now
are restricted to atomic (non-structure) types.

\subsubsection{Local structures} \label{sec:local-struct-detail}

Now, we are removing all local structures, in such a way that the only
remaining structures are those of local arrays, and all structures are
accessed only through their atomic fields. In particular, we are
replacing every local (non-array) variable $x$ of type struct with the
sequence of variable names $x\_\ls{fds}$ for all field name sequences
$\ls{fds}$ valid from $x$ such that $x.\ls{fds}$ is of
non-$\kw{struct}$ type. (We omit the details as to how to construct
names of the form $x\_\ls{fds}$ so that they do not clash with other
variables; at worst, we could also rename other variables to avoid
clashes as needed.)

Using our benchmark in Figure~\ref{fig:struct-erase-benchmark}, with C
code after structure erasure in Figure~\ref{fig:struct-erase-after},
on a 4-core Intel Core i7 1.7 GHz laptop with 8 Gb RAM, structure
erasure saves 20\% time with CompCert 2.7.

If we assume that we know about the type of a \cstar expression, then it
can be first statically reduced to a normal form as in
Figure~\ref{fig:cstar-expr-struct-erase}.

\begin{figure*}
\begin{scriptsize}
  \begin{mathpar}
\inferrule*
    [Right=Int]
    {~}
    {
      \Gamma \vdash n \downarrow^{\kw{Int}} n
    }

\inferrule*
    [Right=Var]
    {
      (x : t) \in \Gamma
    }
    {
      \Gamma \vdash x \downarrow^t x
    }

\inferrule*
    [Right=PtrAdd]
    {
      \Gamma \vdash e_1 \downarrow^{t*} e_1' \\
      \Gamma \vdash e_2 \downarrow^{\kw{int}} e_2'
    }
    {
      \Gamma \vdash e_1 + e_2 \downarrow^{t*} e_1' + e_2'
    }

\\

\inferrule*
    [Right=PtrFd]
    {
      \Gamma \vdash e \downarrow^{\kw{struct} \{ fd : t; \dots \} *} e' \\
    }
    {
      \Gamma \vdash \eptrfd{e}{fd} \downarrow^{t*} \eptrfd{e'}{fd}
    }
    
\inferrule*
    [Right=StructFieldName]
    {
      \Gamma \vdash e \downarrow^{\kw{struct} \{ fd : t ; \dots \}} x.\ls{fds} \\
      t ~ \text{is a} ~ \kw{struct}
    }
    {
      \Gamma \vdash e.fd \downarrow^t x.\ls{fds}.fd
    }

\\

\inferrule*
    [Right=ScalarFieldName]
    {
      \Gamma \vdash e \downarrow^{\kw{struct} \{ fd : t ; \dots \}} x.\ls{fds} \\
      t ~ \text{is not a} ~ \kw{struct}
    }
    {
      \Gamma \vdash e.fd \downarrow^t x\_\ls{fds}\_fd
    }

    \\

\inferrule*
    [Right=FieldProj]
    {
      \Gamma \vdash e \downarrow^{\kw{struct} \{ fd : t ; \dots \}} \{ f = e' ; \dots \}
    }
    {
      \Gamma \vdash e.fd \downarrow^t e'
    }

\inferrule*
    [Right=Struct]
    {
      \ls{\Gamma \vdash e_i \downarrow^{t_i} e'_i}
    }
    {
      \Gamma \vdash \{ \ls{fd_i = e_i} \} \downarrow^{\kw{struct} \{ \ls{fd_i : t_i} \} } \{ \ls{fd_i = e'_i} \}
    }
\end{mathpar}
\end{scriptsize}
  \caption{\cstar Structure Erasure: Expressions}
  \label{fig:cstar-expr-struct-erase}
\end{figure*}

\begin{lemma}[\label{lemma-cstar-expr-struct-erase-correct} \cstar structure erasure in expressions: correctness]
  For any value $v$ of type $t$, if $\llbracket e \rrbracket_{(p,V)} =
  v$ and $\Gamma \vdash e \downarrow^t e'$ and $V'$ is such that:
  \begin{itemize}
  \item for any $(x': t') \in \Gamma$,  $V(x')$ exists, is of type $t'$ and is equal to $V(x)$
  \item for any $(x': t') \in \Gamma$ that is a struct and for any $\ls{fds}$ such that $x.\ls{fds}$ is not a struct, then $V'(x\_\ls{fds}) = V(x)(\ls{fds})$
  \end{itemize}
  Then, $\llbracket e' \rrbracket_{(p,V')} = v$
\end{lemma}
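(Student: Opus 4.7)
The plan is to proceed by induction on the derivation of $\Gamma \vdash e \downarrow^t e'$. For each of the seven rules, I would invert the big-step evaluation $\llbracket e \rrbracket_{(p,V)} = v$ to extract information about $v$ and the sub-evaluations, apply the induction hypothesis to each premise with an appropriately restricted typing context, and show that the target expression $e'$ evaluates to the same $v$ under $V'$. The \textsc{Int} case is immediate since literals ignore $V, V'$; the \textsc{Var} case follows directly from the first bullet on $V'$, which stipulates $V'(x') = V(x')$ for any $(x' : t') \in \Gamma$. The \textsc{PtrAdd}, \textsc{PtrFd}, and \textsc{Struct} cases are routine congruence closures: invoke the IH on each sub-derivation and assemble the results using the corresponding \cstar expression evaluation rules, relying on the fact that pointer arithmetic and struct offsets are transformation-invariant.

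The three field-access rules are the heart of the proof. In the \textsc{FieldProj} case, the sub-derivation reduces $e$ to a literal struct $\{\ls{fd_i = e'_i}\}$; by IH this literal evaluates to the same struct value as $e$, and \cstar's \textsc{Proj} rule then picks out exactly the $e'_i$ returned by the transformation. In the \textsc{StructFieldName} case, the sub-derivation reduces $e$ to a variable path $x.\ls{fds}$ whose selected field $fd$ has struct type; since the first bullet on $V'$ preserves $V(x)$ as a whole, the path expression $x.\ls{fds}.fd$ evaluates under $V'$ to $V(x)(\ls{fds})(fd) = v$. The \textsc{ScalarFieldName} case is the subtle one, because the transformation does not output a projection expression but a single fresh variable $x\_\ls{fds}\_fd$; here the second bullet on $V'$ is specifically tailored to ensure that $V'(x\_\ls{fds}\_fd) = V(x)(\ls{fds}\,;fd)$, which is precisely $v$.

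The main obstacle I expect is bookkeeping rather than deep mathematics: making sure the statement of the IH carries the right conditions on the typing context $\Gamma$ restricted to the free variables of each sub-expression, and that the hypotheses on $V'$ continue to apply to those sub-derivations. A second subtlety is that the relation $\downarrow^t$ overlaps syntactically between \textsc{StructFieldName}, \textsc{ScalarFieldName}, and \textsc{FieldProj} — all three match an expression of the form $e.fd$ — so the proof implicitly relies on the assumption that the source evaluation $\llbracket e \rrbracket_{(p,V)} = v$ is well-typed with respect to $\Gamma$, which is what disambiguates whether the intermediate value is a path rooted at a variable or a literal struct. I would make this explicit by strengthening the lemma with a well-typedness premise on $e$, or by inducting jointly on the evaluation and the derivation. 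Once this disambiguation is in place, each case closes in one or two lines of \cstar evaluation-rule application.
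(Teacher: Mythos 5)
Your proposal is correct and follows exactly the paper's own argument, which is a structural induction on the derivation of $\Gamma \vdash e \downarrow^t e'$ (the paper states it in one line as ``by structural induction on $\downarrow$''). Your case analysis, including the use of the second hypothesis on $V'$ to discharge the \textsc{ScalarFieldName} case and the remark that well-typedness of $e$ is what disambiguates the overlapping field-access rules, is a faithful elaboration of that same induction.
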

\begin{proof}
  By structural induction on $\downarrow$.
\end{proof}

\begin{definition}[\cstar expression without structures]
  A \cstar expression $e$ is said to be of type $t$ without structures if, and only
  if, one of the following is true:
  \begin{itemize}
  \item $e$ contains neither a structure field projection nor a structure expression
  \item $e$ is of the form $x.\ls{fds}$ where $x$ is a variable such
    that $x.\ls{fds}$ is of $\kw{struct}$ type
  \item t is of the form $\{ \ls{fd_i : t_i} \}$ and $e$ is of the form $\{ \ls{fd_i = e_i} \}$ where for each $i$, $e_i$ is of type $t_i$ without structures
  \end{itemize}
\end{definition}

\begin{lemma}[\label{lemma-cstar-expr-struct-erase-shape}\cstar expression reduction: shape]
  If $\Gamma \vdash e \downarrow^t e'$, then $e'$ is of type $t$
  without structures.
\end{lemma}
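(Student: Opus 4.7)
The plan is to prove this lemma by straightforward structural induction on the derivation of $\Gamma \vdash e \downarrow^t e'$, treating each of the seven reduction rules in Figure~\ref{fig:cstar-expr-struct-erase} as a case and verifying that the resulting $e'$ falls into one of the three clauses of the ``without structures'' definition.

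Most cases discharge by the first clause. For \textsc{Int} and \textsc{Var}, $e'$ is a constant or a variable, containing neither a structure field projection nor a structure expression. For \textsc{PtrAdd} and \textsc{PtrFd}, the inductive hypothesis gives that the subexpressions are of pointer/int types without structures; since ``structure field projection'' refers to the $e.\mathit{fd}$ form rather than the field-address form $\eptrfd{e'}{\mathit{fd}}$, and since the outer operators do not introduce any struct expression, the result still contains no structure projection or structure expression. For \textsc{ScalarFieldName}, the output $x\_\ls{\mathit{fds}}\_\mathit{fd}$ is simply a fresh variable, so again the first clause applies.

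The remaining three cases use the other two clauses. For \textsc{StructFieldName}, the IH on the premise gives $x.\ls{\mathit{fds}}$ with $x.\ls{\mathit{fds}}$ of struct type; since the side condition requires $t$ to be a struct, the output $x.\ls{\mathit{fds}}.\mathit{fd}$ satisfies the second clause with extended field path $\ls{\mathit{fds}};\mathit{fd}$. For \textsc{Struct}, the IH on each premise gives $e'_i$ of type $t_i$ without structures; the conclusion fits the third clause verbatim. For \textsc{FieldProj}, the IH applied to the premise gives $\{\ls{\mathit{fd} = e'};\dots\}$ of struct type $\{\mathit{fd}:t;\dots\}$ without structures; we then need to invert this: neither the first clause (the expression is a struct literal) nor the second clause (the expression is not of the form $x.\ls{\mathit{fds}}$) applies to a struct literal, so the third clause must be the one in force, yielding that each field expression $e'_i$ is of type $t_i$ without structures, and in particular the projected $e'$ is of type $t$ without structures.

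The only subtle step is the inversion in the \textsc{FieldProj} case, where we must argue that a struct literal can only be classified as ``without structures'' via the third clause of the definition. This requires a small auxiliary observation that the first clause (which disallows any structure expression) and the second clause (which forces the shape $x.\ls{\mathit{fds}}$) are both incompatible with a struct literal shape; with this observation, the rest of the proof is a mechanical pattern-match on the derivation.
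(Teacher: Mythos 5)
Your proof is correct and takes essentially the same approach as the paper: the paper's entire proof of this lemma is the single line ``By structural induction on $\downarrow$,'' and your case-by-case elaboration (including the inversion of the ``without structures'' definition in the \textsc{FieldProj} case, which is indeed the only non-mechanical step) is exactly what that induction amounts to.
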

\begin{proof}
  By structural induction on $\downarrow$.
\end{proof}

Then, once structure expressions are reduced within an expression
computing a non-structure value, we can show that evaluating such a
reduced expression no longer depends on any local structures:

\begin{lemma} \label{lemma-cstar-expr-struct-erase-non-struct}
  If $\llbracket e \rrbracket_{(p,V)} = v$ for some value $v$, and $e$
  is of type $t$ without structures, and $t$ is not a $\kw{struct}$
  type, then, for any variable mapping $V'$ such that $V'(x) = V(x)$
  for all variables $x$ of non-$\kw{struct}$ types, $\llbracket e
  \rrbracket_{(p,V')} = v$.
\end{lemma}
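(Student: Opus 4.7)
The plan is to reduce everything to the observation that, under the two hypotheses, no variable of \kw{struct} type can occur in $e$ at all; once that structural fact is established, the conclusion follows immediately since $V$ and $V'$ agree on every variable $e$ actually inspects.

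First I would unfold the definition of ``of type $t$ without structures'' on $e$. Of the three clauses, the second requires $x.\ls{fds}$ to be of \kw{struct} type and the third requires the overall type to be $\{\ls{fd_i:t_i}\}$, so both clauses are excluded by the hypothesis that $t$ is not a \kw{struct} type. Hence we are in the first clause: $e$ contains no structure field projection and no structure expression.

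Next I would prove, by structural induction on $e$, the auxiliary claim: if $e$ is of type $t$ without structures, contains no structure field projection and no structure expression, and $t$ is not a \kw{struct} type, then every free variable of $e$ has a non-\kw{struct} type. The only way a \kw{struct}-typed value could flow through the grammar of $\cexp$ is as the subject of a field projection $e'.fd$ or as an element of a struct literal $\{\ls{fd=e'}\}$ or as a pointer operand inside $\eptrfd{e'}{fd}$ (but a pointer to a struct is itself a non-\kw{struct} type). Field projections and struct literals are ruled out by hypothesis, so in the remaining forms ($n$, $x$, $e_1+e_2$, $\eptrfd{e'}{fd}$, locations) no sub-expression can carry a \kw{struct} type at top level, and in the $x$ case $x$ itself must have type $t$, which is not \kw{struct}. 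A straightforward induction then pushes this observation down to every free variable.

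With the auxiliary claim in hand, the main statement follows by a second, routine structural induction on $e$ using the evaluation rules of Figure~\ref{fig:cstar-expr-eval}: the rules \textsc{Int}, \textsc{PtrAdd}, \textsc{PtrFd}, \textsc{Val} do not consult $V$ at all, and \textsc{Var}/\textsc{GVar} only read $V(x)$ for a variable $x$ which, by the auxiliary claim, is of non-\kw{struct} type and therefore satisfies $V'(x) = V(x)$ by the hypothesis on $V'$. I expect no real obstacle here: the only subtlety is making the auxiliary syntactic claim precise enough (in particular, remembering that a pointer-to-struct type is not itself a \kw{struct} type), after which both inductions are mechanical.
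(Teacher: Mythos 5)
Your proposal is correct and matches the paper's proof, which is simply stated as ``by structural induction on $\llbracket e \rrbracket_{(p,V)}$''; your preliminary case analysis on the definition of ``without structures'' and the auxiliary claim that no \kw{struct}-typed variable can occur are exactly the details needed to make that one-line induction go through. The extra decomposition into two inductions is a harmless elaboration rather than a genuinely different route.
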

\begin{proof}
  By structural induction on $\llbracket e \rrbracket_{(p,V)}$.
\end{proof}

Now, we take advantage of this transformation to transform \cstar5
statements into \cstar3 statements without structure assignments.  This
$\downarrow$ translation is detailed in
Figure~\ref{fig:cstar-5-to-cstar-3}

In particular, each function parameter of structure type passed by
value is replaced with its recursive list of all non-structure
fields.\footnote{Our solution, although semantics-preserving as we
  show further down, yet causes ABI compliance issues. Indeed, in the
  System V x86 ABI, structures passed by value must be replaced not
  with their fields, but with their sequence of bytes, some of which
  may correspond to padding related to no field of the original
  structure. CompCert does support this feature but as an
  \textbf{unverified} elaboration pass over source C code. So, we
  should investigate whether we really need to expose functions taking
  structures passed by value at the interface level.}

\begin{figure*}
\begin{footnotesize}
  \begin{mathpar}
    \inferrule*
    [Right=ReadScalar]
    {
      t ~ \text{not a} ~ \kw{struct} \\
      \Gamma \vdash e \downarrow^{t*} e'
    }
    {
      p,\Gamma \vdash t ~ x = \eread{e} \downarrow t ~ x = \eread{e'} \\\\
      \Gamma[x] \leftarrow t
    }

    \quad
    \quad
    
    \inferrule*
    [Right=WriteScalar]
    {
      t ~ \text{not a} ~ \kw{struct} \\\\
      \Gamma \vdash e_1 \downarrow^{t*} e'_1 \\
      \Gamma \vdash e_2 \downarrow^{t} e'_2 \\
    }
    {
      p,\Gamma \vdash \ewrite{e_1}{e_2} \downarrow \ewrite{e'_1}{e'_2}
    }

    \\    
    
\inferrule*
    [Right=ReadStruct]
    {
      t = \kw{struct} \{ \ls{\mathit{fd_i} : t_i} \} \\
      \Gamma \vdash e \downarrow^{t*} e' \\
      (x, \_) \not\in \Gamma \\
      (x', \_) \not\in \Gamma \\\\
      \ls{p,\Gamma \cup (x' : t*) \cup (x\_{\mathit{fd}_i} : t_i) \vdash x\_{\mathit{fd}_i} = \eread{\eptrfd{x'}{\mathit{fd}_i}} \downarrow \mathit{ss}_i}
    }
    {
      p,\Gamma \vdash t ~ x = \eread{e} \downarrow t* ~ x' = e'; \ls{\mathit{ss}_i} \\
      \Gamma[x] \leftarrow t
    }

    \\    
    
\inferrule*
    [Right=WriteStruct]
    {
      t = \kw{struct} \{ \ls{\mathit{fd_i} : t_i} \} \\
      \Gamma \vdash e_1 \downarrow^{t*} e'_1 \\
      \Gamma \vdash e_2 \downarrow^{t} e'_2 \\
      (x', \_) \not\in \Gamma \\\\
      \ls{p,\Gamma \cup (x' : t*) \vdash \ewrite{\eptrfd{x'}\mathit{fd}_i}{e'_2.\mathit{fd}_i} \downarrow \mathit{ss}_i}
    }
    {
      p,\Gamma \vdash \ewrite{e_1}{e_2} \downarrow t* ~ x' = e'_1 ; \ls{\mathit{ss}_i}
    }

    \\

    \inferrule*
    [Right=Ret]
    {
      t ~ \text{not a} ~ \kw{struct} \\
      \Gamma \vdash e \downarrow^{t} e'
    }
    {
      p,\Gamma \vdash \kw{return} ~ e \downarrow \kw{return} ~ e'
    }

\inferrule* [Right=Block]
{
  p,\Gamma \vdash ss \downarrow ss'
}
{
  p,\Gamma \vdash \{ ss \} \downarrow \{ ss' \}
}

    \inferrule* [Right=Call]
    {
      p(f)= \kw{fun} (\ls{\_ : t_i}) : t \{ \_ \} \\
      t ~ \text{not a} ~ \kw{struct} \\
      p,\Gamma \vdash (\ls{t_i}, \mathit{el}) \downarrow \mathit{el}'
    }
    {
      p,\Gamma \vdash t\;x=f\;(\mathit{el}) \downarrow t\;x = f(\mathit{el}') \\
      \Gamma[x] \leftarrow t
    } 

    \\
    
    \inferrule*
    [Right=ArgNil]
    {~}
    {
      p,\Gamma \vdash ([],[]) \downarrow []
    }

    \inferrule*
        [Right=ArgCons]
        {
          p,\Gamma \vdash (t, e) \downarrow \mathit{el}_1 \\
          p,\Gamma \vdash (\mathit{tl}, \mathit{el}) \downarrow \mathit{el}_2
        }
        {
          p,\Gamma \vdash (t; \mathit{tl}, e; \mathit{el}) \downarrow \mathit{el}_1; \mathit{el}_2
        }

        \\
        
    \inferrule*
        [Right=ScalarArg]
        {
          t ~ \text{not a} ~ \kw{struct} \\
          \Gamma \vdash e \downarrow^t e'
        }
        {
          p, \Gamma \vdash (t, e) \downarrow [e']
        }

    \inferrule*
        [Right=StructArg]
        {
          t = \kw{struct} \{ \ls{\mathit{fd}_i : t_i} \} \\
          \Gamma \vdash e \downarrow^t e' \\\\
          \ls{p,\Gamma \vdash (t_i, e'.\mathit{fd}_i) \downarrow \mathit{el}_i}
        }
        {
          p, \Gamma \vdash (t, e) \downarrow \ls{\mathit{el}_i}
        }

    \\

\inferrule* [Right=If]
{
  t ~ \text{not a} ~ \kw{struct} \\
  \Gamma \vdash e \downarrow^t e' \\
  \ls{p,\Gamma \vdash ss_i \downarrow ss'_i}
}
{
  p,\Gamma \vdash \eif{e}{ss_1}{ss_2} \downarrow \eif{e'}{ss_1'}{ss_2'}
} 

\\    
        
    \inferrule*
    [Right=ScalarParam]
    {
      t ~ \text{not a} ~ \kw{struct} \\
    }
    {
      (x : t) \downarrow (x : t)
    }

    \inferrule*
    [Right=StructParam]
    {
      t = \kw{struct} \{ \ls{\mathit{fd}_i : t_i} \} \\\\
      \ls{ (x\_\mathit{fd}_i : t_i) \downarrow \mathit{vt}_i }
    }
    {
      (x : t) \downarrow \ls{\mathit{vt}_i}
    }
    \\

    \inferrule* [Right=Fun]
    {
      \mathit{vt} \downarrow \mathit{vt}' \\
      p, (\mathit{vt} \cup \ls{x_i : t_i*}) \vdash \mathit{ss} \downarrow \mathit{ss}'
    }
    {
      \kw{fun} (\mathit{vt}) : t \{ \ls{t_i ~ x_i[\_] = \{ \} } ; \mathit{ss} \}
      \downarrow
      \kw{fun} (\mathit{vt}') : t \{ \ls{t_i ~ x_i[\_] = \{ \} } ; \mathit{ss}' \}
    }
\end{mathpar}
\end{footnotesize}
  \caption{\cstar5 to \cstar3 Structure Erasure: Statements}
  \label{fig:cstar-5-to-cstar-3}
\end{figure*}

\begin{theorem}[\cstar5 to \cstar3 structure erasure: shape]
  If $p \downarrow p'$ following Figure~\ref{fig:cstar-5-to-cstar-3},
  then $p'$ no longer has any variables of local structure type, and no
  longer has any structure or field projection expressions.
\end{theorem}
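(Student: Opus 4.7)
The plan is to proceed by mutual structural induction on the four translation judgments appearing in Figure~\ref{fig:cstar-5-to-cstar-3}: on programs, on statements $p,\Gamma \vdash \mathit{ss} \downarrow \mathit{ss}'$, on expressions $\Gamma \vdash e \downarrow^t e'$, and on argument lists $p,\Gamma \vdash (\mathit{tl}, \mathit{el}) \downarrow \mathit{el}'$. I would strengthen the statement into a conjunction: (i) every variable declared in the translated statements has non-$\kw{struct}$ type (or is a $\kw{struct}$ pointer, or a local array of $\kw{struct}$ type, which are explicitly allowed); (ii) every translated expression $e'$ arising from $\Gamma \vdash e \downarrow^t e'$ with $t$ non-$\kw{struct}$ contains no struct literal and no field projection at all; and (iii) when $t$ is a $\kw{struct}$, $e'$ is either a nested variable projection $x.\ls{fds}$ whose further uses are immediately decomposed, or a struct literal whose components are themselves reduced.

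For expressions, the bulk of the work is already done by Lemma~\ref{lemma-cstar-expr-struct-erase-shape}, which shows that $\Gamma \vdash e \downarrow^t e'$ produces $e'$ \emph{of type $t$ without structures}. Unfolding that definition in the non-$\kw{struct}$ case yields precisely clause (ii): neither a struct expression nor a field projection occurs. For statement-level rules, I dispatch case by case. \textsc{ReadScalar}, \textsc{WriteScalar}, \textsc{Ret}, and \textsc{If} are immediate from the expression lemma. \textsc{ReadStruct} and \textsc{WriteStruct} are the interesting cases: they introduce a single fresh pointer variable $x'$ of type $t*$ (a pointer type, not a $\kw{struct}$ type) and then recurse on statements of the form $x\_\mathit{fd}_i = \eread{\eptrfd{x'}{\mathit{fd}_i}}$ and $\ewrite{\eptrfd{x'}{\mathit{fd}_i}}{e'_2.\mathit{fd}_i}$. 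Here the recursion is on strictly smaller type structure ($t_i$ is a proper sub-type of $t$), so structural induction on type depth goes through. \textsc{Call} together with \textsc{ArgCons}/\textsc{StructArg}/\textsc{ScalarArg} similarly decomposes struct arguments into their non-$\kw{struct}$ leaves, again by induction on type depth; the seemingly problematic subterms $e'_2.\mathit{fd}_i$ and $e'.\mathit{fd}_i$ appear only as \emph{inputs} to recursive translation calls and are eliminated by the IH, never surfacing in $p'$. \textsc{ScalarParam} and \textsc{StructParam} are direct by induction on type depth, and \textsc{Fun} and \textsc{Block} compose the sub-derivations.

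The main obstacle will be the well-foundedness of the mutual recursion through $\kw{struct}$ types: the judgment recurses both on statement structure and on type structure, and care is needed to state the induction on a lexicographic measure (outer statement size, then inner type depth) so that the recursive calls in \textsc{ReadStruct}, \textsc{WriteStruct}, \textsc{StructArg} and \textsc{StructParam} are strictly smaller. A secondary technicality is that the input to \textsc{WriteStruct}'s recursive calls, namely $\ewrite{\eptrfd{x'}{\mathit{fd}_i}}{e'_2.\mathit{fd}_i}$, syntactically contains a field projection $e'_2.\mathit{fd}_i$; one must check that $\Gamma \vdash e'_2.\mathit{fd}_i \downarrow^{t_i} e''_i$ is derivable so that the inductive hypothesis applies. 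This follows from the observation that $e'_2$ has shape $\{ \ls{\mathit{fd}_j = e_j} \}$ or $x.\ls{fds}$ (by Lemma~\ref{lemma-cstar-expr-struct-erase-shape} applied to the struct case), in both of which rules \textsc{FieldProj} or \textsc{StructFieldName}/\textsc{ScalarFieldName} further reduce the projection. Once these well-foundedness and shape lemmas are in place, the case analysis is routine and the theorem follows directly.
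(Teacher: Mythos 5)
Your proposal matches the paper's proof, which is exactly a structural induction over the $\downarrow$ judgment combined with Lemma~\ref{lemma-cstar-expr-struct-erase-shape} for the expression-level cases; the paper states this in one line and leaves all the case analysis implicit. The additional care you take with the well-foundedness of the recursion through $\kw{struct}$ types and the derivability of the projections fed back into $\downarrow$ is sound elaboration of the same argument, not a different route.
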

\begin{proof}
  By structural induction over $\downarrow$, also using
  Lemma~\ref{lemma-cstar-expr-struct-erase-shape}.
\end{proof}

\begin{theorem}[\cstar5 to \cstar3 structure erasure: correctness]
  If $p$ is a \cstar5 program (that is, syntactically, a \cstar program with
  unambiguous local variables, no block-scoped local arrays other than
  function-scoped, and no functions returning structures) such that
  $p$ is safe in \cstar5 and $p \downarrow p'$ following
  Figure~\ref{fig:cstar-5-to-cstar-3}, then $p$ and $p'$ have the same
  execution traces.
\end{theorem}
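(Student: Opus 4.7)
The plan is to establish a forward downward simulation from \cstar5 to \cstar3 and then flip it to a bisimulation using \cstar3's determinism, reusing the pattern of Lemma~\ref{lemma-quasi} and Corollary~\ref{coro-reverse}. The simulation relation $R$ between a \cstar5 configuration and a \cstar3 configuration should state that: (i) the memory and stack shape are identical modulo renaming of block identifiers, (ii) each \cstar5 variable $x$ of struct type $t = \{\ls{fd_i : t_i}\}$ has been replaced in the \cstar3 configuration by the recursive family of scalar variables $x\_\ls{fds}$ with values agreeing with the corresponding fields of $V(x)$, (iii) non-struct variables and array-allocated blocks are preserved exactly, and (iv) each function body's remaining statements on the \cstar5 side correspond to a translated prefix on the \cstar3 side. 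The hypothesis that \cstar5 has no functions returning structures and no non-function-scoped local arrays is crucial: it means that the only struct values appearing in the source are in local variables, in function arguments, and as contents of buffers, each of which is handled by a distinct clause of the translation.

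The main steps of the simulation proof are, case by case on \cstar5 steps: for scalar reads, writes, conditionals and ordinary assignments, appeal to Lemma~\ref{lemma-cstar-expr-struct-erase-correct} on the translated expression $e'$ to show the evaluated scalar value matches, and conclude with a single matching \cstar3 step; for a struct-typed read $t\,x = {*}[e]$, match it on the \cstar3 side by the multi-step sequence $t*\,x' = e'; \ls{ss_i}$, where the pointer assignment is a silent step and each recursive scalar read produces exactly the field-level $\symread$ event already present in the \cstar5 trace (thanks to the \cstar3-to-\cstar5 reinterpretation that expands struct events into their atomic-field sequences); similarly for struct writes. For function calls, the struct-arg clauses of the translation ($\textsc{StructArg}$) flatten each struct argument into its sequence of scalar fields, and the simulation relates a single \cstar5 call step to a \cstar3 call step with the expanded argument list, maintaining $R$ in the callee's new frame by using the $\textsc{StructParam}$ parameter shape.

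Two auxiliary facts support the main case analysis: Lemma~\ref{lemma-cstar-expr-struct-erase-shape} guarantees that reduced expressions are structure-free at the required position, and Lemma~\ref{lemma-cstar-expr-struct-erase-non-struct} lets us ignore the fate of struct-typed source variables once the target expression is in normal form, so the translated variable environment (restricted to scalars) is indeed what the \cstar3 semantics needs. Because the struct erasure only adds fresh, pristine names ($x\_\ls{fds}$ and temporaries $x'$) and the memory layout of arrays and their struct contents is left untouched, the \textsc{VarOfBlock} abstract-pointer components of events are preserved, and the trace events already match on the nose. After establishing forward downward simulation with a measure that stutters through the $\ls{ss_i}$ expansion of each struct access, Lemma~\ref{lemma-cstar-deter} applied to \cstar3 lets us flip the diagram to obtain trace equality in both directions.

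The hard part is the bookkeeping for struct reads and writes: a single \cstar5 transition with a multi-field event sequence must be matched by a variable-length block of \cstar3 transitions (one pointer assignment plus one read or write per atomic field at arbitrary nesting depth), and the intermediate \cstar3 states do not correspond to any \cstar5 state. This is handled by enriching $R$ with an explicit ``inside an expansion'' marker that records the remaining $\ls{ss_i}$ and the partial trace emitted so far, together with a well-founded measure counting the remaining scalar accesses; the \cstar5 step is completed only when the last $ss_i$ has fired, at which point the accumulated events match exactly the field-level event sequence of the original \cstar5 read or write. Careful name-freshness reasoning is also required to show that the introduced temporaries $x'$ and flattened names $x\_\ls{fds}$ do not collide with existing variables, which we can assume via a preliminary $\alpha$-renaming pass analogous to Lemma~\ref{lemma-cstar-alpha}.
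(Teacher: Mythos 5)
Your proposal is correct and follows essentially the same route as the paper: a forward downward simulation in which one \cstar5 step is matched by one or several \cstar3 steps, with a compilation invariant relating translated code, variable maps via Lemma~\ref{lemma-cstar-expr-struct-erase-correct} and Lemma~\ref{lemma-cstar-expr-struct-erase-non-struct}, and exactly preserved memory and stack structure, flipped into a bisimulation by the determinism of \cstar3. The extra bookkeeping you describe for the multi-step expansion of struct reads and writes is a faithful elaboration of the paper's terser argument rather than a different approach.
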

\begin{proof}
  Forward downward simulation where one \cstar5 step triggers one or
  several \cstar3 steps. Then, determinism of \cstar3 turns this forward
  downward simulation into bisimulation.

  The compilation invariant is as follows: the code fragments are
  translated using $\downarrow$, and variable maps $V$ in source \cstar5
  vs. their compiled \cstar3 counterparts $V'$ follow the conditions of
  Lemma~\ref{lemma-cstar-expr-struct-erase-correct}, also using
  Lemma~\ref{lemma-cstar-expr-struct-erase-non-struct}. Memory states
  $M$ are exactly preserved, as well as the structure of the stack.
\end{proof}

Then, after a further $\alpha$-renaming pass, we obtain
a \cstar3 program that no longer has any local (non-stack-allocated)
structures at all, and where all memory accesses are of non-structure
type. The shape of this program is now suitable for translation to a
CompCert Clight program in a straightforward way, which we describe
in the next subsection.

\subsection{Generation of CompCert Clight code} \label{sec:clight-gen}

Recall that going from \cstar3 (with abstract pointer events) back to \cstar2
(with concrete pointer events) is possible thanks to the fact that
Lemma~\ref{lemma-cstar-2-to-cstar-3-correct} and
Lemma~\ref{lemma-cstar-2-to-cstar-3-noninterference} are actually
equivalences.

Recall that a \cstar$n$ transition system is of the form $\sys(p, V,
\mathit{ss})$ where $p$ is a list of functions\footnote{and global
  variables, although the semantics of \cstar says nothing about how to
  actually initially allocate them in memory}, $\mathit{ss}$ is a list
of \cstar statements with undeclared local variables, the values of which
shall be taken from the map $V$. $\mathit{ss}$ is actually taken as
the entrypoint of the program, and $V$ is deemed to store the initial
values of secrets, ensuring that $p$ and $\mathit{ss}$ are
syntactically secret-independent.

In CompCert Clight, it is not nominally possible to start with a set
of undeclared variables and a map to define them. So, when translating
the \cstar entrypoint into Clight, we have to introduce a
\emph{secret-independent} way of representing $V$ and how they are
read in the entrypoint. Fortunately, CompCert introduces the notion of
\emph{built-in functions}, which are special constructs whose
semantics can be customized and that are guaranteed to be preserved by
compilation down to the assembly.

Thus, we can populate the values of local non-stack-allocated
variables of a \cstar entrypoint by uniformly calling builtins in Clight,
and only the semantics of those builtins will depend on secrets, so
that the actually generated Clight code will be syntactically
secret-independent.

Translating \cstar2 expressions with no structures or structure field
projections into CompCert Clight is straightforward, as shown in
Figure~\ref{fig:cstar-2-to-clight-expr}. For any \cstar2 expression $e$ of
type $t$, assuming that $A$ is a set of local variables to be
considered as local arrays, we define $\mathbb C^t_A(e)$ to be the
compiled Clight expression corresponding to $e$.

\begin{figure}
  \[
  \begin{array}{lcl}
    \mathbb C^{\kw{int}}_A (n) & = & n \\
    \mathbb C^{\kw{unit}}_A (()) & = & 0 \\
    \mathbb C^t_A (x) & = & \&x \\
     & \text{if} & x \in A \\
    \mathbb C^t_A (x) & = & \_x \\
     & \text{if} & x \not\in A \\
    \mathbb C^{t*}_A (e_1 + e_2) & = & \mathbb C^{t*}_A (e_1) +_t \mathbb C^{\kw{int}}_A (e_2) \\
    \mathbb C^{t*}_A (\eptrfd{e}{fd}) & = & \&(*\mathbb C^{t'*}_A (e) ._{t'}\mathit{fd}) \\
    & \text{if} & t' = \kw{struct}\{ \mathit{fd} : t, \dots \}
  \end{array}
  \]
  \caption{\cstar2 to Clight: Expressions}
  \label{fig:cstar-2-to-clight-expr}
\end{figure}

\begin{lemma}
  Let $V$ be a \cstar2 local variable map, and $A$ be a set of local
  variables to be considered as local arrays. Assume that, for all $x
  \in A$, there exists a block identifier $b$ such that $V(x) = (b,
  0)$, and define $V'(x)$ be such block identifier $b$. Then, define
  $\_V'(\_x) = V(x)$ for all $x \not\in A$.

  Then, for any expression $e$ with no structures or structure
  projections, $\kw{rv}(\mathbb C^t_A (e), (p, V', \_V')) = \llbracket
  e \rrbracket_{(p, V)}$.
\end{lemma}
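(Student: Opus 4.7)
The plan is to proceed by structural induction on the expression $e$, following exactly the cases of the translation $\mathbb C^t_A$ in Figure~\ref{fig:cstar-2-to-clight-expr}. Throughout, the equality in the statement must be read modulo a canonical identification of \cstar locations with Clight locations: a \cstar pointer $(b, n, [])$ with empty field path corresponds to the Clight pointer $(b, n \cdot \kw{sizeof}(t))$, and a nonempty field path $\ls{fd}$ is collapsed to $\kw{offsetof}$ of the composite field access.  I would first fix this identification explicitly as a relation $\sim_t$ between \cstar values of type $t$ and Clight values, and prove once and for all that it is functional on the \cstar side, preserved by integer coercion, and respects the algebraic laws $\kw{offsetof}(t, \mathit{fd}_1; \mathit{fd}_2) = \kw{offsetof}(t, \mathit{fd}_1) + \kw{offsetof}(\mathit{typeof}(t, \mathit{fd}_1), \mathit{fd}_2)$ needed below.

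The atomic cases are then immediate. For an integer constant $n$ the Clight RVar-free evaluation reduces to $n$, matching $\eval{n}{(p,V)} = n$. For $()$ the translation emits $0$, which matches the chosen encoding of unit. For a variable $x \in A$, the compilation $\claddr{x}$ evaluates as an rvalue by the AddrOf rule to the lvalue of $x$, which by the Var rule is $(V'(x), 0) = (b, 0)$; by hypothesis $\eval{x}{(p,V)} = V(x) = (b, 0, [])$, and the two are identified via $\sim_{t}$. For $x \notin A$, the compilation $\_x$ evaluates via RVar to $\_V'(\_x) = V(x)$, which matches $\eval{x}{(p,V)}$ directly.

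For the compound cases I would invoke the induction hypothesis on each immediate subexpression to obtain Clight values that correspond under $\sim$ to the \cstar values produced by $\eval{-}{(p,V)}$, then apply the appropriate Clight rule. In the pointer-addition case, Clight's PtrAdd together with the type-indexed operator $+_t$ yields $(b, m + n' \cdot \kw{sizeof}(t))$ from a compiled left operand identified with \cstar's $(b, n, [])$; the \cstar PtrAdd rule yields $(b, n + n', [])$, and the two are $\sim_{t*}$-related by the scaling convention. In the field-address case $\claddr{(\clderef{\mathbb C^{t'*}_A(e)})._{t'}\mathit{fd}}$, the PtrDeref, PtrFd, and AddrOf rules compose to add $\kw{offsetof}(t', \mathit{fd})$ to the byte offset of the compiled pointer; on the \cstar side PtrFd appends $\mathit{fd}$ to the field path, and agreement follows from the additivity lemma for $\kw{offsetof}$ stated above.

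The main obstacle is precisely pinning down the identification $\sim_t$: pointer values in \cstar carry symbolic field paths while Clight uses byte offsets, and a single \cstar location $(b, n, \ls{fd})$ relates to many syntactic Clight descriptions of the same address. Once $\sim_t$ is defined and its basic algebraic lemmas are proved, the induction above is routine. A secondary subtlety is that the hypothesis "no structures or structure projections" still permits $\eptrfd{-}{-}$ on pointers to struct-typed memory, so when reasoning about the intermediate \cstar value of the subexpression of $\eptrfd{e}{fd}$ I must remember that its field path may be nonempty, and appeal to the additivity lemma rather than to the special case where the path is empty.
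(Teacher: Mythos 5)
Your proof takes the same route as the paper, whose entire recorded argument for this lemma is the single line ``By structural induction on $e$''; your case analysis follows the clauses of $\mathbb C^t_A$ exactly as that induction intends. The additional care you take in pinning down the value correspondence $\sim_t$ (symbolic field paths versus byte offsets, and the additivity of $\kw{offsetof}$ needed for the $\eptrfd{e}{fd}$ case) addresses a real imprecision in the lemma's literal equality that the paper only resolves later, in the memory-correspondence invariant of the \cstar2-to-Clight theorem, so your elaboration is a faithful filling-in rather than a different proof.
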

\begin{proof}
  By structural induction on $e$. \ignore{\textbf{TODO:} global variables.}
\end{proof}

Translating \cstar2 statements with no local array declarations, no read
or write of structure type and no functions returning structures into
Clight is straightforward as well, as shown in
Figure~\ref{fig:cstar-2-to-clight-stmt}.

\begin{figure}
  \[
  \arraycolsep=1pt
  \begin{array}{lcl}
    \mathbb C_A (t ~ x = e) & = & t ~ x = C^t_A(e) \\
    \mathbb C_A (t ~ x = f(\ls{e_i})) & = & t ~ x = f(\ls{C^{t_i}_A(e_i)}) \\
    & & \text{if} ~ f ~ \text{is} ~ \kw{fun}(\ls{(\_ : t_i)}) : \_ \{ \_ \} \\
    \mathbb C_A (t ~ x = \eread{e}) & = & \clannot(\symread,t,e) ; t ~ x = [*\mathbb C^{t*}_A (e) ] \\
    \mathbb C_A (\ewrite{e_1}{e_2}) & = & \clannot(\symwrite,t,e) ; *\mathbb C^{t*}_A(e_1) = \mathbb C^{t}_A(e_2) \\
    \mathbb C_A (\eifthenelse{e}{ss_1}{ss_2}) & = & \eifthenelse{\mathbb C^t_A (e) }{\mathbb C_A(ss_1)}{\mathbb C_A(ss_2)} \\
    & & \text{for some} ~ t ~ \text{not} ~ \kw{struct} \\
    \mathbb C_A ( \{ ss \} ) & = & \{ \mathbb C_A (ss) \} \\
    \mathbb C_A ( \kw{return} ~ e ) & = & \kw{return} ~ \mathbb C^t_A(e) \\
  \end{array}
  \]
  \caption{\cstar2 to Clight: Statements}
  \label{fig:cstar-2-to-clight-stmt}
\end{figure}

Let $p$ be a \cstar2 program, and $ss$ be a \cstar2 entrypoint sequence of
statements. Assume that $p$ and $ss$ have unambiguous local variables,
no functions returning structures, no local arrays other than
function-scoped arrays, and no local structures other than local
arrays. Further assume that $p$ has no function called $\kw{main}$,
and no function with the same name as a built-in function. Then, we
can define the compiled CompCert Clight program $\mathbb C(p, ss)$ as
in Figure~\ref{fig:cstar-2-to-clight-prog}.
\begin{figure}
\[
\begin{array}{lcl}
  \mathbb C(p, ss)(f) & = & \kw{fun} ~ (\ls{x : t}) : t' \{ \ls{t_i ~ x_i[n_i]} ; \mathbb C_{\ls{x_i}}(ss') \} \\
  & \text{if} & p(f) = \kw{fun} ~ (\ls{x : t}) : t' \{ \ls{t_i ~ x_i[n_i]} ; ss' \} \\
  \mathbb C(p, ss)(\kw{main}) & = & \kw{fun} () : \kw{int} \{ \\
  & & ~ \ls{t_i ~ x_i[n_i]} ; \\
  & & ~ \ls{\_y_i = \kw{get}\_y_i()} ; \\
  & & ~ \mathbb C_{\ls{x_i}}(ss') \\
  & & \} \\
  & \text{if} & \mathit{ss} = \{ \ls{t_i ~ x_i[n_i]} ; ss' \} \\
  & & \text{with free variables} ~ \ls{y_i}
\end{array}
\]
\caption{\cstar2 to Clight: Program and Entrypoint}
\label{fig:cstar-2-to-clight-prog}
\end{figure}

\begin{theorem}[\cstar2 to Clight: correctness]
  If $\sys(p, V, ss)$ is safe in \cstar2, then it has the same execution
  trace as $\mathbb C(p, ss)$ in Clight, when the semantics of the
  built-in functions $\kw{get}\_x$ are given by $V$.
\end{theorem}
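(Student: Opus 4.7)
The plan is to prove this by establishing a forward simulation between $\sys(p, V, ss)$ in \cstar2 and $\mathbb C(p, ss)$ in Clight, then flipping it into a bisimulation (and hence trace equality) by appealing to determinism of Clight, in the same spirit as Lemma~\ref{lemma-back-refine} and the hoisting proof in Section~\ref{sec:hoisting}. Since Clight needs an initial startup phase to allocate the function-scoped local arrays of $\mathit{ss}$ and to invoke the built-in calls $\kw{get}\_y_i$ that materialize the secret values from $V$, the simulation will use some stuttering: a single ``setup'' phase on the Clight side with no \cstar2 counterpart brings both sides in sync at the point where the compiled body $\mathbb C_{\ls{x_i}}(ss')$ starts executing, with $V'$ holding the block identifiers of the hoisted arrays and $\_V'$ populated with the $V(y_i)$ via the semantics of $\kw{get}\_y_i$. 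From that point on, each \cstar2 step is matched by one or more Clight steps emitting exactly the same events.

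The core of the argument is the simulation invariant relating a \cstar2 configuration $(S, V, ss, f, A)$ to a Clight configuration $(S^\sharp, V^\sharp, \_V^\sharp, M^\sharp, ss^\sharp)$. Informally: (i) $ss^\sharp = \mathbb C_A(ss)$ and the continuations stored in each Clight stack frame are the compilations of their \cstar2 counterparts under the appropriate $A$; (ii) for every $x \in A$, $V(x) = (b,0,[])$ with $V^\sharp(x)$ equal to that same block identifier $b$, and the contents of $b$ in $M^\sharp$ match the contents of $b$ in the topmost \cstar2 frame (with the analogous correspondence for all lower frames); (iii) for every non-array local $x$, $\_V^\sharp(\_x) = V(x)$; (iv) \cstar2 call frames correspond to Clight call frames with variable maps and continuations compiled accordingly. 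The case analysis over \cstar2 rules is then mostly routine: an assignment ($\textsc{VarDecl}$) compiles to a pure Clight assignment using $\mathbb C^t_A(e)$ and Lemma~\ref{lemma-cstar-expr-struct-erase-non-struct}-style reasoning shows $\clrv{\mathbb C^t_A(e)}{(p,V^\sharp,\_V^\sharp)} = \eval{e}{(p,V)}$; $\textsc{Read}$ and $\textsc{Write}$ each compile to a $\clannot$ step (emitting exactly one $\symread/\symwrite$ event) followed by the actual memory operation, and CompCert's memory model $\symget/\symset$ on the corresponding block/offset faithfully mirrors the \cstar2 operation on stack frame memory; $\textsc{Call}$ is matched by a Clight call followed by the $\clvalloc$ prologue for the callee's hoisted arrays, which is the very reason $A$ is reset to the callee's array set; branches, blocks and returns follow the structure of $\mathbb C_A$ transparently.

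Two non-trivial points will require some care. The first is the memory-block correspondence: \cstar2 allocates fresh block identifiers on-the-fly on $\textsc{ArrDecl}$ and $\textsc{Call}$, while Clight pre-allocates all of a function's arrays on entry via $\clvalloc$, so fresh blocks are generated in different orders and counts between the two semantics. Assuming a CompCert-style allocator on both sides (as justified in Section~\ref{sec:hoisting}), we maintain a per-frame bijection between the source blocks of the currently declared arrays and their Clight counterparts, extending it deterministically at allocation sites; preservation of this bijection across $\textsc{Call}$ and $\textsc{Ret}$ relies on the fact that the \cstar2 program no longer contains non-function-scoped local arrays, so the set $A$ is fixed per-function and matches the list of $\ls{t_i ~ x_i[n_i]}$ declared in the compiled function. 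The second point is that the trace only contains abstract ``$(\ell, n, \ls{fd})$'' events on the \cstar2 side whereas we have replaced them by concrete $(b, n)$ annotations on the Clight side; we therefore must have, as a final step, reinterpreted the Clight traces back to concrete blocks (or equivalently reverted \cstar3 to \cstar2) so that the invariant of the bijection gives exact equality of traces.

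The main obstacle I expect is precisely this memory-layout bookkeeping: keeping the block-identifier bijection coherent across the prologue $\clvalloc$ inserted on function entry, across the block-frame discipline of \cstar2 (which has no counterpart in Clight since all arrays are hoisted), and across the startup sequence that allocates the entrypoint's arrays and invokes the $\kw{get}\_y_i$ built-ins. Once this bookkeeping is in place, the pure expression evaluation lemmas for $\mathbb C^t_A$ and the annotation mechanism make the per-rule matches essentially syntactic, and the forward simulation flips to a bisimulation by determinism of Clight, yielding trace equality as desired.
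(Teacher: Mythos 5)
Your proposal follows essentially the same route as the paper: a forward downward simulation (stuttering through the Clight prologue, \clvalloc{}, and the $\kw{get}\_y_i$ built-in calls) whose diagram is then flipped into a bisimulation---hence trace equality---by determinism of Clight, with an invariant relating stacks, variable maps, and memories. The one substantive detail the paper's sketch supplies that you elide is the memory correspondence for structure values: since mutable structures stored in buffers survive all the erasure passes, ``the contents of $b$ match'' must be stated as $M'(b, n + \kw{offsetof}(\mathit{fds})) = M(b,n)(\mathit{fds})$ for every field path $\mathit{fds}$ reaching a non-structure value; conversely, where you maintain a block-identifier bijection, the paper simply asserts the identifiers coincide (justified by the CompCert-style allocator and the fact that all arrays are already function-scoped), and note that the source of this theorem is already \cstar{}2, whose traces carry concrete pointers, so no further trace reinterpretation is needed here.
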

\begin{proof}
  Forward downward simulation where one \cstar2 step corresponds to one or
  several Clight steps. Then, since Clight is deterministic, the
  forward downward simulation diagram is turned into a bisimulation.

  The structure of the Clight stack is the same as in the \cstar2 stack,
  and the values of variables are the same, as well as the memory
  block identifiers. The only change is in the Clight representation
  of \cstar2 structure values . A \cstar2 memory state $M$ is said to
  correspond to a Clight memory state $M'$ if, for any block
  identifier $b$, for any array index $i$, and for any field sequence
  $\mathit{fds}$ leading to a non-structure value, $M'(b, n +
  \kw{offsetof}(\mathit{fds})) = M(b, n)(\mathit{fds})$.
\end{proof}

Thus, since both \cstar2 and Clight records all memory accesses in their
traces, this theorem entails both functional correctness and
preservation of noninterference.

\newpage
\section{Proof of the secret independence theorem}

\begin{small}
\verbatiminput{abstraction.txt}
\end{small}









\newpage
\fi


\bibliography{biblio,fstar,tls}

\iflong
\else
\balance
\fi

\end{document}